\documentclass[11pt]{article}
\usepackage{amssymb, amsthm, amsfonts, amsmath, amscd}
\usepackage{enumitem}
\usepackage[utf8]{inputenc}
\usepackage[margin=1in]{geometry}
\usepackage{graphicx}
\usepackage{tikz}
\usepackage{tikz-cd}
\usepackage{mathrsfs}
\usepackage{hyperref}
\usepackage{comment}
\usepackage{accents}
\usepackage{cleveref}
\usepackage{subcaption}
\usepackage{mathtools}
\usepackage{floatrow}
\usepackage{wrapfig}
\usepackage{MnSymbol}
\usepackage{ytableau}
\usetikzlibrary{decorations.markings}

\theoremstyle{definition}

\evensidemargin0cm \oddsidemargin0cm \textwidth16cm

\newtheorem{thm}{Theorem}[section]
\newtheorem{lem}[thm]{Lemma}
\newtheorem{df}[thm]{Definition}
\newtheorem{prop}[thm]{Proposition}
\newtheorem{cor}[thm]{Corollary}
\newtheorem{rem}[thm]{Remark}
\newtheorem{exam}[thm]{Example}

\newcommand{\C}{\mathbb{C}}

\newcommand{\R}{\mathbb{R}}
\newcommand{\Z}{\mathbb{Z}}
\newcommand{\N}{\mathbb{N}}

\newcommand{\Y}{\mathbb{Y}}
\newcommand{\la}{\lambda}
\newcommand{\SP}{\text{SP}}
\renewcommand{\sp}{\text{sp}}

\newcommand{\lavec}{\vec{\la}}
\newcommand{\muvec}{\vec{\mu}}
\renewcommand{\Pr}{\mathbb{P}}
\newcommand{\x}{\mathcal{X}}
\newcommand{\y}{\mathcal{Y}}
\renewcommand{\a}{\mathcal{A}}
\renewcommand{\b}{\mathcal{B}}

\newcommand{\veca}{\vec{\mathcal{A}}}
\newcommand{\vecb}{\vec{\mathcal{B}}}

\newcommand{\M}{\mathbb{M}}
\newcommand{\SSP}{\text{SSP}}
\renewcommand{\SS}{\text{SS}}
\newcommand{\dd}{\mathrm{d}}

\pagestyle{plain}

\numberwithin{equation}{section}

\title{The Symplectic Schur Process}

\author{Cesar Cuenca and Matteo Mucciconi}

\date{}

\begin{document}

\maketitle

\begin{abstract}
    We define a measure on tuples of partitions, called the \emph{symplectic Schur process}, that should be regarded as the right analogue of the Schur process of Okounkov-Reshetikhin for the Cartan type~C. The weights of our measure include factors that are universal symplectic characters, as well as a novel family of ``Down-Up Schur functions" that we define and for which we prove new identities of Cauchy-Littlewood-type. Our main structural result is that the point process corresponding to the symplectic Schur process is determinantal and we find an explicit correlation kernel.
    We also present dynamics that preserve the family of symplectic Schur processes and explore an alternative sampling scheme, based on the Berele insertion algorithm, in a special case. Finally, we study the asymptotics of the Berele insertion process and find explicit formulas for the limit shape and fluctuations near the bulk and the edge. One of the limit regimes leads to a new kernel that resembles the symmetric Pearcey kernel.
\end{abstract}

\begin{center}
    \emph{To the memory of Anatoly Moiseevich Vershik, with admiration.}
\end{center}

\tableofcontents

\section{Introduction}

\subsection{Background}\label{subs:background}

In their seminal paper~\cite{OR-2003}, Okounkov and Reshetikhin introduced the Schur process, a probability measure on the set of tuples of integer partitions
$$
    \varnothing\subseteq\la^{(1)}\supseteq\mu^{(1)}\subseteq\la^{(2)}\supseteq\,\cdots\,\subseteq\la^{(k-1)}\supseteq\mu^{(k-1)}\subseteq\la^{(k)}\supseteq\varnothing,
$$
determined by
\begin{multline} \label{eq:Schur process}
    \Pr^\SP\!\left(\lavec,\muvec \,\big|\, \vec{\mathbf{x}},\vec{\mathbf{y}} \right) \propto s_{\la^{(1)}}(\mathbf{x}^1) \,s_{\la^{(1)} / \mu^{(1)}}(\mathbf{y}^1) \,s_{\la^{(2)} / \mu^{(1)}}(\mathbf{x}^2)
    \,\cdots\\
    \cdots\, s_{\la^{(k-1)} / \mu^{(k-1)}}(\mathbf{y}^{k-1}) \,s_{\la^{(k)} / \mu^{(k-1)}}(\mathbf{x}^k) \,s_{\la^{(k)}}(\mathbf{y}^k),
\end{multline}
where $s_\la$ (resp.~$s_{\la/\mu}$) are the Schur functions  (resp.~skew Schur functions) and $\mathbf{x}^1,\dots,\mathbf{x}^k$, $\mathbf{y}^1,\dots,\mathbf{y}^k$ are sets of variables. Since its introduction, this measure has been the object of constant attention due to its ubiquitous nature and applications in the asymptotic analysis of random growth processes~\cite{BF-2014}, line ensembles~\cite{CH-2014}, random tilings~\cite{Gor-2021}, asymptotic representation theory~\cite{B-2011}, Gromov-Witten invariants~\cite{OP-2006}, topological string theory~\cite{ORV-2006}, among others.
The Schur process combines the desirable properties of being amenable to very precise combinatorial and asymptotic analysis and of being a good mathematical model to describe a large variety of physical phenomena involving randomly interacting particles. In fact, the existence itself of the Schur process as an example of a completely solvable model of randomly interacting bodies has been one of the catalysts of the success of Integrable Probability; e.g.~see the survey~\cite{BG-2016}.

The Schur polynomials, which are the fundamental building block of the Schur process, have notable representation theoretic interpretations. For example, they are (i)~characters of the irreducible representations of the symmetric group (under the characteristic isometry) and (ii)~characters of the polynomial irreducible representations of general linear groups.
Another connection with representation theory comes by looking at the marginals of the Schur process, which are the Schur measures of Okounkov~\cite{O-2001}, and are based on the Cauchy identity:
\begin{equation}\label{eqn:cauchy_identity}
    \sum_\la s_\la(x_1,\dots,x_n) s_\la(y_1,\dots,y_m) = \prod_{i=1}^n \prod_{j=1}^m \frac{1}{1-x_i y_j}.
\end{equation}
The identity~\eqref{eqn:cauchy_identity} is a manifestation of the Howe duality for the action of the pair of Lie groups $(GL(n), GL(m))$ on $\mathrm{Sym}(\C^n\otimes\C^m)$.
Additionally, the Schur measures generalize the $z$-measures that arise as the solution to the problem of noncommutative harmonic analysis for the infinite symmetric group $S(\infty)$ and, in fact, this was the original motivation for their definition, see e.g.~\cite{BO-2017,Ok-2001}.
As a result of the previous observations, the Schur measure and Schur process should be regarded as objects associated to the \emph{Cartan type $A$}, since they are related to the symmetric and general linear groups.
Naturally, there exist canonical analogs of some of the above algebraic structures for other symmetry types, and some of the corresponding probability measures have already been considered in the literature, e.g.~\cite{NOS-2024} studies measures associated to skew-Howe dualities, \cite{CG-2020} considers measures arising from the branching rule of symplectic Schur polynomials with $q$-specializations, \cite{Bisi-2018,BZ-22} study relations between models of last passage percolation and characters of classical Lie groups, etc. 

 We will work here with the \emph{symplectic Schur polynomials} $\sp_\la(x_1^{\pm} ,\dots, x_n^{\pm})$, which are characters of the irreducible representations of the symplectic Lie group $Sp(2n)$ \cite{King75}. Alternatively, they can be defined by the summation identity \cite{Lit-1950,Weyl-1946}
    \begin{equation} \label{eq:intro CL identity}
        \sum_{\la :\, \ell(\la) \le n} \sp_\la(x_1^{\pm},\dots, x_n^{\pm}) s_\la(y_1, \dots, y_n) = \frac{\prod_{1\le i < j \le n} (1-y_i y_j)}{\prod_{i,j=1}^n (1-x_iy_j) (1-x_i^{-1}y_j)},
    \end{equation}
usually referred to as Cauchy-Littlewood identity, which is a consequence of the duality between irreducible finite-dimensional representations of the compact symplectic group $Sp(2n)$ and irreducible infinite-dimensional unitary highest weight representations of the non-compact Lie group $SO^*(2n)$~\cite{Howe-1989,KV-1978}. The symplectic Schur polynomials $\sp_\la(x_1^{\pm} ,\dots, x_n^{\pm})$ also possess interesting combinatorial interpretations; for instance, they are generating functions of a class of semistandard tableaux called \emph{symplectic tableaux}~\cite{King75} (see \Cref{sec:Berele} below) or of Kashiwara-Nakashima tableaux~\cite{KN-1994}.

Since all terms of the left hand side summation are nonnegative (if $x_i,y_i>0$), it is tempting to use \eqref{eq:intro CL identity} to define a probability measure on integer partitions as a symplectic variant of the Schur measure. This was done in~\cite{B-2018}, where it was found that this measure defines a determinantal point process with an explicit correlation kernel. In the same paper, Betea carried out the asymptotic analysis when the specializations in~\eqref{eq:intro CL identity} are Plancherel and showed that the result is a signed measure with the kernel $\text{Airy}_{2\to 1}$ arising in a certain limit regime.

In this paper, we define the \emph{symplectic Schur process} as a natural analog of \eqref{eq:Schur process}, having marginal distributions given by the symplectic Schur measure from~\cite{B-2018}. This construction is not canonical and is far from straightforward because of two main factors:

(i) The symplectic Schur polynomials have non-trivial branching structures, and the skew polynomials $\sp_{\la/\mu}$ depend on the number of zeros that $\la$ and $\mu$ have appended at the end. We avoid this complication by considering symmetric functions; it turns out that $\sp_{\la/\mu}$ turn into the skew Schur functions $s_{\la/\mu}$ when infinitely many variables are considered (see Equation~\eqref{eq:branching rule intro}).
We also need to define new symmetric functions $T_{\la,\mu}$ that are skew-dual to $\sp_{\la/\mu}$ (see \Cref{sec:intro_skew_dual}).

(ii) Summation identities such as \eqref{eq:intro CL identity} for skew variants of $\sp$ are not well-established in the literature with the exception of a particular variant recently found in \cite{JLW-2024}.
In fact, the understanding of that paper from the point of view of symmetric functions was the origin of the present article.

\subsection{Probabilistic interpretation: a push-pull-block dynamics}\label{subs:probabilistic interpretation}

Our symplectic Schur process describes a simple probabilistic model of interacting particles that evolves according to push, pull, and block mechanisms that we now describe. 
The space where this model lives will be the set of \emph{half-triangular interlacing arrays}, namely the set of integral arrays
\[
\big( \mathsf{x}_{k,i} \ :\ 1 \le i \le \lceil k/2 \rceil,\ k\ge 1 \big),
\]
where the coordinates $x_{k,i}\in\Z$ are subject to the following interlacing conditions:
\[
\mathsf{x}_{k+1,i+1} \le \mathsf{x}_{k,i} \le \mathsf{x}_{k+1,i}, \quad \text{for all admissible } k \text{ and } i.
\]
In the case where \( i+1 > \lceil (k+1)/2 \rceil \), we adopt the convention that \( \mathsf{x}_{k+1,i+1} = 0 \).
For any $k\ge 1$, We will refer to the sub-array
\[
\mathsf{x}_{k,\lceil k/2 \rceil},\, \mathsf{x}_{k,\lceil k/2 \rceil-1},\, \dots,\, \mathsf{x}_{k,2},\, \mathsf{x}_{k,1}
\]
as the \emph{\(k\)-th level} of the array \( (\mathsf{x}_{k,i})_{1 \le i \le \lceil k/2 \rceil} \).
The arrays described here, or variations thereof, are sometimes referred to as \emph{symplectic Gelfand-Tsetlin patterns} in the literature, e.g.~see \cite{Def-2012} and references therein.

On this space of half-triangular interlacing arrays, we define a stochastic dynamics that comprises a random number of deterministic update rules.
We first describe the deterministic updates.

\subsubsection*{Deterministic Updates}

Let \(\mathsf{x}\) denote the array before the update and call \(\mathsf{x}'\) the updated array. The propagation of moves to go from \(\mathsf{x}\) to \(\mathsf{x}'\) will be sequential from level $k$ to $k+1$ and the update changes the position of at most one particle per level. The dynamics are described via the following \emph{jump instructions}.

\medskip

\noindent

\textbf{Propagation of a Jump Instruction.}  
    Suppose that for some indices \(k\) and \(i\) we have just performed the update
    \[
    \mathsf{x}_{k,i}' = \mathsf{x}_{k,i} \pm 1.
    \]
    Then:
    \begin{itemize}
        \item \emph{Right Jumps:}  
        If \(\mathsf{x}_{k,i}' = \mathsf{x}_{k,i} + 1\), then
        \begin{itemize}
            \item if \(\mathsf{x}_{k,i} = \mathsf{x}_{k+1,i}\), particle \(\mathsf{x}_{k+1,i}\) receives a \emph{right jump instruction}: see \Cref{fig:GT dynamics}~(a);  
            \item otherwise, particle \(\mathsf{x}_{k+1,i+1}\) receives a \emph{right jump instruction}: see \Cref{fig:GT dynamics}~(b).
        \end{itemize}
        \item \emph{Left Jumps:}  
        If \(\mathsf{x}_{k,i}' = \mathsf{x}_{k,i} -1\), then
        \begin{itemize}
            \item if \(\mathsf{x}_{k,i} = \mathsf{x}_{k+1,i+1}\), particle \(\mathsf{x}_{k+1,i+1}\) receives a \emph{left jump instruction}: see \Cref{fig:GT dynamics}~(c);
            \item otherwise, particle \(\mathsf{x}_{k+1,i}\) receives a \emph{left jump instruction}: see \Cref{fig:GT dynamics}~(d).
        \end{itemize}
    \end{itemize}

\medskip

\noindent

\textbf{Execution of a Jump Instruction.}  
A jump instruction typically causes the designated particle to move by one unit in the indicated direction. The only exception occurs when a particle of the form
\[
    \mathsf{x}_{2m-1,m},
\]
i.e., the leftmost particle in an odd level, receives a right jump instruction. In this case:
\begin{itemize}
    \item If $\mathsf{x}_{2m-1,m}< \mathsf{x}_{2m,m}$, then no jump occurs for \(\mathsf{x}_{2m-1,m}\) and we set $\mathsf{x}_{2m-1,m}' = \mathsf{x}_{2m-1,m}$; moreover, the particle \(\mathsf{x}_{2m,m}\) receives a left jump instruction: see \Cref{fig:dynamics excpetional particle} (a).
    \item If $\mathsf{x}_{2m-1,m} = \mathsf{x}_{2m,m}$, then \(\mathsf{x}_{2m-1,m}\) moves one unit to the right and a right jump instruction is passed on to \(\mathsf{x}_{2m,m}\): see \Cref{fig:dynamics excpetional particle} (b).
\end{itemize}
In all other cases, if $\mathsf{x}_{k,i}$ receives a left jump instruction, we set $\mathsf{x}_{k,i}' = \mathsf{x}_{k,i} - 1$, and if $\mathsf{x}_{k,i}$ receives a right jump instruction, we set $\mathsf{x}_{k,i}' = \mathsf{x}_{k,i} + 1$.

\subsubsection*{Random Dynamics}

The discrete-time dynamics \(\mathsf{x}(1)\to\mathsf{x}(2)\to\mathsf{x}(3)\to\cdots\) combines the deterministic rules above with sequences of random numbers of jump instructions.
To describe the full random dynamics, we sample a two-index sequence \(\{ \mathsf{G}_{k,t} : k,t\in\Z_{\ge 1} \}\) of independent random variables with geometric distributions
\begin{equation}\label{eq:geo intro}
    \mathsf{G}_{k,t} \sim 
    \begin{cases}
        \operatorname{Geo}(x_{\lceil k/2 \rceil} y_t), & \text{if \( k \) is odd},\\[1mm]
        \operatorname{Geo}(x_{\lceil k/2 \rceil}^{-1} y_t), & \text{if \( k \) is even},
    \end{cases}    
\end{equation}
for suitable choices of real numbers $x_1,x_2,\dots$ and $y_1,y_2,\dots$. Our convention is that $G\sim\operatorname{Geo}(p)$ means $\text{Prob}(G=k)=(1-p)p^k$, for all $k=0,1,2,\dots$.
At each time \(t\), the update proceeds level by level.
Starting at level \(k=1\), the particle \(\mathsf{x}_{1,1}\) receives \(\mathsf{G}_{1,t}\) consecutive right jump instructions. Each instruction is executed according to the deterministic update rules described above, and the array is updated immediately after each jump. Once all \(\mathsf{G}_{1,t}\) instructions have been processed at level 1, the particle \(\mathsf{x}_{2,1}\) receives \(\mathsf{G}_{2,t}\) consecutive right jump instructions that are to be processed to completion by following the deterministic update rules, then the procedure is repeated at level \(k=3\), and so on. After all the instructions have been performed, we have completed the step $\mathsf{x}(t-1)\to\mathsf{x}(t)$.
An example of the stochastic evolution defined here is illustrated in \Cref{fig:dynamics example}.

\begin{figure}
\centering

\begin{subfigure}[b]{0.4\linewidth}
\includegraphics[width=\linewidth]{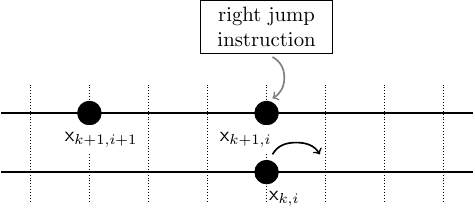}
\caption{}
\end{subfigure}
\hfill
\begin{subfigure}[b]{0.4\linewidth}
\includegraphics[width=\linewidth]{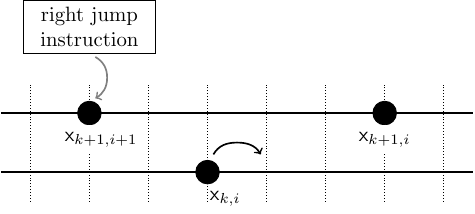}
\caption{}
\end{subfigure}

\vspace{.5cm}

\begin{subfigure}[b]{0.4\linewidth}
\includegraphics[width=\linewidth]{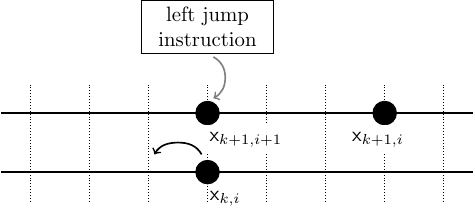}
\caption{}
\end{subfigure}
\hfill
\begin{subfigure}[b]{0.4\linewidth}
\includegraphics[width=\linewidth]{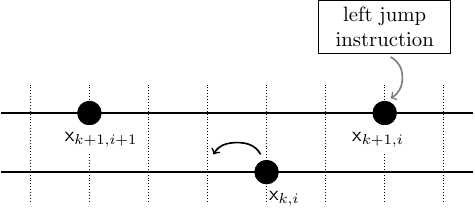}
\caption{}
\end{subfigure}

\caption{Propagation of jump instructions to particles at level $k+1$ from the update of a particle at level $k$.}
\label{fig:GT dynamics}
\end{figure}

\begin{figure}
\centering

\begin{subfigure}[b]{0.4\linewidth}
\includegraphics[width=\linewidth]{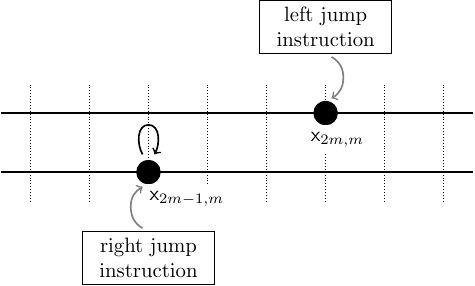}
\caption{}
\end{subfigure}
\hfill
\begin{subfigure}[b]{0.4\linewidth}
\includegraphics[width=\linewidth]{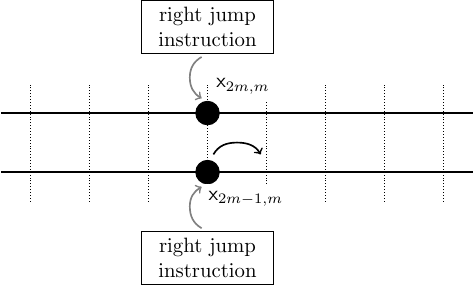}
\caption{}
\end{subfigure}

\caption{Propagation of jump instructions to particles at level $2m$ from the update of leftmost particle at level $2m-1$.}
\label{fig:dynamics excpetional particle}
\end{figure}

\begin{figure}
\centering

\begin{subfigure}[b]{0.3\linewidth}
\includegraphics[width=\linewidth]{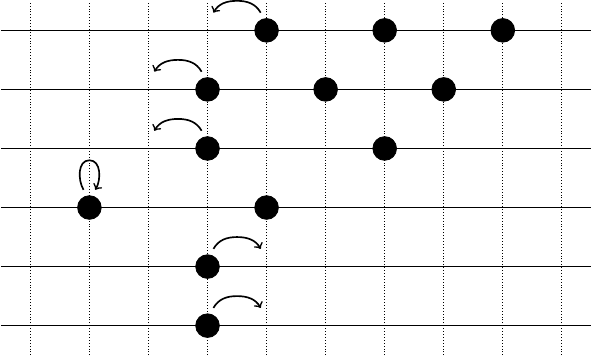}
\end{subfigure}
\hfill
\begin{subfigure}[b]{0.3\linewidth}
\includegraphics[width=\linewidth]{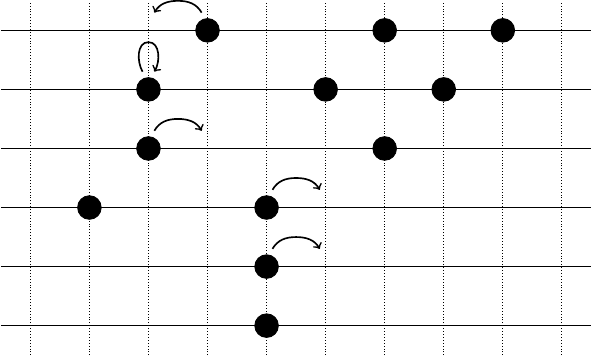}
\end{subfigure}
\hfill
\begin{subfigure}[b]{0.3\linewidth}
\includegraphics[width=\linewidth]{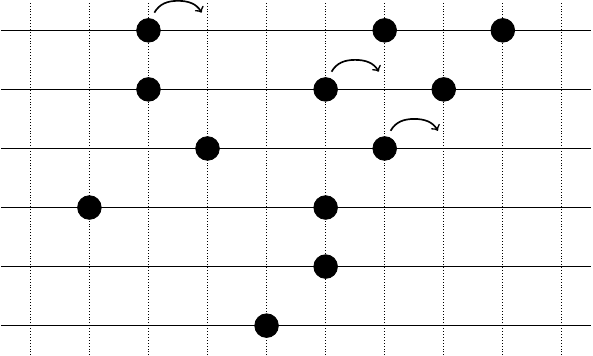}
\end{subfigure}

\caption{The random update of a half-triangular array $\mathsf{x}(t)$. Here $\mathsf{G}_{1,t}=1$, $\mathsf{G}_{2,t}=1$, $\mathsf{G}_{3,t}=0$, $\mathsf{G}_{4,t}=1$. On the left panel the jump instruction given to particle $\mathsf{x}_{1,1}$ triggers the instantaneous updates at level above. Similarly, in the central panel, the initial jump instruction if given to particle $\mathsf{x}_{2,1}$. In the right panel, corresponding to the draw $\mathsf{G}_{4,t}=1$, the particle $\mathsf{x}_{4,1}$ receives a right jump instruction.}
\label{fig:dynamics example}
\end{figure}

\medskip

It turns out that the stochastic dynamics just defined can be described through the formalism of the symplectic Schur process which we introduce and study in this paper. More precisely, for any $n\in\Z_{\ge 1}$, the multi-time process 
\begin{equation} \label{eq:push block}
\big\{ (t,\mathsf{x}_{2n,i}(t)-i) \ :\ i,t=1,\dots,n \big\},
\end{equation}
under the initial conditions
\begin{equation}
    \mathsf{x}_{k,i}(0)=0, \qquad \text{for all } k,i,
\end{equation}
is described by a particular case of the symplectic Schur process \eqref{eq:ssp intro}. A large sample of the point process \eqref{eq:push block} is depicted in \Cref{fig: intro}: there, the parameters that enter the geometric random variables \eqref{eq:geo intro} have been set to $x_j=1$, $y_j=1/2$, for all $j\ge 1$.

\medskip

Similar particle dynamics on interlacing arrays have been constructed and studied in the literature (see, e.g., \cite{BK-2010,WW-2009,Def-2012,Nte-2016,BF-2008}), motivated by dynamics with local hard-edged interactions (exclusion processes, random tilings, etc.) that appear as marginals.
In \Cref{subs:Symplectic Schur process through Berele sampling} and, more extensively, in \Cref{sec:asymptotics}, we show how the connection with the symplectic Schur process enables a precise asymptotic analysis of the multi-time process~\eqref{eq:push block}.
From the models on half-triangular interlacing arrays, with the exception of~\cite{BK-2010}, no other work has carried out such asymptotic analysis.
We note that all these models are said to be \emph{constrained by a wall}, including ours, because always $x_{k,i}(t)\ge 0$, due to the definition of the dynamics.
The paper~\cite{BK-2010} is concerned with a process where particles are reflected by a wall, while in ours, particles are blocked by an impenetrable wall; moreover, their methods are completely different from ours.

\subsection{Main results}

\subsubsection{Down-up Schur functions and skew Cauchy-Littlewood identities}\label{sec:intro_skew_dual}

The first part of our work is on the theory of symmetric functions and lays the combinatorial foundation that will be employed to study the symplectic Schur process later.
Recall that the \emph{universal symplectic characters} $\SP_\la(x)$, where $\la$ ranges over all partitions, were defined in~\cite{KT-1987} as symplectic analogues of the Schur functions.
When $x$ is the finite collection of variables $(x_1,x_1^{-1}, \dots ,x_n,x_n^{-1})$ and $\ell(\la)\le n$, this symmetric function reduces to the symplectic Schur polynomial
\begin{equation*}
    \SP_\la(x_1,x_1^{-1}, \dots ,x_n,x_n^{-1}) = \sp_\la(x_1^\pm, \dots ,x_n^\pm), \quad\text{for } \ell(\la) \le n,
\end{equation*}
and thus carries the information about the characters of irreducible representations of $Sp(2n)$.

    As in the case of Schur processes~\eqref{eq:Schur process}, one would expect some kind of ``skew universal symplectic characters''  to exist and to play a role in any sensible definition of symplectic Schur process. However, a special case of \cite[Thm.~7.22]{Rai-2005} (see~\Cref{branching_SP_thm} below) gives
    \begin{equation} \label{eq:branching rule intro}
            \SP_\la(x,y)= \sum_{\mu} \SP_\mu(x) s_{\la/\mu}(y).
    \end{equation}
    Naively, this seems to indicate (by comparing with the skew Schur functions defined in \cite[Ch.~I, Equation~(5.9)]{M-1995}) that the reasonable definition of ``skew universal symplectic characters'' should be exactly the same as the skew Schur functions.
    However, by using only skew Schur functions, we obtain nothing other than the classical Schur process, as in~\eqref{eq:Schur process}.
    Our new key insight is that, in order to define the symplectic Schur process, it is necessary to introduce and develop the combinatorial properties of a novel family of symmetric functions $T_{\lambda,\mu}$ that are generalizations of Schur polynomials $s_\lambda$.
    
    Define the symmetric functions $T_{\la,\mu}$, parametrized by two partitions $\la,\mu$, which will be called the \emph{down-up Schur functions}, by the formula
    \begin{equation} \label{eq:T intro}
        T_{\la,\mu}(y) := \sum_\nu d_{\mu,\nu}^\la s_\nu(y),
    \end{equation}
    where $d_{\mu,\nu}^\la$ are the Newell-Littlewood (NL) coefficients~\cite{Lit-1958, New-1951}. These coefficients are symplectic versions of the Littlewood-Richardson coefficients, as they compute the multiplicity of an irreducible representation in the decomposition of tensor products of irreducible representations of the symplectic group; see~\cite{KT-1987,GOY-2021}.
    In \Cref{sec:du_schur}, we prove the following new combinatorial formulas for the symmetric functions $T_{\la,\mu}(y)$.

    \begin{thm}\label{thm:main_1}
        The down-up Schur functions $T_{\la,\mu}(y)$, defined by~\eqref{eq:T intro}, satisfy identities similar to those satisfied by the skew Schur functions, including
        \begin{itemize}
            \item the branching rule: see \Cref{branching_star};
            \item the skew down-up Cauchy identity: see \Cref{new_skew_cauchy_thm};
            \item the Cauchy-Littlewood identity: see \Cref{gen_cauchy_littlewood_prop_2};
        \end{itemize}
        as well as the following formula, stated in \Cref{combinatorial_S_star}:
        \begin{equation} \label{eq:down up intro}
        T_{\la,\mu}(y) = \sum_{\kappa} s_{\la/\kappa}(y)  s_{\mu/\kappa}(y).
    \end{equation}
    \end{thm}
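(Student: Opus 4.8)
The plan is to reduce everything to the single master identity~\eqref{eq:down up intro}, namely $T_{\la,\mu}(y) = \sum_\kappa s_{\la/\kappa}(y)\, s_{\mu/\kappa}(y)$, since once this is established the three "skew-Schur-type" identities follow by standard manipulations with the Littlewood-Richardson calculus. To prove~\eqref{eq:down up intro} I would start from the definition $T_{\la,\mu}(y)=\sum_\nu d^\la_{\mu,\nu}\, s_\nu(y)$ and substitute the Newell--Littlewood coefficients in their Littlewood form $d^\la_{\mu,\nu} = \sum_{\alpha,\beta,\gamma} c^\mu_{\alpha,\beta}\, c^\nu_{\alpha,\gamma}\, c^\la_{\beta,\gamma}$, where the $c$'s are ordinary Littlewood--Richardson coefficients. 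Plugging this in, $T_{\la,\mu}(y) = \sum_\nu \sum_{\alpha,\beta,\gamma} c^\mu_{\alpha,\beta} c^\nu_{\alpha,\gamma} c^\la_{\beta,\gamma}\, s_\nu(y)$; now perform the sum over $\nu$ using $\sum_\nu c^\nu_{\alpha,\gamma}\, s_\nu(y) = s_\alpha(y)\, s_\gamma(y)$, to get $T_{\la,\mu}(y) = \sum_{\alpha,\beta,\gamma} c^\mu_{\alpha,\beta}\, c^\la_{\beta,\gamma}\, s_\alpha(y)\, s_\gamma(y)$. The remaining sums over $\alpha$ and $\gamma$ collapse via the skew-Schur expansion $s_{\la/\kappa}(y) = \sum_\gamma c^\la_{\kappa,\gamma}\, s_\gamma(y)$ (and likewise for $\mu$): summing over $\beta$ with $c^\mu_{\alpha,\beta}$ yields $s_{\mu/\alpha}(y)$, and then $T_{\la,\mu}(y) = \sum_\alpha s_{\mu/\alpha}(y) \sum_\gamma c^\la_{\alpha,\gamma} s_\gamma(y) = \sum_\alpha s_{\mu/\alpha}(y)\, s_{\la/\alpha}(y)$, renaming $\alpha\to\kappa$. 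The only subtlety is convergence/formal well-definedness of these infinite sums in the ring of symmetric functions, which is handled by noting that only partitions $\kappa \subseteq \la\cap\mu$ contribute, so every sum is finite in each homogeneous degree.

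For the branching rule (Theorem "\Cref{branching_star}"), I would apply the standard coproduct identity for skew Schur functions, $s_{\la/\kappa}(y,z) = \sum_\rho s_{\la/\rho}(y)\, s_{\rho/\kappa}(z)$, to both factors in~\eqref{eq:down up intro}, expand, and reorganize the double sum; the cross terms should reassemble into a sum of products $T_{\rho,\sigma}(y)$ times skew Schur functions in $z$, matching the expected form of the branching rule stated there. For the skew down-up Cauchy identity (\Cref{new_skew_cauchy_thm}) and the Cauchy--Littlewood identity (\Cref{gen_cauchy_littlewood_prop_2}), the natural route is to insert~\eqref{eq:down up intro} into the relevant generating-function sum and then repeatedly apply the ordinary Cauchy identity $\sum_\la s_{\la/\mu}(x)\, s_{\la/\nu}(y) = \bigl(\prod_{i,j}(1-x_iy_j)^{-1}\bigr)\sum_\tau s_{\nu/\tau}(x)\, s_{\mu/\tau}(y)$ together with the branching rule~\eqref{eq:branching rule intro} for the universal symplectic characters; one wants the skew sums to telescope down to a closed product, leaving a residual finite sum that is itself recognizable.

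I expect the main obstacle to be not the master identity~\eqref{eq:down up intro}, which is essentially bookkeeping with LR coefficients, but rather getting the precise combinatorial \emph{shape} of the skew down-up Cauchy identity right — that is, identifying what the "correction" or residual term on the right-hand side should be so that the generating-function sum actually closes, and verifying it is the combinatorially meaningful object one wants for the definition of the symplectic Schur process. This is where the analogy with the classical skew Cauchy identity for Schur functions must be carried out carefully, since the presence of two "down" steps (the product $s_{\la/\kappa}\, s_{\mu/\kappa}$ in~\eqref{eq:down up intro}) changes the structure of the telescoping; pinning down the correct statement and checking it holds as a formal identity of symmetric functions is the crux of the argument. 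Once the correct formulation is in hand, the verification reduces again to repeated use of the classical Cauchy identity and the branching rule, which are routine.
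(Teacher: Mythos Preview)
Your proof of the down-up formula~\eqref{eq:down up intro} is exactly the paper's: expand $d^\la_{\mu,\nu}$ via Koike's formula and collapse using $\sum_\nu c^\nu_{\alpha,\gamma}s_\nu=s_\alpha s_\gamma$ and $\sum_\alpha c^\mu_{\beta,\alpha}s_\alpha=s_{\mu/\beta}$. (Your indexing is slightly garbled---you write ``summing over $\beta$'' when you are really summing over $\alpha$ against $s_\alpha$---but the computation is correct.) Your plan for the branching rule and the skew down-up Cauchy identity---substitute the down-up formula and repeatedly apply the classical skew Cauchy identity and the Schur branching rule---is also precisely what the paper does in \Cref{branching_star} and \Cref{new_skew_cauchy_thm}.

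The one place the paper diverges from your plan is the Cauchy--Littlewood identity (\Cref{gen_cauchy_littlewood_prop_2}). Your route, inserting the down-up formula, leads to $\sum_\kappa s_{\la/\kappa}(y)\sum_\mu \SP_\mu(x)\,s_{\mu/\kappa}(y)$, and the inner sum does not match the branching rule~\eqref{eq:branching rule intro} (there the skew goes the other way, $s_{\la/\mu}$ rather than $s_{\mu/\kappa}$), so it does not close up without an additional ``skew Cauchy--Littlewood'' identity---which is essentially what you are trying to prove. The paper avoids this by \emph{not} using the down-up formula here at all: it works directly from the definition $T_{\la,\mu}=\sum_\nu d^\la_{\mu,\nu}s_\nu$ and the fact that the $d^\la_{\mu,\nu}$ are the structure constants for the $\SP$ basis, so that
\[
G(y)H(x;y)\,\SP_\la(x)=\Big(\sum_\nu \SP_\nu(x)s_\nu(y)\Big)\SP_\la(x)=\sum_{\nu,\mu} d^\mu_{\nu,\la}\,\SP_\mu(x)s_\nu(y)=\sum_\mu \SP_\mu(x)\,T_{\la,\mu}(y)
\]
in one line. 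So your reduction ``everything follows from~\eqref{eq:down up intro}'' is right for three of the four identities, but for the Cauchy--Littlewood one the cleaner proof goes back to the Newell--Littlewood definition rather than through the down-up formula.
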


    The formula~\eqref{eq:down up intro} is the reason for the qualifier \emph{down-up}, since $s_{\la/\kappa}(y)s_{\mu/\kappa}(y)$ vanishes, unless $\la\supseteq\kappa\subseteq\mu$.
    The formulas above will be instrumental for our probabilistic applications.
    
    The main motivation to introduce the down-up Schur functions is that they are \emph{skew-duals} to the universal symplectic characters; this means that they satisfy the variant of the Cauchy-Littlewood identity in \Cref{gen_cauchy_littlewood_prop_2}.
    Recently, Jing-Li-Wang~\cite{JLW-2024} constructed skew-dual variants of symplectic Schur polynomials by using the formalism of vertex operators. This construction expresses $T_{\la,\mu}(y_1,\dots,y_n)$, denoted there by $S^{*}_{\la/\mu}(y_1,\dots,y_n)$, in terms of a Weyl-type determinant with the result being a priori a Laurent polynomial with no precise combinatorial description. We believe that our symmetric function definition~\eqref{eq:T intro} provides more insights into the combinatorial and algebraic nature of the symmetric functions $T_{\la,\mu}$.

\subsubsection{The symplectic Schur process}

Having discussed the necessary notions from symmetric functions, we can introduce the main object of this paper. We define the \emph{symplectic Schur process} as the measure on (pairs of) sequences of partitions $\lavec = (\la^{(1)},\dots,\la^{(k)})$, $\muvec= (\mu^{(1)},\dots,\mu^{(k-1)})$, given by
    \begin{equation} \label{eq:ssp intro}
        \Pr^\SSP\!\left(\lavec,\muvec \,\big|\, \veca,\vecb \right) := \frac{1}{Z(\vec{\a},\vec{\b})}\cdot s_{\la^{(1)}}(\b^1)\cdot \prod_{j=1}^{k-1}{\left[ s_{\la^{(j)}/\mu^{(j)}}(\a^j)\, T_{\mu^{(j)},\la^{(j+1)}}(\b^{j+1}) \right]}\cdot\SP_{\la^{(k)}}(\a^k),
    \end{equation}
where $\a^1, \dots \a^k, \b^1,\dots, \b^k$ are specializations of the algebra of symmetric functions (see \Cref{specs_sym}). Moreover, $Z(\veca,\vecb)$ is the normalization constant that makes the total mass of the measure equal to $1$.
A closed formula for $Z( \vec{\a},\vec{\b})$ can be computed using our first main \Cref{thm:main_1}; see \Cref{partition_function_1} for details.
The reader should compare our definintion with the equation~\eqref{eq:Schur process} that defines the Schur process.
Note that the support of the symplectic Schur process consists of tuples $(\lavec,\muvec)$ such that $\mu^{(j)}\subseteq\la^{(j)}$, for all $1\le j\le k-1$, but unlike the Schur process, the containments $\mu^{(j)}\subseteq\la^{(j+1)}$ are not necessarily true; see \Cref{fig:support} for a representation of the support.

\begin{figure}[h]
    \centering
    \includegraphics{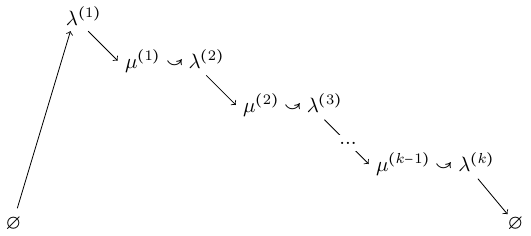}
    \caption{Representation of the support of the symplectic Schur process.}
    \label{fig:support}
\end{figure}

As a (possibly signed) measure, we prove that the symplectic Schur process is a determinantal point process. To state this result precisely, consider for each pair $(\lavec,\muvec)$ the point configuration $\mathcal{L}(\lavec,\muvec)$ on the space $\{1,\frac{3}{2},2,\dots,k-\frac{1}{2},k\}\times\Z$,\footnote{In the body of the paper, $\mathcal{L}(\lavec,\muvec)$ lives in the different space $\{1,1',2,2',\dots,(k-1)',k\}$. This is not a crucial difference, as the important thing is that our space consists of $(2k-1)$ copies of $\Z$, to describe the $(2k-1)$ partitions in $\lavec,\muvec$.} given by
    \begin{multline*}
        \mathcal{L}(\lavec,\muvec) := 
        \left\{ \left( 1,\, \la^{(1)}_i - i \right) \right\}_{i\ge 1}\cup
        \left\{ \left( \frac{3}{2},\, \mu^{(1)}_i - i \right) \right\}_{i\ge 1}\cup
        \left\{ \left( 2,\, \la^{(2)}_i - i \right) \right\}_{i\ge 1}\cup\ \cdots\\
        \cdots\ \cup \left\{ \left( k-1,\, \la^{(k-1)}_i - i \right) \right\}_{i\ge 1}
        \cup\left\{ \left( k-\frac{1}{2},\, \mu^{(k-1)}_i - i \right) \right\}_{i\ge 1}\cup
        \left\{ \left( k,\, \la^{(k)}_i - i \right) \right\}_{i\ge 1}.
    \end{multline*}

\begin{thm}[See \Cref{thm:det} in the text]\label{eq:main theorem intro}
    For any $S\ge 1$ distinct points $(i_1,u_1),\dots,(i_S,u_S)$ of $\{1,\frac{3}{2},2,\dots,k-1,k-\frac{1}{2},k\}\times\Z$, we have the following identity
    \begin{equation*}
        \sum_{\{(i_1,u_1),\dots,(i_S,u_S)\}\subset\mathcal{L}(\lavec,\muvec)}{ \Pr^\SSP\!\left(\lavec,\muvec \,\big|\, \veca,\vecb \right) }
        = \det_{1\le s,t\le S} \left[ K^{\SSP}(i_s,u_s;i_t,u_t) \right],
    \end{equation*}
    where the explicit expression for the kernel $K^\SSP$ is given in~\eqref{final_kernel} as a double contour integral.
\end{thm}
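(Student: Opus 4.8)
The plan is to prove determinantality of the symplectic Schur process by the standard route of the Eynard–Mehta theorem (or, equivalently, the Borodin–Rains formalism for Schur-like processes), adapted to our setting. The key structural input is the \emph{skew-dual} pairing: the down-up Schur functions $T_{\la,\mu}$ were designed precisely so that the weights in~\eqref{eq:ssp intro} form a chain of transfer operators whose adjacent compositions are governed by Cauchy-type identities. First I would rewrite the joint measure as a product of biorthogonalizable kernels by organizing the $(2k-1)$ partitions along the line $\{1,\tfrac32,2,\dots,k\}$: between consecutive levels the weight is either $s_{\la^{(j)}/\mu^{(j)}}(\a^j)$, $T_{\mu^{(j)},\la^{(j+1)}}(\b^{j+1})$, or the boundary terms $s_{\la^{(1)}}(\b^1)$, $\SP_{\la^{(k)}}(\a^k)$. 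Passing to the fermionic picture — i.e.\ encoding each partition $\nu$ by its Maya diagram $\{\nu_i-i\}_{i\ge1}$ — each of these weights becomes (up to normalization) a multiplicative functional on point configurations, hence the whole measure is an $L$-ensemble in the sense of Eynard–Mehta once one checks the appropriate "free fermion" structure. The combinatorial formula~\eqref{eq:down up intro}, $T_{\la,\mu}=\sum_\kappa s_{\la/\kappa}s_{\mu/\kappa}$, is what lets $T$ be factored through an auxiliary partition $\kappa$ and thus absorbed into the transfer-matrix product on the same footing as skew Schur weights.

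The core computation is then the evaluation of the correlation kernel. I would proceed via generating functions / contour integrals: each skew Schur weight $s_{\la/\mu}(\a)$ corresponds, on the fermionic Fock space, to a vertex operator $\Gamma_\pm(\a)$ whose normal-ordered commutation relations are exactly the single-variable Cauchy kernels $H(\a;z)=\prod_i(1-a_i z)^{-1}$ appearing in~\eqref{eqn:cauchy_identity}; the boundary symplectic weight $\SP_{\la^{(k)}}(\a^k)$ corresponds to a different operator whose commutation with the $\Gamma_\pm$ is controlled by the Cauchy–Littlewood identity~\eqref{eq:intro CL identity} and its skew/down-up refinements from \Cref{thm:main_1} (the skew down-up Cauchy identity of \Cref{new_skew_cauchy_thm} and the Cauchy–Littlewood identity of \Cref{gen_cauchy_littlewood_prop_2}). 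Commuting all creation operators past all annihilation operators, the kernel $K^{\SSP}(i_s,u_s;i_t,u_t)$ emerges as a double contour integral in variables $z,w$, with integrand a ratio of the specialization generating functions $H(\a^\bullet;\cdot)$, $H(\b^\bullet;\cdot)$ between levels $i_s$ and $i_t$, times a factor $\frac{\sqrt{zw}}{z-w}$ (or the symplectic analogue $\frac{1}{z-w}\cdot\frac{z-w}{1-zw}$-type kernel) reflecting the type-C boundary — this is the source of the square-root/half-integer labels in the statement. I would fix the contours so that $z$ encircles the poles contributed by one side and $w$ the poles from the other, exactly as in the Okounkov–Reshetikhin derivation, and read off~\eqref{final_kernel}.

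The main obstacle is the asymmetry of the support: unlike the Schur process, we do \emph{not} have $\mu^{(j)}\subseteq\la^{(j+1)}$, so the naive "interlacing array" picture breaks and one cannot directly invoke the Borodin–Rains machine verbatim. This is precisely where the down-up structure must be used carefully: the weight $T_{\mu^{(j)},\la^{(j+1)}}(\b^{j+1})$ plays the role of a \emph{backward-then-forward} transition (hence "down-up"), and I expect the correct move is to introduce, as in~\eqref{eq:down up intro}, the hidden partition $\kappa^{(j)}$ with $\mu^{(j)}\supseteq\kappa^{(j)}\subseteq\la^{(j+1)}$, enlarging the configuration to an honest interlacing one at the cost of doubling some levels, running the determinantal computation there, and then marginalizing out the $\kappa^{(j)}$'s — a step that should be transparent at the level of the kernel because integrating out a level in an Eynard–Mehta ensemble just composes two adjacent kernels. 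A secondary technical point is convergence: since the measure is only a \emph{signed} measure for general specializations, one must justify the contour manipulations and the interchange of summation (over $\mathcal{L}(\lavec,\muvec)$) with the integral representation; I would handle this by first proving the identity for specializations into finitely many positive variables small enough that all relevant Cauchy products converge absolutely, and then extending by the fact that both sides are (formal) power series / rational functions in the specialization parameters.
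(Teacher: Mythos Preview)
Your overall plan is right and is essentially what the paper does: it applies the Eynard--Mehta theorem in the Borodin--Rains form, after specializing $\a^k$ to finitely many Laurent variables and $\b^1$ to finitely many variables so that the two boundary weights $\SP_{\la^{(k)}}(\a^k)$ and $s_{\la^{(1)}}(\b^1)$ become honest Weyl-type determinants. Where the paper is more direct than your sketch is in handling $T_{\mu^{(j)},\la^{(j+1)}}$: rather than enlarging the state space by the auxiliary partitions $\kappa^{(j)}$ and then marginalizing, the paper uses the down-up formula together with Jacobi--Trudi and Cauchy--Binet to write $T_{\mu,\la}(\b)=\det\big[U(m_i,l_j)\big]$ for a single Toeplitz kernel $U$ with symbol $H(\b;z)H(\b;z^{-1})$. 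This collapses your ``double the levels, then compose'' step into one line and puts the whole weight~\eqref{eq:ssp intro} in the standard $\Phi\cdot W_1\cdots W_{2k-2}\cdot\Psi$ form, after which the kernel is extracted by computing $(M^{-1})_{ba}$ via minors and the already-known partition function from \Cref{partition_function_1}.

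Two small corrections. The half-integer labels $\tfrac32,\tfrac52,\dots$ in the statement are purely bookkeeping for the $\mu^{(j)}$ levels (the body of the paper uses $1',2',\dots$ instead); they have nothing to do with a $\sqrt{zw}$ factor. The type-C feature shows up elsewhere: the boundary matrix $\Psi$ has entries $x_j^{l+p+1}-x_j^{-l-p-1}$ coming from the symplectic Weyl determinant, and this is what produces the cross term $\dfrac{1-w^2}{(1-zw)(1-z^{-1}w)}$ in the final kernel, not $\dfrac{1}{1-zw}$ as you guessed. Your alternative vertex-operator route is plausible but is not what the paper does; making it rigorous would require a reflection-type operator for $\SP_\la$, which the paper sidesteps entirely by the finite-variable specialization at the boundary.
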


\begin{rem}
    In analogy with the Schur process, the marginals of the symplectic Schur process are given by the symplectic Schur measure. Moreover, a determinantal correlation kernel for the symplectic Schur measure was found by Betea in \cite{B-2018} and corresponds exactly to $K^{\textrm{Betea}}(u,v)=K^{\SSP}(i,u;i,v)$, for $i$ fixed.
\end{rem}

\begin{wrapfigure}[43]{R}{4cm}
\vspace{-.5cm}
  \begin{center}
    \begin{tikzpicture}
        \node[] at (0,0) {\includegraphics[width=3cm]{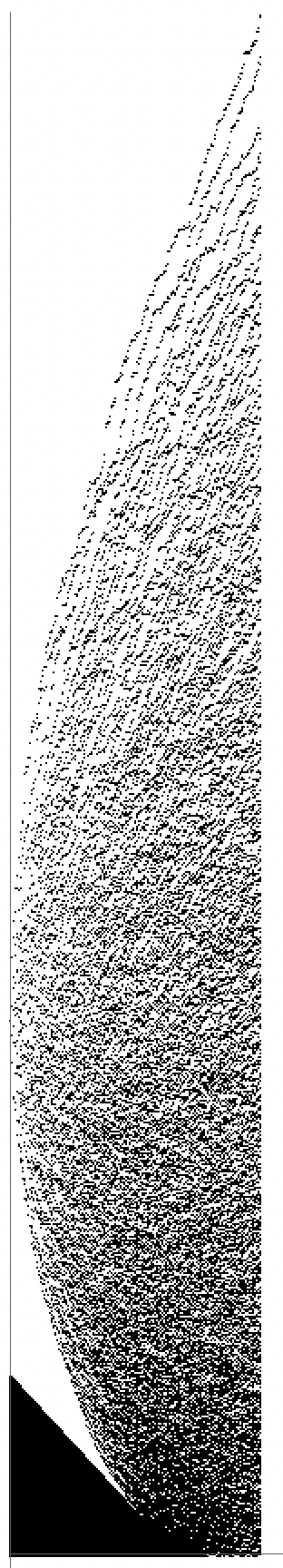}};
        \draw[xshift=-1.5cm,yshift=-6.25cm,->] (0,0) node[left] {\footnotesize $(0,0)$} -- (3,0) node[below] {\footnotesize $j$};
        \draw[xshift=-1.4cm,yshift=-6.25cm,->] (0,-2) -- (0,14.5) node[left] {\footnotesize $\lambda^{(j)}_i-i$};
    \end{tikzpicture}
  \end{center}
  \caption{\footnotesize A depiction of the point process $\mathcal{L}(\lavec)$ constructed through Berele insertion.}
  \label{fig: intro}
\end{wrapfigure}

For general specializations $\a^i, \b^j$, the symplectic Schur process is a signed measure.
It is an important question to classify all specializations for which $\Pr^\SSP$ is a probability measure, i.e.~for which the product in the right hand side of \eqref{eq:ssp intro} is nonnegative for every choice of sequences of partitions $\lavec,\muvec$.
In \Cref{sec:special_cases}, we find large families of examples of specializations for which this is the case.
We note that this contribution is more difficult than in the case of Schur processes.
This boils down to the fact that the classical problem of classification of \emph{Schur-positive specializations}, i.e. specializations $\a$ such that
    \begin{equation*}
        s_{\la/\mu}(\a) \ge 0,
        \qquad
        \text{for all }\la,\mu\in\Y,
    \end{equation*}
has been solved in 1952 by~\cite{Whi-1952,ASW-1952,Edr-1952}. On the other hand, for the universal symplectic character $\SP_\la$, the existence of positive specializations is not established and, in fact, it is not even clear they exist at all. For instance, in the case when $\a=(x_1, x_1^{-1},\dots, x_n, x_n^{-1})$ with $x_1,\dots,x_n >0$, we have
    \begin{equation*}
        \SP_\la(x_1, x_1^{-1},\dots, x_n, x_n^{-1}) \ge 0,
        \quad
        \text{whenever } \ell(\la)\le n.
    \end{equation*}
However, if $\ell(\la)>n$, then it is possible that $\SP_\la(x_1, x_1^{-1},\dots, x_n, x_n^{-1}) < 0$, even if $x_1,x_2>0$. For instance, if $n=1$ and $\la=(1,1,1)$, we have $\SP_{(1,1,1)}(x_1, x_1^{-1}) =-2(x_1+x_1^{-1})$.

\subsubsection{Symplectic Schur process through Berele sampling and asymptotics} \label{subs:Symplectic Schur process through Berele sampling}

In \Cref{sec:Berele}, we study a dynamics that preserves a special version of symplectic Schur processes~\eqref{eq:ssp intro} and analyze the asymptotics of its fixed-time distributions.
This dynamics can be alternatively sampled by a process on integer partitions $\la^{(1)},\la^{(2)},\la^{(3)},\dots$ that adds or removes boxes from the partitions at each step, and that is  equivalent to the particle dynamics described in \Cref{subs:probabilistic interpretation} under the identification $\lambda^{(t)}_i = \mathsf{x}_{2n,i}(t)$.

Consider the symplectic Schur process with the following choice of specializations
    \begin{center}
        $\a^1 = \dots = \a^{k-1} = \varnothing,$
            \quad
            $\a^k= (x_1, x_1^{-1},\dots, x_n, x_n^{-1}),$
            \quad
            $\b^1=(y_1),$
            \dots,
            $\b^k=(y_k),$
    \end{center}
where $k\le n$ and $x_1,\dots,x_n,y_1,\dots,y_k$ are positive real numbers satisfying $ 0< x_iy_j, x_i^{-1}y_j <1$, for all $1\le i \le n$, $1\le j \le k$. In this case, we see the symplectic Schur process as a probability measure on $k$-tuples of partitions $\lavec=(\la^{(1)},\dots,\la^{(k)})$ -- as we trace over partitions $\mu^{(1)},\dots,\mu^{(k-1)}$ -- that admits an explicit sampling procedure through the Berele insertion algorithm~\cite{B-1986}, a symplectic version of the Robinson-Schensted-Knuth algorithm; see \Cref{sec:Berele}.

To construct a random sequence $\lavec = (\la^{(1)}, \dots, \la^{(k)})$, one first needs to prepare $k$ random words $w_1,\dots,w_k$ in the alphabet $\mathsf{A}_n=\{1,\overline{1}, \dots, n, \overline{n}\}$ by setting
    \begin{equation*}
        w_j = \underbrace{1\,\cdots\, 1}_{m_{1,j}} \underbrace{\overline{1}\,\cdots\, \overline{1}}_{\overline{m}_{1,j}} \,\cdots\,\underbrace{n\,\cdots\, n}_{m_{n,j}} \underbrace{\overline{n}\,\cdots\, \overline{n}}_{\overline{m}_{n,j}},
    \end{equation*}
where $m_{i,j},\overline{m}_{i,j}$, counting the number of occurrences of $i,\overline{i}$ in $w_j$, are sampled independently with the geometric laws $\mathrm{Geo}(x_iy_j), \mathrm{Geo}(x_i^{-1}y_j)$, respectively. Then the Berele insertion scheme offers a procedure to insert the words $w_j$'s sequentially into an initially empty symplectic Young tableau and obtain the sequence
    \begin{equation*}
        \varnothing \xrightarrow[]{w_1} P_1 \xrightarrow[]{w_2} P_2 \xrightarrow[]{w_3} \,\cdots\, \xrightarrow[]{w_k} P_k,
    \end{equation*}
whose shapes 
    \begin{equation*}
        \la^{(1)}, \dots, \la^{(k)},
    \end{equation*}
namely $\la^{(j)} = \mathrm{shape}(P_j)$, follow the law of a symplectic Schur process.
Note that the dynamics described in \Cref{subs:probabilistic interpretation} is recovered simply by interpreting the symplectic tableaux $P_1,\dots,P_k$ as half-triangular interlacing arrays.
This result is contained in \Cref{prop:symplectic Schur berele}. In \Cref{fig: intro}, we report a sample of the point process $\mathcal{L}(\lavec) := \{ (j,\,\la^{(j)}_i-i) \mid i\ge 1,\, j=1,\dots,k\}$, constructed through our Berele insertion process with parameters
\begin{equation}\label{eq:choice parameters intro}
    x_1=\cdots=x_n=1, 
    \qquad
    y_1= \cdots = y_k =1/2.
\end{equation}

As an application of our second main \Cref{eq:main theorem intro}, we also perform the asymptotic analysis for the special choice of parameters~\eqref{eq:choice parameters intro}, and obtain the following results, which are elaborated on in \Cref{sec:asymptotics}.

\begin{thm}[Limit shape and limiting density] \label{thm:into limit shape}
    We have
    \begin{equation}
        \lim_{n\to\infty} \mathbb{P}\Big( \big(\lfloor nx \rfloor,\, \lfloor ny \rfloor\big) \in \mathcal{L}(\lavec) \Big) = 
        \begin{cases}
            1, \qquad & \text{if } 0\le x< 1, \ -1\le y \le -x,
            \\
            \rho(x,y), \qquad &\text{if } 0< x< 1, \ y_-(x) < y < y_+(x),
            \\
            0, \qquad & \text{otherwise},
        \end{cases} 
    \end{equation}
where the functions $\rho(x,y)$ and $y_\pm(x)$ are defined in \eqref{eq:limiting density} and \eqref{eq:limit shape}, respectively.
\end{thm}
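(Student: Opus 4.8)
The plan is to derive the limit shape and limiting density directly from the determinantal correlation kernel $K^{\SSP}$ of \Cref{eq:main theorem intro}, specialized to the Berele-type parameters \eqref{eq:choice parameters intro}. Since the event $\big(\lfloor nx\rfloor,\lfloor ny\rfloor\big)\in\mathcal{L}(\lavec)$ is the one-point correlation at the space point $(j,u)=(\lfloor nx\rfloor,\lfloor ny\rfloor)$, its probability equals $K^{\SSP}\big(\lfloor nx\rfloor,\lfloor ny\rfloor;\lfloor nx\rfloor,\lfloor ny\rfloor\big)$. Recall from the remark after \Cref{eq:main theorem intro} that on the diagonal this is exactly Betea's kernel $K^{\mathrm{Betea}}$ for the symplectic Schur measure, so after plugging in $x_i=1$, $y_j=1/2$ and setting $k$ proportional to $n$, the quantity to analyze is a single double-contour integral whose integrand is of the form $e^{n(S(z)-S(w))}$ times an elementary rational prefactor that stays bounded, where $S=S_{x,y}$ is an explicit action depending on the two ``macroscopic'' coordinates $x,y$.

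First I would set up the steepest-descent framework: write out $K^{\SSP}(\lfloor nx\rfloor,u;\lfloor nx\rfloor,u)$ from \eqref{final_kernel}, substitute the parameters, and read off the exponentially large part of the integrand as $nS_{x,y}(z)$ on one contour and $-nS_{x,y}(w)$ on the other, with $u=\lfloor ny\rfloor$ entering linearly in $S_{x,y}$ through a $-y\log z$ term. Then I would locate the critical points of $S_{x,y}$, i.e.\ solve $S_{x,y}'(z)=0$; this is a polynomial (quadratic or cubic after clearing denominators, given the $1-x_iz$, $1-x_i^{-1}z$, $1-y_jz$ factors collapsing under the specialization) whose discriminant in the variable $y$, for fixed $x$, determines the three regimes. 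The ``frozen'' region where the limit equals $1$ should correspond to $S_{x,y}$ having all critical points real and the contours being forced to cross in a way that picks up a full residue contribution; the region where the limit is $0$ to real critical points with no residue picked up; and the ``liquid'' region $y_-(x)<y<y_+(x)$ to a pair of complex-conjugate critical points $z_c(x,y)$, $\overline{z_c(x,y)}$. The boundary curves $y_\pm(x)$ are then precisely the locus where the discriminant vanishes (double real critical point), which I would check matches the definition referenced in \eqref{eq:limit shape}. In the liquid region, deforming both contours through the conjugate saddles and applying the standard stationary-phase estimate yields $\rho(x,y)=\frac{1}{\pi}\big|\arg z_c(x,y)\big|$ (up to the usual normalization), matching \eqref{eq:limiting density}.

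The routine-but-careful steps are: (i) the explicit computation of $S_{x,y}$ and its derivative from \eqref{final_kernel} under the specialization, keeping track of the factor $\prod_{i<j}(1-y_iy_j)$-type corrections that, since $y_j\equiv 1/2$ and there are $\sim n$ of them, actually contribute to the exponential scale and must be incorporated into $S_{x,y}$ rather than treated as a bounded prefactor; (ii) verifying the contours in the definition of $K^{\SSP}$ can be deformed to pass through the relevant saddles without crossing poles, or accounting exactly for the poles crossed (this is what produces the indicator pieces of the limit); and (iii) the local Gaussian estimate near the saddle to extract the $O(1)$ diagonal limit and confirm no additional $\sqrt{n}$ or $1/\sqrt{n}$ factors survive on the diagonal. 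I expect the main obstacle to be step (ii): controlling the topology of the steepest-descent contours for $K^{\SSP}$, which—because of the symplectic structure—has poles coming both from the $x_i^{\pm}y_j$ factors and from the reflection symmetry $z\mapsto 1/z$ inherent to the symplectic kernel, so the admissible contour around $0$ must be chosen to respect this symmetry; pinning down exactly which residues are collected as $(x,y)$ moves between the three regions, and matching the transition loci with $y_\pm(x)$, is the delicate part, and is presumably where the real work of \Cref{sec:asymptotics} lies.
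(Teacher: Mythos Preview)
Your approach is essentially the same as the paper's: compute the one-point function as the diagonal value of the kernel, extract the action $h_{x,y}$ (your $S_{x,y}$) from the specialized kernel \eqref{eq:kernel berele}, locate its critical points via the cubic $h_{x,y}'(z)=0$, and read off the three regimes from whether the relevant pair of roots is real or complex-conjugate; in the liquid region the density is $\rho(x,y)=\arg(z_+)/\pi$, obtained in the paper as the diagonal of the extended Sine kernel (\Cref{prop:limit Sine}).

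Two small corrections to your outline. First, your worry in step~(i) about the $\prod_{i<j}(1-y_iy_j)$ factor is misplaced: that product sits in the partition function $\mathcal{Z}$ and in $G(\b^r)$, not in the correlation kernel \eqref{final_kernel}, so it never enters $S_{x,y}$; the action is read off directly from $G_{j,v}(z)=(1-z)^{-2n}[(1-z/2)(1-2z)]^j z^{-j-v}$. Second, the paper does not obtain the frozen region $\{-1\le y\le -x\}$ via residue bookkeeping: it observes instead that the Berele insertion forces $\la^{(i)}_j=0$ almost surely for $j>i$ (\Cref{cor:bounded_length}), so the points $(i,-j)$ with $i<j\le n$ belong to $\mathcal{L}(\lavec)$ deterministically. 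Your contour-crossing argument would also work, but the combinatorial route is cleaner here.
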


By studying the scaling limits of the correlation kernel $K^{\mathrm{SSP}}$, we can describe fluctuations of the point process $\mathcal{L}(\lavec)$ around the limit shape described in \Cref{thm:into limit shape}. An example of such statements is the following.

\begin{thm}[Limiting fluctuations in the bulk]
Fix arbitrary $x\in (0,1)$ and $y\in (y_-(x), y_+(x))$, the process $\mathcal{L}(\lavec)-\big(\lfloor n x \rfloor,\, \lfloor n y \rfloor\big)$ converges in distribution to the extended Sine process with the determinantal correlation kernel given in \eqref{eq:extended_Sine_kernel}.
\end{thm}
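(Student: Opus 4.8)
The plan is to exploit the determinantal structure established in \Cref{eq:main theorem intro}: the point process $\mathcal{L}(\lavec)$ is determinantal with correlation kernel $K^{\SSP}$ given by the explicit double contour integral \eqref{final_kernel}. Since convergence in distribution of a sequence of uniformly locally trace-class determinantal point processes follows from pointwise convergence of their correlation kernels on compact sets — up to gauge conjugations $K(\cdot,\star)\mapsto \frac{g(\cdot)}{g(\star)}K(\cdot,\star)$ that leave every principal minor unchanged — the entire statement reduces to a steepest-descent analysis of $K^{\SSP}$ in the relevant scaling window, in the spirit of the original Schur-process asymptotics of \cite{OR-2003}.

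First I would fix $x\in(0,1)$ and $y\in(y_-(x),y_+(x))$, and introduce local coordinates: for bounded integers $m,m'$ and $a,b$, set the level variables to $\lfloor nx\rfloor+m$ and $\lfloor nx\rfloor+m'$ (a small amount of care is needed for the half-integer bookkeeping of the space $\{1,\tfrac32,2,\dots\}$, but this is cosmetic) and the height variables to $\lfloor ny\rfloor+a$ and $\lfloor ny\rfloor+b$. Substituting these, together with the specialization \eqref{eq:choice parameters intro}, into \eqref{final_kernel}, the integrand takes the schematic form
$e^{n(\mathcal{S}(z)-\mathcal{S}(w))}\cdot(\text{slowly varying rational prefactor})\cdot\frac{1}{z-w}\cdot z^{-m}w^{m'}\cdot z^{-a}w^{b}$,
where $\mathcal{S}=\mathcal{S}_{x,y}$ is an explicit action function assembled from the Cauchy factors $(1-zy_j)^{-1}$, $(1-z^{-1}y_j)^{-1}$, the power $z^{-\lfloor ny\rfloor}$ coming from $\la^{(j)}_i-i$, and the $\lfloor nx\rfloor$-fold contribution over levels. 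I would then solve $\mathcal{S}'(z)=0$: for $(x,y)$ strictly inside the liquid region these critical points form a complex-conjugate pair $z_c,\bar z_c$ with $z_c\notin\R$; moreover $\arg z_c$ is exactly the limiting density $\rho(x,y)$ of \eqref{eq:limiting density}, and the curves $y_\pm(x)$ of \eqref{eq:limit shape} are where $z_c$ hits the real axis, so the hypothesis $y\in(y_-(x),y_+(x))$ is precisely the condition for a genuine complex saddle. One then deforms the $z$- and $w$-contours of \eqref{final_kernel} to steepest-descent paths through $z_c$ and $\bar z_c$, carefully tracking the pole of $\frac1{z-w}$ that is crossed when the two contours are pushed past one another — this residue is the source of the extra ``diagonal'' term that turns the limit into the \emph{extended} discrete sine kernel when $m\ne m'$ rather than the plain sine kernel. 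On the deformed contours the mass localizes in $O(n^{-1/2})$-neighborhoods of $z_c,\bar z_c$, and a standard Laplace expansion there (retaining the $O(1)$-scale factors $z^{-m-a}w^{m'+b}$) produces oscillatory Gaussian integrals whose sum over the four pairings of $\{z_c,\bar z_c\}$ collapses, using $\mathcal{S}'(z_c)=0$ and $\overline{\mathcal{S}(z_c)}=\mathcal{S}(\bar z_c)$, to the extended sine kernel of \eqref{eq:extended_Sine_kernel} with density $\rho(x,y)$. To finish, I would promote pointwise kernel convergence to convergence in distribution: a uniform-in-$n$ Gaussian tail bound on $K^{\SSP}$ in the scaling window makes all correlation functions dominated (via Hadamard's inequality), and since a determinantal point process is determined by its correlation functions, convergence of all correlation functions gives convergence in distribution of $\mathcal{L}(\lavec)-(\lfloor nx\rfloor,\lfloor ny\rfloor)$.

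The main obstacle is the steepest-descent step, in two specific respects. First, at the degenerate specialization $x_1=\dots=x_n=1$ the two families of Cauchy factors $(1-zy_j)^{-1}$ and $(1-z^{-1}y_j)^{-1}$ are exchanged by $z\mapsto z^{-1}$, so $\mathcal{S}$ is invariant under inversion; this symmetry may create additional (possibly spurious) critical points and forces the descent contours through delicate regions near $z=\pm1$ and near the poles $z=y_j^{\pm1}$, so one must verify that the contours of \eqref{final_kernel} really can be deformed to the desired shape without pinching, and that all discarded arcs are exponentially negligible. Second, the half-triangular/symplectic geometry and the wall constraint $\la^{(j)}_i\ge0$ require checking that in the bulk regime the wall is invisible — that the ``interior'' critical point is the dominant one and that the reflected contributions reflecting the $Sp$-type symmetry of $K^{\SSP}$ decay — which holds because $y>y_-(x)$ keeps us strictly away from the frozen region adjacent to the wall, but needs to be made rigorous. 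The Laplace expansion, the identification of $\rho$, the extraction of the extended structure, and the soft convergence argument are all routine once this contour picture is secured.
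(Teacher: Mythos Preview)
Your proposal is correct and follows essentially the same steepest-descent route as the paper, which simply says ``The proof is analogous to \cite{OR-2003}'' for \Cref{prop:limit Sine}. One remark: the two obstacles you flag are milder than you suggest. The symplectic cross factor $\frac{1-w^2}{(1-zw)(z-w)}$ that distinguishes $K^{\SSP}$ from the ordinary Schur-process kernel satisfies $\frac{1-w^2}{(1-zw)(z-w)}=\frac{1}{\varepsilon}\cdot\frac{1}{\xi-\eta}+O(1)$ under $z=z_*+\varepsilon\xi$, $w=z_*+\varepsilon\eta$ whenever $z_*\ne\pm1$ (this is \Cref{rem:symplectic schur and ordinary schur}), so in the bulk---where the saddles $z_\pm$ are a genuine complex-conjugate pair---the ``reflected'' pole $1/(1-zw)$ contributes nothing at leading order and the analysis reduces to the type-$A$ case. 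Likewise there is no half-integer bookkeeping here: the oscillating process \eqref{eq:Symplectic process berele} has all $\a^j$ empty for $j<k$, so only the integer levels $\{1,\dots,k\}$ appear.
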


Similarly, we can characterize the limiting fluctuations of the process $\mathcal{L}(\lavec)$, when $y=y_-(x)$ or $y=y_+(x)$. After proper rescaling, we find that for $x=y=0$ and $x=-y=4/5$, the process $\mathcal{L}(\lavec)$ converges to a GUE corner process; see \Cref{subs:GUE corner processes}. In addition, for $x\in (0,1)$ and $y=y_\pm(x)$ (with $(x,y)\neq(4/5,-4/5)$), we prove convergence to the Airy point process; see \Cref{prop:limit Airy} for the precise statement.

Finally, we uncover a new limiting object that, to our knowledge, has not appeared in previous analyses of Schur processes. In particular, we show that near the microscopic region where the limit shape meets the horizontal line \(y=-1\), the kernel \(K^{\SSP}\) from \Cref{eq:main theorem intro} converges to the novel variant of the Pearcey kernel in the following result (\Cref{prop:pearcey} in the text).

\begin{thm}[Variant of the Pearcey Kernel]\label{thm:main_3}
Consider the rescaling
\[
i(\tau)=\left\lfloor\frac{9n}{8} +\sqrt{\frac{27n}{64}}\,\tau\right\rfloor,\qquad
u(\alpha)= -n + \left\lfloor\left( \frac{n}{12} \right)^{\frac{1}{4}} \alpha\right\rfloor.
\]
For any fixed \(k\in\Z_{\ge 1}\) and \(\tau_1,\dots,\tau_k\in\R\) and \(\alpha_1,\dots,\alpha_k\in\R_{+}\), let \(i_\ell=i(\tau_\ell)\), \(u_\ell=u(\alpha_\ell)\), then
\[
\lim_{n\to\infty} \,
\det_{1\le\ell,\ell'\le k} \left[ \left( \frac{n}{12} \right)^{\frac{1}{4}} \Big(  \delta_{i_\ell,i_{\ell'}} \delta_{u_\ell,u_{\ell'}} - K^{\SSP}(i_\ell,u_\ell;i_{\ell'},u_{\ell'}) \Big) \right] = \det_{1\le\ell,\ell'\le k} \left[ \mathcal{K}(\tau_\ell, \alpha_\ell; \tau_{\ell'},  \alpha_{\ell'}) \right],
\]
where \(\mathcal{K}\) is the Pearcey-like kernel defined in \eqref{eq:symplectic Percey}.
\end{thm}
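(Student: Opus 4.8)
The plan is to carry out a steepest-descent analysis of the double contour integral \eqref{final_kernel} for $K^{\SSP}$, evaluated at the specializations $x_1=\dots=x_n=1$, $y_1=\dots=y_k=1/2$. The first step is computational: after this substitution the Cauchy-- and Cauchy--Littlewood--type products in \eqref{final_kernel} telescope, and — working with the complementary kernel $\delta_{i,i'}\delta_{u,v}-K^{\SSP}(i,u;i',v)$, which is the natural object near a frozen (density one) region abutting the wall — one is left with an expression of the form
\[
\delta_{i,i'}\delta_{u,v}-K^{\SSP}(i,u;i',v)=\frac{1}{(2\pi\mathrm i)^2}\oint_{\gamma_w}\oint_{\gamma_z}\frac{R(w,z)}{w-z}\,e^{\,n\Phi_n(w)-n\Phi_n(z)}\,\dd w\,\dd z,
\]
where $R$ is an explicit rational factor, $\gamma_w,\gamma_z$ are concentric circles, and $\Phi_n$ depends on the macroscopic coordinates $i/n$, $u/n$ together with corrections of size $O(n^{-1/2})$ in $i$ (carrying $\tau$) and $O(n^{-1/4})$ in $u$ (carrying $\alpha$). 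The base point of the analysis is $(i/n,u/n)\to(9/8,-1)$, i.e.\ the place where the boundary of the density-one region of \Cref{thm:into limit shape} touches the wall $y=-1$.

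The core of the argument is the local analysis of the limiting action $\Phi(w):=\lim_{n\to\infty}\Phi_n(w)$ at this base point. I expect $\Phi$ to have a single critical point $w_0$ of multiplicity four,
\[
\Phi'(w_0)=\Phi''(w_0)=\Phi'''(w_0)=0,\qquad \Phi^{(4)}(w_0)\ne 0,
\]
rather than the double (Airy) or triple degeneracies that govern the other edge regimes in \Cref{sec:asymptotics}; this quadruple degeneracy is the analytic signature of the Pearcey universality class, and it is forced here by the hard wall, since the involution $w\mapsto w^{-1}$ carried by the symplectic variables pairs a ``liquid'' saddle with its mirror image, and at the collision point on the wall these, together with the edge saddle, all coalesce. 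Extracting the Taylor coefficients of $\Phi$ at $w_0$, and of the $\tau$- and $\alpha$-dependent corrections, is exactly what pins down the constants $\sqrt{27n/64}$, $(n/12)^{1/4}$, and the centering $-n$ in the statement: after a change of variables $w=w_0+c\,(n/12)^{-1/4}\sigma+O(n^{-1/2})$, $z=w_0+c\,(n/12)^{-1/4}\rho+O(n^{-1/2})$ with the appropriate constant $c$, and the matching rescalings in $\tau,\alpha$, one should obtain
\[
n\Phi_n(w)=n\Phi_n(w_0)-\tfrac14\,\sigma^4+\tfrac{\tau}{2}\,\sigma^2-\alpha\,\sigma+o(1)
\]
(up to an overall sign and affine normalization), the quartic coefficient being fixed by $\Phi^{(4)}(w_0)$ and $c$.

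Granted the local expansion, the remaining work is standard steepest-descent bookkeeping. I would (i) deform $\gamma_w,\gamma_z$, without crossing a pole of the integrand and positioned according to the sign of $\tau_{\ell'}-\tau_\ell$ (to handle the different levels $i_\ell\ne i_{\ell'}$ appearing in the extended kernel), to contours passing through $w_0$ along the correct Stokes sectors of the quartic and — crucially — invariant under $w\mapsto w^{-1}$, so that each one automatically also threads the reflected configuration of saddles; this invariance is what produces the ``symmetric'' modification (a $\sinh$/$\cosh$-type dependence on $\alpha$, consistent with $\alpha\in\R_{+}$) that distinguishes $\mathcal K$ from the ordinary Pearcey kernel; (ii) show that outside a neighborhood of $w_0$ of radius $\gg n^{-1/4}$ the integrand is exponentially small, so that the integral localizes to such a neighborhood; (iii) rescale and pass to the limit inside the integral, using that $\frac{R(w,z)}{w-z}\,\dd w\,\dd z$ becomes $(n/12)^{-1/4}\cdot\frac{\dd\sigma\,\dd\rho}{\sigma-\rho}\cdot(1+o(1))$, which is why the statement multiplies the complementary kernel by $(n/12)^{1/4}$; and (iv) recognize the resulting expression as the kernel $\mathcal K$ of \eqref{eq:symplectic Percey}, then invoke the standard fact that locally uniform convergence of correlation kernels (with adequate decay) implies convergence of all the determinants formed from them, and hence of the finite-dimensional distributions of the point process.

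The main obstacle is the step behind the claimed quadruple saddle: one must verify, from the algebraic description of the limit shape underlying \Cref{thm:into limit shape}, that $(9/8,-1)$ is genuinely a triple-coalescence point of the critical equation with non-vanishing fourth derivative, and — the more delicate half — that the $w\mapsto w^{-1}$ symmetry can be maintained throughout the contour deformation, so that the colliding saddles together with their reflections assemble into one quartic rather than into two independent Airy points. This is where the hard wall genuinely changes the universality class, and where one must take care to land on the symmetric Pearcey-like kernel $\mathcal K$ rather than on a one-sided Pearcey kernel. The subsidiary points — that $R(w,z)$ is regular and nonzero at $w_0$, and that the deformation crosses no pole of \eqref{final_kernel} — should be routine once the geometry of the saddle near the wall is in hand.
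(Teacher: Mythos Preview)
Your overall architecture is right --- steepest descent on the complementary kernel, a fourth-order critical point, quartic local expansion --- but there is a concrete gap in step~(iii), and it is precisely the step that distinguishes $\mathcal K$ from the ordinary Pearcey kernel. The critical point here is $w_0=-1$; this is not incidental but essential. The cross factor in \eqref{final_kernel} is not $R(w,z)/(w-z)$ with $R$ ``regular and nonzero at $w_0$'' as you assume: it is $\dfrac{1-w^2}{(1-zw)(z-w)}$, and both $1-w^2$ and $1-zw$ vanish at $w=z=-1$. Writing $z=-1+\varepsilon\xi$, $w=-1+\varepsilon\eta$ with $\varepsilon=(n/12)^{-1/4}$, one finds $1-w^2\approx 2\varepsilon\eta$, $1-zw\approx\varepsilon(\xi+\eta)$, $z-w=\varepsilon(\xi-\eta)$, so
\[
\frac{1-w^2}{(1-zw)(z-w)}\;\approx\;\frac{1}{\varepsilon}\cdot\frac{2\eta}{\xi^2-\eta^2},
\]
not $\varepsilon^{-1}/(\xi-\eta)$. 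This $2\eta/(\xi^2-\eta^2)$ is exactly the cross term in \eqref{eq:symplectic Percey}, and it is produced entirely by the local behavior of the integrand at the self-inverse point $w_0=-1$, not by any $w\mapsto w^{-1}$-invariant choice of contour or by pairing reflected saddles. Your proposed mechanism --- threading reflected saddle configurations via invariant contours --- is a red herring; if you carried out your plan as written you would land on the standard Pearcey kernel and be unable to match $\mathcal K$.

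Concretely, the fix is: (a)~identify $w_0=-1$ from $h'_{9/8,-1}(-1)=h''_{9/8,-1}(-1)=h'''_{9/8,-1}(-1)=0$ and $h^{(4)}_{9/8,-1}(-1)=1/12$ (this is a direct computation from the explicit $h_{x,y}$ in \eqref{eq:h}); (b)~drop the assumption that $R$ is regular and nonzero, and instead Taylor-expand the full cross factor at $-1$ to obtain $2\eta/(\xi^2-\eta^2)$; (c)~deform to steepest descent/ascent contours through $-1$ as in \Cref{fig:contour plots} --- no special invariance is needed, only that the $z$-contour sits in the sectors where $\Re(\xi^4)<0$ and the $w$-contour on the imaginary axis. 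The rest of your outline (localization, Gaussian residue for the single integral when $\tau<\tau'$, passage to the limit of determinants) then goes through unchanged.
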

In \Cref{sec:asymptotics at 1.125}, we carry out the asymptotic analysis of the correlation kernel \(K^{\SSP}\) in the regime where \(k>n\). In this range, the Berele insertion process deviates from the symplectic Schur process and the measure \(\mathbb{P}^\SSP\) is, in general, no longer a probability measure. As a result \Cref{thm:main_3} does not allow us to conclude that the kernel \(\mathcal{K}\) captures the behavior at the intersection of the liquid region with the horizontal line \([0,+\infty)\times\{-n\}\). Nevertheless, the Pearcey-like kernel \(\mathcal{K}\) appears to govern a determinantal point process, as suggested by numerical tests and its structure closely resembles that of the symmetric Pearcey kernel discussed in~\cite{BK-2010}. Similar kernels have recently emerged in the study of probability measures linked to the skew \((GL(n), GL(k))\)-duality with piecewise-constant and alternating specializations~\cite{BNNS-2024+}. A detailed comparison between these kernels and the one obtained here presents an interesting direction for future work.

\subsection{Outlook and future research directions}

Since its introduction, numerous generalizations of the Schur process have been proposed. Some notable ones are: the \emph{periodic Schur process}~\cite{B-2007}, which can be regarded as a Schur process of Cartan type $\hat{A}$~\cite{Tin-2007}; the \emph{Macdonald process}~\cite{BC-2014}, which employs Macdonald functions in place of Schur functions; the \emph{Spin Hall-Littlewood processes}~\cite{BP-2019,BMP-2021,AB-2024} where Schur functions are replaced by symmetric functions originating from partition functions of integrable vertex models; among others.
 
The list presented above is far from complete.
We note that these generalizations also have applications that are unreachable by the Schur process, e.g.~to the analysis of $\beta$-Jacobi corners process~\cite{BG-2015} and directed polymers~\cite{BC-2014}.
In general, the presence of branching rules and Cauchy-type identities allows to construct \emph{solvable} Schur-type processes; other recent examples are~\cite{OCSZ-14,MP-22,Kos-2021,CGK-2022,A-2023,GP-2024}.
Just like the Schur process, we envision the symplectic variant defined here to be amenable to various generalizations and further applications. A natural one is to lift universal symplectic characters to Macdonald-Koornwinder functions~\cite{Rai-2005}.
Some initial steps in this direction were taken by~\cite{Nte-2016} for $q$-Whittaker polynomials.
On the other hand, \cite{WZJ-2016} studied integrable vertex models to prove identities for BC-type Hall-Littlewood polynomials. More recently, stochastic vertex models of symplectic type were defined in \cite{Zho-2022}, but their connection to random partitions is unclear.
 
Another interesting direction includes testing alternative sampling schemes for the symplectic Schur process. 
Contrary to the case of combinatorics in type $A$, which is largely encompassed by the celebrated RSK correspondence, combinatorics in type $C$ is richer and less straightforward. For instance, the Berele insertion recalled below in \Cref{sec:Berele} is not the canonical extension of the RSK correspondence, and other combinatorial correspondences can be put in place to prove the Cauchy-Littlewood identity (see~\cite{HK-2022}); they have other desirable properties, such as compatibility with crystal symmetries~\cite{Bak-2000,Lec-2002}. It would be interesting to investigate sampling schemes based on these other combinatorial correspondences, which would possibly prove fruitful to understand if extensions in affine setting, in the style of~\cite{B-2007} or connections with solitonic systems~\cite{IMS-2023} are possible.

Finally, we mention that the dynamics preserving the symplectic Schur process from \Cref{sec:dynamics ssp} can lead to various surface growth models; for the construction of similar dynamics that preserve the Schur and Macdonald processes, see~\cite{BP-2016,BF-2014} and the survey~\cite{BP-2014}.
It would be interesting to explore various general schemes for random growth of sequences of partitions and stepped surfaces.
Related Markov processes that preserve the BC-type $z$-measures (see \cite{C-2018a} and \cite[Sec.~8]{BO-2005}) from asymptotic representation theory were constructed in~\cite{C-2018b}; see also \cite{A-2018}.

\subsection*{Acknowledgments}
This research was partially funded by the European Union’s Horizon 2020 research and innovation programme under the Marie Skłodowska-Curie grant agreement No.~101030938.
C.C. is grateful to Alexei Borodin, Grigori Olshanski and Anton Nazarov for helpful conversations, and to the the University of Warwick for hosting him during the beginning stages of this project.

\section{Background on symmetric polynomials and Schur functions}

We will assume that the reader is familiar with the language of symmetric functions, e.g. from~\cite[Ch.~I]{M-1995}. In this section, we simply recall some definitions and set our notations.

Partitions will be denoted by lowercase Greek letters $\la,\mu,\rho$, etc., and they will be identified with their corresponding Young diagrams.
The set of all partitions is denoted by $\Y$. For any $\la\in\Y$, the size of $\la$ is denoted by $|\la|$ and set $\Y_n:=\{\la\in\Y:|\la|=n\}$, so that the set of all partitions is $\Y=\bigsqcup_{n\ge 0}{\Y_n}$ and includes the empty partition $\varnothing\in\Y_0$.
Finally, $\ell(\la)$ denotes the length of $\la$.

\subsection{Schur polynomials and their symplectic analogues}

\begin{df} \label{def:schur poly}
Let $n\in\Z_{\ge 0}$ and let $\la$ be any partition with $\ell(\la)\le n$.
	\begin{itemize}
		\item For variables $y_1,\dots,y_n$, let $y:=(y_1,\dots,y_n)$ and define
\begin{equation*}
s_\la(y) := \frac{\det_{1\le i,j\le n}\left(y_i^{\la_j+n-j}\right)}{\prod_{1\le i<j\le n}(y_i-y_j)}.
\end{equation*}
We call $s_\la(y)$ a \emph{Schur polynomial}.

		\item For variables $x_1,\dots,x_n$, let $x^\pm:=(x_1^\pm,\dots,x_n^\pm)=(x_1,x_1^{-1},\dots,x_n,x_n^{-1})$ and define
\begin{equation*}
\sp_\la(x^\pm) := \frac{\det_{1\le i,j\le n}\left(x_i^{\la_j+n-j+1} - x_i^{-(\la_j+n-j+1)}\right)}
{\prod_{1\le i<j\le n}(x_i-x_j)(1-x_i^{-1}x_j^{-1})\prod_{i=1}^n(x_i - x_i^{-1})}.
\end{equation*}
We call $\sp_\la(x^\pm)$ a \emph{symplectic Schur polynomial}.
	\end{itemize}
\end{df}

Clearly, $s_\la(y)$ is a symmetric polynomial in the variables $y_1,\dots,y_n$, homogeneous of degree $|\la|$.
The Schur polynomials are character values of the irreducible polynomial representations of $GL(n,\C)$ evaluated at the maximal torus $\mathbb{T}^n := \{ \text{diag}(y_1,\dots,y_n): y_i\in\C^\times\}$, see e.g.~\cite{Weyl-1946}.

By contrast, $\sp_\la(x^\pm)$ are symmetric Laurent polynomials, i.e.~polynomials in $x_1^\pm,\dots,x_n^\pm$, symmetric with respect to the natural action of the hyperoctahedral group $S_n\ltimes(\Z/2\Z)^n$.
The symplectic Schur polynomial $\sp_\la(x^\pm)$ can also be expressed as a symmetric polynomial in the variables $x_1+x_1^{-1},\dots,x_n+x_n^{-1}$, of degree $|\la|$, but inhomogeneous. They are also character values of the irreducible representations of $Sp(2n)$ evaluated at a maximal torus, see e.g.~\cite{King75}.

The following identity will be useful; for proofs, see e.g.~\cite{K-1989,S-1990}.

\begin{prop}[Cauchy-Littlewood identity for symplectic Schur polynomials]\label{cauchy_littlewood_prop}
Let $n\ge k$ be two nonnegative integers, and let $x^\pm=(x_1^\pm,\dots,x_n^\pm),\, y=(y_1,\dots, y_k)$. Then
\begin{equation}\label{cauchy_littlewood_eqn}
\sum_{\mu\,:\,\ell(\mu)\le k}{\sp_{\mu}(x^\pm) s_\mu(y)} =
\frac{\prod_{1\le i<j\le k}(1 - y_iy_j)}{\prod_{i=1}^n \prod_{j=1}^k{(1-x_iy_j)(1-x_i^{-1}y_j)}}.
\end{equation}
\end{prop}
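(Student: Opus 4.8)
The plan is to deduce the identity from the Weyl-type determinantal formula for $\sp_\mu(x^\pm)$ and the dual Cauchy identity for Schur polynomials, by recognizing the left-hand side as a Cauchy–Binet expansion. First I would clear the Vandermonde-type denominator from $\sp_\mu(x^\pm)$: write
\[
\prod_{1\le i<j\le n}(x_i-x_j)(1-x_i^{-1}x_j^{-1})\prod_{i=1}^n(x_i-x_i^{-1})\cdot\sp_\mu(x^\pm)
= \det_{1\le i,j\le n}\!\left(x_i^{\mu_j+n-j+1}-x_i^{-(\mu_j+n-j+1)}\right),
\]
and observe that the left denominator is (up to sign) the symplectic Weyl denominator $\prod_{i=1}^{n}(x_i-x_i^{-1})\prod_{i<j}(x_i+x_i^{-1}-x_j-x_j^{-1})$, a quantity independent of $\mu$. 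Setting $a_j := \mu_j + n - j + 1$ for $j=1,\dots,n$ (a strictly decreasing sequence of positive integers as $\mu$ ranges over partitions of length $\le n$), the numerator is the $n\times n$ determinant $\det(x_i^{a_j}-x_i^{-a_j})$.

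Next I would extend the sum over $\mu$ with $\ell(\mu)\le k$ to a sum over $\mu$ with $\ell(\mu)\le n$ by noting that $s_\mu(y_1,\dots,y_k)=0$ whenever $\ell(\mu)>k$; this harmonizes the index ranges. Now write $s_\mu(y)$ itself via its bialternant formula with a common denominator $\prod_{1\le i<j\le k}(y_i-y_j)$, and think of the product $\sum_\mu \sp_\mu(x^\pm)s_\mu(y)$ as a sum over strictly decreasing tuples $(a_1>\cdots>a_n\ge 1)$ of a product of an $n\times n$ determinant in the $x$'s and a $k\times k$ determinant in the $y$'s. After padding the $y$-side determinant appropriately (adjoining rows/columns corresponding to the extra $n-k$ coordinates, which contribute powers $0,1,\dots,n-k-1$ and produce the factor $\prod_{i<j\le k}(1-y_iy_j)$ upon later simplification — this is exactly the mechanism behind the dual Cauchy identity $\sum_\mu s_\mu(x)s_{\mu'}(y)=\prod(1+x_iy_j)$, adapted here), the sum becomes a single Cauchy–Binet contraction. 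Concretely, Cauchy–Binet turns $\sum_{a_1>\cdots>a_n}\det(f_i(a_j))\det(g_\ell(a_j))$ into $\det\!\big(\sum_{a\ge 1} f_i(a)g_\ell(a)\big)$, where the row functions $f_i$ come from the $x$-block ($f_i(a)=x_i^a-x_i^{-a}$) and the $g_\ell$ from the (padded) $y$-block. Each entry $\sum_{a\ge 1}(x_i^a-x_i^{-a})y_\ell^{\,a-1}$ is a geometric series summing to a simple rational function in $x_i,y_\ell$; the remaining padded entries are just $\sum_{a\ge1}(x_i^a-x_i^{-a})a^{?}$-type closed forms, or more cleanly, constants after the right change of basis.

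The last step is to evaluate the resulting $n\times n$ determinant of rational functions in closed form. This determinant will be (up to the Weyl denominators already divided out, and up to the $\prod_{i<j\le k}(1-y_iy_j)$ factor contributed by the padding block) a Cauchy-type determinant, and I would evaluate it either by direct Cauchy-determinant evaluation or by matching poles and leading behavior in the $x_i$'s. After restoring the two Weyl/Vandermonde denominators that were cleared, everything cancels except $\prod_{i<j\le k}(1-y_iy_j)\big/\prod_{i,j}(1-x_iy_j)(1-x_i^{-1}y_j)$, which is the claimed right-hand side.

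The main obstacle I anticipate is the bookkeeping in the padding step: one must choose the extra $n-k$ "virtual" $y$-variables (or equivalently the extra columns of the $y$-block) so that (i) Cauchy–Binet applies cleanly with matching index sets, and (ii) their contribution is precisely the factor $\prod_{1\le i<j\le k}(1-y_iy_j)$ and not some spurious extra term. The standard trick — used in proving the dual Cauchy identity and in Sundaram's / Krattenthaler's treatments — is to pad with the "complementary" monomials so that the $y$-block determinant factors as $\prod_{i<j\le k}(y_i-y_j)\cdot\prod_{i<j\le k}(1-y_iy_j)\cdot(\text{Schur part})$; getting the signs and the exact exponents right here is the only genuinely delicate point, and it is exactly where the asymmetry between $n$ and $k$ (the hypothesis $n\ge k$) is consumed. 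Everything else — the geometric-series evaluations and the final Cauchy determinant — is routine.
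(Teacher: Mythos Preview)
The paper does not actually prove this proposition: it is stated with the note ``for proofs, see e.g.~\cite{K-1989,S-1990}'' and used as input. What the paper \emph{does} prove (Section~7.3) is the strictly more general identity of Corollary~\ref{gen_cauchy_littlewood_prop}, bijectively, via the Berele insertion and Sundaram's correspondence; specializing $\la=\varnothing$ there recovers the present statement. So the paper's ``own'' argument is combinatorial, while your proposal is determinantal (clear the symplectic Weyl denominator, bialternant for $s_\mu$, Cauchy--Binet, then evaluate a Cauchy-type determinant). Both routes are standard and either is acceptable; the bijective one buys a proof of the skew generalization with $T_{\la,\mu}$ in one stroke, while the determinantal one is shorter for the non-skew identity and requires no tableau combinatorics.

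That said, your sketch has a genuine confusion in the padding step. The factor $\prod_{1\le i<j\le k}(1-y_iy_j)$ does \emph{not} come from the padding or from any dual-Cauchy mechanism; it arises already in the equal-rank case $n=k$ from the Cauchy-determinant evaluation. Concretely, after Cauchy--Binet one lands on $\det_{i,j}\!\big((1-x_iy_j)^{-1}(1-x_i^{-1}y_j)^{-1}\big)$; writing $(1-x_iy_j)(1-x_i^{-1}y_j)=y_j\big((y_j+y_j^{-1})-(x_i+x_i^{-1})\big)$ turns this into a genuine Cauchy determinant in the variables $x_i+x_i^{-1}$ and $y_j+y_j^{-1}$, and the factor $\prod_{i<j}(y_i+y_i^{-1}-y_j-y_j^{-1})=\prod_{i<j}(y_i-y_j)(1-y_iy_j)/(y_iy_j)$ is where $\prod_{i<j}(1-y_iy_j)$ appears. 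The role of the padding when $n>k$ is only to reconcile the dimensions so Cauchy--Binet applies: since $\ell(\mu)\le k$ forces $\mu_{k+1}=\cdots=\mu_n=0$, the last $n-k$ columns of the $x$-determinant are fixed (exponents $n-k,n-k-1,\dots,1$), and one either Laplace-expands along those columns or, equivalently, works at $n=k$ and then lets extra $y$-variables tend to $0$. Your invocation of the dual Cauchy identity $\sum_\mu s_\mu s_{\mu'}=\prod(1+x_iy_j)$ is a red herring here and should be dropped. With that correction the determinantal argument goes through.
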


The identity~\eqref{cauchy_littlewood_eqn} can be interpreted either as an equality of power series in $\C[[x_1+x_1^{-1},\dots, x_n+x_n^{-1},y_1,\dots,y_n]]$, or as a numerical equality when $|y_j|<\min\{|x_i|, |x_i|^{-1}\}$, for all $1\le i\le n$, $1\le j\le k$ (so that the left hand side is absolutely convergent).

\subsection{The algebra of symmetric functions}

We denote the graded real algebra of symmetric functions by $\Lambda=\bigoplus_{n\ge 0}{\Lambda_n}$, where $\Lambda_n$ consists of the homogeneous symmetric functions of degree $n$, and $\Lambda_0=\R$.
The power sums $p_1,p_2,\dots$ are algebraically independent generators of $\Lambda$ with $\deg(p_n)=n$, for all $n\ge 1$, so that $\Lambda=\R[p_1,p_2,\dots]$.

Symmetric functions can be regarded as functions on an infinite set of variables $y_1,y_2,\dots$ that are invariant with respect to any finitary permutation of variables, e.g.~setting $p_k=y_1^k+y_2^k+\cdots$.
Indeed, let $\C[y_1,\dots,y_n]^{S_n}$ be the space of symmetric polynomials in the $n$ variables $y_1,\dots,y_n$ and let $\pi^{n+1}_n:\C[y_1,\dots,y_{n+1}]^{S_{n+1}}\to\C[y_1,\dots,y_n]^{S_n}$ be the map $y_{n+1}\mapsto 0$.
Then $\Lambda$ is identified as the projective limit of the chain $\left( \C[y_1,\dots,y_n]^{S_n},\pi^{n+1}_n \right)$; in this case we also write $\Lambda$ as $\Lambda_y$.

The \emph{complete homogeneous symmetric functions} $h_1,h_2,\dots$ are defined by the generating series
\begin{equation*}
1+\sum_{k=1}^\infty{h_kz^k} = \exp\left( \sum_{n=1}^\infty{\frac{p_nz^n}{n}} \right) = \frac{1}{\prod_{i\ge 1}(1 - y_iz)},
\end{equation*}
and furnish a set of algebraically independent generators of $\Lambda=\Lambda_y$.
As it is customary, we also define $h_0:=1$ and $h_m:=0$, for all $m<0$.

\subsection{Schur symmetric functions}

The Schur polynomials obey the stability relations $s_\la(y_1,\dots,y_n,0)=s_\la(y_1,\dots,y_n)$, so there exists a unique symmetric function that specializes to Schur polynomials when all but finitely many variables vanish.
In other words, if $\pi^\infty_n:\Lambda\to\C[y_1,\dots,y_n]^{S_n}$ are the natural projections, the following definition makes sense.

\begin{df}
For any $\la\in\Y$, there exists a unique $s_\la\in\Lambda$, homogeneous of degree $|\la|$, such that
\begin{equation*}
\pi^\infty_n(s_\la) = s_\la(y_1,\dots,y_n),\text{ for all }n\ge 1.
\end{equation*}
We call $s_\la$ a \emph{Schur symmetric function} or just a \emph{Schur function}.
\end{df}

Note that $\pi^\infty_n(s_\la)=0$, whenever $n<\ell(\la)$.

Let us recall some facts about Schur functions from our main reference on symmetric functions~\cite[Ch.~I]{M-1995}.
The set $\{s_\la\}_{\la\in\Y}$ is a homogeneous basis of $\Lambda$, and each Schur function can be expressed in terms of the $h_\ell$'s by means of the following \emph{Jacobi-Trudi formula}
\begin{equation}\label{eqn:jacobi_trudi}
s_\la = \det_{1\le i,j\le\ell(\la)}\left[ h_{\la_i+j-i} \right].
\end{equation}

The \emph{Littlewood-Richardson coefficients} $c^\la_{\mu,\nu}$ are defined as the structure constants of $\Lambda$ with respect to the basis of Schur functions, i.e.~for all $\mu,\nu\in\Y$:\footnote{In~\eqref{eq:s s expansion}, $\la$ ranges over the set $\Y$ of all partitions; the same is true for all sums in this paper, where it is not specified where the partitions range over.}
\begin{equation} \label{eq:s s expansion}
    s_\mu\cdot s_\nu = \sum_\la{c^{\la}_{\mu,\nu}\,s_\la}.
\end{equation}

It is known that $c^{\la}_{\mu,\nu}=0$, unless $|\la|=|\mu|+|\nu|$ and $\mu,\nu\subseteq\la$. The Littlewood-Richardson coefficients can also be used to define the \emph{skew Schur functions} $s_{\la/\mu}$:
\begin{equation} \label{eq:skew schur LR expansion}
    s_{\la/\mu} := \sum_\nu{c^{\la}_{\mu,\nu}\,s_\nu},
\end{equation}
for any partitions $\mu,\la$, although $s_{\la/\mu}$ vanishes unless $\mu\subseteq\la$. The skew Schur functions satisfy the \emph{branching rule}
    \begin{equation} \label{eq:branching rules schur}
        s_{\la/\mu}(x,y) = \sum_{\nu} s_{\la/\nu}(x) s_{\nu/\mu}(y),
    \end{equation}
and the \emph{skew Cauchy identity}
    \begin{equation} \label{eq:skew cauchy id}
        \sum_{\la} s_{\la/\rho}(x) s_{\la/\eta}(y) = H(x;y) \sum_{\kappa} s_{\rho/\kappa}(y) s_{\eta/\kappa}(x),
    \end{equation}
where
    \begin{equation} \label{eq: H}
        H(x;y) := \exp \left\{ \sum_{k\ge 1} \frac{p_k(x) p_k(y)}{k} \right\}.
    \end{equation}
Equalities \eqref{eq:branching rules schur}, \eqref{eq:skew cauchy id} should be interpreted as identities in $\Lambda_x \otimes \Lambda_y$. In the case when $x,y$ are lists of finitely many variables and $\rho=\eta=\varnothing$, then~\eqref{eq:skew cauchy id} reduces to the classical Cauchy identity~\eqref{eqn:cauchy_identity}.

\section{Universal symplectic characters}\label{sec:symplectic_functions}

In this section, we recall the symplectic analogues of the Schur functions, which are denoted by $\SP_\la\in\Lambda$ and are called the universal symplectic characters.
Our main references are \cite{KT-1987,K-1989}.

\subsection{Definition and basic properties} \label{subs:SP}

\begin{df}
For any $\la\in\Y$, the \emph{universal symplectic character} $\SP_\la\in\Lambda$ is defined by:
\begin{equation*}
\SP_\la := \frac{1}{2} \det \left[ h_{\lambda_i-i+j} + h_{\lambda_i-i-j + 2} \right]_{1 \le i,j\le \ell(\lambda)}.
\end{equation*}
If $\la=\varnothing$, then $\SP_\varnothing:=1$, by convention.
\end{df}

The previous definition is an analogue of the Jacobi-Trudi formula~\eqref{eqn:jacobi_trudi} for Schur functions.
Some examples of universal symplectic characters are: $\SP_{(n)}=h_n$, for all $n\ge 1$, and $\SP_{(1,1)}=h_1^2-h_2-1$.

It is known~\cite{Lit-1950} that the universal symplectic characters admit the following expansion in terms of skew Schur functions
\begin{equation} \label{eq:symplectic Schur frobenius}
    \SP_\la = \sum_\alpha {(-1)^{|\alpha|/2} s_{\la/\alpha}},
\end{equation}
where $\alpha$ ranges over partitions with Frobenius coordinates of the form $(a_1, a_2, \cdots \,|\, a_1+1, a_2+1, \cdots )$. We recall that a partition $\la$ is said to have Frobenius coordinates $(a_1.a_2,\cdots\,|\, b_1,b_2,\cdots)$ if $a_i=\la_i-i$ is the number of boxes at row $i$ strictly to the right of the main diagonal, while $b_i=\la_i'-i$ is the number of boxes at column $i$ strictly to the left of the main diagonal.

The set $\{\SP_\la\}_{\la\in\Y}$ is an inhomogeneous basis of $\Lambda$.
One should regard $\SP_\la$ as the \emph{symplectic analogue} of the Schur function, as they are related to the symplectic Schur polynomials in the following fashion.
Consider the composition
\begin{equation}\label{eqn:big_pi}
\Pi^\infty_n : \Lambda_x\xrightarrow{\pi^\infty_{2n}} \C[z_1,\dots,z_{2n}]^{S_{2n}}
\xrightarrow{\tau_n}\C[z_1,z_1^{-1},\dots,z_n,z_n^{-1}]^{S_n\ltimes(\Z/2\Z)^n}
\end{equation}
of the natural projection $\pi^{\infty}_{2n}:\Lambda\to\C[z_1,\dots,z_{2n}]^{S_{2n}}$ and the map $\tau_n$ that sends $z_{2n+1-i}\mapsto z_i^{-1}$, for all $i=1,2,\dots,n$.
Then
\begin{equation}\label{SP_equations}
\Pi^\infty_n(\SP_\la) = \sp_\la(z_1^\pm,\dots,z_n^\pm),\text{ whenever }\ell(\la)\le n.
\end{equation}
In other words, the equations \eqref{SP_equations} say that if we set all but $(2n)$ variables to zero, set $n$ of them to $z_1,\dots,z_n$ and set the last $n$ to $z_1^{-1},\dots,z_n^{-1}$, then $\SP_\la$ turns into the symplectic Schur polynomial $\sp_\la(z_1^\pm,\dots,z_n^\pm)$, as long as $n\ge\ell(\la)$.
Given this observation, it is possible to define $\SP_\la$ as the unique element of $\Lambda$ such that the equations~\eqref{SP_equations} are satisfied for all $n$ and all $\la$ such that $\ell(\la)\le n$.

\begin{rem} \label{rem:SP for long partitions}
An important difference with the case of Schur functions is that it happens sometimes that $\Pi^\infty_n(\SP_\la)\ne 0$ whenever $\ell(\la)>n$.
In fact, when $\ell(\la)>n$, then $\Pi^\infty_n(\SP_\la)=0$ or $\pm\sp_\mu(z_1^\pm,\dots,z_n^\pm)$, for some $\mu\in\Y$ with $\ell(\mu)\le n$ and sign determined by $\la$; for a description of the precise value of $\Pi^\infty_n(\SP_\la)$, see~\cite{KT-1987}.
\end{rem}

From \Cref{cauchy_littlewood_prop} for symmetric polynomials, we deduce the following for symmetric functions.

\begin{prop}[Cauchy-Littlewood identity for universal symplectic characters]\label{cauchy_littlewood_prop_2}
Let $x=(x_1,x_2,\dots),\, y=(y_1,y_2,\dots)$ be two sequences of variables. Then
\begin{equation}\label{cauchy_littlewood_eqn_2}
\sum_\mu{\SP_\mu(x) s_\mu(y)} = G(y) H(x;y),
\end{equation}
where $H(x;y)$ was defined in \eqref{eq: H} and
\begin{equation}\label{eq: G}
    G(y) := \exp \left\{ - \sum_{k\ge 1} \frac{p_k(y)^2 - p_{2k}(y)}{2k} \right\}.
\end{equation}
\end{prop}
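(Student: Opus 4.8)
The plan is to deduce the symmetric function identity \eqref{cauchy_littlewood_eqn_2} from the polynomial version in \Cref{cauchy_littlewood_prop} by a stabilization argument, passing to infinitely many variables on both sides. First I would fix $k$ and apply \eqref{cauchy_littlewood_eqn} with $n\ge k$, obtaining
\[
\sum_{\mu:\ell(\mu)\le k}\sp_\mu(x_1^\pm,\dots,x_n^\pm)\,s_\mu(y_1,\dots,y_k)
=\frac{\prod_{1\le i<j\le k}(1-y_iy_j)}{\prod_{i=1}^n\prod_{j=1}^k(1-x_iy_j)(1-x_i^{-1}y_j)}.
\]
Using \eqref{SP_equations}, the left side is $\Pi^\infty_n$ applied to $\sum_{\mu:\ell(\mu)\le k}\SP_\mu(x)\,s_\mu(y_1,\dots,y_k)$ (the constraint $\ell(\mu)\le k$ is automatic once we pair with $s_\mu(y_1,\dots,y_k)$, which vanishes for longer $\mu$). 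Now I would let $n\to\infty$: the specialization $\Pi^\infty_n$ stabilizes as $n$ grows, so in the projective limit we recover the symmetric function $\sum_\mu \SP_\mu(x)s_\mu(y_1,\dots,y_k)\in\Lambda_x$. On the right side, as $n\to\infty$ the infinite products $\prod_{i\ge 1}(1-x_iy_j)^{-1}$ and $\prod_{i\ge 1}(1-x_i^{-1}y_j)^{-1}$ appear; but under $\Pi^\infty_n$ the variables come in reciprocal pairs $x_i,x_i^{-1}$, so this is exactly the image of $\exp\{\sum_{k\ge 1}\frac{p_k(x)}{k}(y_j^k + y_j^{-k}\cdot 0)\}$... more carefully, one checks that $\prod_{i=1}^n (1-x_iy_j)^{-1}(1-x_i^{-1}y_j)^{-1}=\exp\{\sum_{m\ge1}\frac{y_j^m}{m}\sum_{i=1}^n(x_i^m+x_i^{-m})\}$, and $\sum_{i=1}^n(x_i^m+x_i^{-m})=\Pi^\infty_n(p_m)$, so the right side is $\Pi^\infty_n$ of $\exp\{\sum_{m\ge1}\frac{p_m(x)}{m}\sum_{j=1}^k y_j^m\}\cdot\prod_{i<j}(1-y_iy_j)=H(x;(y_1,\dots,y_k))\cdot\prod_{1\le i<j\le k}(1-y_iy_j)$.

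This shows $\sum_\mu\SP_\mu(x)s_\mu(y_1,\dots,y_k)=H(x;(y_1,\dots,y_k))\prod_{1\le i<j\le k}(1-y_iy_j)$ as an identity in $\Lambda_x\otimes\C[y_1,\dots,y_k]^{S_k}$, for every $k$. To finish, I would pass from finitely many $y$-variables to the symmetric function $\Lambda_y$: both sides are compatible with the projections $y_{k+1}\mapsto 0$ (the extra factors $(1-y_iy_{k+1})$ and $s_\mu$ with $\ell(\mu)>k$ both behave correctly), so they lift to an identity in $\Lambda_x\otimes\Lambda_y$. It then remains to identify $\prod_{i<j}(1-y_iy_j)$, as a symmetric function of the $y$'s, with $G(y)$ from \eqref{eq: G}; this is the standard generating-function computation $\log\prod_{i<j}(1-y_iy_j)=-\sum_{k\ge1}\frac1k\sum_{i<j}y_i^ky_j^k=-\sum_{k\ge1}\frac{p_k(y)^2-p_{2k}(y)}{2k}$, using $\sum_{i<j}y_i^ky_j^k=\tfrac12(p_k(y)^2-p_{2k}(y))$.

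The only delicate point is justifying the $n\to\infty$ limit rigorously: one must check that the partial sums on the left genuinely stabilize under $\Pi^\infty_n$ in each fixed degree, which follows because $\Pi^\infty_n(\SP_\mu)=\sp_\mu$ independently of $n$ once $n\ge\ell(\mu)$, and in a fixed total degree in the $y$'s only finitely many $\mu$ contribute. Alternatively — and this may be cleaner — one can bypass the limit entirely: expand $H(x;y)G(y)$ in the Schur basis of $\Lambda_y$ directly using the known identities $H(x;y)=\sum_\nu (\text{something})$ and the Littlewood expansion, or simply observe that \eqref{cauchy_littlewood_eqn_2} is the image under the characteristic-type map of a known $(Sp,SO^*)$ Howe-duality decomposition; but I expect the stabilization argument above to be the most self-contained route given only the tools in the excerpt. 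The main obstacle is purely bookkeeping: matching the infinite-product right-hand side of \eqref{cauchy_littlewood_eqn} with $H(x;y)G(y)$ in the limit and confirming the reciprocal-pair specialization sends $p_m(x)\mapsto\sum_i(x_i^m+x_i^{-m})$.
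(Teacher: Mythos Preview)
Your proposal is correct and follows exactly the approach the paper indicates: the paper simply states that the symmetric-function identity is deduced from the finite-variable Cauchy--Littlewood identity (\Cref{cauchy_littlewood_prop}), and you have supplied the details of that deduction via stabilization in both sets of variables, together with the verification that $\prod_{i<j}(1-y_iy_j)=G(y)$. The one point you flag as delicate---that knowing $\Pi^\infty_n(\text{LHS})=\Pi^\infty_n(\text{RHS})$ for all $n$ forces equality in $\Lambda_x$---is handled by your observation that in each fixed $y$-degree only finitely many $\mu$ contribute, combined with the fact that $\Pi^\infty_n$ is injective on $\R[p_1,\dots,p_n]$ (since $p_m\mapsto\sum_i(x_i^m+x_i^{-m})$ are algebraically independent for $m\le n$).
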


The identity~\eqref{cauchy_littlewood_eqn_2} is interpreted as an equality in $\Lambda_x\otimes\Lambda_y$.

\begin{rem}\label{rem:evaluation H and G}
    Functions $H(x;y)$ and $G(y)$ defined in \eqref{eq: H} and \eqref{eq: G}, when specializations $x,y$ consists of finitely many parameters, can be evaluated through the following elementary computations; see \cite[Chapter I, section 2]{M-1995}. When $x=(x_1,\dots,x_n)$ and $y=(y_1,\dots,y_m)$, for some $n,m\in\N$, and we have $|x_i y_j|<1$, for all $i,j$, we have
    \begin{equation}
        H(x,y) = \exp\left\{ \sum_{i=1}^n \sum_{j=1}^m \sum_{k \ge 1} \frac{(x_i y_j)^k}{k} \right\} = \exp\left\{ - \sum_{i=1}^n \sum_{j=1}^m \log(1-x_i y_j)  \right\} = \prod_{i=1}^n \prod_{j=1}^m \frac{1}{1-x_i y_j}.
    \end{equation}
    Similarly, if $|y_iy_j|<1$ for all $i,j$, we have
    \begin{equation}
        G(y) = \exp\left\{ - \sum_{1 \le i < j \le m} \sum_{k \ge 1} \frac{(y_i y_j)^k}{k} \right\} = \exp\left\{ \sum_{1 \le i < j \le m} \log(1-y_i y_j) \right\} = \prod_{1 \le i < j \le m} (1-y_i y_j).
    \end{equation}
    These expressions for $H(x,y)$ and $G(y)$ clarify that \Cref{cauchy_littlewood_prop_2} is indeed the symmetric function version of the identity in \Cref{cauchy_littlewood_prop}.
\end{rem}

\subsection{Newell-Littlewood coefficients}

The algebra structure constants with respect to the basis $\{\SP_\la\}_{\la\in\Y}$ are denoted by $d^{\la}_{\mu,\nu}$ and called the \emph{Newell-Littlewood coefficients}. Precisely, they are defined by the equations
\begin{equation} \label{eq:SP SP expansion}
    \SP_\mu \cdot \SP_\nu = \sum_{\la}{d^\la_{\mu,\nu}\, \SP_\la}
\end{equation}
for all $\mu,\nu\in\Y$.
By comparing this equation with~\eqref{eq:s s expansion}, the Newell-Littlewood coefficients should be regarded as ``symplectic analogues'' of the Littlewood-Richardson coefficients.

\begin{rem}
    The Newell-Littlewood coefficients $d^\la_{\mu,\nu}$ can also be regarded as the ``orthogonal analogues'' of the Littlewood-Richardson coefficients, because they satisfy a version of equation~\eqref{eq:SP SP expansion}, but with \emph{universal orthogonal characters} instead of the symplectic ones, as shown in~\cite{KT-1987}.
    Consequently, we believe that most results in this article have ``orthogonal versions'', but we stick to the symplectic case to avoid lengthening the paper.
\end{rem}

\begin{thm}[Thm.~3.1~from~\cite{K-1989}]\label{thm_koike}
For any $\la,\mu,\nu\in\Y$, we have
\begin{equation}\label{thm_koike_eqn}
d^{\la}_{\mu,\nu} =  \sum_{\alpha,\beta,\gamma}{c^{\la}_{\alpha,\beta}\,c^{\mu}_{\gamma,\beta}\,c^{\nu}_{\alpha,\gamma}},
\end{equation}
where all indices in the sum range over the set of all partitions.
\end{thm}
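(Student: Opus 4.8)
The plan is to prove \Cref{thm_koike} by a generating-function argument, using only the Cauchy--Littlewood identity for universal symplectic characters (\Cref{cauchy_littlewood_prop_2}) together with elementary facts about Schur functions: the definition \eqref{eq:s s expansion} of the Littlewood--Richardson coefficients, the Cauchy identity in the form $\sum_\gamma s_\gamma(y)s_\gamma(z)=H(y;z)$ (the $\rho=\eta=\varnothing$ case of \eqref{eq:skew cauchy id}), and the coproduct expansion $s_\la(y\sqcup z)=\sum_{\alpha,\beta}c^\la_{\alpha,\beta}\,s_\alpha(y)s_\beta(z)$ (the $\mu=\varnothing$ case of \eqref{eq:branching rules schur} combined with \eqref{eq:skew schur LR expansion}). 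The crucial algebraic observation is that the ``extra'' factor $G$ appearing in \eqref{cauchy_littlewood_eqn_2} is multiplicative up to a Cauchy kernel: from the power-sum formulas \eqref{eq: H} and \eqref{eq: G}, and the fact that $p_k(y\sqcup z)=p_k(y)+p_k(z)$, one checks directly that $H(a;y)\,H(a;z)=H(a;y\sqcup z)$ and $G(y)\,G(z)=G(y\sqcup z)\,H(y;z)$ (the cross term $2p_k(y)p_k(z)$ produced when squaring $p_k(y)+p_k(z)$ is exactly $-\log H(y;z)$ after dividing by $2k$ and exponentiating).

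Concretely, I would introduce three independent alphabets $a$, $y$, $z$ and evaluate the element $\sum_{\mu,\nu}\SP_\mu(a)\,\SP_\nu(a)\,s_\mu(y)\,s_\nu(z)$ of the graded completion of $\Lambda_a\otimes\Lambda_y\otimes\Lambda_z$ in two ways. First, expanding $\SP_\mu(a)\,\SP_\nu(a)=\sum_\la d^\la_{\mu,\nu}\SP_\la(a)$ by the definition \eqref{eq:SP SP expansion} rewrites it as $\sum_\la \SP_\la(a)\bigl(\sum_{\mu,\nu}d^\la_{\mu,\nu}\,s_\mu(y)s_\nu(z)\bigr)$. Second, applying \Cref{cauchy_littlewood_prop_2} to the pairs $(a,y)$ and $(a,z)$ gives $G(y)H(a;y)\cdot G(z)H(a;z)$, which by the two identities above equals $G(y\sqcup z)\,H(a;y\sqcup z)\,H(y;z)$; applying \Cref{cauchy_littlewood_prop_2} once more to the pair $(a,\,y\sqcup z)$ turns this into $\bigl(\sum_\la \SP_\la(a)\,s_\la(y\sqcup z)\bigr)\,H(y;z)$.

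It then remains to expand this last expression back into the basis $\{\SP_\la(a)\,s_\mu(y)\,s_\nu(z)\}$. Using $s_\la(y\sqcup z)=\sum_{\alpha,\beta}c^\la_{\alpha,\beta}\,s_\alpha(y)s_\beta(z)$ and $H(y;z)=\sum_\gamma s_\gamma(y)s_\gamma(z)$, and then collecting via $s_\alpha(y)s_\gamma(y)=\sum_\mu c^\mu_{\alpha,\gamma}s_\mu(y)$ and $s_\beta(z)s_\gamma(z)=\sum_\nu c^\nu_{\beta,\gamma}s_\nu(z)$, the coefficient of $\SP_\la(a)\,s_\mu(y)\,s_\nu(z)$ becomes $\sum_{\alpha,\beta,\gamma}c^\la_{\alpha,\beta}\,c^\mu_{\alpha,\gamma}\,c^\nu_{\beta,\gamma}$. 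Comparing with the first computation, and using that $\{\SP_\la\}_{\la}$ is a basis of $\Lambda_a$ while $\{s_\mu\otimes s_\nu\}$ is a basis of $\Lambda_y\otimes\Lambda_z$, we may equate coefficients to get $d^\la_{\mu,\nu}=\sum_{\alpha,\beta,\gamma}c^\la_{\alpha,\beta}\,c^\mu_{\alpha,\gamma}\,c^\nu_{\beta,\gamma}$; relabeling $\alpha\leftrightarrow\beta$ and using the symmetries $c^\la_{\alpha,\beta}=c^\la_{\beta,\alpha}$ and $c^\mu_{\alpha,\gamma}=c^\mu_{\gamma,\alpha}$ puts this in the exact form \eqref{thm_koike_eqn}.

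I do not expect a deep obstacle here; the only point requiring genuine care is bookkeeping, namely that all sums above are infinite, so one must work in the appropriate graded completion of $\Lambda_a\,\hat\otimes\,\Lambda_y\,\hat\otimes\,\Lambda_z$ and check that the rearrangements (in particular re-indexing products of Schur functions) are valid there; this is routine because only finitely many terms contribute in each fixed total degree. As an alternative I would mention, but not pursue, substituting the Frobenius-type expansion $\SP_\la=\sum_\alpha(-1)^{|\alpha|/2}s_{\la/\alpha}$ from \eqref{eq:symplectic Schur frobenius} into \eqref{eq:SP SP expansion} and reducing to an identity among Littlewood--Richardson coefficients indexed by the set of ``doubled'' partitions; the generating-function route above sidesteps that combinatorics entirely.
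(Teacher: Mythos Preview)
The paper does not give its own proof of this theorem; it is simply quoted from Koike~\cite{K-1989} and used as a black box. Your generating-function argument is correct and would furnish a self-contained proof using only facts already established in the paper (the Cauchy--Littlewood identity of \Cref{cauchy_littlewood_prop_2}, the basis property of $\{\SP_\la\}_\la$, and standard Schur-function identities). The key identity $G(y)\,G(z)=G(y\sqcup z)\,H(y;z)$ follows exactly as you say from the power-sum definitions \eqref{eq: H}--\eqref{eq: G}, and the subsequent expansion and coefficient-matching are routine. Your remark about working in the graded completion is also correct: since $s_\mu(y)$ and $s_\nu(z)$ are homogeneous, fixing the bidegree in $(y,z)$ selects finitely many pairs $(\mu,\nu)$, so every rearrangement involves only finite sums in each graded piece.

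I cannot compare your route to Koike's original argument in~\cite{K-1989}, but note that your proof is essentially the reverse of how the paper later \emph{uses} \Cref{thm_koike}: the paper deduces \Cref{gen_cauchy_littlewood_prop_2} from \Cref{thm_koike} (via the definition of $T_{\la,\mu}$), whereas you are deducing \Cref{thm_koike} from the Cauchy--Littlewood identity. This is not circular, since \Cref{cauchy_littlewood_prop_2} is proved independently in the paper (from the polynomial identity in \Cref{cauchy_littlewood_prop}), so your approach is a legitimate and economical way to make the paper self-contained on this point.
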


Observe that only finitely many terms in the sum of \Cref{thm_koike} are actually nonzero because $c^{\la}_{\alpha,\beta}\ne 0$ implies that $\alpha,\beta\subseteq\la$, and similarly, $\gamma\subseteq\mu$.

\begin{lem}\label{lemma_NL_coeffs}
\begin{enumerate}[label=(\alph*)]
	\item $d^\la_{\mu,\nu}$ is symmetric with respect to any permutation of the partitions $\la,\mu,\nu$.

	\item $d^\la_{\mu,\varnothing}=\delta_{\la,\mu}$, for any $\la,\mu\in\Y$.

	\item $d^\la_{\mu,\nu}\ne 0$ implies that $|\ell(\la)-\ell(\mu)|\le\ell(\nu)\le\ell(\la)+\ell(\mu)$, i.e.~the lengths $\ell(\la),\ell(\mu),\ell(\nu)$ form the sides of a possibly degenerate triangle.
\end{enumerate}
\end{lem}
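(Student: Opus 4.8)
The plan is to deduce all three statements from the Koike formula \eqref{thm_koike_eqn}, $d^\la_{\mu,\nu} = \sum_{\alpha,\beta,\gamma} c^\la_{\alpha,\beta}\,c^\mu_{\gamma,\beta}\,c^\nu_{\alpha,\gamma}$, together with standard facts about the Littlewood-Richardson coefficients $c^\la_{\mu,\nu}$, namely that $c^\la_{\mu,\nu} = c^\la_{\nu,\mu}$, that $c^\la_{\mu,\nu} = 0$ unless $\mu\subseteq\la$, $\nu\subseteq\la$ and $|\la| = |\mu|+|\nu|$, and that $c^\la_{\varnothing,\nu} = \delta_{\la,\nu}$.

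\textbf{Part (a).} I would read off the symmetry directly from \eqref{thm_koike_eqn}. Swapping $\mu\leftrightarrow\nu$ corresponds to the substitution $(\alpha,\beta,\gamma)\mapsto(\alpha,\gamma,\beta)$ in the summation (using $c^\mu_{\gamma,\beta}=c^\mu_{\beta,\gamma}$ and $c^\nu_{\alpha,\gamma}=c^\nu_{\gamma,\alpha}$, so that the product becomes $c^\la_{\alpha,\gamma}c^\nu_{\beta,\gamma}c^\mu_{\alpha,\beta}$, which is the Koike sum for $d^\la_{\nu,\mu}$). Swapping $\la\leftrightarrow\mu$ corresponds to $(\alpha,\beta,\gamma)\mapsto(\gamma,\beta,\alpha)$: the product $c^\la_{\alpha,\beta}c^\mu_{\gamma,\beta}c^\nu_{\alpha,\gamma}$ becomes, after relabeling, $c^\mu_{\alpha,\beta}c^\la_{\gamma,\beta}c^\nu_{\gamma,\alpha}=c^\la_{\gamma,\beta}c^\mu_{\alpha,\beta}c^\nu_{\alpha,\gamma}$, the Koike sum for $d^\mu_{\la,\nu}$. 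Since transpositions $(\mu\,\nu)$ and $(\la\,\mu)$ generate $S_3$, full symmetry follows.

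\textbf{Part (b).} Set $\nu=\varnothing$ in \eqref{thm_koike_eqn}. The factor $c^\nu_{\alpha,\gamma} = c^\varnothing_{\alpha,\gamma}$ vanishes unless $\alpha=\gamma=\varnothing$, in which case it equals $1$. Hence $d^\la_{\mu,\varnothing} = c^\la_{\varnothing,\beta}\,c^\mu_{\varnothing,\beta}$ summed over $\beta$; but $c^\la_{\varnothing,\beta}=\delta_{\la,\beta}$ and $c^\mu_{\varnothing,\beta}=\delta_{\mu,\beta}$, leaving the single term $\delta_{\la,\mu}$. By part (a) this also covers $d^\la_{\varnothing,\nu}$.

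\textbf{Part (c).} This is the only part requiring a genuine inequality, and I expect it to be the main point of the proof. Suppose $d^\la_{\mu,\nu}\neq 0$; then some summand in \eqref{thm_koike_eqn} is nonzero, so there exist $\alpha,\beta,\gamma$ with $c^\la_{\alpha,\beta}$, $c^\mu_{\gamma,\beta}$, $c^\nu_{\alpha,\gamma}$ all nonzero. The key input is the length bound for Littlewood-Richardson coefficients: if $c^\la_{\mu,\nu}\neq 0$ then $\ell(\la)\le\ell(\mu)+\ell(\nu)$ and also $\max(\ell(\mu),\ell(\nu))\le\ell(\la)$ (the latter because $\mu,\nu\subseteq\la$). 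Applying these to the three nonzero coefficients gives $\ell(\la)\le\ell(\alpha)+\ell(\beta)$, $\ell(\mu)\le\ell(\gamma)+\ell(\beta)$, $\ell(\nu)\le\ell(\alpha)+\ell(\gamma)$, together with $\ell(\alpha),\ell(\beta)\le\ell(\la)$, $\ell(\gamma),\ell(\beta)\le\ell(\mu)$, $\ell(\alpha),\ell(\gamma)\le\ell(\nu)$. From the last six bounds, $\ell(\beta)\le\min(\ell(\la),\ell(\mu))$, $\ell(\alpha)\le\min(\ell(\la),\ell(\nu))$, $\ell(\gamma)\le\min(\ell(\mu),\ell(\nu))$. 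Plugging into the three triangle-type inequalities: $\ell(\la)\le\ell(\alpha)+\ell(\beta)\le\ell(\nu)+\ell(\mu)$, and cyclically $\ell(\mu)\le\ell(\la)+\ell(\nu)$, $\ell(\nu)\le\ell(\la)+\ell(\mu)$. These three inequalities are exactly the (possibly degenerate) triangle inequalities for the triple $(\ell(\la),\ell(\mu),\ell(\nu))$, which is equivalent to $|\ell(\la)-\ell(\mu)|\le\ell(\nu)\le\ell(\la)+\ell(\mu)$, as claimed. The only subtlety to get right is that I need both directions of the LR length bound — the superadditive bound $\ell(\la)\le\ell(\mu)+\ell(\nu)$ from $c^\la_{\mu,\nu}\ne 0$ and the containment bound $\ell(\mu),\ell(\nu)\le\ell(\la)$ — and I should cite these as standard facts (or prove the containment bound instantly from $\mu,\nu\subseteq\la$ and the superadditive one from, e.g., the interpretation via skew LR tableaux or the fact that $s_\mu s_\nu$ in finitely many variables shows $s_\la$ appears only if $\ell(\la)\le\ell(\mu)+\ell(\nu)$ after restricting to $\ell(\mu)+\ell(\nu)$ variables).
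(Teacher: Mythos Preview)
Your proposal is correct. For part~(b) it coincides exactly with the paper's argument. For parts~(a) and~(c) the paper does not give a proof at all but simply cites \cite[Lem.~2.2]{GOY-2021}; your route via the Koike formula~\eqref{thm_koike_eqn} and the standard length/containment bounds for Littlewood--Richardson coefficients is the natural self-contained argument (and is presumably what is in that reference). The upshot is that your write-up is more complete than the paper's, at the modest cost of spelling out the LR length bound $\ell(\la)\le\ell(\mu)+\ell(\nu)$, which you correctly flag as the one nontrivial ingredient.
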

\begin{proof}
For (a) and (c), see \cite[Lem.~2.2]{GOY-2021}. For (b), we use \Cref{thm_koike}. Note that, as $c^{\rho}_{\rho_1,\rho_2}\ne 0$ implies $\rho_1,\rho_2\subseteq\rho$, then the only summands in~\eqref{thm_koike_eqn} for $d^{\la}_{\mu,\varnothing}$ that do not vanish correspond to $\alpha=\gamma=\varnothing$.
Therefore $d^{\la}_{\mu,\varnothing}=\sum_\beta c^{\la}_{\varnothing,\beta}c^\mu_{\varnothing,\beta}c^\varnothing_{\varnothing,\varnothing}= \sum_\beta c^{\la}_{\varnothing,\beta}c^\mu_{\varnothing,\beta}$.
And since $c^{\rho_1}_{\varnothing,\rho_2}=\delta_{\rho_1,\rho_2}$, then we obtain $d^{\la}_{\mu,\varnothing}=\sum_\beta{c^{\la}_{\varnothing,\beta}c^\mu_{\varnothing,\beta}} = \sum_\beta{\delta_{\la,\beta}\delta_{\mu,\beta}} = \delta_{\la,\mu}$.
\end{proof}

\begin{rem}
Unlike for the Littlewood-Richardson coefficients, $d^{\la}_{\mu,\nu}\ne 0$ implies neither $|\la|=|\mu|+|\nu|$ nor $\mu,\nu\subseteq\la$, for example:
\begin{equation*}
d^{(2,1)}_{(3),(1,1)} = d^{(4)}_{(3),(3)} = 1.
\end{equation*}
\end{rem}

\subsection{Branching rule}

The following result is the special case of \cite[Theorem 7.22]{Rai-2005}, when the Koornwinder parameters are specialized to $(t_0,t_1,t_2,t_3)=(q,-q,\sqrt{q},-\sqrt{q})$ (see Remark~2 following~\cite[Theorem~7.19]{Rai-2005}).
We provide a self-contained proof, for completeness.

\begin{thm}[Branching rule for universal symplectic characters]\label{branching_SP_thm}
Let $\la$ be any partition, and let $y=(y_1,y_2,\dots),\, z=(z_1,z_2,\dots)$ be two sequences of variables. Then
\begin{equation}\label{branching_SP}
\SP_\la(y,z) = \sum_{\mu}{s_{\la/\mu}(y)\SP_\mu(z)}.
\end{equation}
The equation above should be interpreted as an identity in $\Lambda_y\otimes\Lambda_z$.
\end{thm}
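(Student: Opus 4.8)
The plan is to deduce the branching rule from the Cauchy–Littlewood identity for universal symplectic characters (\Cref{cauchy_littlewood_prop_2}) together with the Schur branching/Cauchy machinery recalled in the previous section. The key algebraic fact is that $\{\SP_\mu\}_{\mu\in\Y}$ is a basis of $\Lambda$, and that the pairing in which $\{s_\la\}$ and $\{\SP_\mu\}$ are "dual enough" is encoded by the reproducing kernel $G(y)H(x;y)$ appearing in \eqref{cauchy_littlewood_eqn_2}. Concretely, I would introduce an auxiliary set of variables $w=(w_1,w_2,\dots)$ and compute the generating function $\sum_\nu \SP_\nu(y,z)\, s_\nu(w)$ in two ways.

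First, apply \Cref{cauchy_littlewood_prop_2} directly with the alphabet $y\cup z$ in the symplectic slot: this gives
\begin{equation*}
\sum_\nu \SP_\nu(y,z)\, s_\nu(w) = G(w)\,H(y,z;w) = G(w)\,H(y;w)\,H(z;w),
\end{equation*}
using the additivity $H(y,z;w)=H(y;w)H(z;w)$, which is immediate from the definition \eqref{eq: H} since $p_k(y,z)=p_k(y)+p_k(z)$. Second, expand the right-hand side by inserting the ordinary Cauchy identity for Schur functions in the variables $y,w$: writing $H(y;w)=\sum_\mu s_\mu(y)s_\mu(w)$ and $G(w)H(z;w)=\sum_\mu \SP_\mu(z)s_\mu(w)$ (again \Cref{cauchy_littlewood_prop_2}, now with alphabet $z$), we obtain
\begin{equation*}
\sum_\nu \SP_\nu(y,z)\, s_\nu(w) = \sum_{\mu,\rho} s_\mu(y)\,\SP_\rho(z)\, s_\mu(w)\, s_\rho(w).
\end{equation*}
Now use $s_\mu(w)s_\rho(w)=\sum_\nu c^\nu_{\mu,\rho}\, s_\nu(w)$ from \eqref{eq:s s expansion}, and compare coefficients of $s_\nu(w)$ on both sides — legitimate because $\{s_\nu(w)\}$ is linearly independent in $\Lambda_w$ (and one should check the sums converge in the appropriate completed tensor product, tracking degrees: $c^\nu_{\mu,\rho}=0$ unless $|\nu|=|\mu|+|\rho|$, so each coefficient is a finite sum). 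This yields $\SP_\nu(y,z)=\sum_{\mu,\rho} c^\nu_{\mu,\rho}\, s_\mu(y)\,\SP_\rho(z)$, which is exactly \eqref{branching_SP} once we recognize $\sum_\mu c^\nu_{\mu,\rho}\, s_\mu(y) = s_{\nu/\rho}(y)$ by the definition \eqref{eq:skew schur LR expansion} of skew Schur functions (after renaming $\rho\mapsto\mu$).

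The main obstacle is purely formal rather than conceptual: one must justify rearranging infinite sums and equating coefficients inside the completed tensor product $\Lambda_y\widehat{\otimes}\Lambda_z$ (or, equivalently, verify the identity degree-by-degree after projecting to finitely many $y,z$-variables, where everything is a genuine polynomial identity). Since the universal symplectic characters are inhomogeneous, a little care is needed: one should argue that for fixed $\nu$ only finitely many $(\mu,\rho)$ contribute to $\SP_\nu(y,z)$ in any bounded total degree, which follows from $\SP_\rho = \sum_{|\alpha| \text{ even}} (-1)^{|\alpha|/2} s_{\rho/\alpha}$ (equation \eqref{eq:symplectic Schur frobenius}) having top-degree term $s_\rho$, so that the expansion is "triangular" with respect to degree. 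An alternative, equally clean route — which I would mention as a remark — is to expand $\SP$ in the Schur basis via \eqref{eq:symplectic Schur frobenius}, apply the ordinary Schur branching rule \eqref{eq:branching rules schur} to each $s_{\la/\alpha}$, and re-collect; this reduces the statement to a Littlewood–Richardson identity, but the Cauchy-identity argument above is shorter and more transparent.
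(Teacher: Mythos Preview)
Your proposal is correct and follows essentially the same approach as the paper: introduce an auxiliary alphabet, compute the generating function $\sum_\nu \SP_\nu(y,z)\,s_\nu(\cdot)$ in two ways via the Cauchy--Littlewood identity, and compare Schur coefficients. The only cosmetic difference is that the paper collapses your step ``expand $H(y;w)$ by Cauchy, multiply Schur functions via LR coefficients, then recognize $s_{\nu/\rho}$'' into a single application of the skew Cauchy identity \eqref{eq:skew cauchy id}, which is the same manipulation packaged differently.
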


\begin{rem}
In the formula above, $\SP_\la(y,z)$ is a symmetric function in the infinite set of variables $y_1,y_2,\dots,z_1,z_2,\dots$. In other words, this means that $\SP_\la(y,z)$ is the image of $\SP_\la\in\Lambda$ under the comultiplication map $\Delta:\Lambda\to\Lambda\otimes\Lambda\cong\Lambda_y\otimes\Lambda_z$ defined as the algebra homomorphism with $\Delta(p_n)=p_n\otimes 1+1\otimes p_n$, for all $n\ge 1$, see e.g.~\cite[Ch.~I, Sec.~5, Example~25]{M-1995}.
\end{rem}

\begin{proof}
Let $u=(u_1,u_2,\dots)$ be yet another set of variables.
We claim that the following identity holds
\begin{equation}\label{cauchy_littlewood}
\sum_{\la}{\SP_\la(y,z)s_\la(u)} = \sum_\la{\left(\sum_\mu{s_{\la/\mu}(y)\SP_\mu(z)}\right)s_\la(u)},
\end{equation}
where all sums are over the set of all partitions. By the Cauchy-Littlewood identity, the left hand side of~\eqref{cauchy_littlewood} is equal to $G(u)H(y,z;u)$.
On the other hand, by the skew Cauchy identity \eqref{eq:skew cauchy id}, and the Cauchy-Littlewood identity, the right hand side of~\eqref{cauchy_littlewood} is equal to
\begin{align*}
\sum_\mu{\left(\sum_\la{s_{\la/\mu}(y) s_\la(u)}\right) \SP_\mu(z)}
&= H(y;u) \sum_\mu s_\mu(u) \SP_\mu(z) \\
&= G(u) H(y,z;u).
\end{align*}
Then \eqref{cauchy_littlewood} is proved; the desired identity then follows by equating the coefficients of $s_\la(u)$ on both sides.
\end{proof}

\begin{rem}
    In light of the combinatorial interpretation of the symplectic Schur polynomials, the branching rule looks counterintuitive. In fact, the definition of $\sp_\la(x_1^\pm,\dots,x_n^\pm)$ as the generating function of symplectic tableaux of shape $\la$ gives rise to an alternative branching formula, which for partitions $\la$, such that $\ell(\la)\le n$, reads
    \begin{equation}\label{eq:wrong branching rule intro}
        \sp_\la(x_1^\pm, \dots, x_{n-k}^\pm,y_1^\pm,\dots,y_k^\pm) 
        = \sum_{\mu=(\mu_1,\dots,\mu_{n-k})} \sp_\mu(x_1^\pm,\dots,x_{n-k}^\pm) \,\sp_{\la/\mu}(y_1^\pm,\dots,y_k^\pm),
    \end{equation}
    where the branching factors $\sp_{\la/\mu}(y_1^\pm,\dots,y_k^\pm)$ are generating functions of certain symplectic skew-tableaux. Recently, explicit determinantal formulas of Jacobi-Trudi type were found for $\sp_{\la/\mu}(y_1^\pm,\dots,y_k^\pm)$ in~\cite{JLW-2024,AFHSA-2024}. The caveat to reconcile the two (apparently contradictory) expressions \eqref{branching_SP} and \eqref{eq:wrong branching rule intro} is that, while the first summation ranges over all partitions, the latter ranges over partitions $\mu$ of length bounded by $n-k$.
    Another difference is that~\eqref{branching_SP} is an identity of symmetric functions, while~\eqref{eq:wrong branching rule intro} is the result of specializing to finitely many variables.
    In principle, it should be possible to derive \eqref{eq:wrong branching rule intro} from \eqref{branching_SP}, but we will not attempt this.
\end{rem}

\section{The down-up Schur functions}\label{sec:du_schur}

In this section, we propose and develop the properties of a ``symplectic analogue" of the skew Schur functions $s_{\la/\mu}$ that we call the \emph{down-up Schur functions} $T_{\la,\mu}\in\Lambda$.
They depend on two partitions $\la,\mu$, and unlike the skew Schur functions, one of the partitions does not need to contain the other. In the special case of symmetric functions in finitely many variables, these functions were defined recently in \cite{JLW-2024} (there, denoted as $S^*_{\la/\mu}$) by the formalism of vertex operators. 

\subsection{The down-up Schur functions}\label{sec:updown_functions}

\begin{df}\label{main_df_fns}
Given any $\la,\mu\in\Y$, define the symmetric function $T_{\la,\mu}\in\Lambda$ by:
\begin{equation} \label{eq:def T}
    T_{\la,\mu} := \sum_{\nu\in\Y}{d^\la_{\mu,\nu}\,s_\nu}.
\end{equation}
We will call $T_{\la,\mu}\in\Lambda$ the \emph{down-up Schur functions}.
The projection $\pi^\infty_k(T_{\la,\mu})$, $k\in\Z_{\ge 1}$, will be a symmetric $k$-variate polynomial denoted by $T_{\la,\mu}(y_1,\dots,y_k)$.
\end{df}

The name ``down-up Schur function'' is explained by the following proposition.

\begin{prop}[Down-up formula]\label{combinatorial_S_star}
For any $\la,\mu\in\Y$, we have
\begin{align}
T_{\la,\mu} = \sum_{\alpha}{s_{\la/\alpha} s_{\mu/\alpha}}.\label{ortho_formula_3}
\end{align}
\end{prop}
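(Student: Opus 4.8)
The plan is to expand both sides in the Schur basis of $\Lambda$ and compare coefficients, reducing the identity to a known expansion of Newell–Littlewood coefficients. Concretely, I would start from the definition~\eqref{eq:def T}, so that the coefficient of $s_\nu$ in $T_{\la,\mu}$ is $d^\la_{\mu,\nu}$. The goal is then to show that the right-hand side $\sum_\alpha s_{\la/\alpha} s_{\mu/\alpha}$ has the same $s_\nu$-coefficient, namely that
\begin{equation*}
    \sum_\alpha \langle s_{\la/\alpha} s_{\mu/\alpha}, s_\nu \rangle = d^\la_{\mu,\nu}
\end{equation*}
with respect to the Hall inner product on $\Lambda$.

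First I would rewrite the left side using the standard adjunction and coproduct identities for skew Schur functions. Expanding $s_{\la/\alpha} = \sum_{\rho} c^\la_{\alpha,\rho} s_\rho$ and $s_{\mu/\alpha} = \sum_{\sigma} c^\mu_{\alpha,\sigma} s_\sigma$, and then using the Littlewood–Richardson rule to expand the product $s_\rho s_\sigma = \sum_\nu c^\nu_{\rho,\sigma} s_\nu$, the coefficient of $s_\nu$ in $\sum_\alpha s_{\la/\alpha} s_{\mu/\alpha}$ becomes
\begin{equation*}
    \sum_{\alpha,\rho,\sigma} c^\la_{\alpha,\rho}\, c^\mu_{\alpha,\sigma}\, c^\nu_{\rho,\sigma}.
\end{equation*}
By the symmetry of Littlewood–Richardson coefficients ($c^\nu_{\rho,\sigma} = c^\nu_{\sigma,\rho}$) and relabeling, this is exactly the right-hand side of Koike's formula~\eqref{thm_koike_eqn} for $d^\la_{\mu,\nu}$ (with $\alpha\leftrightarrow\beta$ and the roles of the other partitions matched up appropriately — one checks the three pairs $(\la,\alpha,\rho)$, $(\mu,\alpha,\sigma)$, $(\nu,\rho,\sigma)$ correspond to $(\la,\alpha,\beta)$, $(\mu,\gamma,\beta)$, $(\nu,\alpha,\gamma)$ in~\eqref{thm_koike_eqn} after the obvious identification). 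Hence the two coefficients agree for every $\nu$, and since $\{s_\nu\}$ is a basis of $\Lambda$, the identity~\eqref{ortho_formula_3} follows.

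Alternatively — and perhaps more cleanly — I would phrase the whole computation using the coproduct: since $s_{\la/\alpha}s_{\mu/\alpha}$ summed over $\alpha$ is a familiar pattern, one has $\sum_\alpha s_{\la/\alpha}\otimes s_{\mu/\alpha}$ arising from applying $\Delta$ and pairing, so that $\sum_\alpha s_{\la/\alpha}s_{\mu/\alpha} = \sum_{\rho,\sigma}\big(\sum_\alpha c^\la_{\alpha,\rho}c^\mu_{\alpha,\sigma}\big)s_\rho s_\sigma$, and then invoke Koike's \Cref{thm_koike} directly. The finiteness of all sums is guaranteed since $c^\la_{\alpha,\rho}\neq 0$ forces $\alpha,\rho\subseteq\la$ and likewise $\alpha,\sigma\subseteq\mu$, so only finitely many terms contribute. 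The main (and essentially only) obstacle is bookkeeping: making sure the triple-sum of Littlewood–Richardson coefficients obtained from expanding $\sum_\alpha s_{\la/\alpha}s_{\mu/\alpha}$ is matched term-by-term with the triple-sum in~\eqref{thm_koike_eqn}, which requires using the $S_3$-symmetry of the Littlewood–Richardson coefficients $c^\la_{\mu,\nu}$ under simultaneous operations and being careful about which partition plays which role. Once that identification is pinned down, the proof is immediate.
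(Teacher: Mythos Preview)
Your proposal is correct and follows essentially the same approach as the paper: both arguments reduce to expanding via the definitions $T_{\la,\mu}=\sum_\nu d^\la_{\mu,\nu}s_\nu$ and $s_{\la/\alpha}=\sum_\rho c^\la_{\alpha,\rho}s_\rho$, then invoking Koike's formula~\eqref{thm_koike_eqn} together with the symmetry $c^\nu_{\rho,\sigma}=c^\nu_{\sigma,\rho}$ to match the triple sums. The paper simply writes this as a one-line chain of equalities from left to right, whereas you phrase it as a coefficient comparison; the index bookkeeping you flag as the ``main obstacle'' is indeed the whole content, and it goes through exactly as you indicate.
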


\begin{proof}
    Using the definition of functions $T_{\la,\mu}$, \Cref{thm_koike} and properties \eqref{eq:s s expansion}, \eqref{eq:skew schur LR expansion} of the LR coefficients, we have the chain of equalities
    \begin{equation*}
        \begin{split}
            T_{\la,\mu} = \sum_\nu d^\la_{\mu,\nu} s_\nu=
            \sum_{\alpha,\beta,\gamma,\nu} c_{\alpha, \beta}^\la c_{\alpha, \gamma}^\mu c_{\beta, \gamma}^\nu s_\nu
            =
            \sum_{\alpha,\beta,\gamma} c_{\alpha, \beta}^\la c_{\alpha, \gamma}^\mu s_\beta s_\gamma
            =
            \sum_{\alpha} s_{\la/\alpha} s_{\mu/\alpha},
        \end{split}
    \end{equation*}
    which completes the proof.
\end{proof}

Some immediate properties of the down-up Schur functions are given next.

\begin{prop}\label{prop_orth_functions}
\begin{enumerate}[itemsep=0.05cm,label=(\alph*)]
	\item $T_{\la,\mu}=T_{\mu,\la}$, for all $\la,\mu\in\Y$.

	\item $T_{\la,\varnothing}=s_\la$, for all $\la\in\Y$.

	\item $T_{\la,\mu}(y_1,\dots,y_k)=0$, if $k<|\ell(\la)-\ell(\mu)|$.
\end{enumerate}
\end{prop}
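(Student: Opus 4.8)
The plan is to derive all three parts from the two explicit formulas we already have for $T_{\la,\mu}$: the Newell--Littlewood expansion \eqref{eq:def T} together with \Cref{lemma_NL_coeffs}, and the down-up formula \eqref{ortho_formula_3}. Part (a) is immediate: by \Cref{lemma_NL_coeffs}(a) the coefficient $d^\la_{\mu,\nu}$ is symmetric in all three indices, so in particular $d^\la_{\mu,\nu}=d^\mu_{\la,\nu}$ for every $\nu$, and hence $T_{\la,\mu}=\sum_\nu d^\la_{\mu,\nu}s_\nu=\sum_\nu d^\mu_{\la,\nu}s_\nu=T_{\mu,\la}$. (Alternatively, one reads it off \eqref{ortho_formula_3}, which is manifestly symmetric in $\la$ and $\mu$.)

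For part (b), take $\mu=\varnothing$ in \eqref{eq:def T} and use \Cref{lemma_NL_coeffs}(b), namely $d^\la_{\varnothing,\nu}=d^\la_{\nu,\varnothing}=\delta_{\la,\nu}$ (invoking the symmetry of part (a) to move the $\varnothing$ into the second slot if one prefers to cite the lemma verbatim). Then the sum collapses to the single term $s_\la$. Again this also follows from \eqref{ortho_formula_3}: with $\mu=\varnothing$ the only $\alpha$ for which $s_{\varnothing/\alpha}\neq 0$ is $\alpha=\varnothing$, and $s_{\varnothing/\varnothing}=1$, $s_{\la/\varnothing}=s_\la$.

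For part (c), I would argue in terms of the number of variables. Recall that $\pi^\infty_k(s_\nu)=s_\nu(y_1,\dots,y_k)=0$ whenever $\ell(\nu)>k$. Applying $\pi^\infty_k$ to \eqref{eq:def T} gives
\begin{equation*}
T_{\la,\mu}(y_1,\dots,y_k)=\sum_{\nu\,:\,\ell(\nu)\le k} d^\la_{\mu,\nu}\, s_\nu(y_1,\dots,y_k),
\end{equation*}
so it suffices to show that $d^\la_{\mu,\nu}=0$ for every $\nu$ with $\ell(\nu)\le k<|\ell(\la)-\ell(\mu)|$. But this is exactly \Cref{lemma_NL_coeffs}(c): $d^\la_{\mu,\nu}\neq 0$ forces $\ell(\nu)\ge|\ell(\la)-\ell(\mu)|$, contradicting $\ell(\nu)\le k<|\ell(\la)-\ell(\mu)|$. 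Hence every surviving term vanishes and $T_{\la,\mu}(y_1,\dots,y_k)=0$.

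\textbf{Main obstacle.} There is essentially no obstacle here — the proposition is a bookkeeping consequence of \Cref{lemma_NL_coeffs}. The only point that requires a moment's care is (c): one must pass from the abstract identity in $\Lambda$ to the $k$-variable specialization and remember that $s_\nu$ with $\ell(\nu)>k$ specializes to zero, so that only the length constraint from the Newell--Littlewood coefficients needs to be exploited. Everything else is a direct unwinding of definitions.
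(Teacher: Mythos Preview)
Your proof is correct and follows essentially the same approach as the paper: all three parts are deduced directly from the corresponding items of \Cref{lemma_NL_coeffs}, with part (c) handled by combining the length inequality $\ell(\nu)\ge|\ell(\la)-\ell(\mu)|$ for nonvanishing $d^{\la}_{\mu,\nu}$ with the vanishing of $s_\nu(y_1,\dots,y_k)$ when $\ell(\nu)>k$. The paper's proof is slightly terser but the logic is identical.
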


\begin{proof}
Parts (a) and (b) immediately follow from parts (a) and (b) from \Cref{lemma_NL_coeffs}. Finally note that if $\nu\in\Y$ is such that $\ell(\nu)<|\ell(\la)-\ell(\mu)|$, then part (c) from \Cref{lemma_NL_coeffs} implies that $d^{\la}_{\mu,\nu}=0$.
On the other hand if $\ell(\nu)\ge|\ell(\la)-\ell(\mu)|$ and $|\ell(\la)-\ell(\mu)|>k$, then $\ell(\nu)>k$ implies $s_\nu(y_1,\dots,y_k)=0$. Combining these two statements, part (c) follows.
\end{proof}

The down-up Schur functions satisfy the following variant of the branching rule.

\begin{prop}[Branching rule for down-up Schur functions]\label{branching_star}
Let $\rho,\mu\in\Y$ and let $y = (y_1,y_2,\dots)$ , $z=(z_1,z_2,\dots)$ be two sequences of variables. Then
\begin{equation} \label{branching_star_eqn}
\sum_\la{T_{\rho,\la}(y) T_{\la,\mu}(z)}
= H(y;z) \, T_{\rho,\mu}(y,z),
\end{equation}
where $H(y;z)$ was defined in \eqref{eq: H}.
\end{prop}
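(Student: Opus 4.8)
The strategy is to expand everything into ordinary skew Schur functions via the down-up formula \eqref{ortho_formula_3} and then reduce the claim to the classical skew Cauchy identity \eqref{eq:skew cauchy id} together with the branching rule \eqref{eq:branching rules schur}. First I would write the left-hand side of \eqref{branching_star_eqn} using \Cref{combinatorial_S_star} twice:
\[
\sum_\la T_{\rho,\la}(y)\,T_{\la,\mu}(z)
= \sum_\la \sum_{\alpha,\beta} s_{\rho/\alpha}(y)\,s_{\la/\alpha}(y)\,s_{\la/\beta}(z)\,s_{\mu/\beta}(z).
\]
The inner sum over $\la$ of $s_{\la/\alpha}(y)\,s_{\la/\beta}(z)$ is exactly the object controlled by the skew Cauchy identity, which gives $H(y;z)\sum_\kappa s_{\alpha/\kappa}(z)\,s_{\beta/\kappa}(y)$. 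Substituting this in yields
\[
H(y;z)\sum_{\alpha,\beta,\kappa} s_{\rho/\alpha}(y)\,s_{\beta/\kappa}(y)\,s_{\alpha/\kappa}(z)\,s_{\mu/\beta}(z).
\]

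**Reassembling the right-hand side.** Now the aim is to recognize the remaining quadruple sum as $T_{\rho,\mu}(y,z)$. By \Cref{combinatorial_S_star} applied in the doubled alphabet, $T_{\rho,\mu}(y,z) = \sum_\gamma s_{\rho/\gamma}(y,z)\,s_{\mu/\gamma}(y,z)$, and each of these splits by the branching rule \eqref{eq:branching rules schur} as $s_{\rho/\gamma}(y,z) = \sum_{\alpha} s_{\rho/\alpha}(y)\,s_{\alpha/\gamma}(z)$ and $s_{\mu/\gamma}(y,z)=\sum_\beta s_{\mu/\beta}(z)\,s_{\beta/\gamma}(y)$. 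Plugging these in gives $T_{\rho,\mu}(y,z) = \sum_{\gamma,\alpha,\beta} s_{\rho/\alpha}(y)\,s_{\beta/\gamma}(y)\,s_{\alpha/\gamma}(z)\,s_{\mu/\beta}(z)$, which matches the expression above with $\kappa$ playing the role of $\gamma$. Hence the two sides agree, and \eqref{branching_star_eqn} follows.

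**The main obstacle.** The only delicate points are bookkeeping ones: making sure all the interchanges of (infinite) summations are legitimate, which is standard when one works in the completed tensor product $\widehat{\Lambda_y \otimes \Lambda_z}$ or, equivalently, degree by degree (note the $H(y;z)$ factor means the identity lives in a completion, just as \eqref{eq:skew cauchy id} does); and being careful that the skew Cauchy identity is applied in the correct variable slots, i.e.\ that $\alpha,\beta$ are the ``outer'' indices attached to $\rho$ and $\mu$ respectively while $\la$ is summed out. I would also double-check the symmetry $T_{\la,\mu}=T_{\mu,\la}$ from \Cref{prop_orth_functions}(a) is consistent with the asymmetric-looking roles of $y$ and $z$ here — it is, because relabeling $\alpha\leftrightarrow\beta$ and $y\leftrightarrow z$ in the final quadruple sum exchanges the two factors. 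No genuinely new idea beyond \eqref{eq:skew cauchy id} and \eqref{eq:branching rules schur} should be required; the proof is a formal manipulation once the down-up formula is in hand.
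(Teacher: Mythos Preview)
Your proof is correct and follows essentially the same approach as the paper: expand both $T$-functions via the down-up formula \eqref{ortho_formula_3}, apply the skew Cauchy identity \eqref{eq:skew cauchy id} to the resulting sum over $\la$, and then use the branching rule \eqref{eq:branching rules schur} to reassemble the answer as $T_{\rho,\mu}(y,z)$. The paper's proof differs only in notation (using $\kappa,\theta,\eta$ in place of your $\alpha,\beta,\kappa$) and in collapsing the final reassembly step slightly more tersely.
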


\begin{proof}
    Using the down-up formula (\Cref{combinatorial_S_star}) and the summation rules \eqref{eq:branching rules schur}, \eqref{eq:skew cauchy id}, we have
    \begin{equation*}
    \begin{split}
        \sum_\la {T_{\rho,\la}(y) T_{\la,\mu}(z)} 
        & = \sum_{\la,\kappa,\theta} s_{\rho/\kappa}(y) s_{\la/\kappa}(y) s_{\la/\theta}(z) s_{\mu/\theta}(z)
        \\
        & = H(y;z) \sum_{\eta,\kappa, \theta} s_{\rho/\kappa}(y) s_{\kappa/\eta}(z) s_{\theta/\eta}(y) s_{\mu/\theta}(z)
        \\
        & = H(y;z) \sum_{\eta} s_{\rho/\eta}(y,z)  s_{\mu/\eta}(y,z),
    \end{split}
    \end{equation*}
    which reduces to the right hand side of \eqref{branching_star_eqn}.
\end{proof}

\subsection{Cauchy and Littlewood identities}

\begin{thm}[Skew down-up Cauchy identity]\label{new_skew_cauchy_thm}
Let $\la,\mu\in\Y$ and let $y,z$ be two sequences of variables. Then
\begin{equation} \label{eq:skew cauchy up down}
\sum_\rho{s_{\rho/\mu}(y) T_{\rho,\la}(z)}
= H(y;z) \sum_\eta{s_{\la/\eta}(y)T_{\mu,\eta}(z)},
\end{equation}
where $H(y;z)$ was defined in \eqref{eq: H}.
\end{thm}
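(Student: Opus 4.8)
The plan is to mimic the proof of the branching rule (\Cref{branching_star}): expand everything in terms of ordinary skew Schur functions using the down-up formula (\Cref{combinatorial_S_star}), and then repeatedly apply the two classical identities for skew Schur functions, namely the branching rule \eqref{eq:branching rules schur} and the skew Cauchy identity \eqref{eq:skew cauchy id}, until both sides collapse to the same symmetric expression in $\Lambda_y\otimes\Lambda_z$. Since all the building blocks are symmetric functions, it suffices to verify the identity at this level; there is no convergence issue to worry about.

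Concretely, I would start with the left-hand side and write, using \Cref{combinatorial_S_star},
\begin{equation*}
\sum_\rho s_{\rho/\mu}(y) T_{\rho,\la}(z) = \sum_{\rho,\kappa} s_{\rho/\mu}(y)\, s_{\rho/\kappa}(z)\, s_{\la/\kappa}(z).
\end{equation*}
Now I would apply the skew Cauchy identity \eqref{eq:skew cauchy id} to the inner sum over $\rho$ of $s_{\rho/\mu}(y) s_{\rho/\kappa}(z)$, which produces $H(y;z)\sum_\theta s_{\mu/\theta}(z)\, s_{\kappa/\theta}(y)$. After this step the expression becomes
\begin{equation*}
H(y;z)\sum_{\kappa,\theta} s_{\kappa/\theta}(y)\, s_{\la/\kappa}(z)\, s_{\mu/\theta}(z),
\end{equation*}
and the sum over $\kappa$ of $s_{\kappa/\theta}(y) s_{\la/\kappa}(z)$ is exactly the branching rule \eqref{eq:branching rules schur} backwards, collapsing to $s_{\la/\theta}(y,z)$. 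So the left-hand side equals $H(y;z)\sum_\theta s_{\la/\theta}(y,z)\, s_{\mu/\theta}(z)$. Doing the analogous manipulation on the right-hand side: expand $T_{\mu,\eta}(z) = \sum_\kappa s_{\mu/\kappa}(z) s_{\eta/\kappa}(z)$, then the sum over $\eta$ of $s_{\la/\eta}(y) s_{\eta/\kappa}(z)$ is the branching rule, giving $s_{\la/\kappa}(y,z)$, so the right-hand side is $H(y;z)\sum_\kappa s_{\la/\kappa}(y,z)\, s_{\mu/\kappa}(z)$, which matches. One should double check the direction in which the skew Cauchy identity is applied — whether one needs $\rho=\eta=\varnothing$ type degeneration or the genuinely skew version with $\rho,\eta$ nonempty — and be careful about which variable set ($y$ versus $z$) plays the role of each argument, since the identity is not symmetric in its two slots once $H(y;z)$ appears.

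I do not expect a genuine obstacle here; the whole argument is bookkeeping with three standard skew-Schur identities. The only mildly delicate point is keeping the roles of $y$ and $z$ straight: both $T_{\rho,\la}(z)$ on the left and $T_{\mu,\eta}(z)$ on the right are evaluated at $z$, while the skew Schur prefactors $s_{\rho/\mu}(y)$ and $s_{\la/\eta}(y)$ are at $y$, so when expanding via \Cref{combinatorial_S_star} and then reassembling with the branching rule one combines a $y$-variable piece with a $z$-variable piece to get a $(y,z)$-piece — exactly as in the proof of \Cref{branching_star}. An alternative, perhaps cleaner, route would be to contract \eqref{branching_star_eqn} against a Cauchy kernel or to deduce \eqref{eq:skew cauchy up down} directly from the Newell–Littlewood definition \eqref{eq:def T} together with Koike's formula (\Cref{thm_koike}) and the ordinary skew Cauchy identity, but the skew-Schur expansion approach above is the most transparent and parallels the other proofs in this section.
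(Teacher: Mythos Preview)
Your proposal is correct and follows essentially the same approach as the paper: expand $T_{\rho,\la}(z)$ via the down-up formula, apply the skew Cauchy identity \eqref{eq:skew cauchy id} to the sum over $\rho$, then use the branching rule \eqref{eq:branching rules schur} to collapse to the common form $H(y;z)\sum_\theta s_{\la/\theta}(y,z)\, s_{\mu/\theta}(z)$. The only cosmetic difference is that the paper continues from this common form by expanding $s_{\la/\theta}(y,z)$ with the branching rule once more to reach the right-hand side directly, whereas you reduce both sides separately to that common form; the substance is identical.
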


\begin{proof}
Using the down-up formula, the skew Cauchy identity, the branching rule of skew Schur functions and the symmetry of the skew Schur functions, we have
        \begin{equation*}
        \begin{split}
            \sum_\rho{s_{\rho/\mu}(y) T_{\rho,\la}(z)} &= \sum_{\rho, \kappa}s_{\rho/\mu}(y)  s_{\rho/\kappa}(z) s_{\la/\kappa}(z)
            \\
            &= H(y;z) \sum_{\alpha, \kappa}s_{\mu/\alpha}(z)  s_{\la/\kappa}(z) s_{\kappa/\alpha}(y)
            \\
            & = H(y;z) \sum_\alpha s_{\mu/\alpha}(z)  s_{\la/\alpha}(y,z)
            \\
            & = H(y;z) \sum_{\alpha, \eta}s_{\mu/\alpha}(z) s_{\eta/\alpha}(z) s_{\la/\eta}(y),
        \end{split}
        \end{equation*}
which reduces to the right hand side of \eqref{eq:skew cauchy up down}.
\end{proof}

\begin{cor}\label{cor:new_skew_cauchy_thm}
    Let $\la \in \Y$ and let $y,z$ be two sequences of variables. Then
    \begin{equation} \label{eq: skew cauchy T simplified}
        \sum_\rho{s_{\rho}(y) T_{\rho,\la}(z)}
        = H(y;z) s_{\la}(y,z),
\end{equation}
\end{cor}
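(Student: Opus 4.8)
The plan is to obtain \Cref{cor:new_skew_cauchy_thm} as the special case $\mu=\varnothing$ of the skew down-up Cauchy identity from \Cref{new_skew_cauchy_thm}. First I would set $\mu=\varnothing$ in \eqref{eq:skew cauchy up down}. On the left-hand side, the sum over $\rho$ becomes $\sum_\rho s_{\rho/\varnothing}(y)\,T_{\rho,\la}(z)$, and since $s_{\rho/\varnothing}=s_\rho$ (the skew Schur function with empty inner shape is the ordinary Schur function, by \eqref{eq:skew schur LR expansion} together with $c^\rho_{\varnothing,\nu}=\delta_{\rho,\nu}$), this is exactly $\sum_\rho s_\rho(y)\,T_{\rho,\la}(z)$, the left-hand side of \eqref{eq: skew cauchy T simplified}.

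Next I would simplify the right-hand side of \eqref{eq:skew cauchy up down} with $\mu=\varnothing$, namely $H(y;z)\sum_\eta s_{\la/\eta}(y)\,T_{\varnothing,\eta}(z)$. Here I invoke part (a) of \Cref{prop_orth_functions}, $T_{\varnothing,\eta}=T_{\eta,\varnothing}$, followed by part (b), $T_{\eta,\varnothing}=s_\eta$. Thus the right-hand side becomes $H(y;z)\sum_\eta s_{\la/\eta}(y)\,s_\eta(z)$. Finally, the branching rule for skew Schur functions \eqref{eq:branching rules schur} applied with outer shape $\la$ and inner shape $\varnothing$ gives $s_{\la/\varnothing}(y,z)=\sum_\eta s_{\la/\eta}(y)\,s_{\eta/\varnothing}(z)=\sum_\eta s_{\la/\eta}(y)\,s_\eta(z)$, so the right-hand side equals $H(y;z)\,s_\la(y,z)$, matching the right-hand side of \eqref{eq: skew cauchy T simplified}.

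This is a short deduction and I do not anticipate a genuine obstacle; the only point requiring any care is the bookkeeping of which slot of $T$ is empty (using the symmetry $T_{\la,\mu}=T_{\mu,\la}$) and the recognition of $s_{\rho/\varnothing}=s_\rho$ and $s_{\la/\varnothing}(y,z)=\sum_\eta s_{\la/\eta}(y)s_\eta(z)$ as instances of the already-established identities. Alternatively, one could prove \eqref{eq: skew cauchy T simplified} directly from the down-up formula \eqref{ortho_formula_3}: write $\sum_\rho s_\rho(y)T_{\rho,\la}(z)=\sum_{\rho,\kappa}s_\rho(y)s_{\rho/\kappa}(z)s_{\la/\kappa}(z)$, apply the skew Cauchy identity \eqref{eq:skew cauchy id} to the sum over $\rho$ (with inner shapes $\varnothing$ and $\kappa$) to get $H(y;z)\sum_\kappa s_\kappa(z) s_{\la/\kappa}(z)\cdot(\text{correction})$—more precisely $\sum_\rho s_{\rho/\varnothing}(y)s_{\rho/\kappa}(z)=H(y;z)\sum_{\sigma}s_{\kappa/\sigma}(z)s_{\varnothing/\sigma}(y)=H(y;z)s_\kappa(z)$—and then recognize $\sum_\kappa s_{\la/\kappa}(z)s_\kappa(z)=s_\la(y,z)$ via \eqref{eq:branching rules schur}. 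Either route is routine; I would present the first, as it most transparently exhibits \eqref{eq: skew cauchy T simplified} as a specialization.
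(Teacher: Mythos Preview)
Your primary argument is correct and is exactly the paper's proof: specialize \eqref{eq:skew cauchy up down} at $\mu=\varnothing$, use \Cref{prop_orth_functions}(b) (with the symmetry in (a)) to reduce $T_{\varnothing,\eta}$ to $s_\eta$, and finish with the branching rule \eqref{eq:branching rules schur}. (In your aside on the alternative route there is a variable swap---the skew Cauchy identity yields $H(y;z)\,s_\kappa(y)$, not $s_\kappa(z)$---but this does not affect the main proof you propose.)
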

\begin{proof}
    This is a consequence of \eqref{eq:skew cauchy up down} in the case $\mu=\varnothing$, of item (b) from \Cref{prop_orth_functions} and the branching rule~\eqref{eq:branching rules schur} for skew Schur functions.
\end{proof}

Next we have a summation identity that generalizes the Cauchy-Littlewood identity (\Cref{cauchy_littlewood_prop_2}) and includes both universal symplectic characters and down-up Schur functions.

\begin{thm}\label{gen_cauchy_littlewood_prop_2}
Let $\la\in\Y$ and let $x,y$ be two sequences of variables. Then
\begin{equation} \label{eq:skew_cauchy_littlewood_general}
\sum_\mu{\SP_\mu(x) T_{\la,\mu}(y)} =
G(y) H(x;y)\,\SP_\la(x),
\end{equation}
where $H(x;y)$ and $G(y)$ were defined in \eqref{eq: H} and \eqref{eq: G}, respectively.
\end{thm}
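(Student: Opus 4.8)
The plan is to reduce the claimed identity \eqref{eq:skew_cauchy_littlewood_general} to the ordinary Cauchy--Littlewood identity (\Cref{cauchy_littlewood_prop_2}) together with the skew Cauchy identity \eqref{eq:skew cauchy id}, by the same ``introduce an auxiliary alphabet and compare coefficients of Schur functions'' device used in the proof of \Cref{branching_SP_thm}. Concretely, let $u=(u_1,u_2,\dots)$ be a third sequence of variables, and consider the generating function $\sum_\la \big( \sum_\mu \SP_\mu(x) T_{\la,\mu}(y) \big) s_\la(u)$. Since $\{s_\la\}_{\la\in\Y}$ is a basis of $\Lambda_u$, the identity \eqref{eq:skew_cauchy_littlewood_general} for all $\la$ is equivalent to the single identity obtained after this summation, and the hope is that both sides collapse to a closed product form.

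First I would compute the left-hand side after summation. Interchanging sums and using \Cref{cor:new_skew_cauchy_thm} (the $\mu=\varnothing$ skew down-up Cauchy identity, $\sum_\la s_\la(u) T_{\la,\mu}(y) = H(u;y)\, s_\mu(u,y)$) gives
\[
\sum_\la\Big(\sum_\mu \SP_\mu(x) T_{\la,\mu}(y)\Big) s_\la(u)
= \sum_\mu \SP_\mu(x)\, H(u;y)\, s_\mu(u,y)
= H(u;y)\sum_\mu \SP_\mu(x)\, s_\mu(u,y).
\]
Now apply the Cauchy--Littlewood identity \eqref{cauchy_littlewood_eqn_2} with the second alphabet taken to be the concatenation $(u,y)$, which yields $\sum_\mu \SP_\mu(x) s_\mu(u,y) = G(u,y)\,H(x;u,y)$. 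Using the additivity $H(x;u,y)=H(x;u)H(x;y)$ in the second argument, the left side becomes $H(u;y)\,G(u,y)\,H(x;u)\,H(x;y)$.

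Next I would compute the right-hand side after summation: $\sum_\la G(y)H(x;y)\,\SP_\la(x) s_\la(u) = G(y)H(x;y)\sum_\la \SP_\la(x) s_\la(u) = G(y)H(x;y)\,G(u)H(x;u)$, again by \eqref{cauchy_littlewood_eqn_2}. Comparing the two, after cancelling the common factor $H(x;y)H(x;u)$, the identity reduces to the purely ``$G$/$H$'' claim
\[
H(u;y)\,G(u,y) = G(y)\,G(u).
\]
This is a routine power-sum computation: from the definitions \eqref{eq: H}, \eqref{eq: G}, $\log G(u,y) = -\sum_k \frac{(p_k(u)+p_k(y))^2 - (p_{2k}(u)+p_{2k}(y))}{2k}$, and expanding the square produces exactly $\log G(u) + \log G(y) - \sum_k \frac{p_k(u)p_k(y)}{k} = \log G(u) + \log G(y) - \log H(u;y)$, which is the required relation. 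The only mild obstacle is bookkeeping: one must make sure the auxiliary generating-function manipulations are legitimate as identities in the appropriate completion of $\Lambda_x\otimes\Lambda_y\otimes\Lambda_u$ (everything is graded-finite in the relevant sense, as in the proof of \Cref{branching_SP_thm}), and that the alphabet-concatenation properties of $G$ and $H$ are invoked correctly; the algebraic content is entirely contained in the two already-established Cauchy-type identities.
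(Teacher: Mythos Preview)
Your proof is correct, but the paper takes a shorter and more direct route. The paper simply expands the right-hand side: by the Cauchy--Littlewood identity, $G(y)H(x;y)=\sum_\nu \SP_\nu(x)s_\nu(y)$, so $G(y)H(x;y)\SP_\la(x)=\sum_\nu \SP_\nu(x)\SP_\la(x)\,s_\nu(y)$; then the defining relation~\eqref{eq:SP SP expansion} for the Newell--Littlewood coefficients turns this into $\sum_\mu \SP_\mu(x)\bigl(\sum_\nu d^{\mu}_{\nu,\la}\,s_\nu(y)\bigr)$, and the inner sum is $T_{\la,\mu}(y)$ by definition~\eqref{eq:def T}. In other words, the identity is essentially a restatement of the fact that $T_{\la,\mu}$ packages the structure constants of the $\SP$-basis. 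Your approach instead stays at the level of previously proved Cauchy-type identities (\Cref{cor:new_skew_cauchy_thm} and the factorization $H(u;y)G(u,y)=G(u)G(y)$), never touching the Newell--Littlewood coefficients directly; this is a legitimate alternative and mirrors the proof strategy of \Cref{branching_SP_thm}, but it is longer and relies on more machinery (in particular on \Cref{cor:new_skew_cauchy_thm}, which itself goes through the down-up formula and Koike's theorem). The paper's argument makes the conceptual content of the identity more transparent.
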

\begin{proof}
By the Cauchy-Littlewood identity (\Cref{cauchy_littlewood_prop_2}) and the definition~\eqref{eq:SP SP expansion} of the Newell-Littlewood coefficients, we can transform the right hand side of \eqref{eq:skew_cauchy_littlewood_general} through the following chain of equalities
\begin{align*}
\left( \sum_{\nu}{\SP_\nu(x)s_\nu(y)}\right) \cdot \SP_\la(x)
&= \sum_{\nu}{\SP_\nu(x) \SP_\la(x)} s_\nu(y)\\
&= \sum_{\nu} \sum_{\mu}{d^{\mu}_{\nu,\la}\, \SP_\mu(x)} s_\nu(y)\\
&= \sum_{\mu} \left( \sum_{\nu}{d^{\mu}_{\nu,\la}s_\nu(y)} \right) \cdot \SP_\mu(x),\end{align*}
where in the last line we recognize the defining equation~\eqref{eq:def T} of the $T_{\la,\mu}$ functions, thus finishing the proof.
\end{proof}

Projecting the identity of \Cref{gen_cauchy_littlewood_prop_2} to the space of finite variables $x_1,\dots,x_n$, $y_1,\dots,y_k$, we obtain the following.

\begin{cor} \label{gen_cauchy_littlewood_prop}
Let $x^\pm=(x_1^\pm,\dots,x_n^\pm)$, $y=(y_1,\dots, y_k)$ be such that $|x_iy_j|,|x_i^{-1}y_j|<1$, for all $i,j$, and let $\la\in\Y$ be such that $\ell(\la)+k\le n$. Then we have
    \begin{equation}\label{eqn:generalization_CL_identity}
        \sum_{\mu\,:\,\ell(\mu)\le\ell(\la)+k}{\sp_\mu(x^\pm)\,T_{\la,\mu}(y)} =
        \frac{\prod_{1\le i<j \le k}(1 - y_iy_j)}{\prod_{i= 1}^n \prod_{j= 1}^k  (1 - x_iy_j) (1 - x_i^{-1}y_j)}\,\sp_{\la}(x^\pm).
    \end{equation}
\end{cor}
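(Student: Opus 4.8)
The plan is to obtain \Cref{gen_cauchy_littlewood_prop} as a direct specialization of the symmetric-function identity in \Cref{gen_cauchy_littlewood_prop_2}. Recall that \Cref{gen_cauchy_littlewood_prop_2} reads $\sum_\mu \SP_\mu(x) T_{\la,\mu}(y) = G(y) H(x;y) \SP_\la(x)$ as an identity in $\Lambda_x \otimes \Lambda_y$. The idea is to apply the projection $\Pi^\infty_n$ of \eqref{eqn:big_pi} to the $x$-variables (sending $\SP_\mu \mapsto \sp_\mu(x^\pm)$ when $\ell(\mu)\le n$) and the finite-variable projection $\pi^\infty_k$ to the $y$-variables, and then to check three things: that the right-hand side turns into the claimed product times $\sp_\la(x^\pm)$; that the left-hand sum can be truncated to $\ell(\mu)\le \ell(\la)+k$; and that the problematic terms with $\ell(\mu)>n$ (where $\Pi^\infty_n(\SP_\mu)$ need not vanish, cf.~\Cref{rem:SP for long partitions}) do not actually appear.

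First I would handle the right-hand side. Under $y\mapsto(y_1,\dots,y_k)$ we have $p_j(y)=\sum_{i=1}^k y_i^j$, so using the definitions \eqref{eq: H} and \eqref{eq: G},
\[
H(x;y)\big|_{x^\pm} = \prod_{i=1}^n\prod_{j=1}^k \frac{1}{(1-x_iy_j)(1-x_i^{-1}y_j)},
\qquad
G(y) = \prod_{1\le i<j\le k}(1-y_iy_j),
\]
the first from $\exp\{\sum_{m\ge1} \frac{p_m(x^\pm)p_m(y)}{m}\}$ with $p_m(x^\pm)=\sum_i(x_i^m+x_i^{-m})$, and the second from the standard identity $\exp\{-\sum_{m\ge1}\frac{p_m(y)^2-p_{2m}(y)}{2m}\}=\prod_{i<j}(1-y_iy_j)$. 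Also $\Pi^\infty_n(\SP_\la)=\sp_\la(x^\pm)$ since $\ell(\la)\le \ell(\la)+k\le n$ by hypothesis. So the right side becomes exactly the claimed expression. This step is routine, modulo being careful about convergence, which is why the hypothesis $|x_iy_j|,|x_i^{-1}y_j|<1$ is imposed: it guarantees absolute convergence so that the projections commute with the infinite sum over $\mu$ (exactly as in the discussion following \Cref{cauchy_littlewood_prop}).

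Next comes the left-hand side, which is where the only real subtlety lies. After applying $\pi^\infty_k$ to $T_{\la,\mu}(y)$, any term with $\ell(\mu)$ large in principle survives only if $T_{\la,\mu}(y_1,\dots,y_k)\ne0$; by \Cref{prop_orth_functions}(c) this forces $\ell(\mu)\ge |\ell(\la)-\ell(\mu)|$ trivially, but more usefully, $T_{\la,\mu}=\sum_\nu d^\la_{\mu,\nu}s_\nu$ and $\pi^\infty_k(s_\nu)=0$ unless $\ell(\nu)\le k$, while $d^\la_{\mu,\nu}\ne0$ forces (by \Cref{lemma_NL_coeffs}(c)) $\ell(\mu)\le\ell(\la)+\ell(\nu)\le\ell(\la)+k$. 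Hence the sum is automatically supported on $\ell(\mu)\le\ell(\la)+k$, which is the truncation claimed. Since $\ell(\la)+k\le n$, every such $\mu$ satisfies $\ell(\mu)\le n$, so $\Pi^\infty_n(\SP_\mu)=\sp_\mu(x^\pm)$ genuinely holds for \emph{all} surviving terms and the pathological long-partition behavior of \Cref{rem:SP for long partitions} never occurs. I expect this bookkeeping — verifying that the length constraint $\ell(\la)+k\le n$ is exactly what is needed to make both projections behave — to be the main (though still mild) obstacle; everything else is assembling the pieces. Putting the two sides together yields \eqref{eqn:generalization_CL_identity}.
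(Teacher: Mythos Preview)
Your proposal is correct and follows essentially the same route as the paper: apply $\Pi^\infty_n$ and $\pi^\infty_k$ to the identity of \Cref{gen_cauchy_littlewood_prop_2}, evaluate $H$, $G$, $\SP_\la$ on the right, and on the left use \Cref{prop_orth_functions}(c) (equivalently \Cref{lemma_NL_coeffs}(c)) to truncate to $\ell(\mu)\le\ell(\la)+k\le n$ so that $\SP_\mu$ specializes to $\sp_\mu(x^\pm)$. The only difference is that you spell out the convergence and the Newell--Littlewood triangle inequality a bit more explicitly than the paper does.
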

\begin{proof}
The desired identity follows by applying both the canonical projection $\pi^\infty_k$ to the $y$-variables and the projection $\Pi^\infty_n$ (see~\eqref{eqn:big_pi}) to the $x$-variables in Equation~\eqref{eq:skew_cauchy_littlewood_general}.
For the right hand side, we need to observe that $H(x;y)$ turns into $\prod_{i,j}(1-x_iy_j)^{-1}(1-x_i^{-1}y_j)^{-1}$, whereas $G(y)$ turns into $\prod_{i<j}(1-y_iy_j)$, as shown in \Cref{rem:evaluation H and G}, while $\SP_\la$ turns into $\sp_\la(x^\pm)$ by Equation~\eqref{SP_equations}, because $\ell(\la)\le n$.
For the left hand side, we also need item (c) from \Cref{prop_orth_functions}, which allows us to restrict the sum from the set of all partitions $\mu$ to those satisfying $\ell(\mu)\le\ell(\la)+k$.
By the condition $\ell(\la)+k\le n$, it then follows that $\ell(\mu)\le n$, and so $\SP_\mu$ turns into $\sp_\mu(x^\pm)$, thus completing the proof.
\end{proof}

\begin{rem}
    Identity \eqref{eqn:generalization_CL_identity} (as well as some other finite-variable versions of our identities) has appeared in a slightly more general form in~\cite[Equation (5.24)]{JLW-2024}. There, the authors expressed the symmetric polynomials $T_{\la,\mu}(y_1,\dots,y_k)$ (using the notation $S_{\la/\mu}^*(y_1,\dots,y_k)$) as the Weyl-type determinant
    \begin{equation}\label{jlw_formula}
   \det_{1\le i,j\le\ell(\mu)+k}\left[ \mathbf{1}_{1\le i \le k} \, \cdot   y_i^{\la_j-j+k} + \mathbf{1}_{k+1\le i \le k+\ell(\mu)} \,\cdot h_{i-k-\la_j + j -i}(y_1^\pm ,\dots,y_k^\pm)  \right] \prod_{1\le i < j \le k}(y_i-y_j)^{-1}.
    \end{equation}
    \Cref{main_df_fns}, or equivalently Equation~\eqref{ortho_formula_3}, offer different and more general characterizations of these skew-dual variants of the symplectic Schur polynomials.
    We remark that it is not evident that formula~\eqref{jlw_formula} defines a symmetric polynomial and much less that it has the stability property that leads to a symmetric function.
\end{rem}

We also report the following dual variant of the generalized Cauchy-Littlewood identity from \Cref{gen_cauchy_littlewood_prop_2}.
\begin{prop}
Let $\la\in\Y$ and let $x,y$ be two sequences of variables. Then
    \begin{equation} \label{eq:dual Littlewood identity}
        \sum_\mu{\SP_\mu(x) T_{\la',\mu'}(y)} = \overline{G}(y) E(x;y) \,\SP_{\la}(x),
    \end{equation}
    where 
    \begin{equation*}
        \overline{G}(y) := \exp \left\{ - \sum_{k\ge 1} \frac{p_k(y)^2 + p_{2k}(y)}{2k} \right\},
        \qquad
        E(x;y) :=
        \exp \left\{ \sum_{k\ge 1} (-1)^{k-1} \frac{p_k(x) p_k(y)}{k} \right\}.
    \end{equation*}
\end{prop}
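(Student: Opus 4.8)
The plan is to mimic exactly the proof of \Cref{gen_cauchy_littlewood_prop_2}, replacing the Cauchy--Littlewood identity of \Cref{cauchy_littlewood_prop_2} by its ``dual'' counterpart. Concretely, the first step is to record the dual Cauchy--Littlewood identity for universal symplectic characters, namely
\begin{equation*}
\sum_\nu \SP_\nu(x)\, s_{\nu'}(y) = \overline{G}(y)\, E(x;y),
\end{equation*}
where $E(x;y)=\exp\{\sum_{k\ge1}(-1)^{k-1}p_k(x)p_k(y)/k\}=\prod_{i,j}(1+x_iy_j)$ (in finite variables) is the dual Cauchy kernel and $\overline G(y)$ is the product over $i<j$ of $(1+y_iy_j)$ after specialization. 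This identity is the $\omega_y$-image of \eqref{cauchy_littlewood_eqn_2}: applying the standard involution $\omega$ on $\Lambda_y$ (which sends $s_\mu\mapsto s_{\mu'}$, $p_k\mapsto(-1)^{k-1}p_k$, hence $H(x;y)\mapsto E(x;y)$ and $G(y)\mapsto\overline G(y)$) to \Cref{cauchy_littlewood_prop_2}, noting that the $x$-variables are untouched, yields precisely the display above.

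With this in hand, the second step is the same algebraic manipulation as before. Starting from the right-hand side of \eqref{eq:dual Littlewood identity}, substitute the dual Cauchy--Littlewood identity and then expand the product $\SP_\nu(x)\SP_{\la'}(x)$ using the Newell--Littlewood coefficients \eqref{eq:SP SP expansion}:
\begin{align*}
\overline G(y)\,E(x;y)\,\SP_{\la'}(x)
&= \Big(\sum_\nu \SP_\nu(x)\, s_{\nu'}(y)\Big)\cdot \SP_{\la'}(x)
= \sum_\nu \big(\SP_\nu(x)\SP_{\la'}(x)\big)\, s_{\nu'}(y)\\
&= \sum_\nu \sum_{\rho} d^{\rho}_{\nu,\la'}\,\SP_{\rho}(x)\, s_{\nu'}(y)
= \sum_{\rho} \Big(\sum_\nu d^{\rho}_{\nu,\la'}\, s_{\nu'}(y)\Big)\cdot \SP_{\rho}(x).
\end{align*}
The third step is to identify the inner sum. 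Re-indexing $\nu\mapsto\nu'$ and using the symmetry $d^\rho_{\nu,\la'}=d^\rho_{\la',\nu}$ from \Cref{lemma_NL_coeffs}(a), we get $\sum_{\nu} d^{\rho}_{\la',\nu'}\, s_{\nu}(y)$; but this is not yet $T_{\rho,\mu}(y)$ for the desired $\mu$. The correct bookkeeping is to write $\sum_\nu d^\rho_{\nu,\la'} s_{\nu'}(y) = \sum_\nu d^\rho_{\nu',\la'}s_\nu(y)$, and since one checks (from \Cref{combinatorial_S_star}, or directly from Koike's formula \Cref{thm_koike} together with $c^\rho_{\alpha,\beta}=c^{\rho'}_{\alpha',\beta'}$) that $d^\rho_{\nu',\la'}=d^{\rho'}_{\nu,\la}$, the inner sum equals $\sum_\nu d^{\rho'}_{\la,\nu} s_\nu(y)=T_{\rho',\la}(y)$. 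Finally substitute $\rho=\mu'$ in the outer sum, which is a bijection on $\Y$, to obtain $\sum_{\mu} T_{\mu,\la}(y)\,\SP_{\mu'}(x)=\sum_\mu T_{\la',\mu'}(y)\,\SP_\mu(x)$ after once more relabeling and using $T_{\mu,\la}=T_{\la,\mu}$; comparing with the left-hand side of \eqref{eq:dual Littlewood identity} (written with $\la$ in place of $\la'$, which is legitimate since $\la$ is arbitrary) finishes the proof.

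The only genuinely delicate point is the transposition identity for Newell--Littlewood coefficients, $d^{\la}_{\mu,\nu}=d^{\la'}_{\mu',\nu'}$, and its bookkeeping inside the sums; everything else is a formal rewrite. This transpose-invariance follows immediately from Koike's formula \eqref{thm_koike_eqn} and the classical symmetry $c^{\la}_{\mu,\nu}=c^{\la'}_{\mu',\nu'}$ of Littlewood--Richardson coefficients, so I would state it as a short lemma (or inline remark) before running the computation. I expect the main obstacle to be purely notational: keeping straight which partitions get transposed at which stage so that the final answer is exactly $\SP_\mu(x)\,T_{\la',\mu'}(y)$ and not some transpose thereof. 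An alternative, perhaps cleaner, route that avoids the $d^{\la'}_{\mu',\nu'}$ identity altogether is to apply the involution $\omega$ acting simultaneously on the $y$-variables in the \emph{already-proven} \Cref{gen_cauchy_littlewood_prop_2}: since $T_{\la,\mu}=\sum_\nu d^\la_{\mu,\nu}s_\nu$, one has $\omega_y T_{\la,\mu}(y)=\sum_\nu d^\la_{\mu,\nu}s_{\nu'}(y)$, which needs to be matched to $T_{\la',\mu'}(y)$ — so this route still requires the same transpose identity, just packaged differently. I would go with the direct computation above.
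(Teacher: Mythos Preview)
Your direct computation is correct in spirit, though the line claiming $\sum_{\mu} T_{\mu,\la}(y)\,\SP_{\mu'}(x)=\sum_\mu T_{\la',\mu'}(y)\,\SP_\mu(x)$ is miswritten: relabeling $\mu\to\mu'$ and using the symmetry only gives $\sum_\mu T_{\la,\mu'}(y)\,\SP_\mu(x)$, and it is your parenthetical substitution $\la\to\la'$ at the end that actually fixes the statement. So the argument is valid once that slip is corrected.

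The paper takes exactly your ``alternative, perhaps cleaner, route'': it applies $\omega_y$ directly to \Cref{gen_cauchy_littlewood_prop_2}. Where you conclude that this route ``still requires the same transpose identity,'' the paper sidesteps the Newell--Littlewood bookkeeping entirely by computing $\omega(T_{\la,\mu})$ via the down-up formula (\Cref{combinatorial_S_star}) together with the classical fact $\omega(s_{\la/\kappa})=s_{\la'/\kappa'}$:
\[
\omega(T_{\la,\mu})=\sum_\kappa \omega(s_{\la/\kappa})\,\omega(s_{\mu/\kappa})=\sum_\kappa s_{\la'/\kappa'}\,s_{\mu'/\kappa'}=T_{\la',\mu'}.
\]
The transpose identity $d^{\la}_{\mu,\nu}=d^{\la'}_{\mu',\nu'}$ is thus implicit but never stated or separately proved. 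This makes the paper's proof three lines long and avoids the re-indexing trap you stumbled into; your approach works but is more laborious for no real gain.
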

\begin{proof}
    Consider the Hall involution $\omega:\Lambda_y \to \Lambda_y$ defined by the relation $\omega (p_k) = (-1)^{k-1}p_k$, for all $k\in\Z_{\ge 1}$. It is known (see \cite[Ch.~I.5, Example~30(c)]{M-1995}) that the involution acts on skew Schur functions as  $\omega(s_{\la/\kappa}(y))= s_{\la'/\kappa'}(y)$. This implies, by the down-up formula~\eqref{ortho_formula_3}, that
    \begin{equation*}
        \omega(T_{\la,\mu}) = T_{\la',\mu'}.
    \end{equation*}
    Then, applying the involution $\omega$ to the left hand side of identity \eqref{eq:skew_cauchy_littlewood_general}, we find
    \begin{equation*}
        \omega\left( \sum_\mu{\SP_\mu(x)\,T_{\la,\mu}(y)} \right) = \sum_\mu{\SP_\mu(x)\,T_{\la',\mu'}(y)}.
    \end{equation*}
    Focusing on the right hand side of \eqref{eq:skew_cauchy_littlewood_general}, we find
    \begin{equation*}
    \begin{split}
        \omega\left( G(y) \right) = \overline{G}(y),
        \qquad
        \omega\left( H(x;y) \right) = E(x;y).
    \end{split}
    \end{equation*}
    Combining the above relations shows that the action of $\omega$ transforms the right hand side of  \eqref{eq:skew_cauchy_littlewood_general} into that of \eqref{eq:dual Littlewood identity}, completing the proof.
\end{proof}

\section{The symplectic Schur process}

\subsection{Specializations of the algebra of symmetric functions}

\begin{df}[Specializations of $\Lambda$]\label{specs_sym}
\begin{enumerate}[label=(\alph*)]
	\item Any unital algebra homomorphism $\a:\Lambda\to\R$ is said to be a \emph{specialization of $\Lambda$}.
For any symmetric function $f\in\Lambda$, we denote $\a(f)$ by $f(\a)$.

	\item Since $\Lambda=\R[p_1,p_2,\dots]$, a specialization $\a:\Lambda\to\R$ is determined by the values $p_k(\a)$, $k\ge 1$. If $\a_1,\a_2:\Lambda\to\R$ are any two specializations, then the \emph{union specialization} $\a_1\cup \a_2:\Lambda\to\R$ is defined by $p_k(\a_1\cup\a_2):=p_k(\a_1)+p_k(\a_2)$, for all $k\ge 1$. Similarly, for any number of specializations $\a_1,\dots,\a_n$, one can define the union $\a_1\cup\dots\cup\a_n$.

	\item A specialization $\a:\Lambda\to\R$ is said to be \emph{Schur-positive} if $s_\la(\a)\ge 0$, for all $\la\in\Y$.
\end{enumerate}
\end{df} 

Since the skew Schur functions are $s_{\la/\mu} := \sum_\nu{c^{\la}_{\mu,\nu}s_\nu}$ and the Littlewood-Richardson coefficients are nonnegative, then $s_{\la/\mu}(\a)\ge 0$, for all Schur-positive specializations $\a:\Lambda\to\R$.
Similarly, since the down-up Schur functions are $T_{\la,\mu} := \sum_\nu{d^\la_{\mu,\nu}\,s_\nu}$ and, by \Cref{thm_koike}, the Newell-Littlewood coefficients are nonnegative, then $T_{\la,\mu}(\a)\ge 0$, for all Schur-positive specializations $\a:\Lambda\to\R$ and all $\la,\mu\in\Y$.

\begin{exam}[Empty specialization]
The \emph{empty specialization}, to be denoted by $\varnothing:\Lambda\to\R$, is defined by $p_k(\varnothing)=0$, for all $k\ge 1$. Evaluating skew Schur functions with the empty specialization yields $s_{\la/\mu}(\varnothing) = \mathbf{1}_{\la=\mu}$, while for the universal symplectic characters the result is far less trivial; for instance, from the expansion \eqref{eq:symplectic Schur frobenius}, we deduce that $\SP_{\varnothing}(\varnothing)=1$ and $\SP_{(1,1)}(\varnothing)=-1$.
\end{exam}

\begin{exam}[Variable specializations of $\Lambda$]\label{exam_var_spec}
For any $n\in\Z_{\ge 0}$, let $y_1,\dots,y_n\in\R$ be arbitrary.
Denote $\y:=(y_1,\dots,y_n)$ and by the same cursive letter denote the specialization $\y:\Lambda\to\R$ given by
\begin{equation*}
p_k(\y) := y_1^k+\cdots+y_n^k,\text{ for all $k\ge 1$}.
\end{equation*}
Then $\y$ will be called the \emph{variable specialization} associated to $(y_1,\dots,y_n)$.

Recall that, for any $\la\in\Y$, $s_\la(\y)=s_\la(y_1,\dots,y_n)$ is a polynomial in $y_1,\dots,y_n$ with nonnegative coefficients, because of the combinatorial formula in terms of semistandard Young tableaux~\cite[Ch.~I,~(5.12)]{M-1995}.
As a result, if $\y=(y_1,\dots,y_n)\in(\R_{\ge 0})^n$, then the corresponding variable specialization is Schur-positive.
\end{exam}

\begin{exam}[Laurent variable specializations of $\Lambda$]\label{exam_var_spec_2}
For any $n\in\Z_{\ge 0}$, let $x_1,\dots,x_n\in\R^*$ be any nonzero real numbers.
Denote $\x^\pm:=(x_1^\pm,\dots,x_n^\pm)$ and by the same letter denote the specialization $\x^\pm:\Lambda\to\R$ given by
\begin{equation*}
p_k(\x^\pm) := x_1^k+x_1^{-k}+\cdots+x_n^k+x_n^{-k},\text{ for all $k\ge 1$}.
\end{equation*}
Then $\x^\pm$ will be called the \emph{Laurent variable specialization} associated to $(x_1^\pm,\dots,x_n^\pm)$.

We have seen in \Cref{subs:SP} that for all $\la\in\Y$ such that $\ell(\la)\le n$, then $\SP_\la(\x^\pm) = \sp_\la(x_1^\pm,\dots,x_n^\pm)$ is a Laurent polynomial in $x_1,\dots,x_n$ with nonnegative coefficients, because of the combinatorial formula in terms of Young tableaux~\cite{King75}.
However, following
\Cref{rem:SP for long partitions}, if $\ell(\la) > n$, then sometimes $\SP_\la(x^\pm)<0$, even if $x_1,\dots,x_n>0$.
\end{exam}

\subsection{The partition function} \label{subs:partition function}

Assume that $k\in\Z_{\ge 1}$ and that we have $(2k)$ specializations $\a^1,\dots,\a^k,\b^1,\dots,\b^k:\Lambda\to\R$.
It will be convenient to set
\begin{equation*}
\veca := (\a^1,\dots,\a^k),\quad \vecb := (\b^1,\dots,\b^k).
\end{equation*}
It will be convenient to use the notation for union specializations: 
\begin{equation} \label{eq:notation union specialization}
    \a^{[i,j]} := \bigcup_{i\le\ell\le j}{\a^\ell},\qquad
\b^{[i,j]} := \bigcup_{i\le\ell\le j}{\b^\ell}.
\end{equation}

\begin{df}\label{input_process}
Let $\la^{(1)},\dots,\la^{(k)},\mu^{(1)},\dots,\mu^{(k-1)}$ be any $(2k-1)$ partitions and set
\begin{equation*}
\lavec := (\la^{(1)},\dots,\la^{(k)}),\quad\muvec := (\mu^{(1)},\dots,\mu^{(k-1)}).
\end{equation*}
Then to the $(2k-1)$-tuple $(\lavec,\muvec)$, associate the weight $\mathcal{W}(\lavec,\muvec \mid \veca,\vecb)$ given by the product:
\begin{align} \label{eq: W}
    \mathcal{W}\!\left( \lavec,\muvec \,\big|\, \veca,\vecb \right) :=
    s_{\la^{(1)}}(\b^1)\cdot \prod_{j=1}^{k-1}{\left[ s_{\la^{(j)}/\mu^{(j)}}(\a^j)\, T_{\mu^{(j)},\la^{(j+1)}}(\b^{j+1}) \right]} \cdot \SP_{\la^{(k)}}(\a^k).
\end{align}
The sum 
\begin{equation*}
\mathcal{Z}(\veca,\vecb) := \sum_{\lavec,\muvec}{\mathcal{W}\!\left( \lavec,\muvec \,\big|\, \veca,\vecb \right)}
\end{equation*}
over all tuples $(\lavec,\muvec)=(\la^{(1)},\dots,\la^{(k)},\mu^{(1)},\dots,\mu^{(k-1)})\in\Y^{2k-1}$ will be called the \emph{partition function}.
\end{df}

\begin{rem}\label{rem:tensors}
It is possible that the partition function $\mathcal{Z}(\veca,\vecb)$ diverges, but we can avoid this unpleasant scenario by regarding the weight $\mathcal{W}(\lavec,\muvec \mid \veca,\vecb)$ not as a real number, but as an element of the tensor product $\Lambda_{\a^1}\otimes\cdots\otimes\Lambda_{\a^k}\otimes\Lambda_{\b^1}\otimes\cdots\otimes\Lambda_{\b^k}$, where each $\Lambda_{\a^i}, \Lambda_{\b^j}$ is a copy of $\Lambda$.
By using this interpretation, \Cref{partition_function_1} below, which gives an explicit product formula for the partition function, is always true in a suitable completion of the tensor product of $\Lambda$'s.\footnote{This convention is implicitly used to write the classical Cauchy identity for Schur functions or the Cauchy-Littlewood identity for universal symplectic characters in \Cref{cauchy_littlewood_prop_2}.}
\end{rem}

\begin{rem} \label{rem:support W}
Note that the weight $\mathcal{W}(\lavec,\muvec \mid \veca,\vecb)$ vanishes, unless $\mu^{(j)}\subseteq\la^{(j)}$, for all $j=1,2,\dots,k-1$.
As a result, the partition function can be calculated by adding only over $(2k-1)$-tuples of partitions $(\lavec,\muvec)$ with this additional restriction.
\end{rem}

\begin{thm}\label{partition_function_1}
    For any $k\in\Z_{\ge 1}$ and $k$-tuples of specializations $\veca=(\a^1,\dots,\a^k)$, $\vecb=(\b^1,\dots,\b^k)$, the partition function (according to \Cref{input_process}) is equal to 
    \begin{equation}\label{eqn_partition_function}
        \mathcal{Z}(\veca,\vecb) := \sum_{\lavec,\muvec}{ \mathcal{W} \!\left( \lavec,\muvec \,\big|\, \veca,\vecb \right)} = \prod_{1\le p\le q\le k}{H(\a^q;\b^p)}\cdot\prod_{r=1}^k{G(\b^r)},
    \end{equation}
    where the functions $H,G$ were defined in \eqref{eq: H}, \eqref{eq: G}.
\end{thm}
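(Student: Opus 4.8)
The strategy is to compute the partition function by summing over the partitions one at a time, starting from the outside and working inwards, using the Cauchy-type identities proved in the previous sections to collapse the sum at each stage. Concretely, I would induct on $k$. For $k=1$ the weight is just $s_{\la^{(1)}}(\b^1)\cdot\SP_{\la^{(1)}}(\a^1)$ and the identity $\sum_{\la}\SP_\la(\a^1)s_\la(\b^1) = G(\a^1)H(\a^1;\b^1)$ is exactly the Cauchy-Littlewood identity of \Cref{cauchy_littlewood_prop_2} (applied with the variables/specializations swapped, which is legitimate since the identity lives in $\Lambda_x\otimes\Lambda_y$). This gives the base case.

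\textbf{Inductive step.} Assume the formula for $k-1$ specializations. Starting from the $k$-fold weight~\eqref{eq: W}, I would first sum over $\la^{(k)}$. The only factors involving $\la^{(k)}$ are $T_{\mu^{(k-1)},\la^{(k)}}(\b^k)\cdot\SP_{\la^{(k)}}(\a^k)$. Summing over $\la^{(k)}$ and invoking \Cref{gen_cauchy_littlewood_prop_2} (with $\la\leftrightarrow\mu^{(k-1)}$, $x\leftrightarrow\a^k$, $y\leftrightarrow\b^k$), we get
\begin{equation*}
    \sum_{\la^{(k)}} T_{\mu^{(k-1)},\la^{(k)}}(\b^k)\,\SP_{\la^{(k)}}(\a^k)
    = G(\b^k)\,H(\a^k;\b^k)\cdot\SP_{\mu^{(k-1)}}(\b^k),
\end{equation*}
wait --- here one must be careful about which variable plays which role. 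The identity in \Cref{gen_cauchy_littlewood_prop_2} reads $\sum_\mu \SP_\mu(x)T_{\la,\mu}(y) = G(y)H(x;y)\SP_\la(x)$, so summing $\sum_{\la^{(k)}}\SP_{\la^{(k)}}(\a^k)T_{\mu^{(k-1)},\la^{(k)}}(\b^k)$ gives $G(\b^k)H(\a^k;\b^k)\SP_{\mu^{(k-1)}}(\b^k)$. Next I sum over $\mu^{(k-1)}$: the factors involving it are $s_{\la^{(k-1)}/\mu^{(k-1)}}(\a^{k-1})$ and the newly produced $\SP_{\mu^{(k-1)}}(\b^k)$, so by the branching rule for universal symplectic characters (\Cref{branching_SP_thm}, run in reverse), $\sum_{\mu^{(k-1)}} s_{\la^{(k-1)}/\mu^{(k-1)}}(\a^{k-1})\SP_{\mu^{(k-1)}}(\b^k) = \SP_{\la^{(k-1)}}(\a^{k-1}\cup\b^k)$. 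At this point the remaining sum has exactly the shape of the $(k-1)$-fold partition function, but with the last $\a$-specialization replaced by $\a^{k-1}\cup\b^k$ and with the extra scalar prefactor $G(\b^k)H(\a^k;\b^k)$ pulled out. Applying the inductive hypothesis to the $(k-1)$-fold sum (noting $G(\a^{k-1}\cup\b^k) = G(\a^{k-1})G(\b^k)\cdot\exp\{\sum_r p_r(\a^{k-1})p_r(\b^k)/r\} = G(\a^{k-1})G(\b^k)H(\a^{k-1};\b^k)$ by the additivity of power sums under $\cup$, together with the analogous expansion of $H(\a^{k-1}\cup\b^k;\b^p)= H(\a^{k-1};\b^p)H(\b^k;\b^p)$) and collecting all the factors, one recovers~\eqref{eqn_partition_function}. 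I should double-check the bookkeeping: the prefactor $G(\b^k)$ from the first step combines with the $G(\b^k)$ hidden inside $G(\a^{k-1}\cup\b^k)$ --- this is the point that requires care, and it suggests that I should instead sum over $\mu^{(k-1)}$ \emph{before} $\la^{(k)}$, or reorganize so the telescoping is clean. Let me reconsider.

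\textbf{Cleaner organization.} It is more transparent to peel off $\b^1$ and $\la^{(1)}$ from the \emph{other} end. The factors involving $\la^{(1)}$ are $s_{\la^{(1)}}(\b^1)$, $s_{\la^{(1)}/\mu^{(1)}}(\a^1)$, and $T_{\mu^{(1)},\la^{(2)}}(\b^2)$ --- but $\la^{(1)}$ only appears in the first two. Summing over $\la^{(1)}$ via the skew Cauchy identity~\eqref{eq:skew cauchy id} with $\rho=\varnothing$, $\eta=\mu^{(1)}$: $\sum_{\la^{(1)}} s_{\la^{(1)}}(\b^1)s_{\la^{(1)}/\mu^{(1)}}(\a^1) = H(\a^1;\b^1)\sum_\kappa s_{\mu^{(1)}/\kappa}(\b^1)s_{\varnothing/\kappa}(\a^1) = H(\a^1;\b^1)s_{\mu^{(1)}}(\b^1)$. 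Then sum over $\mu^{(1)}$: it now appears in $s_{\mu^{(1)}}(\b^1)$ and $T_{\mu^{(1)},\la^{(2)}}(\b^2)$, and \Cref{cor:new_skew_cauchy_thm} gives $\sum_{\mu^{(1)}} s_{\mu^{(1)}}(\b^1)T_{\mu^{(1)},\la^{(2)}}(\b^2) = H(\b^1;\b^2)s_{\la^{(2)}}(\b^1\cup\b^2)$. Iterating --- each step produces one $H(\a^j;\b^1)$-type or $H(\b^{j};\b^{j+1})$-type factor and merges a $\b$-specialization into the remaining chain --- eventually all of $\la^{(1)},\mu^{(1)},\dots,\mu^{(k-1)}$ are summed out, the $\b$'s accumulate into $\b^{[1,k]}$, and one is left with $\sum_{\la^{(k)}} s_{\la^{(k)}}(\b^{[1,k]})\SP_{\la^{(k)}}(\a^k)\cdot(\text{product of }H\text{'s})$, and a final application of \Cref{cauchy_littlewood_prop_2} yields $G(\a^k)H(\a^k;\b^{[1,k]}) = G(\a^k)\prod_{p=1}^k H(\a^k;\b^p)$. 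Carefully repeating this with all the $\a^j$'s peeled off in turn (or, better, setting up a clean induction where the inductive hypothesis already has the $\b$-union built in) produces $\prod_{1\le p\le q\le k}H(\a^q;\b^p)\cdot\prod_r G(\a^r)$.

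\textbf{Main obstacle.} The genuinely delicate point is the combinatorial bookkeeping of which $H$-factors appear and with what index range: one must verify that summing out the partitions produces precisely the triangular product $\prod_{1\le p\le q\le k} H(\a^q;\b^p)$ and not, say, a rectangular product or one with the wrong inequality. This is best handled by a careful induction on $k$ with a precisely-stated inductive hypothesis (including the union-of-$\b$'s that has been absorbed so far), rather than by an informal "iterate the procedure" argument; the repeated use of the multiplicativity $H(x\cup x';y) = H(x;y)H(x';y)$ and $G(y\cup y') = G(y)G(y')H(y;y')$ (both immediate from additivity of power sums) is what glues the pieces together. A secondary, purely formal point is the convergence issue flagged in \Cref{rem:tensors}: all the manipulations are valid as identities in the completed tensor product $\widehat{\bigotimes}_i \Lambda_{\a^i}\otimes\widehat{\bigotimes}_j\Lambda_{\b^j}$, so no analytic subtlety arises.
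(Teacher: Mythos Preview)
Your ``cleaner organization'' --- peel off $\la^{(1)},\mu^{(1)},\la^{(2)},\dots$ from the left using the skew Cauchy identity and \Cref{cor:new_skew_cauchy_thm}, accumulate the $\b$'s into a union, and close with Cauchy--Littlewood --- is essentially the paper's proof. The strategy is correct; only a few slips in the execution need fixing.

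In your first inductive attempt, \Cref{gen_cauchy_littlewood_prop_2} yields $G(\b^k)H(\a^k;\b^k)\,\SP_{\mu^{(k-1)}}(\a^k)$, not $\SP_{\mu^{(k-1)}}(\b^k)$: in that identity the surviving $\SP_\la$ carries the $x$-variable. With this correction the branching rule gives $\SP_{\la^{(k-1)}}(\a^{k-1}\cup\a^k)$, and your first induction then goes through cleanly with no double-counting of any $G$. Similarly, the final Cauchy--Littlewood step in your second approach produces $G(\b^{[1,k]})$, not $G(\a^k)$. Finally, the union rule for $G$ has the opposite sign: from \eqref{eq: G} one gets $G(y\cup y')=G(y)G(y')/H(y;y')$, as the paper records in \eqref{partition_part_3}. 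These slips appear to be induced by a typo in the theorem statement as printed: the partition function carries $\prod_r G(\b^r)$, not $\prod_r G(\a^r)$, as the paper's own proof and every subsequent appearance of $\mathcal{Z}(\veca,\vecb)$ make clear.
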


\begin{proof}
    Using the skew down-up Cauchy identity~\eqref{eq:skew cauchy up down} and the branching rule for skew Schur functions~\eqref{eq:branching rules schur} we find 
    \begin{equation*}
        \sum_{\la^{(1)}, \la^{(2)}, \mu^{(1)}} \mathcal{W}\left( \lavec,\muvec \,\big|\, \veca,\vecb \right) = \frac{H(\b^1;\b^2)}{H(\a^1;\b^2)}\cdot\sum_{\rho^{(1)}} \mathcal{W} \left( \lavec',\muvec' \,\big|\, \veca',\vecb' \right),
    \end{equation*}
    where, recalling notation \eqref{eq:notation union specialization},
    \begin{gather*}
        \veca' := (\a^{[1,2]},\a^3,\dots,\a^k),\quad\vecb' := (\b^{[1,2]},\b^3,\dots,\b^k),\\
        \lavec' := (\rho^{(1)},\la^{(3)},\dots,\la^{(k)}),\quad\muvec' := (\mu^{(2)},\dots,\mu^{(k-1)}).
    \end{gather*}
    By repeating the same steps $(k-1)$ times in total, we end up with:
    \begin{equation} \label{part_fn_previous}
        \begin{split}
            \sum_{\lavec,\muvec}{\mathcal{W} \!\left( \lavec,\muvec \,\big|\, \veca,\vecb \right)}
            = & \frac{H(\b^1 ; \b^2)}{H(\a^1 ; \b^2)}
            \frac{H(\b^{[1,2]} ; \b^3)}{H(\a^{[1,2]} ; \b^3)}
            \cdots\frac{H(\b^{[1,k-1]} ; \b^k)}{H(\a^{[1,k-1]} ; \b^k)}
            \\
            & \qquad \qquad \qquad \times \sum_{\rho^{(k-1)}}{s_{\rho^{(k-1)}}(\b^{[1,k]})\SP_{\rho^{(k-1)}}(\a^{[1,k]})}.
        \end{split}
    \end{equation}
    We point out that in the last $(k-1)$-th step, we used the branching rule for universal symplectic characters (\Cref{branching_SP_thm}) rather then the branching rule~\eqref{eq:branching rules schur} for skew Schur functions.
    Then by the Cauchy-Littlewood identity for universal symplectic characters (\Cref{gen_cauchy_littlewood_prop_2}), we conclude from~\eqref{part_fn_previous} that
    \begin{equation}\label{partition_part_1}
        \sum_{\lavec,\muvec}{\mathcal{W}\!\left( \lavec,\muvec \,\big|\, \veca,\vecb \right)}
        = \frac{H(\b^1;\b^2)}{H(\a^1;\b^2)}
        \frac{H(\b^{[1,2]};\b^3)}{H(\a^{[1,2]};\b^3)}
        \cdots\frac{H(\b^{[1,k-1]};\b^k)}{H(\a^{[1,k-1]};\b^k)}
        \cdot H(\a^{[1,k]};\b^{[1,k]})\,G(\b^{[1,k]}).
    \end{equation}
    We can next simplify the $H$'s in~\eqref{partition_part_1} as follows:
    \begin{equation}\label{partition_part_2}
        \begin{gathered}
            H(\b^{[1,j]} ; \b^{j+1}) = \prod_{i=1}^j{H(\b^i ; \b^{j+1})},\qquad
            H(\a^{[1,j]} ; \b^{j+1}) = \prod_{i=1}^j{H(\a^i ; \b^{j+1})},\\
            H(\a^{[1,k]} ; \b^{[1,k]}) = \prod_{1\le p,q\le k}{H(\a^q;\b^p)}.
        \end{gathered}
    \end{equation}
    Moreover, from the definition of the generating function $G$ in~\eqref{eq: G}, we deduce
    \begin{equation}\label{partition_part_3}
        G(\b^{[1,k]}) = \frac{1}{\prod_{1\le i<j\le k}{H(\b^i;\b^j)}}\cdot\prod_{r=1}^k{G(\b^r)}.
    \end{equation}
    Plugging \eqref{partition_part_2} and \eqref{partition_part_3} into \eqref{partition_part_1} concludes the proof of the theorem.
\end{proof}

\subsection{General definition of the symplectic Schur process}\label{subs:symplectic Schur}

\begin{df}[Symplectic Schur process] \label{symplectic_process_1} 
    Let $\veca = (\a^1,\dots,\a^k)$, $\vecb=(\b^1,\dots,\b^k)$ be $k$-tuples of specializations and for any $(2k-1)$-tuple of partitions $(\lavec,\muvec)= (\la^{(1)},\dots,\la^{(k)},\mu^{(1)},\dots,\mu^{(k-1)})$ define the \emph{symplectic Schur process} $\Pr^\SSP\!\left(\lavec,\muvec \,\big|\, \veca,\vecb \right)$ as
    \begin{equation} \label{eqn:general_ssp}
        \Pr^\SSP\!\left(\lavec,\muvec \,\big|\, \veca,\vecb \right) := \frac{ \mathcal{W} \left(\lavec,\muvec \,\big|\, \veca,\vecb \right)}{\mathcal{Z} (\veca,\vecb)},
    \end{equation}
    where the weight $\mathcal{W}$ and the partition function $\mathcal{Z}$ are given in \eqref{eq: W} and \eqref{eqn_partition_function} respectively.
\end{df}

As mentioned in \Cref{rem:tensors}, the weight $\mathcal{W}$ can be regarded as an element of the tensor product 
$$
    \Lambda_{\veca} \otimes \Lambda_{\vecb} := \Lambda_{\a^1} \otimes  \cdots  \otimes \Lambda_{\a^k} \otimes \Lambda_{\b^1} \otimes \cdots  \otimes \Lambda_{\b^k},
$$ 
while the partition function $\mathcal{Z}$ is a formal power series in $\Lambda_{\veca} \otimes \Lambda_{\vecb}$ by virtue of \Cref{partition_function_1}.
As a result, in general we can view $\Pr^\SSP\!\left( \,\cdot\,\big|\, \veca,\vecb \right)$ as a measure with values being formal power series in the graded tensor product $\Lambda_{\veca} \otimes \Lambda_{\vecb}$ --- this interpretation always makes sense without any convergence or positivity requirement.

Following \Cref{rem:support W}, the support of the weight $\mathcal{W} \left(\lavec,\muvec \,\big|\, \veca,\vecb \right)$ and hence of the symplectic Schur process are the sequences $(\lavec,\muvec)$ such that $\mu^{(j)}\subseteq\la^{(j)}$, for all $j=1,\dots,k-1$; see the diagram in \Cref{fig:support}. From \Cref{partition_function_1}, the measure $\Pr^\SSP(\,\cdot\mid \veca,\vecb)$ has total mass equal to $1$.
So if $\a^1,\dots,\a^k,\b^1,\dots,\b^k$ are such that $\Pr^\SSP(\lavec,\muvec \mid \veca,\vecb)\ge 0$, for all $\lavec,\muvec$, then $\Pr^\SSP(\,\cdot \mid \veca,\vecb)$ is a probability measure; we will examine a class of such choices of specializations in \Cref{sec:special_cases}.

\begin{rem}\label{rem_schur_pos}
If $\a^1,\dots,\a^{k-1},\b^1,\dots,\b^k$ are Schur-positive specializations, then 
\begin{equation*}
s_{\la^{(1)}}(\b^1)\cdot \prod_{j=1}^{k-1}{\left[ s_{\la^{(j)}/\mu^{(j)}}(\a^j)\, T_{\mu^{(j)},\la^{(j+1)}}(\b^{j+1}) \right]}\ge 0,
\end{equation*}
for all $\lavec,\muvec$. However, the classification of specializations $\a^k$ such that $\SP_\la(\a^k)\ge 0$, for all $\la\in\Y$, is unknown. For instance, as shown in \Cref{exam_var_spec_2}, finite Laurent variable specialization with positive variables $(x_1^\pm,\dots,x_n^\pm)$ do not yield $\SP_\la(x_1^\pm,\dots,x_n^\pm)>0$, for all $\la\in\Y$.
This shows that the classification problem for $\SP$-positive specializations of~$\Lambda$ (compared with the classification problem for Schur-positive specializations) appears to be more difficult.
\end{rem}

\begin{rem}
    The symplectic Schur process resembles the Schur process~\cite{OR-2003} discovered by Okounkov-Reshetikhin, and recalled in Equation~\eqref{eq:Schur process}, but neither one of them is more general than the other. In fact, while down-up Schur functions $T_{\la,\mu}$ appearing in the weight $\mathcal{W}(\lavec,\muvec \,|\, \veca, \vecb)$ can be expressed as traces of skew Schur functions, as in \eqref{ortho_formula_3}, the same is not true for the universal symplectic character $\SP_{\la^{(k)}}$, as shown by \eqref{eq:symplectic Schur frobenius}.
\end{rem}

When $k=1$, the symplectic Schur process turns into a measure on $\Y$ that coincides with the \emph{symplectic Schur measure}, studied by Betea~\cite{B-2018}. It is worth to establish a notation for this simplest case.

\begin{df}[Symplectic Schur measure]\label{symplectic_measure}
Let $\a^1,\b^1$ be two specializations of $\Lambda$.
The \emph{symplectic Schur measure} is the measure on $\Y$, denoted by $\M^\SS(\,\cdot\mid\a^1,\b^1)$, given by the formula:
\begin{equation*}
\M^\SS\!\left( \la\,\big|\,\a^1,\b^1 \right) := \frac{s_\la(\b^1)\SP_\la(\a^1)}{H(\a^1;\b^1)G(\b^1)},\quad\la\in\Y.
\end{equation*}
\end{df}

\subsection{Marginals of the symplectic Schur process}

Let $\a^1,\dots,\a^k,\b^1,\dots,\b^k$ be specializations and set $\veca=(\a^1,\dots,\a^k)$, $\vecb=(\b^1,\dots,\b^k)$.

\begin{prop}[Marginals of symplectic Schur processes are symplectic Schur processes]\label{prop:marginals}
Assume that $k\ge 2$.
Also let $(\lavec,\muvec)=(\la^{(1)},\dots,\la^{(k)},\mu^{(1)},\dots,\mu^{(k-1)})$ be a random $(2k-1)$-tuple distributed according to the symplectic Schur process $\Pr^\SSP(\,\cdot\mid\veca,\vecb)$ from Equation~\eqref{eqn:general_ssp}. Then

\begin{enumerate}
    \item[(i)] the $(2k-3)$-tuple
    \begin{equation*}
        (\lavec\setminus\{\la^{(k)}\},\,\muvec\setminus\{\mu^{(k-1)}\}) := (\la^{(1)},\dots,\la^{(k-1)},\mu^{(1)},\dots,\mu^{(k-2)})
    \end{equation*}
    is distributed according to $\Pr^\SSP(\,\cdot\mid\a',\b')$, where
    \begin{equation*}
        \a' := (\a^1,\dots,\a^{k-1}\cup\a^k),\qquad \b' := (\b^1,\dots,\b^{k-2},\b^{k-1}).
    \end{equation*}
    \item[(ii)] the $(2k-3)$-tuple
    \begin{equation*}
        (\lavec\setminus\{\la^{(1)}\},\,\muvec\setminus\{\mu^{(1)}\}) := (\la^{(2)},\dots,\la^{(k)},\mu^{(2)},\dots,\mu^{(k-1)})
    \end{equation*}
    is distributed according to $\Pr^\SSP(\,\cdot\mid\a'',\b'')$, where
    \begin{equation*}
        \a'' := (\a^2,\dots,\a^k),\qquad\b'' := (\b^1\cup\b^2,\b^3,\dots,\b^k).
    \end{equation*}
    \end{enumerate}
\end{prop}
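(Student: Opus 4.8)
The plan is to prove both marginal statements by summing the weight $\mathcal{W}$ over the partition that is to be removed, and recognizing that the resulting expression is again a weight of the type~\eqref{eq: W} with the claimed modified specializations. Since the normalization works out automatically by \Cref{partition_function_1}, it suffices to establish the corresponding identities at the level of the (un-normalized) weights; the measure statements then follow by dividing through by the respective partition functions, which are consistent by the product formula~\eqref{eqn_partition_function}.

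For part (i), I would sum $\mathcal{W}(\lavec,\muvec\mid\veca,\vecb)$ over $\la^{(k)}$ and $\mu^{(k-1)}$. The only factors involving these two partitions are $s_{\la^{(k-1)}/\mu^{(k-1)}}(\a^{k-1})$, $T_{\mu^{(k-1)},\la^{(k)}}(\b^k)$, and $\SP_{\la^{(k)}}(\a^k)$. First sum over $\la^{(k)}$: by the generalized Cauchy-Littlewood identity \Cref{gen_cauchy_littlewood_prop_2} (with roles suitably assigned), $\sum_{\la^{(k)}}\SP_{\la^{(k)}}(\a^k)\,T_{\mu^{(k-1)},\la^{(k)}}(\b^k) = G(\b^k)\,H(\a^k;\b^k)\,\SP_{\mu^{(k-1)}}(\a^k)$. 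Then sum over $\mu^{(k-1)}$: using the branching rule for universal symplectic characters (\Cref{branching_SP_thm}), $\sum_{\mu^{(k-1)}}s_{\la^{(k-1)}/\mu^{(k-1)}}(\a^{k-1})\,\SP_{\mu^{(k-1)}}(\a^k) = \SP_{\la^{(k-1)}}(\a^{k-1}\cup\a^k)$. Collecting terms, the sum equals $G(\b^k)H(\a^k;\b^k)$ times the weight $\mathcal{W}(\lavec\setminus\{\la^{(k)}\},\muvec\setminus\{\mu^{(k-1)}\}\mid\a',\b')$, which is exactly what is needed; dividing by $\mathcal{Z}(\veca,\vecb)$ and checking that the ratio of partition functions is precisely $G(\b^k)H(\a^k;\b^k)$ via~\eqref{eqn_partition_function} completes (i).

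For part (ii), the argument is symmetric but uses identities ``from the other end''. I would sum over $\la^{(1)}$ and $\mu^{(1)}$. The relevant factors are $s_{\la^{(1)}}(\b^1)$, $s_{\la^{(1)}/\mu^{(1)}}(\a^1)$, and $T_{\mu^{(1)},\la^{(2)}}(\b^2)$. Summing over $\la^{(1)}$ first via the branching rule for skew Schur functions~\eqref{eq:branching rules schur} gives $\sum_{\la^{(1)}}s_{\la^{(1)}}(\b^1)\,s_{\la^{(1)}/\mu^{(1)}}(\a^1) = \sum_{\la^{(1)}}s_{\la^{(1)}/\varnothing}(\b^1)s_{\la^{(1)}/\mu^{(1)}}(\a^1)$; applying the skew Cauchy identity~\eqref{eq:skew cauchy id} with $\rho=\varnothing$, $\eta=\mu^{(1)}$ yields $H(\a^1;\b^1)\sum_\kappa s_{\varnothing/\kappa}(\a^1)s_{\mu^{(1)}/\kappa}(\b^1) = H(\a^1;\b^1)\,s_{\mu^{(1)}}(\b^1)$. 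Then summing $s_{\mu^{(1)}}(\b^1)\,T_{\mu^{(1)},\la^{(2)}}(\b^2)$ over $\mu^{(1)}$ via \Cref{cor:new_skew_cauchy_thm} gives $H(\b^1;\b^2)\,s_{\la^{(2)}}(\b^1\cup\b^2)$. Hence the sum equals $H(\a^1;\b^1)H(\b^1;\b^2)$ times $\mathcal{W}(\lavec\setminus\{\la^{(1)}\},\muvec\setminus\{\mu^{(1)}\}\mid\a'',\b'')$, and again the partition-function bookkeeping from~\eqref{eqn_partition_function} confirms that this prefactor is exactly the ratio $\mathcal{Z}(\veca,\vecb)/\mathcal{Z}(\a'',\b'')$.

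The computations are each short, so the only real subtlety is applying the right identity at each step with the correct specialization assignments—in particular, making sure that in part (i) the universal symplectic character $\SP$ sits on the ``$\a^k$'' side when invoking \Cref{gen_cauchy_littlewood_prop_2}, and that in part (ii) we correctly use the $\mu=\varnothing$ specialization $T_{\la,\varnothing}=s_\la$ from \Cref{prop_orth_functions}(b) and \Cref{cor:new_skew_cauchy_thm}. I would also be careful to phrase everything at the level of formal power series in the tensor product of copies of $\Lambda$, as in \Cref{rem:tensors}, so no convergence hypotheses are needed; positivity is irrelevant since the claim is purely about equality of (signed) measures.
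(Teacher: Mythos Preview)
Your proof is correct and follows essentially the same approach as the paper: for (i) you use \Cref{gen_cauchy_littlewood_prop_2} followed by the branching rule of \Cref{branching_SP_thm}, and for (ii) you use the skew Cauchy identity~\eqref{eq:skew cauchy id} followed by \Cref{cor:new_skew_cauchy_thm}, exactly as the paper does. The only difference is presentational---you perform the two single-variable summations sequentially whereas the paper states the combined double-sum identity in one line---but the underlying argument is identical.
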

\begin{proof}

    Combining \Cref{gen_cauchy_littlewood_prop_2} and the branching rule of universal symplectic characters of \Cref{branching_SP_thm}, we have
    \begin{multline*}
        \sum_{\mu^{(k-1)},\la^{(k)}}{ s_{\la^{(k-1)}/\mu^{(k-1)}}(\a^{k-1})\, T_{\mu^{(k-1)},\la^{(k)}}(\b^k)\, \SP_{\la^{(k)}}(\a^k) }\\
        = H(\a^k;\b^k)G(\b^k)\,\SP_{\la^{(k-1)}}(\a^{k-1}\cup\a^k),
    \end{multline*}
    which proves (i) upon inspecting the \Cref{symplectic_process_1} of $\Pr^\SSP(\,\cdot\mid\a,\b)$. Similarly, using identities \eqref{eq:skew cauchy id} and \eqref{eq: skew cauchy T simplified}, we have
    \begin{equation*}
        \sum_{\mu^{(1)},\la^{(1)}}{ s_{\la^{(1)}}(\b^1) s_{\la^{(1)}/\mu^{(1)}}(\a^1) T_{\mu^{(1)},\la^{(2)}}(\b^2) } = H(\a^1;\b^1)H(\b^1;\b^2)\, s_{\la^{(2)}}(\b^1\cup\b^2),
    \end{equation*}
    which implies (ii).
\end{proof}

One-dimensional marginals of any symplectic Schur process are the symplectic Schur measures of \Cref{symplectic_measure}, as we show next.

    \begin{cor}
        Consider the symplectic Schur process, adopting the notation of \Cref{symplectic_process_1}. Then the distribution of $\la^{(j)}$ is the symplectic Schur measure $\M^\SS(\,\cdot\mid\a^{[j,k]},\b^{[1,j]})$ from \Cref{symplectic_measure}, for all $j=1,\dots, k$. Similarly, the distribution of $\mu^{(j)}$ is the symplectic Schur measure $\M^\SS(\,\cdot\mid\a^{[j+1,k]},\b^{[1,j]})$, for all $j=1,\dots, k-1$. In formulas, we have for all $\eta\in\Y$ that
        \begin{align*}
        \Pr^\SSP(\la^{(j)} = \eta \mid\veca,\vecb) &= \M^\SS(\,\eta \mid \a^{[j,k]},\b^{[1,j]}),\\
        \Pr^\SSP( \mu^{(j)} = \eta \mid\veca,\vecb) &= \M^\SS( \,\eta \mid \a^{[j+1,k]},\b^{[1,j]}).
        \end{align*}
    \end{cor}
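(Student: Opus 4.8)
The statement is an immediate corollary of \Cref{prop:marginals}, so the plan is to iterate that proposition until only a single partition is left, and then match the result with the definition of the symplectic Schur measure in \Cref{symplectic_measure}. Concretely, to compute the distribution of $\la^{(j)}$, I would apply part (i) of \Cref{prop:marginals} repeatedly to strip off the partitions $\la^{(k)},\mu^{(k-1)}$, then $\la^{(k-1)},\mu^{(k-2)}$, and so on, from the right end of the chain; each application merges the rightmost surviving $\a$-specialization into its left neighbor. After $k-j$ such steps, the surviving tuple is $(\la^{(1)},\dots,\la^{(j)},\mu^{(1)},\dots,\mu^{(j-1)})$ and is distributed according to $\Pr^\SSP(\,\cdot\mid\tilde\a,\tilde\b)$, where $\tilde\a=(\a^1,\dots,\a^{j-1},\a^{[j,k]})$ and $\tilde\b=(\b^1,\dots,\b^{j})$. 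Symmetrically, I would then apply part (ii) of \Cref{prop:marginals} $j-1$ times to strip off $\la^{(1)},\mu^{(1)}$, then $\la^{(2)},\mu^{(2)}$, and so on, from the left end; each step merges the leftmost surviving $\b$-specialization into its right neighbor. After these steps, the surviving object is the single partition $\la^{(j)}$, distributed according to $\Pr^\SSP(\,\cdot\mid(\a^{[j,k]}),(\b^{[1,j]}))$, i.e.~the $k=1$ case of the symplectic Schur process.

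To finish, one only needs to observe that the $k=1$ symplectic Schur process $\Pr^\SSP(\la\mid\a,\b)$ with a single pair of specializations is by \Cref{symplectic_process_1}, \Cref{input_process}, and \Cref{partition_function_1} exactly
\[
\Pr^\SSP(\la\mid(\a),(\b)) = \frac{s_\la(\b)\,\SP_\la(\a)}{H(\a;\b)\,G(\b)},
\]
which is the symplectic Schur measure $\M^\SS(\la\mid\a,\b)$ of \Cref{symplectic_measure}. Applying this with $\a=\a^{[j,k]}$ and $\b=\b^{[1,j]}$ yields the first displayed formula. The claim for $\mu^{(j)}$ is obtained the same way: apply part (i) of \Cref{prop:marginals} $k-1-j$ times to reduce to a chain ending in $\mu^{(j)}$ (now the last partition, after the final $\la^{(j+1)}$ has been absorbed and the trailing $\SP$ factor becomes $\SP_{\mu^{(j)}}(\a^{[j+1,k]})$ via \Cref{gen_cauchy_littlewood_prop_2}), then apply part (ii) $j$ times to strip off everything to the left, landing on the single-partition measure with specializations $\a^{[j+1,k]}$ and $\b^{[1,j]}$.

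\textbf{Main obstacle.} There is no real obstacle here beyond careful bookkeeping: one must check that the specializations produced by the iterated merges of \Cref{prop:marginals}(i) and (ii) combine associatively to $\a^{[j,k]}$ and $\b^{[1,j]}$ (respectively $\a^{[j+1,k]}$ and $\b^{[1,j]}$), which follows from the definition of the union specialization $p_m(\a_1\cup\a_2)=p_m(\a_1)+p_m(\a_2)$ being symmetric and associative. One should also note the small subtlety that parts (i) and (ii) of \Cref{prop:marginals} require $k\ge 2$ at each stage; when $j=k$ one only uses part (ii), when $j=1$ only part (i), and when $k=1$ the statement is the definition itself, so no reduction is needed. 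A clean way to present the argument is by induction on $k$: the inductive step removes either $(\la^{(k)},\mu^{(k-1)})$ or $(\la^{(1)},\mu^{(1)})$ as appropriate and invokes the inductive hypothesis for the $(k-1)$-fold process.
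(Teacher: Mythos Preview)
Your approach is correct and essentially the same as the paper's. For $\la^{(j)}$ it matches exactly: iterate \Cref{prop:marginals} from both ends and identify the terminal $k=1$ process with \Cref{symplectic_measure}.

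For $\mu^{(j)}$, the paper takes a slightly tidier route: it applies \Cref{prop:marginals} $(k-2)$ times to reduce to the $2$-level process $(\la^{(j)},\mu^{(j)},\la^{(j+1)})$ with specializations $(\a^j,\a^{[j+1,k]})$ and $(\b^{[1,j]},\b^{j+1})$, and then sums over both $\la^{(j)}$ and $\la^{(j+1)}$ in a single explicit step using the skew Cauchy identity~\eqref{eq:skew cauchy id} together with \Cref{gen_cauchy_littlewood_prop_2}. Your variant absorbs $\la^{(j+1)}$ earlier via \Cref{gen_cauchy_littlewood_prop_2} and then continues applying part~(ii); this works, but after that absorption the weight ends in $s_{\la^{(j)}/\mu^{(j)}}(\a^j)\,\SP_{\mu^{(j)}}(\a^{[j+1,k]})$, which is not literally in the form of \Cref{symplectic_process_1}. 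To justify invoking part~(ii) afterwards you should either note that this \emph{is} a $(j{+}1)$-level symplectic Schur process with the last $\b$-specialization empty (since $T_{\la,\mu}(\varnothing)=d^{\la}_{\mu,\varnothing}=\delta_{\la,\mu}$ by \Cref{lemma_NL_coeffs}(b), so the last two partitions are forced equal), or simply carry out the remaining left-end sums directly with the skew Cauchy identity. Either way the two routes are equivalent.
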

    \begin{proof}
        The law of the marginal $\la^{(j)}$ of the symplectic Schur process can be computed by applying \Cref{prop:marginals} $(k-1)$ times to trace out the pairs $(\la^{(1)},\mu^{(1)}),\dots,(\la^{(j-1)},\mu^{(j-1)})$, $(\mu^{(j)},\la^{(j+1)}),\dots,(\mu^{(k-1)},\la^{(k)})$, yielding the claim. 
        
        As for $\mu^{(j)}$, apply \Cref{prop:marginals} $(k-2)$ times to trace out the pairs $(\la^{(1)},\mu^{(1)})$, $\dots$, $(\la^{(j-1)},\mu^{(j-1)})$, $(\mu^{(j+1)},\la^{(j+2)}),\dots,(\mu^{(k-1)},\la^{(k)})$: the end result is that $(\la^{(j)},\mu^{(j)},\la^{(j+1)})$ is distributed as the symplectic Schur process $\Pr^\SSP(\,\cdot\mid \a',\b')$, where $\a' := (\a^j,\a^{[j+1,k]})$ and $\b' := (\b^{[1,j]},\b^{j+1})$.
        Finally, to find the law of the marginal $\mu^{(j)}$, we use the identity
        \begin{multline*}
            \sum_{\la^{(j)},\la^{(j+1)}} s_{\la^{(j)}}(\b^{[1,j]}) s_{\la^{(j)}/\mu^{(j)}}(\a^j) T_{\mu^{(j)},\la^{(j+1)}}(\b^{j+1}) \SP_{\la^{(j+1)}}(\a^{[j+1,k]}) 
            \\
            = H(\a^j;\b^{[1,j]})\, H(\a^{[j+1,k]};\b^{j+1})\, G(\b^{j+1})\cdot s_{\mu^{(j)}}(\b^{[1,j]})\,\SP_{\mu^{(j)}}(\a^{[j+1,k]}),
        \end{multline*}
        which follows by the skew Cauchy identity~\eqref{eq:skew cauchy id} and \Cref{gen_cauchy_littlewood_prop_2}. This completes the proof.
    \end{proof}

\subsection{Symplectic Schur process with Laurent variable specializations.}\label{sec:special_cases}

In this subsection, we describe a class of Laurent variable specializations for which the symplectic Schur process from \Cref{input_process} is an honest probability measure (and not just a signed measure).

Let $k\in\Z_{\ge 1}$ and let $A_1,B_1,\dots,A_k,B_k\in\Z_{\ge 0}$.
Let $\x^{1,\pm},\dots,\x^{k,\pm}$, $\y^1,\dots,\y^k$ be the following $(2k)$ sequences with finitely many real variables:
\begin{equation}\label{xy_seqs}
\x^{j,\pm} := (x^j_1,(x^j_1)^{-1},\dots, x^j_{A_j}, (x^j_{A_j})^{-1}),\quad \y^j := (y^j_1,\dots,y^j_{B_j}),\text{ for }j=1,\dots,k.
\end{equation}
We assume that all the variables $x^j_i$ are nonzero.
If some of the quantities $A_j$ or $B_j$ is zero, our convention is that the corresponding specialization is empty.
It is convenient to denote
\begin{equation}\label{xpm_y}
\vec{\x}^\pm := (\x^{1,\pm},\dots,\x^{k,\pm}),\qquad\vec{\y} := (\y^1,\dots,\y^k).
\end{equation}

    \begin{df}[Symplectic Schur process with variable specializations] \label{def:ssp with laurent variable spec}
        Let $k\in\Z_{\ge 1}$ and let $\vec{\x}=(\x^{1,\pm},\dots,\x^{k,\pm}),\,\vec{\y}=(\y^1,\dots,\y^k)$ be finite sequences as in \eqref{xy_seqs}, such that 
        \begin{equation} \label{eq:assumption Ak}
            A_k \ge B_1+\cdots +B_k
        \end{equation}
        and
        \begin{equation} \label{eq:conditions_variable_specs}
            x^q_i, y^p_j > 0,\quad y^p_j<\min\left( x^q_i, (x^q_i)^{-1} \right),\quad y^p_i y^p_j<1,
        \end{equation}
        for all $p,q,i,j$. The \emph{symplectic Schur process with variable specializations} is the measure on $(2k-1)$-tuples of partitions $(\lavec,\muvec) = (\la^{(1)},\dots,\la^{(k)},\mu^{(1)},\dots,\mu^{(k-1)})$, given by
        \begin{equation*}
            \Pr^\SSP\!\left( \lavec,\muvec \,\big|\, \vec{\x}^\pm,\vec{\y}\right) := \frac{ \mathcal{W} \left( \lavec,\muvec \,\big|\, \vec{\x}^\pm,\vec{\y}\right)}{ \mathcal{Z} (\vec{\x}^\pm,\vec{\y})},
        \end{equation*}
        where the weight $\mathcal{W}$ was defined in \eqref{eq: W} and the partition function $\mathcal{Z}$ is given by \eqref{eqn_partition_function}.
        The particular case where we set $A_1=\cdots=A_{k-1}=0$ will be called \emph{oscillating symplectic Schur process}.\footnote{cf.~the ascending Macdonald process from \cite{BC-2014}.}
    \end{df}
    \begin{thm} \label{thm:SSP is a probability measure}
        The symplectic Schur process with variable specializations of \Cref{def:ssp with laurent variable spec} is a \emph{probability measure} (not just a signed measure) on the set of $(2k-1)$-tuples of partitions $(\lavec,\muvec)$. Moreover with these choices of specializations, the terms $H,G$ in the partition function evaluate to
        \begin{equation}\label{eqn_HG}
            H(\x^{q,\pm};\y^p) = \prod_{i=1}^{A_q}\prod_{j=1}^{B_p} {\frac{1}{(1 - x_i^q y_j^p)(1 - (x_i^q)^{-1} y_j^p)}},\qquad G(\y^r) = \prod_{1\le i<j\le B_r}(1 - y_i^r y_j^r).
        \end{equation}
    \end{thm}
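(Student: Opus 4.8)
The plan is to verify non-negativity of each factor in the weight $\mathcal{W}$ and then invoke \Cref{partition_function_1} to confirm the total mass is $1$. First I would compute the evaluations of $H$ and $G$: from the definitions \eqref{eq: H} and \eqref{eq: G}, substituting $p_k(\x^{q,\pm}) = \sum_i (x_i^q)^k + (x_i^q)^{-k}$ and $p_k(\y^r) = \sum_j (y_j^r)^k$ and resumming the resulting logarithmic series (exactly as in the proof of \Cref{gen_cauchy_littlewood_prop} where $H(x;y)$ becomes $\prod (1-x_iy_j)^{-1}(1-x_i^{-1}y_j)^{-1}$ and $G(y)$ becomes $\prod_{i<j}(1-y_iy_j)$) yields \eqref{eqn_HG}. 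Under the hypotheses $x_i^q, y_j^p > 0$, $y_j^p < \min(x_i^q, (x_i^q)^{-1})$, and $y_i^p y_j^p < 1$, every factor appearing in these products is a positive real number; this also guarantees the infinite products/power series converge, so $\mathcal{Z}(\vec{\x}^\pm, \vec{\y})$ is a genuine positive real number and the measure is well-defined rather than merely formal.

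Next I would argue non-negativity of the weight $\mathcal{W}(\lavec,\muvec \mid \vec{\x}^\pm, \vec{\y})$ term by term, as laid out in \eqref{eq: W}. The factors $s_{\la^{(1)}}(\y^1)$ and $s_{\la^{(j)}/\mu^{(j)}}(\x^{j,\pm})$ for $1 \le j \le k-1$ are non-negative: the $\y^j$ are variable specializations with positive variables, hence Schur-positive (\Cref{exam_var_spec}), and with the convention $A_1 = \cdots = A_{k-1} = 0$ in the oscillating case the specializations $\x^{j,\pm}$ for $j < k$ are empty, so $s_{\la^{(j)}/\mu^{(j)}}(\x^{j,\pm}) = \mathbf{1}_{\la^{(j)} = \mu^{(j)}}$; in the general case (with $A_j$ possibly nonzero) one instead uses that skew Schur functions evaluated at Laurent variable specializations with positive $x_i$ are still non-negative, since $s_{\la/\mu}(x_1^\pm,\dots,x_n^\pm)$ expands with non-negative coefficients in the Schur functions and each $s_\nu$ at positive Laurent variables is a positive Laurent polynomial evaluated at positive reals. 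The factors $T_{\mu^{(j)}, \la^{(j+1)}}(\y^{j+1})$ are non-negative by the remark following \Cref{specs_sym}: since $T_{\la,\mu} = \sum_\nu d^\la_{\mu,\nu} s_\nu$ with Newell-Littlewood coefficients $d^\la_{\mu,\nu} \ge 0$ (via \Cref{thm_koike} and non-negativity of Littlewood-Richardson coefficients), $T_{\la,\mu}$ is Schur-positive, hence non-negative at the Schur-positive specialization $\y^{j+1}$.

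The remaining and most delicate factor is $\SP_{\la^{(k)}}(\x^{k,\pm})$. Here the hypothesis \eqref{eq:assumption Ak}, namely $A_k \ge B_1 + \cdots + B_k$, is essential. The point is that \Cref{rem:support W} restricts the support to tuples with $\mu^{(j)} \subseteq \la^{(j)}$, and then an inductive length bound propagates down the chain: using part (c) of \Cref{lemma_NL_coeffs}, a nonzero $d^{\mu^{(j)}}_{\la^{(j+1)}, \nu}$ forces $\ell(\mu^{(j)}) \le \ell(\la^{(j+1)}) + \ell(\nu)$, and $s_\nu(\y^{j+1}) \ne 0$ forces $\ell(\nu) \le B_{j+1}$; combined with $\ell(\la^{(j+1)}) \le \ell(\mu^{(j+1)}) + (\text{length added by }\x^{j+1,\pm})$ in the general case, or simply $\ell(\la^{(j)}) = \ell(\mu^{(j)})$ in the oscillating case, one obtains by descending induction that whenever $\mathcal{W} \ne 0$ we must have $\ell(\la^{(k)}) \le B_1 + \cdots + B_k \le A_k$. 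By \eqref{SP_equations}, for such partitions $\SP_{\la^{(k)}}(\x^{k,\pm}) = \sp_{\la^{(k)}}(x_1^k{}^\pm, \dots, x_{A_k}^k{}^\pm)$, which by King's combinatorial formula in terms of symplectic tableaux (\Cref{exam_var_spec_2}) is a Laurent polynomial with non-negative coefficients, hence non-negative when evaluated at positive reals. This sidesteps the pathology noted in \Cref{rem_schur_pos} that $\SP_\la$ at positive Laurent variables can be negative for long $\la$. Putting the pieces together: $\mathcal{W} \ge 0$ everywhere, $\mathcal{Z} > 0$, and $\mathcal{Z}$ equals the stated product by \Cref{partition_function_1}, so $\Pr^{\SSP}(\,\cdot \mid \vec{\x}^\pm, \vec{\y})$ is a bona fide probability measure.

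I expect the length-bound induction establishing $\ell(\la^{(k)}) \le B_1 + \cdots + B_k$ on the support to be the main obstacle — not because it is deep, but because it requires carefully tracking how $\ell(\la^{(j)})$ and $\ell(\mu^{(j)})$ can grow as one moves along the chain of specializations, handling the two cases ($\x^{j,\pm}$ empty vs. nonempty for $j < k$) correctly, and making sure the bookkeeping with parts (b)–(c) of \Cref{lemma_NL_coeffs} and the vanishing $s_\nu(\y) = 0$ for $\ell(\nu) > B$ is airtight; for the oscillating case it simplifies considerably since each skew factor $s_{\la^{(j)}/\mu^{(j)}}(\varnothing)$ forces $\la^{(j)} = \mu^{(j)}$.
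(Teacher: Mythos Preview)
Your approach matches the paper's proof essentially line for line: verify the explicit product forms of $H$ and $G$, check Schur-positivity of all factors except $\SP_{\la^{(k)}}$, and then use a length-bound induction along the chain to force $\ell(\la^{(k)})\le B_1+\cdots+B_k\le A_k$, reducing $\SP_{\la^{(k)}}(\x^{k,\pm})$ to the nonnegative symplectic Schur polynomial.

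One small correction to your inductive bookkeeping: the step you wrote as ``$\ell(\la^{(j+1)}) \le \ell(\mu^{(j+1)}) + (\text{length added by }\x^{j+1,\pm})$'' is not the right inequality and not needed. The skew-Schur factor $s_{\la^{(j)}/\mu^{(j)}}(\a^j)\ne 0$ only gives $\mu^{(j)}\subseteq\la^{(j)}$, hence $\ell(\mu^{(j)})\le\ell(\la^{(j)})$, regardless of how many variables $\a^j$ has. The induction is ascending, not descending: start from $\ell(\la^{(1)})\le B_1$ (since $s_{\la^{(1)}}(\y^1)\ne 0$ with $B_1$ variables), then alternate $\ell(\mu^{(j)})\le\ell(\la^{(j)})$ and $\ell(\la^{(j+1)})\le\ell(\mu^{(j)})+B_{j+1}$ (the latter from \Cref{prop_orth_functions}(c), which is exactly what the paper invokes). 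This yields $\ell(\la^{(k)})\le B_1+\cdots+B_k$ uniformly, with no case split on whether the $\x^{j,\pm}$ are empty.
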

    \begin{proof}
    By inspecting the weight $ \mathcal{W} (\lavec,\muvec\mid\vec{\x}^\pm,\vec{\y})$, we see that the presence of the factors $s_{\la^{(i-1)}/\mu^{(i-1)}}(\x^{i,\pm})$ and $T_{\la^{(i)},\mu^{(i-1)}}(\y^{i})$ imply, by item (c) of \Cref{prop_orth_functions}, that
    \begin{equation*}
        \mathcal{W} (\lavec,\muvec\mid\vec{\x}^\pm,\vec{\y}) = 0 \qquad \text{if }\, \ell(\la^{(i-1)}) < \ell(\mu^{(i-1)}), \quad \text{or if }\, \ell(\la^{(i)}) > \ell(\mu^{(i-1)})+ B_i,
    \end{equation*}
    for all $i=1,\dots,k$. This shows that $\mathcal{W}(\lavec,\muvec\mid\vec{\x}^\pm,\vec{\y})$ vanishes unless $\ell(\la^{(i)}) \le \ell(\mu^{(i-1)})+B_i \le \ell(\la^{(i-1)})+B_i$, and inductively
    \begin{equation}\label{claim_weight}
        \mathcal{W} (\lavec,\muvec\mid\vec{\x}^\pm,\vec{\y})\ne 0 \,\text{ implies that }\, \ell(\la^{(i)})\le B_1+\dots+B_i,\text{ for all }i=1,\dots, k.
    \end{equation} 
    We can then assume that $\ell(\la^{(k)})\le B_1+\dots+B_k \le A_k$, using condition \eqref{eq:assumption Ak}. Under this assumption, the universal symplectic character reduces to the symplectic Schur polynomial
    \begin{equation*}
        \SP_{\la^{(k)}}(\x^{k,\pm}) = \sp_{\la^{(k)}}\big( x^k_1,(x^k_1)^{-1},\dots,x^k_{A_k},(x^k_{A_k})^{-1} \big),
    \end{equation*}
    which is positive when $x^k_1,\dots,x^k_{A_k}>0$, as discussed in \Cref{exam_var_spec_2}. Since all specializations $\x^{1,\pm},$$\dots,$$\x^{k-1,\pm},$
    $\y^1,$ $\dots,$$\y^k$ are Schur-positive, by \Cref{exam_var_spec} and \Cref{exam_var_spec_2}, we have $\mathcal{W} (\lavec,\muvec\mid\vec{\x}^\pm,\vec{\y})\ge 0$, for all $(\lavec,\muvec)$. Finally, under the conditions \eqref{eq:conditions_variable_specs}, the partition function $\mathcal{Z}$ is absolutely convergent and it can be computed through \Cref{partition_function_1} and \Cref{rem:evaluation H and G}; its evaluation coincides with \eqref{eqn_HG}.
    This completes the proof.
\end{proof}

\begin{rem}\label{rem:possible negativity}
The condition $A_k\ge B_1+\cdots+B_k$ in \Cref{def:ssp with laurent variable spec} is subtle.
It ensures that $\ell(\la^{(k)})$ is at most the number of variables in the specialization $\x^{k,\pm}$, thus implying that $\SP_{\la^{(k)}}(\x^{k,\pm})=\sp_{\la^{(k)}}(\x^{k,\pm}) > 0$, whenever $\Pr^\SSP(\lavec,\muvec \mid\vec{\x}^\pm,\vec{\y}) \neq 0$. On the other hand, if $A_k<B_1+\cdots+B_k$, then the weights $\mathcal{W} (\lavec,\muvec\mid\vec{\x}^\pm,\vec{\y})$ might be negative for certain $(\lavec,\muvec)$.
\end{rem}

\subsection{Correlation functions}

In this subsection, partitions will be regarded as infinite monotone decreasing sequences of nonnegative integers with only finitely many being nonzero, e.g.~a partition of $10$ is $(5,3,1,1,0,0,\cdots)$ and the empty partition is $(0,0,0,\cdots)$.

We will consider the most general symplectic Schur process on $(2k-1)$-tuples of partitions $(\lavec,\muvec) = (\la^{(1)},\dots,\la^{(k)},\mu^{(1)},\dots,\mu^{(k-1)})$, taking values on $\Lambda_{\a^1}\otimes\cdots\otimes\Lambda_{\a^k}\otimes\Lambda_{\b^1}\otimes\cdots\otimes\Lambda_{\b^k}=\Lambda^{\otimes (2k)}$, given by
\begin{equation*}
\Pr^\SSP\!\left(\lavec,\muvec \,\big|\, \veca,\vecb \right) := \frac{1}{ \mathcal{Z} (\vec{\a},\vec{\b})}\cdot s_{\la^{(1)}}(\b^1)\cdot \prod_{j=1}^{k-1}{\left[ s_{\la^{(j)}/\mu^{(j)}}(\a^j)\, T_{\mu^{(j)},\la^{(j+1)}}(\b^{j+1}) \right]}\cdot\SP_{\la^{(k)}}(\a^k),
\end{equation*}
where $ \mathcal{Z} (\vec{\a},\vec{\b})=\prod_{1\le p\le q\le k}{H(\a^q;\b^p)}\cdot\prod_{r=1}^k{G(\b^r)}$. 

Let us consider the alphabet $\{1, 1', \dots, (k-1), (k-1)', k\}$ and for any $(2k-1)$-tuple of partitions $(\lavec,\muvec)$ as above, assign the following point configuration (subset) of $\{1,1',\dots,(k-1),(k-1)',k\}\times\Z$:
\begin{multline}\label{L_configuration}
\mathcal{L}(\lavec,\muvec) := 
\left\{ \left( 1, \la^{(1)}_i - i \right) \right\}_{i\ge 1}\cup
\left\{ \left( 1', \mu^{(1)}_i - i \right) \right\}_{i\ge 1}\cup
\ \cdots\\
\cdots\ \cup
\left\{ \left( (k-1)', \mu^{(k-1)}_i - i \right) \right\}_{i\ge 1}\cup
\left\{ \left( k, \la^{(k)}_i - i \right) \right\}_{i\ge 1}.
\end{multline}

\begin{thm}[The symplectic Schur process is determinantal]\label{thm:det}
For any collection of $S\ge 1$ distinct points $(i_1,u_1),\dots,(i_S,u_S)$ of $\{1,1',\dots,(k-1),(k-1)',k\}\times\Z$, we have the following identity
\begin{equation}\label{determinantal_property}
\sum_{\{(i_1,u_1),\dots,(i_S,u_S)\}\subset\mathcal{L}(\lavec,\muvec)}{ \Pr^\SSP\!\left(\lavec,\muvec \,\big|\, \veca,\vecb \right) }
= \det_{1\le s,t\le S}\left[ K^{\SSP}(i_s,u_s;i_t,u_t) \right],
\end{equation}
where
\begin{multline}\label{final_kernel}
    K^\SSP(i,u;j,v) =\\
    \frac{1}{(2\pi\mathrm{i})^2}\oint\oint \frac{H(\a^{[j,k]};z) H(\b^{[1,i]};w)H(\b^{[1,i]};w^{-1})}{H(\b^{[1,j]};z)H(\b^{[1,j]};z^{-1})H(\a^{[i,k]};w)} \frac{(1-w^2)}{(1-zw)(1-z^{-1}w)} \frac{\dd z\dd w}{z^{v+2}w^{-u}},
\end{multline}
whenever $i,j\in\{1,\dots,k\}$.
The $z$- and $w$-contours are simple closed counterclockwise circles centered at the origin. If $i\le j$, the radii satisfy $|w|<\min\{ |z|,|z|^{-1} \}$, so that we can expand $\frac{1}{(1-zw)(1-z^{-1}w)}=\sum_{m,n\ge 0}{z^{m-n}w^{m+n}}$; if $i>j$, then $|z|<|w|<|z|^{-1}$, so that $\frac{1}{(1-zw)(1-z^{-1}w)} = -\sum_{m,n\ge 0}{z^{m+n+1}w^{m-n-1}}$. Functions $H$ in the integrand are defined in \eqref{eq: H}.

The formula for $K^{\SSP}(i,u;j',v)$, where $i\in\{1,\dots,k\}$ and $j'\in\{1',\dots,(k-1)'\}$, is similar except that one should replace $\a^{[i,k]}\mapsto\a^{[i+1,k]}$ in the integrand; the contours are the same as in the $i,j \in \{1,\dots,k\}$ case.
For $K^{\SSP}(i',u;j,v)$, replace $\a^{[j,k]}\mapsto\a^{[j+1,k]}$; the contours satisfy $|w|<\min\{|z|,|z|^{-1}\}$ if $i<j$, and $|z|<|w|<|z|^{-1}$ if $i\ge j$.
Finally, for $K^{\SSP}(i',u;j',v)$, replace both $\a^{[i,k]}\mapsto\a^{[i+1,k]}$ and $\a^{[j,k]}\mapsto\a^{[j+1,k]}$; the contours are the same as in the $i,j \in \{1,\dots,k\}$ case.
\end{thm}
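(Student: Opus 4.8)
The plan is to compute the left-hand side of \eqref{determinantal_property} via the fermionic Fock-space formalism, following Okounkov's analysis of Schur measures and processes \cite{O-2001,OR-2003} and Betea's treatment of the symplectic Schur measure \cite{B-2018}. Identify each partition $\la$ with the charge-zero basis vector $|\la\rangle$ whose Maya diagram records the positions $\{\la_i-i\}$; write $\psi(z)=\sum_m\psi_m z^m$, $\psi^*(w)=\sum_m\psi^*_m w^{-m}$ for the fermion fields and let $\Gamma_\pm(\c)$ be the usual half-vertex operators, so that $s_{\la/\mu}(\c)=\langle\la|\Gamma_-(\c)|\mu\rangle=\langle\mu|\Gamma_+(\c)|\la\rangle$ and $\Gamma_+(\c)\Gamma_-(\d)=H(\c;\d)\,\Gamma_-(\d)\Gamma_+(\c)$. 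Two further translations are needed. First, \Cref{combinatorial_S_star} gives $T_{\mu,\la}(\b)=\sum_\kappa s_{\mu/\kappa}(\b)s_{\la/\kappa}(\b)=\langle\la|\Gamma_-(\b)\Gamma_+(\b)|\mu\rangle$, so the down-up factors become $\Gamma_-\Gamma_+$ blocks (note the ordering is \emph{not} the contraction-friendly $\Gamma_+\Gamma_-$). Second --- the one genuinely new ingredient relative to type $A$ --- the Frobenius expansion \eqref{eq:symplectic Schur frobenius} yields $\sum_\la\SP_\la(\c)\langle\la|=\langle\Omega|\,\Gamma_+(\c)$, where $\langle\Omega|:=\sum_\alpha(-1)^{|\alpha|/2}\langle\alpha|$ with $\alpha$ ranging over partitions with Frobenius coordinates $(a_1,a_2,\dots\mid a_1{+}1,a_2{+}1,\dots)$; this ``symplectic vacuum'' is a Bogoliubov transform of the true vacuum, $\langle\Omega|=\langle\varnothing|\exp\!\big(\tfrac12\sum_{i,j}J_{ij}\,\psi_i\psi_j\big)$ for an explicit skew array $J$ encoding the Frobenius pairing, and hence is quasi-free.

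Assembling these, after summing out the auxiliary partitions inside the $T$-blocks the weight \eqref{eq: W} becomes a single vacuum expectation
\[
\mathcal{W}=\big\langle\Omega\big|\Gamma_+(\a^k)P_{\la^{(k)}}\Gamma_-(\b^k)\Gamma_+(\b^k)P_{\mu^{(k-1)}}\Gamma_+(\a^{k-1})P_{\la^{(k-1)}}\cdots\Gamma_+(\a^1)P_{\la^{(1)}}\Gamma_-(\b^1)\big|\varnothing\big\rangle,
\]
with $P_\nu=|\nu\rangle\langle\nu|$. Summing over all $(\lavec,\muvec)$ whose configuration $\mathcal{L}(\lavec,\muvec)$ contains the fixed points $(i_1,u_1),\dots,(i_S,u_S)$ amounts to replacing, at each observed slice $i_s$, the corresponding projector by the fermion number operator $\psi^*_{u_s}\psi_{u_s}$ (the occupation indicator of site $u_s$) and summing all remaining projectors to the identity. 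One then commutes every $\Gamma_\pm(\c)$ past the inserted fields, each multiplying $\psi(z)$ (resp.\ $\psi^*(w)$) by a scalar $H(\c;z)^{\pm1}$ (resp.\ $H(\c;w)^{\mp1}$); the vertex operators that survive this process reassemble into exactly $\langle\Omega|(\cdots)|\varnothing\rangle=\mathcal{Z}(\veca,\vecb)$ by \Cref{partition_function_1}, so the normalization cancels. What remains is $\langle\Omega|\prod_s\widetilde\psi^*_{u_s}(i_s)\widetilde\psi_{u_s}(i_s)|\varnothing\rangle$ times the accumulated scalars; since $\langle\Omega|$ is quasi-free, Wick's theorem turns this into $\det_{1\le s,t\le S}[K^\SSP(i_s,u_s;i_t,u_t)]$, with $K^\SSP$ the two-point function.

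To evaluate the contraction $K^\SSP(i,u;j,v)$ one writes $\psi^*_u=\frac1{2\pi\mathrm{i}}\oint\psi^*(w)\,w^{-u}\frac{\dd w}{w}$ and $\psi_v=\frac1{2\pi\mathrm{i}}\oint\psi(z)\,z^{v}\frac{\dd z}{z}$; commuting the vertex operators lying between slices $j$ and $i$ through the fields produces precisely the ratio of $H$-factors displayed in \eqref{final_kernel}, while the symplectic vacuum supplies the \emph{reflected propagator} $\langle\Omega|\psi(z)\psi^*(w)|\varnothing\rangle\propto\dfrac{1-w^2}{(1-zw)(1-z^{-1}w)}$, which packages both the normal ($\psi^*$--$\psi$) and the anomalous ($\psi^*$--$\psi^*$) Wick pairings created by the Bogoliubov transform --- this is exactly the factor distinguishing \eqref{final_kernel} from the Okounkov--Reshetikhin kernel and the origin of its $z\leftrightarrow z^{-1}$ symmetry. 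The case distinctions then fall out by bookkeeping: at a $\mu$-slice (primed index) the relevant specialization $\a^j$ sits on the other side of the observable, which is what converts $\a^{[i,k]}$ into $\a^{[i+1,k]}$; and the ordering of the two slices $i$ and $j$ in the operator product fixes whether $\psi^*(w)$ stands to the left or right of $\psi(z)$, hence the nesting of the $z$- and $w$-contours that selects the correct geometric expansion of $\frac1{(1-zw)(1-z^{-1}w)}$.

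The main obstacle is the symplectic cap: proving that $\langle\Omega|$ is genuinely quasi-free and computing the reflected propagator with the correct normalization and powers of $z,w$, while tracking the charge sectors (the insertions $\psi^*_u\psi_u$ are charge-neutral, but a single anomalous contraction against $\langle\Omega|$ pairs two $\psi^*$'s or two $\psi$'s, so one must check the bookkeeping closes) and justifying convergence of the infinite operator products. These points are handled as in \cite{B-2018} for the one-time marginal, but the interleaving with the $\Gamma_-(\b)\Gamma_+(\b)$ blocks coming from the $T$-functions forces one extra commutation step per level and is the principal source of technical detail. A fermion-free alternative is to run the Eynard--Mehta/Borodin--Rains machinery for $L$-ensembles directly on the chain of up/down transitions read off from \eqref{eq: W}, reducing $K^\SSP$ to an explicitly invertible Gram-type operator built from the skew down-up Cauchy identity \Cref{new_skew_cauchy_thm} and the Cauchy--Littlewood identity \Cref{gen_cauchy_littlewood_prop_2}; the same symplectic-cap difficulty reappears there as the need to invert the operator attached to the $\SP$-boundary.
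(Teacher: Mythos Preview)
Your proposal is sound but takes a different route from the paper. You outline the fermionic Fock-space approach (vertex operators, a symplectic boundary state $\langle\Omega|$, Wick's theorem), whereas the paper uses the Eynard--Mehta theorem in the Borodin--Rains formulation \cite{BR-2005}: it specializes $\a^k$ and $\b^1$ to finitely many (Laurent) variables so that every factor in $\mathcal{W}$ becomes an explicit determinant via the Weyl, Jacobi--Trudi, and Cauchy--Binet formulas, applies \cite[Thm.~1.4]{BR-2005} to the resulting product of determinants, and then computes the Gram-matrix inverse $M^{-1}$ entry by entry from the closed product formula for $\det M$ and its minors supplied by \Cref{partition_function_1}. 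This is exactly the ``fermion-free alternative'' you sketch in your final paragraph, and the symplectic-cap obstruction you anticipate there is dissolved precisely by that partition-function identity. Your route is more structural --- the reflected propagator explains the factor $\tfrac{1-w^2}{(1-zw)(1-z^{-1}w)}$ and its $z\leftrightarrow z^{-1}$ symmetry directly --- but carries the nontrivial burden of showing that the Wick expansion against the Bogoliubov-rotated vacuum $\langle\Omega|$ actually yields a \emph{determinant} rather than a Pfaffian, a point you flag under ``charge-sector bookkeeping'' but defer to \cite{B-2018}; the paper's route sidesteps this issue entirely, in exchange for a hands-on residue computation of $(\Psi M^{-1}\Phi)_{uv}$ and the contour manipulations in its Step~4.
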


\begin{rem}\label{rem:expand}
The equations \eqref{determinantal_property} and \eqref{final_kernel} should be interpreted as equalities between formal power series with values in $\Lambda_{\a^1}\otimes\cdots\otimes\Lambda_{\a^k}\otimes\Lambda_{\b^1}\otimes\cdots\otimes\Lambda_{\b^k}$.
For the right hand side of \eqref{final_kernel}, expand the integrand around $z=0,w=0$, e.g.~by using that
\begin{equation*}
H(\a^{[j,k]};z) = \exp\left\{ \sum_{n=0}^\infty \sum_{\ell=j}^k \frac{p_n(\a^\ell)}{n}z^n \right\},
\end{equation*}
and similarly for the other five $H$ terms; then pick up the residue at $z=w=0$ to obtain $K^{\SSP}(i,u; j,v)$.
\end{rem}

\begin{rem}
The identities \eqref{determinantal_property} and \eqref{final_kernel} turn into numerical equalities for appropriate choices of specializations $\a^1,\dots,\b^k$ and contours $z,w$.
For example, one possible set of admissible conditions, inspired by~\cite[Thm.~1]{B-2020}, is the following. Let $r_\a,r_\b>0$ be such that $\min(1,1/r_\a)>r_\b>0$. Moreover, to ensure convergence of all terms of the integrand in \eqref{final_kernel}, assume that
\begin{equation*}
\frac{p_n(\a^\ell)}{n}=O(r_\a^n),\qquad \frac{p_n(\b^\ell)}{n}=O(r_\b^n),\text{ as }n\to\infty,
\end{equation*}
for all $\ell=1,\dots,k$. Then if $i\le j$, take contours in \eqref{final_kernel} satisfying
\begin{equation*}
\min\{ 1/|w|,\, 1/r_\a,\, 1/r_\b \} > |z| > |w| > r_\b.
\end{equation*}
Otherwise, if $i>j$, take contours with
\begin{equation*}
\min\{ 1/|w|,\, 1/r_\a,\, 1/r_\b \} > |w| > |z| > r_\b.
\end{equation*}

If, additionally, the specializations $\a^1,\dots,\b^k$ are such that $\Pr^\SSP(\,\cdot\!\mid\vec{\a},\vec{\b})$ is a valid probability measure, i.e.~$\Pr^\SSP(\lavec,\muvec\mid\vec{\a},\vec{\b})\ge 0$, for all $(\lavec,\muvec)$ (see \Cref{sec:special_cases} for examples of such specializations), then the theorem implies that the point process on $\{1,1',\dots,(k-1),(k-1)',k\}\times\Z$ deriving from $\Pr^\SSP(\,\cdot\mid\vec{\a},\vec{\b})$, via the map $(\lavec,\muvec)\mapsto\mathcal{L}(\lavec,\muvec)$ in \eqref{L_configuration}, is a \emph{determinantal point process}; see e.g.~\cite{S-2000} and references therein. The Schur measure~\cite{O-2001}, Schur process~\cite{OR-2003}, as well as the symplectic and orthogonal Schur measures~\cite{B-2018},\cite{B-2020}, all lead to determinantal point processes, too.
\end{rem}

\begin{rem}
When $k=1$, our theorem matches the double contour integral formula for the symplectic Schur measures obtained in \cite[Thm.~1]{B-2020}, up to the deterministic shifts $u\mapsto u-\frac{1}{2}$ and $v\mapsto v-\frac{1}{2}$.
\end{rem}

\begin{proof}[Proof of \Cref{thm:det}]
This proof is an adaptation of the proof that the Schur process is determinantal, given in \cite[Thm.~2.2]{BR-2005}.

\smallskip

{\bf Step 1: Reformulation of the problem.}
It will suffice to prove~\eqref{determinantal_property} when $\a^k$ is a Laurent variable specialization with a finite (but arbitrarily large) number of variables $x_1^\pm,\dots,x_p^\pm$ and $\b^1$ is the variable specialization corresponding to $y_1,\dots,y_p$.
Making these reductions will allow us to apply the Eynard-Mehta theorem in Step 2 below.
Also, do the following change of coordinates for the $(2k-1)$-tuples of partitions $(\lavec,\muvec)$:
\begin{equation}\label{eq_coords}
\begin{gathered}
l^{(r)}_i := \la^{(r)}_i - i,\quad r=1,\dots,k,\ i\ge 1;\\
m^{(r)}_i := \mu^{(r)}_i - i,\quad r=1,\dots,(k-1),\ i\ge 1.
\end{gathered}
\end{equation}

If $p\ge\ell(\la^{(k)})$, then the factor $\SP_{\la^{(k)}}(\a^k)$ turns into
\begin{equation*}
\sp_{\la^{(k)}}(x_1^\pm,\dots,x_p^\pm) =
\frac{\det_{1\le i,j\le p}\Big(x_i^{l_j^{(k)}+p+1} - x_i^{-l_j^{(k)}-p-1}\Big)}
{\prod_{1\le i<j\le p}(x_i-x_j)(1-x_i^{-1}x_j^{-1})\prod_{i=1}^p(x_i - x_i^{-1})}
\end{equation*}
and $s_{\la^{(1)}}(\b^1)$ becomes
\[
s_{\la^{(1)}}(y_1,\dots,y_p) = \frac{\det_{1\le i,j\le p}\Big( y_i^{l_j^{(1)}+p} \Big)}{\prod_{1\le i<j\le p}(y_i-y_j)} = \frac{\prod_{i=1}^p{y_i^p}\cdot\det_{1\le i,j\le p}\Big( y_i^{l_j^{(1)}} \Big)}{\prod_{1\le i<j\le p}(y_i-y_j)}.
\]
Moreover, all $s_{\la^{(r)}/\mu^{(r)}}(\a^r)$ and $T_{\mu^{(r)},\la^{(r+1)}}(\b^{r+1})$ are functions of the coordinates~\eqref{eq_coords}.
Indeed, by the Jacobi-Trudi formula, $s_{\la^{(r)}/\mu^{(r)}}(\a^r)=\det_{1\le i,j\le N}{V_r(l_i^{(r)}, m_j^{(r)})}$, as long as $N\ge\max\{\ell(\la^{(r)}),\ell(\mu^{(r)})\}$, where
\begin{equation}\label{V_form}
V_r(l,m) := h_{l-m}(\a^r).
\end{equation}
By \Cref{combinatorial_S_star}, we have $T_{\mu^{(r)},\la^{(r+1)}} = \sum_{\nu}{ s_{\mu^{(r)}/\nu}\, s_{\la^{(r+1)}/\nu} }$. Then by the Jacobi-Trudi and the Cauchy-Binet formulas, we deduce $T_{\mu^{(r)},\la^{(r+1)}}(\b^{r+1})=\det_{1\le i,j\le N}{U_r(m_i^{(r)},l_j^{(r+1)})}$, where $N\ge\max\{\ell(\mu^{(r)}),\ell(\la^{(r+1)})\}$, and $[U_r(x,y)]_{x,y\in\Z}$ is the Toeplitz matrix with symbol
\begin{equation}\label{U_form}
\sum_{m\in\Z}{U_r(x+m,x)z^m} = H(\b^{r+1};z)H(\b^{r+1};z^{-1}).
\end{equation}
(That is, $[U_r(x,y)]_{x,y\in\Z}$ is an infinite matrix such that $U_r(x,y) = U_r(x+1,y+1)$, for all $x,y\in\Z$ and with coefficients $U_r(x+m,x)=U_r(m,0)$ defined by the generating function \eqref{U_form}.)
Hence, for any $(\lavec,\muvec)$ we can write (set the previous size $N$ of the matrices equal to $p$)
\[
\Pr^\SSP(\lavec,\muvec\mid\vec{\a},\vec{\b}) = \frac{\mathcal{W} (\lavec,\muvec\mid\vec{\a},\vec{\b})}{ \mathcal{Z} (\vec{\a},\vec{\b})},
\]
where $ \mathcal{Z} (\vec{\a},\vec{\b})=\prod_{1\le i\le j\le k}{H(\a^j,\b^i)}\cdot\prod_{r=1}^k{G(\b^r)}$ and 
\begin{multline}\label{W_dets}
\mathcal{W} (\lavec,\muvec\mid\vec{\a},\vec{\b}) = \frac{\prod_{i=1}^p{y_i^p}}{\prod_{1\le i<j\le p}(x_i-x_j)(1-x_i^{-1}x_j^{-1})(y_i-y_j) \prod_{i=1}^p(x_i - x_i^{-1})}
\times\det_{1\le i,j\le p}\Big( y_i^{l_j^{(1)}} \Big)\\
\times\prod_{r=1}^{k-1}{ \det_{1\le i,j\le p}{V_r(l_i^{(r)}, m_j^{(r)})}
\!\det_{1\le i,j\le p}{U_r(m_i^{(r)}, l_j^{(r+1)})} }\times
\det_{1\le i,j\le p}\Big(x_i^{l_j^{(k)}+p+1} - x_i^{-l_j^{(k)}-p-1}\Big),
\end{multline}
for any $p\ge\max\{\max_i{\ell(\la^{(i)})},\,\max_j{\ell(\mu^{(j)})}\}$.
The goal is then to prove, for any $S\ge 1$:
\[
\sum_{\{(i_1,u_1),\dots,(i_S,u_S)\}\subset\mathcal{L}(\lavec,\muvec)}{ \mathcal{W} (\lavec,\muvec\mid\vec{\a},\vec{\b}) }
= \mathcal{Z} (\vec{\a},\vec{\b})\cdot\det_{1\le s,t\le S}\left[ K^{\SSP}(i_s,u_s;i_t,u_s) \right].
\]

{\bf Step 2: Application of Eynard-Mehta theorem.}
The next step is to apply the theorem of Eynard-Mehta~\cite{EM-1998}, as stated in~\cite{BR-2005}.
The setting is the following.
Let $\Phi$ be the $p\times\Z$ matrix with entries
\[
\Phi_{i,l}:=y_i^l,\quad 1\le i\le p,\ l\in\Z;
\]
let $W_1,W_2,\dots,W_{2k-2}$ be $\Z\times\Z$ matrices with entries
\begin{equation}\label{W_form}
(W_{2s-1})_{l,m} := V_s(l,m),\qquad (W_{2s})_{m,l} := U_s(m,l),\qquad\text{for all $m,l\in\Z,\ 1\le s\le k-1$};
\end{equation}
and let $\Psi$ be the $\Z\times p$ matrix with entries
\[
\Psi_{l,j}:=x_j^{l+p+1}-x_j^{-l-p-1},\quad 1\le j\le p,\ l\in\Z.
\]
Further, let $M$ be the product $M:=\Phi\cdot W_1\cdots W_{2k-2}\cdot\Psi$.
Then \cite[Lem.~1.3]{BR-2005} shows that
\begin{equation*}
    \sum \mathcal{W}(\lavec,\muvec\mid\a,\b) = \det M\cdot
    \frac{\prod_{i=1}^p{y_i^p}}{\prod_{1\le i<j\le p}(x_i-x_j)(1-x_i^{-1}x_j^{-1})(y_i-y_j) \prod_{i=1}^p(x_i - x_i^{-1})}.
\end{equation*}

On the other hand, \Cref{partition_function_1} provides a closed formula for the partition function $\mathcal{Z}(\a,\b)$. By comparing these results:
\begin{multline}\label{detM_eq1}
\frac{\prod_{i=1}^p{y_i^p}}{\prod_{1\le i<j\le p}(x_i-x_j)(1-x_i^{-1}x_j^{-1})(y_i-y_j)
\prod_{i=1}^p(x_i - x_i^{-1})}\det M = \prod_{1\le i\le j\le k}{H(\a^j;\b^i)} \cdot\prod_{r=1}^k{G(\b^r)}\\
= \prod_{i=1}^p{H(\a^{[1,k-1]};y_i) H(\b^{[2,k]};x_i) H(\b^{[2,k]};x_i^{-1}) }\\
\times\frac{\prod_{1\le i<j\le p}(1-y_iy_j)\prod_{2\le i\le j\le k-1}{H(\a^j;\b^i)}\prod_{r=2}^k{G(\b^r)}}{\prod_{i,j=1}^p{(1-x_iy_j)(1-x_i^{-1}y_j)}}.
\end{multline}

Further, \cite[Thm.~1.4]{BR-2005} shows that the process with formal weights $\mathcal{W}(\lavec,\muvec \mid \a,\b)$ is a determinantal point process, and therefore so is the measure $\Pr(\lavec,\muvec \mid \a,\b)$. That theorem also gives a precise formula for the correlation kernel. For that, we need the following notations:
\[
W_{[a,b)} := \begin{cases}
W_a W_{a+1}\cdots W_{b-1},&\text{ if }a<b,\\
\mathbf{Id},&\text{ if }a=b,\\
0,&\text{ if }a>b,
\end{cases}
\qquad
\accentset{\circ}{W}_{[a,b)} := \begin{cases}
W_a W_{a+1}\cdots W_{b-1},&\text{ if }a<b,\\
0,&\text{ if }a\ge b,
\end{cases}
\]
Then the correlation kernel will be a $(2k-1)\times(2k-1)$ block matrix with rows and columns for the blocks labeled by $1,1',\dots,(k-1),(k-1)',k$, and the blocks are given by:
\begin{equation}\label{K_cases}
\begin{aligned}
K_{i,j} &= W_{[2i-1,2k-1)}\Psi M^{-1}\Phi W_{[1,2j-1)} - \accentset{\circ}{W}_{[2i-1,2j-1)},\\
K_{i,j'} &= W_{[2i-1,2k-1)}\Psi M^{-1}\Phi W_{[1,2j)} - \accentset{\circ}{W}_{[2i-1,2j)},\\
K_{i',j} &= W_{[2i,2k-1)}\Psi M^{-1}\Phi W_{[1,2j-1)} - \accentset{\circ}{W}_{[2i,2j-1)},\\
K_{i',j'} &= W_{[2i,2k-1)}\Psi M^{-1}\Phi W_{[1,2j)} - \accentset{\circ}{W}_{[2i,2j)},
\end{aligned}
\end{equation}
for all $i,j\in\{1,2,\dots,k\}$ and $i',j'\in\{1',\dots,(k-1)'\}$.\footnote{A word of caution: the results employed here were stated in \cite[Thm.~1.4]{BR-2005} when $\Z$ is replaced by a finite set $\mathfrak{X}$.
However, as in the proof of \cite[Thm.~2.2]{BR-2005} (this is the theorem that computes the correlation kernel of the Schur process), we can overcome this difficulty by focusing on terms $\mathcal{W}(\lavec,\muvec \mid \a,\b)$ of small degree and restricting partitions $\la^{(r)},\mu^{(r)}$ to have bounded $\ell(\la^{(r)}),\ell(\mu^{(r)}),\la^{(r)}_1,\mu^{(r)}_1$; for example, equality~\eqref{detM_eq1} is true only up to terms of high degree. As the details are well-explained in the previous reference, we ignore them here and focus on the underlying calculations.}

\medskip
{\bf Step 3: Double contour integral representation of the correlation kernel.}
Equation~\eqref{detM_eq1} can be used to calculate the determinant of the matrix $M$ without $a$-th row and $b$-th column, denoted $M\left( \substack{1\cdots\hat{a}\cdots p\\1\cdots\hat{b}\cdots p} \right)$, except that we need to remove the variables $y_a$ and $x_b$ from that formula:
\begin{equation*}
\begin{gathered}
\frac{\prod_{i\ne a}{y_i^p}}{\prod_{1\le i<j\le p,\,i,j\ne b}(x_i-x_j)(1-x_i^{-1}x_j^{-1})\prod_{1\le i<j\le p,\,i,j\ne a}(y_i-y_j)
\prod_{i\ne b}(x_i - x_i^{-1})}\det M\left( \substack{1\cdots\hat{a}\cdots p\\1\cdots\hat{b}\cdots p} \right)\\
=\prod_{i=1}^p{H(\a^{[1,k-1]};y_i) H(\b^{[2,k]};x_i) H(\b^{[2,k]};x_i^{-1}) }
\frac{\prod_{1\le i<j\le p}(1-y_iy_j)\prod_{2\le i\le j\le k-1}{H(\a^j,\b^i)}}{\prod_{i,j=1}^p{(1-x_iy_j)(1-x_i^{-1}y_j)}}\\
\times\frac{\prod_{r=2}^k{G(\b^r)}\,\prod_{i=1}^p{(1-x_by_i)(1-x_b^{-1}y_i)(1-x_iy_a)(1-x_i^{-1}y_a)}}
{H(\a^{[1,k-1]};y_a)H(\b^{[2,k]};x_b)H(\b^{[2,k]};x_b^{-1})(1-x_by_a)(1-x_b^{-1}y_a)\prod_{j\ne a}{(1-y_jy_a)}}.
\end{gathered}
\end{equation*}
By dividing the previous equality by \eqref{detM_eq1}, we obtain the entries of the inverse matrix $M^{-1}$:
\begin{multline*}
(M^{-1})_{ba} = (-1)^{a+b}\,\frac{\det M\left( \substack{1\cdots\hat{a}\cdots p\\1\cdots\hat{b}\cdots p} \right)}{\det M}
= \frac{x_b^{1-p}y_a}{(x_b-x_b^{-1})\prod_{i\ne b}(1-x_ix_b^{-1})(1-x_i^{-1}x_b^{-1})}\\
\times\frac{1}{H(\a^{[1,k]};y_a)H(\b^{[1,k]};x_b)H(\b^{[1,k]};x_b^{-1})(1-x_by_a)(1-x_b^{-1}y_a)\prod_{j\ne a}{(1-y_jy_a)(1-y_jy_a^{-1})}}.
\end{multline*}
The next step is to calculate the entries of $\Psi M^{-1}\Phi$; it turns out that we can now represent these entries as formal double contour integrals:
\begin{multline}\label{pre_contour}
(\Psi M^{-1}\Phi)_{uv} =  \sum_{a,b=1}^p\frac{1}{H(\a^{[1,k]};y_a)H(\b^{[1,k]};x_b)H(\b^{[1,k]};x_b^{-1})}\\
\times\frac{1}{(1-x_by_a)(1-x_b^{-1}y_a)\prod_{j\ne a}{(1-y_jy_a)(1-y_jy_a^{-1})}}\\
\times \left( \frac{x_b^{u+1}y_a^{v+1}}{(1-x_b^{-2})\prod_{i\ne b}(1-x_ix_b^{-1})(1-x_i^{-1}x_b^{-1})}
+ \frac{x_b^{-u-1}y_a^{v+1}}{(1-x_b^2)\prod_{i\ne b}(1-x_ix_b)(1-x_i^{-1}x_b)} \right)\\
= \frac{1}{(2\pi\mathrm{i})^2}\oint\oint \frac{H(\a^k;z^{-1}) H(\b^1;w) H(\b^1;w^{-1})}{H(\b^{[1,k]};z) H(\b^{[1,k]};z^{-1})H(\a^{[1,k]};w)}\frac{z^u w^v(1-w^2)}{(1-zw)(1-z^{-1}w)} \dd z\dd w,
\end{multline}
where we assume that the double contour integral picks up the residues at points $(z,w)=(x_b,y_a)$ and $(z,w)=(x_b^{-1},y_a)$, but does not pick up the residues at $(z,w)=(x_b,y_a^{-1})$ and $(z,w)=(x_b^{-1},y_a^{-1})$; we also assume that the product $(1-zw)(1-z^{-1}w)$ in the denominator does not produce any residues.
With these specifications on the $z$- and $w$-contours, the last equality above follows from residue calculus; the actual existence of the contours satisfying the required conditions will be shown in Step 4 below.

All four formulas in the right sides of~\eqref{K_cases} involve $\Psi M^{-1}\Phi$.
In the remainder, we only calculate the entries $K^{\SSP}(i,u;j,v)$ of the block $K_{i,j}$, for any $i,j\in\{1,2,\dots,k\}$, as the other three block-types are similar.
From~\eqref{pre_contour}, \eqref{W_form}, \eqref{V_form} and \eqref{U_form}, we deduce:
\begin{multline}\label{first_contour}
(W_{[2i-1,2k-1)}\Psi M^{-1}\Phi W_{[1,2j-1)})_{uv}\\
= \frac{1}{(2\pi\mathrm{i})^2}\oint\oint \frac{H(\a^{[i,k]};z^{-1}) H(\b^{[1,j]};w) H(\b^{[1,j]};w^{-1})}{H(\b^{[1,i]};z) H(\b^{[1,i]};z^{-1})H(\a^{[j,k]};w)}\frac{z^u w^v(1-w^2)}{(1-zw)(1-z^{-1}w)} \dd z\dd w,
\end{multline}
where the interpretation of this integral is the same as before (in terms of picking up certain residues).

\medskip
{\bf Step 4: Analytic considerations and existence of the desired contours.}

From \eqref{K_cases}, it follows that $K_{i,j}=W_{[2i-1,2k-1)}\Psi M^{-1}\Phi W_{[1,2j-1)}$, when $i\ge j$, so the entry $K^{\SSP}(i,u;j,v)$ of the desired correlation kernel is exactly given by the double contour integral in \eqref{first_contour}.
By making a simple change of variable $z\mapsto z^{-1}$, we conclude that:
\begin{multline}\label{second_contour}
K^{\SSP}(i,u;j,v)=\\
\frac{1}{(2\pi\mathrm{i})^2}\oint\oint \frac{H(\a^{[i,k]};z) H(\b^{[1,j]};w) H(\b^{[1,j]};w^{-1})}{H(\b^{[1,i]};z) H(\b^{[1,i]};z^{-1})H(\a^{[j,k]};w)}
\frac{(1-w^2)}{(1-zw)(1-z^{-1}w)} \frac{\dd z\dd w}{z^{u+2}w^{-v}},
\end{multline}
in the case that $i\ge j$. Note that now the poles $x_b$ and $x_b^{-1}$ should fall outside the $z$-contour.

In the case that the variables $x_b, y_a$ satisfy:
\begin{equation*}
|y_a| < \min\{|x_b|,|x_b|^{-1}\},\text{ for all }a,b,
\end{equation*}
we can choose the $z$- and $w$-contours appropriately. Indeed, find $R,r>0$ such that
\begin{equation}\label{condition_R}
R^{-1}>\min\left\{ \min_b|x_b|,\, \min_b|x_b|^{-1} \right\} > R > r > \max_a|y_a|,
\end{equation}
and then choose the contours $|z|=R$, $|w|=r$.
Then the $w$-contour does not enclose points $z$ or $z^{-1}$ from the $z$-contour.
Moreover, the $z$-contour excludes all $x_b,x_b^{-1}$.
Finally,~\eqref{condition_R} implies $1>R>r>|y_a|$, for all $a$, so the $w$-contour encloses all $y_a$, but none of $y_a^{-1}$.
In the case when all remaining $\a^1,\dots,\a^{k-1},\b^2,\dots,\b^k$ are variable specializations such that all $H$ terms in~\eqref{second_contour} admit the analytic expansions \eqref{eq: H}, one can alternatively calculate the integral by using the Taylor expansion $\frac{1}{(1-zw)(1-z^{-1}w)}=\sum_{m,n\ge 0}{z^{m-n}w^{m+n}}$ and then picking up the residue at $z=0,w=0$.
Since this is true for variable specializations, it is true in general.

On the other hand, when $i<j$, note that the residue of the $w$-integral in~\eqref{first_contour} at $w=z$ is
\begin{equation*}
-\frac{1}{2\pi\mathrm{i}} \oint{ H(\a^{[i,j)};z) H(\b^{(i,j]};z) H(\b^{(i,j]};z^{-1}) \frac{\dd z}{z^{u-v+1}} }
= -(\accentset{\circ}{W}_{[2i-1,2j-1)})_{u,v}.
\end{equation*}
But also, when $i<j$, we have $K_{i,j}=W_{[2i-1,2k-1)}\Psi M^{-1}\Phi W_{[1,2j-1)} - \accentset{\circ}{W}_{[2i-1,2j-1)}$, therefore $K^{\SSP}(i,u;j,v)$ equals the double contour integral in \eqref{second_contour} with the caveat that the $w$-contour must now enclose the $z$-contour.
In this case, we can choose the contours $|z|=r$, $|w|=R$, where $1>R>r>0$ satisfy~\eqref{condition_R}.
As before, when $\a^1,\dots,\a^{k-1},\b^2,\dots,\b^k$ are suitable variable specializations, we can find the value of this integral by Taylor expanding around $z=0,w=0$, using $\frac{1}{(1-zw)(1-z^{-1}w)} = -\sum_{m,n\ge 0}{z^{m+n+1}w^{m-n-1}}$, and this must be also true when $\a^i,\b^j$ are general specializations.

Hence, we have proved~\eqref{second_contour}, up to an appropriate interpretation of the contours.\footnote{The formula in the theorem statement is the conjugate kernel $\tilde{K}^\SSP(i,u;j,v) := K^{\SSP}(j,v;i,u)$, which also serves as correlation kernel for the same point process, but is slightly more suitable for our asymptotic analysis in the upcoming sections.}
We have also proved the existence of the contours satisfying our desired specifications, completing the proof of the theorem in one of the four cases.
The formulas in the other three cases from~\eqref{K_cases} are obtained in a similar fashion.
\end{proof}

\section{Dynamics of the symplectic Schur process} \label{sec:dynamics ssp}

We show that our combinatorial identities from Sections~\ref{sec:symplectic_functions} and~\ref{sec:du_schur} can be applied to construct Markov dynamics preserving the family of symplectic Schur processes. Our construction is based on~\cite{B-2011}, where similar dynamics were constructed for Schur processes; that paper attributes its main ideas to~\cite{DF-1990}. Later, in \Cref{sec:Berele} we will consider a different Markov dynamics that still preserves (a variant of) the symplectic Schur processes.

\subsection{Dynamics on single partitions}

Let $x,y,z,t$ be four Schur-positive specializations of $\Lambda$.
Let us denote the union specializations by commas, e.g.~the union of specializations $x\cup y\cup z$ will be denoted simply by $(x,y,z)$.
For any Schur-positive specialization $\rho$, also define its \emph{support} by
\begin{equation*}
\Y(\rho) := \{ \la\in\Y : s_\la(\rho) > 0\}.
\end{equation*}

\begin{df}[Transition probabilities]\label{def:single_transitions}
When $H(y;z)<\infty$, define
\begin{equation}\label{eqn:up_arrow}
q_{\mu\la}^\uparrow(y;z) := \frac{s_\la(y)s_{\la/\mu}(z)}{s_\mu(y)H(y;z)},\quad\text{for }\mu,\la\in\Y(y).
\end{equation}
Also denote by $q^\uparrow(y;z)$ the infinite $\Y(y)\times\Y(y)$ matrix with entries $q_{\mu\la}^\uparrow(y;z)$.
Likewise, define
\begin{equation}\label{eqn:sim_arrow}
q_{\mu\la}^\rcurvearrowright(y;t) := \frac{s_\la(y)T_{\la,\mu}(t)}{s_\mu(y,t)H(y;t)},\quad\text{for }\mu\in\Y(y,t),\,\la\in\Y(y),
\end{equation}
when $H(y;t)<\infty$. Also denote by $q^\rcurvearrowright(y;t)$ the $\Y(y,t)\times\Y(y)$ matrix with entries $q_{\mu\la}^\rcurvearrowright(y;t)$.
\end{df}
We point out that $q^\uparrow(y;z)$ was defined in \cite[Sec.~9]{B-2011}, but $q^\rcurvearrowright(y;t)$ is new.

Since $y,z$ are Schur-positive, then $H(y;z)=\sum_{\la\in\Y}{s_\la(y)s_\la(z)}\ge s_\varnothing(y)s_\varnothing(z)=1$, so the denominators in~\eqref{eqn:up_arrow},~\eqref{eqn:sim_arrow} are strictly positive. Moreover, by the skew Cauchy identity~\eqref{eq:skew cauchy id} and \Cref{cor:new_skew_cauchy_thm}, we deduce:
\begin{equation}\label{eqn:prob_one}
\sum_{\la\in\Y(y)}{q_{\mu\la}^\uparrow(y;z)} = 1,\text{ for all $\mu\in\Y(y)$},\qquad\qquad\sum_{\la\in\Y(y)}{q_{\mu\la}^\rcurvearrowright(y;t)} = 1,\text{ for all $\mu\in\Y(y,t)$}.
\end{equation}
By Equation~\eqref{eqn:prob_one}, both $q^\uparrow(y;z)$ and $q^\rcurvearrowright(y;t)$ are stochastic matrices that lead to Markov chains $\Y(y)\dashrightarrow\Y(y)$ and $\Y(y,t)\dashrightarrow\Y(y)$, respectively. Thus, $q^\uparrow(y;z)$ and $q^\rcurvearrowright(y;t)$ map probability measures on $\Y(y)$ and $\Y(y,t)$, respectively, to probability measures on $\Y(y)$.
The symplectic Schur measures $\M^\SS(x;y)$ and $\M^\SS(x;y,t)$ can be regarded as probability measures on $\Y(y)$ and $\Y(y,t)$ because they contain the factors $s_\la(y)$ and $s_\la(y,t)$, respectively (recall \Cref{symplectic_measure}).
Moreover,
\begin{equation}\label{eqn:preserving_schurs}
\begin{aligned}
\M^\SS(x;y)q^\uparrow(y;z) &= \M^\SS(x,z;y),\\
\M^\SS(x;y,t)q^\rcurvearrowright(y;t) &= \M^\SS(x;y),
\end{aligned}
\end{equation}
follow from \Cref{branching_SP_thm} and \Cref{gen_cauchy_littlewood_prop_2}, respectively.

\subsection{Dynamics on tuples of partitions}

The following commutativity relations turn out to be essential for the construction of dynamics on tuples of partitions.

\begin{prop}\label{prop:commutativity}
We have the following equalities between stochastic matrices:
\begin{align*}
q^\uparrow(y;z_1)q^\uparrow(y;z_2) &= q^\uparrow(y;z_2)q^\uparrow(y;z_1),&\text{($\Y(y)\times\Y(y)$ matrices),}\\
q^\rcurvearrowright(y,t;z)q^\rcurvearrowright(y;t) &= q^\rcurvearrowright(y,z;t)q^\rcurvearrowright(y;z),&\text{($\Y(y,t,z)\times\Y(y)$ matrices),}\\
q^\uparrow(y,t;z)q^\rcurvearrowright(y;t) &= q^\rcurvearrowright(y;t)q^\uparrow(y;z),&\text{($\Y(y,t)\times\Y(y)$ matrices).}
\end{align*}
\end{prop}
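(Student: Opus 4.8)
The plan is to verify each of the three commutativity relations by expanding both sides using the defining formulas~\eqref{eqn:up_arrow}, \eqref{eqn:sim_arrow}, and then reducing everything to known Cauchy-type and branching identities for the Schur and down-up Schur functions. Throughout, I will use that $H$ is multiplicative in the sense $H(y;z_1)H(y;z_2)=H(y;(z_1,z_2))$, which follows immediately from the definition~\eqref{eq: H} and the additivity of power sums under unions of specializations.

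First, for the relation $q^\uparrow(y;z_1)q^\uparrow(y;z_2)=q^\uparrow(y;z_2)q^\uparrow(y;z_1)$, I would write out the $(\mu,\nu)$ entry of the left side as
\[
\sum_{\la}\frac{s_\la(y)s_{\la/\mu}(z_1)}{s_\mu(y)H(y;z_1)}\cdot\frac{s_\nu(y)s_{\nu/\la}(z_2)}{s_\la(y)H(y;z_2)}
=\frac{s_\nu(y)}{s_\mu(y)H(y;z_1)H(y;z_2)}\sum_\la s_{\la/\mu}(z_1)s_{\nu/\la}(z_2),
\]
and then apply the branching rule~\eqref{eq:branching rules schur} for skew Schur functions to identify $\sum_\la s_{\la/\mu}(z_1)s_{\nu/\la}(z_2)=s_{\nu/\mu}(z_1,z_2)$. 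Since this expression is manifestly symmetric in $z_1,z_2$ (being a symmetric function in the union of the two variable sets) and $H(y;z_1)H(y;z_2)$ is also symmetric in $z_1,z_2$, the two sides agree. This is the easy case; it is essentially the computation already in~\cite{B-2011}.

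For the second relation, involving three copies of the $\rcurvearrowright$-matrix, I would again expand the $(\mu,\nu)$ entry of $q^\rcurvearrowright(y,t;z)q^\rcurvearrowright(y;t)$, collect the scalar prefactors, and reduce the inner sum over the intermediate partition using the down-up formula~\eqref{ortho_formula_3} together with the skew Cauchy identity~\eqref{eq:skew cauchy id} and the branching rule for skew Schur functions — exactly the style of manipulation used in the proof of \Cref{branching_star} and \Cref{new_skew_cauchy_thm}. The target is to show the inner double sum equals a symmetric expression in the two specializations being interchanged (here $t$ and $z$), multiplied by the appropriate product of $H$-factors, so that the stated equality of the two orderings follows. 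The third relation, $q^\uparrow(y,t;z)q^\rcurvearrowright(y;t)=q^\rcurvearrowright(y;t)q^\uparrow(y;z)$, is handled the same way: expand both sides, cancel prefactors, and match the resulting sums over the intermediate partition using \Cref{branching_star} (or the skew down-up Cauchy identity \Cref{new_skew_cauchy_thm}) for the mixed term $s_{\bullet/\bullet}T_{\bullet,\bullet}$ and the ordinary branching/Cauchy identities for the pure Schur term.

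The main obstacle I anticipate is purely bookkeeping: in the second and third relations the scalar normalizing factors involve Schur functions evaluated at various \emph{unions} of the specializations $y,t,z$ (e.g. $s_\mu(y,t,z)$ versus $s_\mu(y,t)$ versus $s_\mu(y)$), and one must be careful that every such factor cancels correctly between the two sides, using $s_\mu(y,t)=\sum_\kappa s_\kappa(y)s_{\mu/\kappa}(t)$ and the multiplicativity of $H$. I expect no genuine conceptual difficulty — each relation collapses, after the right substitution, to an instance of an identity already proved in Sections~\ref{sec:symmetric functions background is numbered 2}--\ref{sec:du_schur}; it will be the case analysis of which specialization plays the role of the "new" variable set and the consequent matching of supports $\Y(\rho)$ that requires the most care.
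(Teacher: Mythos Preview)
Your proposal is correct and aligns with the paper's own (omitted) argument: the paper states that the proofs follow the template of \cite[Prop.~18]{B-2011} and that the key inputs are precisely the branching rule for down-up Schur functions (\Cref{branching_star}) and the skew down-up Cauchy identity (\Cref{new_skew_cauchy_thm}), which are exactly the identities you plan to invoke after expanding the matrix entries. The only minor streamlining is that for the second relation you can apply \Cref{branching_star} directly to the sum $\sum_\la T_{\nu,\la}(t)T_{\la,\mu}(z)$ rather than unwinding it via the down-up formula and skew Cauchy.
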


The first of these equalities is exactly the first item from~\cite[Prop.~18]{B-2011}.
The other two identities are new, but their proofs use the same idea as the proof of~\cite[Prop.~18]{B-2011}, so we omit them.
We simply point out that the key identities used are the branching rule for down-up Schur functions (\Cref{branching_star}) and the skew down-up Cauchy identity (\Cref{new_skew_cauchy_thm}).

Next, we aim to write down explicit formulas for the dynamics on tuples of partitions. Given Schur-positive specializations $x,y$ and fixed partitions $\mu,\nu$, define
\begin{align}
P_{x,y}(\mu,\nu \rcurvearrowright\,\uparrow \la) &:= \textrm{const}_1\cdot T_{\mu,\la}(x)\, s_{\la/\nu}(y),\label{trans_1}\\
P_{x,y}(\mu,\nu \downarrow\,\uparrow \la) &:= \textrm{const}_2\cdot s_{\mu/\la}(x)\, s_{\la/\nu}(y),\label{trans_2}\\
P_{x,y}(\mu,\nu \rcurvearrowright\,\rcurvearrowright \la) &:= \textrm{const}_3\cdot T_{\mu,\la}(x)\, T_{\nu,\la}(y),\label{trans_3}\\
P_{x,y}(\mu,\nu \downarrow\,\rcurvearrowright \la) &:= \textrm{const}_4\cdot s_{\mu/\la}(x)\, T_{\nu,\la}(y),\label{trans_4}
\end{align}
where the constants are positive real numbers chosen so that the right hand sides are probability measures on $\la$.
For this, we need to assume that the set of $\la$'s for which the right hand sides are nonzero is nonempty. Additionally, for \eqref{trans_1} and \eqref{trans_3}, we must assume $H(x;y)<\infty$ for the normalization constant to exist.
For example, by \Cref{new_skew_cauchy_thm}:
\begin{equation*}
\frac{1}{\textrm{const}_1} = \sum_\la {T_{\mu,\la}(x)\, s_{\la/\nu}(y)} = H(x,y) \sum_\rho {T_{\nu,\rho}(x)s_{\mu/\rho}(y)}.
\end{equation*}

Let $k\in\Z_{\ge 1}$ and $\a^1,\dots,\a^k,\b^1,\dots,\b^k$ be $(2k)$ Schur-positive specializations such that, if we set $\veca:=(\a^1,\dots,\a^k)$, $\vecb:=(\b^1,\dots,\b^k)$, the symplectic Schur process $\Pr^\SSP\big(\,\cdot \, \big|\, \veca,\vecb\, \big)$ is a probability measure (and not just a signed measure); see \Cref{sec:special_cases} for examples.
Given this setting, consider the set of tuples
\begin{multline}\label{def_X}
\mathcal{X}(\veca,\vecb) := \Big\{ (\lavec,\muvec) \ \big|\  \lavec = (\la^{(1)},\dots,\la^{(k)}),\, \muvec = (\mu^{(1)},\dots,\mu^{(k-1)}),\, \\
\textrm{such that } s_{\la^{(1)}}(\b^1)\cdot \prod_{j=1}^{k-1}{\left[ s_{\la^{(j)}/\mu^{(j)}}(\a^j)\, T_{\mu^{(j)},\la^{(j+1)}}(\b^{j+1}) \right]} \ne 0 \Big\}.
\end{multline}
Note that $\mathcal{X}(\veca,\vecb)$ contains the support of $\Pr^\SSP\big(\,\cdot \, \big|\, \veca,\vecb\, \big)$.

For convenience, let us omit the arrows on top of the tuples of partitions, so that elements of $\mathcal{X}(\veca,\vecb)$ will be denoted $(\la,\mu)$ and not $(\lavec,\muvec)$.
Given another Schur-positive specialization $\pi$, define the $\mathcal{X}(\veca,\vecb)\times\mathcal{X}(\veca,\vecb)$ stochastic matrix $\mathcal{Q}_\pi^\uparrow$ with entries
\begin{multline*}
\mathcal{Q}_\pi^\uparrow \Big( (\la,\mu), (\widetilde{\la},\widetilde{\mu}) \Big)
:= P_{\b^1,\pi}(\varnothing,\la^{(1)}\rcurvearrowright\,\uparrow\widetilde{\la}^{(1)})\\
\times\prod_{j=1}^{k-1}
\Big( P_{\a^j,\pi}( \widetilde{\la}^{(j)},\mu^{(j)} \downarrow\,\uparrow \widetilde{\mu}^{(j)} )\,
P_{\b^{j+1},\pi}( \widetilde{\mu}^{(j)},\la^{(j+1)} \rcurvearrowright\,\uparrow \widetilde{\la}^{(j+1)} ) \Big).
\end{multline*}
We point out that $\mathcal{Q}_\pi^\uparrow$ defines a Markov chain $\mathcal{X}(\veca,\vecb)\dashrightarrow\mathcal{X}(\veca,\vecb)$ that can be described as a sequential update: first use $\la^{(1)}$ to determine $\widetilde{\la}^{(1)}$, then use $\widetilde{\la}^{(1)}$ and $\mu^{(1)}$ to determine $\widetilde{\mu}^{(1)}$, then use $\widetilde{\mu}^{(1)}$ and $\la^{(2)}$ to determine $\widetilde{\la}^{(2)}$, and so on.

Next, let $\sigma$ be a specialization such that $\b^1=\sigma\cup\b'$, for some other specialization $\b'$, that will be denoted $\b'=:\b^1\setminus\sigma$.
Let us assume that both $\sigma$ and $\b^1\setminus\sigma$ are Schur-positive; this can be arranged, for example, by choosing $\sigma=\b^1$, since then $\b^1\setminus\sigma=\varnothing$ is the empty specialization.
Define also ${}^\#\vecb:=(\b^1\setminus\sigma,\b^2,\dots,\b^k)$ and $\mathcal{X}(\veca,{}^\#\vecb)$ via Equation~\eqref{def_X} with the obvious modification: $\b^1\setminus\sigma$ should replace $\b^1$.
Then define the $\mathcal{X}(\veca,\vecb)\times\mathcal{X}(\veca,{}^\#\vecb)$ stochastic kernel $\mathcal{Q}_\sigma^\rcurvearrowright$ with entries
\begin{multline*}
\mathcal{Q}_\sigma^\rcurvearrowright \Big( (\la,\mu), (\widetilde{\la},\widetilde{\mu}) \Big) := P_{\b^1\setminus\sigma,\,\sigma}(\varnothing,\la^{(1)}\rcurvearrowright\,\rcurvearrowright\widetilde{\la}^{(1)})\\
\times\prod_{j=1}^{k-1}
\Big( P_{\a^j,\sigma}( \widetilde{\la}^{(j)},\mu^{(j)} \downarrow\,\rcurvearrowright \widetilde{\mu}^{(j)} )\,
P_{\b^{j+1},\sigma}( \widetilde{\mu}^{(j)},\la^{(j+1)} \rcurvearrowright\,\rcurvearrowright \widetilde{\la}^{(j+1)} ) \Big).
\end{multline*}

\begin{thm}[Dynamics preserving symplectic Schur processes]
Assume that $k\in\Z_{\ge 1}$ and $\a^1,\dots,\a^k,\b^1,\dots,\b^k$ are as above; also set $\veca=(\a^1,\dots,\a^k)$, $\vecb=(\b^1,\dots,\b^k)$.
Let $\pi,\sigma,\b^1\setminus\sigma$ be three other specializations such that $\b^1=\sigma\cup(\b^1\setminus\sigma)$.

\begin{enumerate}[label=(\alph*)]

    \item Set $\veca^{\#} := (\a^1,\dots,\a^{k-1},\a^k\cup\pi)$.
    If $\pi$ is Schur-positive, then
    \begin{equation*}
    \Pr^\SSP\big(\,\cdot \, \big|\, \veca,\vecb\, \big)\, \mathcal{Q}_\pi^\uparrow
    = \Pr^\SSP\big(\,\cdot \, \big|\, \veca^{\#},\vecb\, \big).
    \end{equation*}

    \item Set
    ${}^{\#}\!\vecb:=(\b^1\setminus\sigma,\,\b^2,\dots,\b^k)$.
    If $\sigma$ and $\b^1\setminus\sigma$ are Schur-positive, then
    \begin{equation*}
    \Pr^\SSP\big(\,\cdot \, \big|\, \veca,\vecb\, \big)\, \mathcal{Q}_\sigma^\rcurvearrowright
    = \Pr^\SSP\big(\,\cdot \, \big|\, \veca,{}^\#\!\vecb\, \big).
    \end{equation*}

\end{enumerate}

\end{thm}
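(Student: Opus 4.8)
The plan is to reduce both intertwinings (a) and (b) to the single-partition statements \eqref{eqn:preserving_schurs}, using the mechanism of \cite{B-2011} (after \cite{DF-1990}): by construction, $\mathcal{Q}_\pi^\uparrow$ and $\mathcal{Q}_\sigma^\rcurvearrowright$ are sequential left-to-right couplings of the single-partition kernels $q^\uparrow(\cdot\,;\pi)$ and $q^\rcurvearrowright(\cdot\,;\sigma)$ propagated along the zigzag of the symplectic Schur process, and their consistency with the measure is governed by the commutation relations of \Cref{prop:commutativity}. I would carry out the reduction by induction on $k$.

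\textbf{Base case $k=1$.} Here the symplectic Schur process is the symplectic Schur measure, $\mathcal{Q}_\pi^\uparrow$ collapses to $q^\uparrow(\b^1;\pi)$ and $\mathcal{Q}_\sigma^\rcurvearrowright$ collapses to $q^\rcurvearrowright(\b^1\setminus\sigma;\sigma)$ (using $T_{\varnothing,\la}=s_\la$ from \Cref{prop_orth_functions}(b) and the normalizations computed via \Cref{cor:new_skew_cauchy_thm}), so (a) and (b) are precisely the two identities \eqref{eqn:preserving_schurs}, themselves consequences of \Cref{branching_SP_thm} and \Cref{gen_cauchy_littlewood_prop_2}.

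\textbf{Inductive step.} Assume the statements for $k-1$. By \Cref{prop:marginals}(ii), $\Pr^\SSP(\lavec,\muvec\mid\veca,\vecb)$ factors as the marginal of $(\la^{(2)},\dots,\la^{(k)},\mu^{(2)},\dots,\mu^{(k-1)})$ --- a symplectic Schur process with specializations $\a''=(\a^2,\dots,\a^k)$, $\b''=(\b^1\cup\b^2,\b^3,\dots,\b^k)$ --- times the conditional law $\nu_{\la^{(2)}}(\la^{(1)},\mu^{(1)})\propto s_{\la^{(1)}}(\b^1)\,s_{\la^{(1)}/\mu^{(1)}}(\a^1)\,T_{\mu^{(1)},\la^{(2)}}(\b^2)$ of $(\la^{(1)},\mu^{(1)})$ given $\la^{(2)}$. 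Likewise, directly from its definition, $\mathcal{Q}_\pi^\uparrow$ factors as a ``head'' piece times exactly the kernel $\mathcal{Q}_\pi^\uparrow$ of the reduced $(k-1)$-tuple process: the head contributes $P_{\b^1,\pi}(\varnothing,\la^{(1)}\rcurvearrowright\,\uparrow\widetilde\la^{(1)})\,P_{\a^1,\pi}(\widetilde\la^{(1)},\mu^{(1)}\downarrow\,\uparrow\widetilde\mu^{(1)})$ together with the ratio of $P_{\b^2,\pi}(\widetilde\mu^{(1)},\la^{(2)}\rcurvearrowright\,\uparrow\widetilde\la^{(2)})$ to the reduced kernel's first factor $P_{\b^1\cup\b^2,\pi}(\varnothing,\la^{(2)}\rcurvearrowright\,\uparrow\widetilde\la^{(2)})$. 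It then suffices to prove the \emph{local head identity}
\begin{multline*}
\sum_{\la^{(1)},\mu^{(1)}}\nu_{\la^{(2)}}(\la^{(1)},\mu^{(1)})\,P_{\b^1,\pi}(\varnothing,\la^{(1)}\rcurvearrowright\,\uparrow\widetilde\la^{(1)})\,P_{\a^1,\pi}(\widetilde\la^{(1)},\mu^{(1)}\downarrow\,\uparrow\widetilde\mu^{(1)})\\
\times\frac{P_{\b^2,\pi}(\widetilde\mu^{(1)},\la^{(2)}\rcurvearrowright\,\uparrow\widetilde\la^{(2)})}{P_{\b^1\cup\b^2,\pi}(\varnothing,\la^{(2)}\rcurvearrowright\,\uparrow\widetilde\la^{(2)})}=\nu_{\widetilde\la^{(2)}}(\widetilde\la^{(1)},\widetilde\mu^{(1)})\,;
\end{multline*}
granting it, the inner $(\la^{(1)},\mu^{(1)})$-sum equals $\nu_{\widetilde\la^{(2)}}(\widetilde\la^{(1)},\widetilde\mu^{(1)})$, which does not depend on $\la^{(2)}$ and hence factors out of the remaining summation, leaving $\sum m\cdot\mathcal{Q}^{\mathrm{red}}$; by the inductive hypothesis this equals $\Pr^\SSP$ for the reduced process with $\pi$ adjoined to its last specialization $\a^k$, which is precisely the marginal (via \Cref{prop:marginals}(ii)) of $\Pr^\SSP(\cdot\mid\veca^{\#},\vecb)$, so (a) follows.

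\textbf{The local head identity, and part (b).} To prove the displayed identity one substitutes the explicitly normalized forms of the elementary kernels \eqref{trans_1}--\eqref{trans_4}, cancels the factors $s_{\la^{(1)}}(\b^1)$, $s_{\la^{(2)}}(\b^1\cup\b^2)$ and the various $H$'s, then sums over $\la^{(1)}$ via the branching rule \eqref{eq:branching rules schur} --- which exactly cancels the normalizer $s_{\widetilde\la^{(1)}/\mu^{(1)}}(\a^1\cup\pi)$ of $P_{\a^1,\pi}$ --- and finally sums over $\mu^{(1)}$ using $\sum_{\rho}s_{\mu/\rho}(y)\,T_{\rho,\la}(z)=\sum_{\eta}s_{\mu/\eta}(y)\,T_{\la,\eta}(z)$, which follows immediately from the down-up formula \eqref{ortho_formula_3} and \eqref{eq:branching rules schur} (both sides equal $\sum_{\kappa}s_{\la/\kappa}(z)\,s_{\mu/\kappa}(y\cup z)$) and matches the normalizer of $P_{\b^2,\pi}$; this identity is the ``self-commutation'' underlying the third relation of \Cref{prop:commutativity}. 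Part (b) is proved by the identical induction with $\pi$ replaced by $\sigma$, the ``$\uparrow$'' moves replaced by the ``$\rcurvearrowright$'' moves, and $\sigma$ removed from $\b^1$ (hence from $\b^1\cup\b^2$ in the reduced process) rather than adjoined on top; the corresponding local head identity is obtained using \Cref{new_skew_cauchy_thm}, the skew Cauchy identity \eqref{eq:skew cauchy id}, and \Cref{cor:new_skew_cauchy_thm} in the same roles. The delicate point throughout --- indeed essentially the only one --- is this local head identity: tracking the four normalizing constants and verifying that after the two summations they collapse against one another, a ``telescoping'' of partition functions already visible in the proof of \Cref{partition_function_1}. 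The rest is routine bookkeeping: all specializations in play ($\veca,\vecb,\pi,\sigma,\b^1\setminus\sigma$) are Schur-positive, so every elementary kernel is a genuine probability transition on the relevant support $\mathcal{X}(\cdot,\cdot)$, all sums converge, and one may alternatively argue with formal power series as in \Cref{rem:tensors}.
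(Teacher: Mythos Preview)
Your argument is essentially correct, but it is organized quite differently from what the paper has in mind.

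The paper does not give a proof; it simply points to the abstract Diaconis--Fill/Borodin machine (\cite[Prop.~17]{B-2011}). The key structural observation there is the \emph{trajectory representation}
\[
\Pr^\SSP(\lavec,\muvec\mid\veca,\vecb)=\M^\SS\big(\la^{(k)}\mid\a^k,\b^{[1,k]}\big)\prod_{j=1}^{k-1}q^\rcurvearrowright_{\la^{(j+1)}\mu^{(j)}}\big(\b^{[1,j]};\b^{j+1}\big)\,q^\uparrow_{\mu^{(j)}\la^{(j)}}\big(\b^{[1,j]};\a^j\big),
\]
which exhibits the process as a Markov chain running from $\la^{(k)}$ leftward along the single-partition links $q^\rcurvearrowright,q^\uparrow$. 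One then checks that the elementary probabilities $P_{x,y}(\cdots)$ of \eqref{trans_1}--\eqref{trans_4} are exactly the ``bivariate'' updates prescribed by \cite[Prop.~17]{B-2011} from these links together with the level-wise kernels $q^\uparrow(\b^{[1,j]};\pi)$ (resp.\ $q^\rcurvearrowright(\b^{[1,j]};\sigma)$); the commutation relations of \Cref{prop:commutativity} are precisely the intertwining hypotheses of that proposition, and the conclusion follows at once. This is a right-to-left decomposition.

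You instead peel off the \emph{leftmost} pair $(\la^{(1)},\mu^{(1)})$ via \Cref{prop:marginals}(ii) and induct on $k$, reducing everything to a ``local head identity''. That identity is correct and your verification sketch is sound: the $\la^{(1)}$-sum is the branching rule, and the $\mu^{(1)}$-sum matches the normalizer of $P_{\b^2,\pi}$ computed from \Cref{new_skew_cauchy_thm}. One small over-statement: the identity $\sum_\rho s_{\mu/\rho}(y)T_{\rho,\la}(z)=\sum_\eta s_{\mu/\eta}(y)T_{\la,\eta}(z)$ you invoke is literally just the symmetry $T_{\rho,\la}=T_{\la,\rho}$ of \Cref{prop_orth_functions}(a); the nontrivial input sits in the normalizer computation itself (which uses \Cref{new_skew_cauchy_thm}), not in this equality. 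What your approach buys is an explicit self-contained computation that avoids citing the abstract formalism; what the paper's approach buys is modularity --- once the trajectory representation and \Cref{prop:commutativity} are in hand, both (a) and (b) fall out of a single black box without any further bookkeeping of normalizing constants.
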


The previous theorem is similar to \cite[Thms.~10--11]{B-2011}, so we do not present a proof here.
However, we comment that it is based on a general formalism stated in \cite[Prop.~17]{B-2011} (originating from ideas in \cite{DF-1990} and further generalized in \cite{BF-2014}) that pastes together the transition probabilities $q^\uparrow_{\mu\la}$, $q^\rcurvearrowright_{\mu\la}$ on single partitions to construct the transition probabilities $\mathcal{Q}^\uparrow_\pi\big( (\la,\mu), (\widetilde{\la},\widetilde{\mu}) \big)$, $\mathcal{Q}^\rcurvearrowright_\sigma\big( (\la,\mu), (\widetilde{\la},\widetilde{\mu}) \big)$ on tuples of partitions.
This general construction is possible whenever we have commutativity relations as in \Cref{prop:commutativity}.
In our particular case, the result follows because the probabilities from Equations~\eqref{trans_1}--\eqref{trans_4}, that are constituents of $\mathcal{Q}^\uparrow_\pi$ and $\mathcal{Q}^\rcurvearrowright_\sigma$, are certain products of the quantities $q^\uparrow_{\mu\la}$, $q^\rcurvearrowright_{\mu\la}$, while simultaneously the symplectic Schur process turns out to be distributed like a random trajectory starting from the symplectic Schur measure and fueled by the transition probabilities $q^\uparrow_{\mu\la}$ and $q^\rcurvearrowright_{\mu\la}$:
\begin{equation*}
\Pr^\SSP\!\Big(\lavec,\muvec \ \Big|\ \veca,\vecb \Big) = \M^\SS\Big( \la^{(k)} \,\Big|\, \a^{k}, 
\b^{[1,k]} \Big)\prod_{j=1}^{k-1}
\Big( q^\rcurvearrowright_{\la^{(j+1)}\mu^{(j)}}\big( \b^{[1,j]};\b^{j+1} \big)\,
q^\uparrow_{\mu^{(j)}\la^{(j)}}\big( \b^{[1,j]};\a^j \big) \Big).
\end{equation*}

The stochastic matrices $\mathcal{Q}_\pi^\uparrow$ can lead to growth processes and perfect sampling algorithms for some instances of the symplectic Schur process, see e.g.~\cite{B-2011,BG-2009}.

\section{Sampling the symplectic Schur process through the Berele insertion} \label{sec:Berele}

\subsection{Symplectic tableaux}

For any $n\in\Z_{\ge 1}$, we introduce the alphabet of $2n$ ordered symbols
\begin{equation*}
\mathsf{A}_n:=\{1 < \overline{1} < \cdots < n < \overline{n} \}.
\end{equation*}
The set of finite length words in this alphabet is denoted by $\mathsf{A}_n^*$, while the set of weakly increasing words $w=a_1 \dots a_k$, i.e.~such that $a_i \le a_{i+1}$ for all $1\le i <k$, is denoted by $\mathsf{A}_n^{\uparrow}$. Given a word $w \in \mathsf{A}_n^*$ we define its length $\ell(w)$ to be the number of its letters and its weight $\mathrm{wt}(w)$ to be the vector in $\Z^n$ with entries
\begin{equation*}
    \mathrm{wt}(w)_i := m_i(w) - m_{\overline{i}}(w), \qquad \text{for } i=1,\dots, n,
\end{equation*}
where $m_j(w)$ is the multiplicity of the letter $j$ in $w$.

A \emph{symplectic tableau} \cite{King75} $P$ of shape $\la$ is a filling of the Young diagram of $\la$ (which is a partition) with entries in $\mathsf{A}_n$ such that
\begin{enumerate}
    \item entries of $P$ are strictly increasing column-wise and weakly increasing row-wise;
    \item entries at row $r$ are all greater or equal than $r$, for all $1\le r \le n$.
\end{enumerate}
At times the second condition above is referred to as the \emph{symplectic condition}. We denote the set of symplectic tableaux with entries in $\mathsf{A}_n$ and shape $\la$ by $\mathrm{SpTab}_n(\la)$. The weight of a symplectic tableau $\mathrm{wt}(P)$ is the vector in $\Z^n$ defined by
\begin{equation*}
    \mathrm{wt}(P)_i := m_i(P) - m_{\overline{i}}(P), \qquad \text{for } i=1,\dots, n,
\end{equation*}
where $m_j(P)$ is the multiplicity of the letter $j$ in $P$. When $\ell(\la)\le n$, the symplectic Schur polynomial $\sp_\la(x_1^\pm,\dots,x_n^\pm)$ given in \Cref{def:schur poly} turns out to be the generating function of symplectic tableaux (see \cite{King75}):
\begin{equation}\label{sp_tableaux}
\sp_\la(x_1^\pm,\dots,x_n^\pm) = \sum_{P\in\,\mathrm{SpTab}_n(\la)} x^{\mathrm{wt}(P)},
\end{equation}
where $x^{\mathrm{wt}(P)}$ stands for the monomial $x_1^{\mathrm{wt}(P)_1}\cdots\, x_n^{\mathrm{wt}(P)_n}$.

\begin{rem}
Symplectic tableaux belonging to $\mathrm{SpTab}_n(\la)$ are in bijection with half-triangular interlacing arrays with $2n$ levels and top level being $\la$ (recall that half-triangular interlacing arrays or symplectic Gelfand-Tsetlin patterns were discussed in \Cref{subs:probabilistic interpretation}). In fact, this bijection associates $P\in\mathrm{SpTab}_n(\la)$ to the array
    \[
        \left( \mathsf{x}_{k,i} : 1\le i \le \lceil k/ 2\rceil \le n \right),
    \]
defined by
    \[
        \mathsf{x}_{k,i} := 
        \begin{cases}
            \#\,\text{of cells of $P$ at row $i$ with label $\le (k+1)/2$}, \qquad &\text{if $k$ is odd},
            \\
            \#\,\text{of cells of $P$ at row $i$ with label $\le \overline{k/2}$}, \qquad &\text{if $k$ is even}.
        \end{cases}
    \]
Then, in particular, it is clear that $\mathsf{x}_{2n,i}=\la_i$.
The array presented in the left panel of \Cref{fig:dynamics example} corresponds to the tableau
\begin{equation*}
    \ytableausetup{centertableaux}
    \begin{ytableau}
    1 & 1 & 1 & 2 & \overline{2} & \overline{2} & 3 & \overline{3} \\
    2 & \overline{2} & \overline{2} & 3 & 3 & \overline{3} \\
    3 & 3 & 3 & \overline{3}
    \end{ytableau}.
\end{equation*}
\end{rem}

\subsection{Berele insertion}

Given a symplectic tableau $P \in \mathrm{SpTab}_n(\la)$ and a letter $x\in\mathsf{A}_n$, there exists an operation, known as \emph{Berele insertion}~\cite{B-1986}, that produces a new symplectic tableau $P'$ and we will denote it using the notation $P'=(P\xleftarrow[]{\mathcal{B}}x)$. The Berele insertion is described by the following algorithm.
\begin{enumerate}
    \item Set $x_1=x$ and $r=1$.
    \item Find the smallest label in row $r$ of $P$ to be strictly greater than $x_r$. If such an element exists, call it $x_{r+1}$. If such an element does not exist, then create a new cell at the right of row $r$ of $P$ and allocate $x_r$ there, calling the resulting tableau $P'$.
    \item If $x_{r+1}$ is different than $\overline{r}$, then replace in row $r$ of $P$ the leftmost occurrence of $x_{r+1}$ with $x_r$ and repeat step (2) with $r\to r+1$.
    \item If $x_{r+1}=\overline{r}$, then necessarily $x_{r}=r$ and in such case relabel the leftmost $\overline{r}$-cell in row $r$ of $P$ with the symbol $\star$.
    \item If the tableau $P$ has a $\star$-cell, call its location $(i,j)$. If $(i,j)$ is a corner cell, meaning that both cells $(i+1,j)$ and $(i,j+1)$ fall outside of the shape of $P$, then erase cell $(i,j)$ and call the resulting tableau $P'$.
    \item If $(i,j)$ is not a corner cell and $P(i+1,j) \le P(i,j+1)$, then swap the labels of $P$ at cells $(i,j)$ and $(i+1,j)$. Vice versa, if $P(i+1,j) > P(i,j+1)$, then swap the labels of $P$ at cells $(i,j)$ and $(i,j+1)$. Here, by convention we give cells outside of the shape of $P$ the label $+\infty$. Finished this step, go back to step (5).
\end{enumerate}
It is clear that the previous algorithm terminates at some point either at some iteration of step (2) or at some iteration of step (5). In the latter case we say that the Berele insertion \emph{causes a cancellation}, since the shape of $P'$ will differ from that of $P$ by the removal of a corner cell. The procedure described in steps (5) and (6) of dragging out a $\star$ cell from a tableau takes the name of \emph{jeu de taquin extraction} \cite{Schutzemberger77}, \cite[Def.~3.7.2]{Sagan_book}.

Here is an example of a Berele insertion that does not cause a cancellation

\begin{equation*}
    \includegraphics[width=0.7\linewidth]{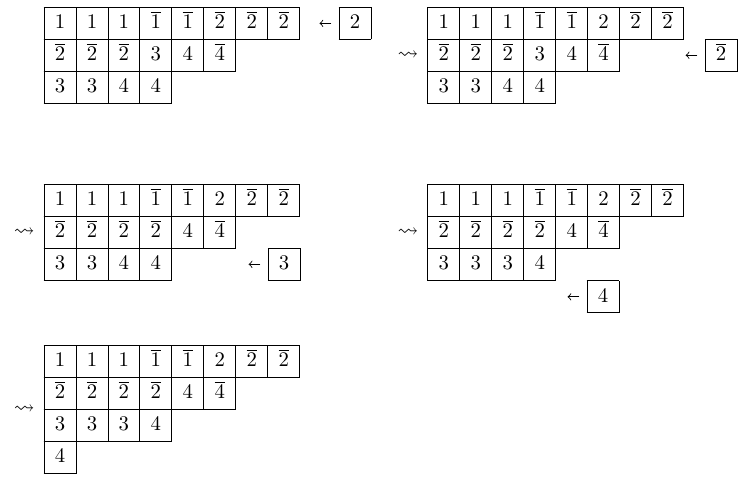}
\end{equation*}
while the next is an example of a Berele insertion that does cause a cancellation
\begin{equation*}
    \includegraphics[width=0.7\linewidth]{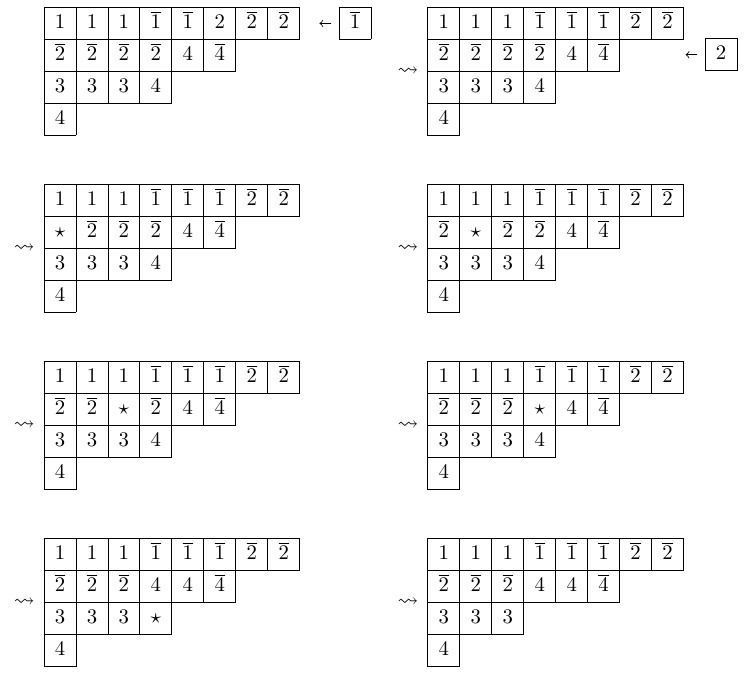}
\end{equation*}

We can extend the notion of Berele insertion to words. For a word $w = a_1 \dots a_k \in \mathsf{A}_n^*$ and a symplectic tableau $P$ we define
\begin{equation*}
    P\xleftarrow[]{\mathcal{B}}w := ( \,\cdots\, (P\xleftarrow[]{\mathcal{B}} a_1 ) \,\cdots\, ) \xleftarrow[]{\mathcal{B}} a_k.
\end{equation*}

\subsection{Bijective proof of Corollary~\ref{gen_cauchy_littlewood_prop}}

In this subsection, we present a proof of Equation~\eqref{eqn:generalization_CL_identity} by means of the Berele insertion algorithm: the argument essentially follows the work of Sundaram \cite{S-1990}.
By the branching rule of the down-up Schur functions (\Cref{branching_star}), we only need to verify the identity \eqref{eqn:generalization_CL_identity} in the simplest case where the alphabet $y$ consists of a single variable denoted by $y$. In this case, the desired identity is
\begin{equation*}
\sp_\la(x^\pm)\cdot\frac{1}{\prod_{i=1}^n(1-x_i y) (1-x_i^{-1}y )} = \sum_{\rho\,:\,\ell(\rho)\le\ell(\la)+1} \sp_{\rho}(x^\pm) T_{\rho,\la}(y) ,
\end{equation*}
where $\ell(\la)\le n-1$ and $T_{\rho,\la}$ is, by \Cref{combinatorial_S_star},
\begin{equation*}
T_{\rho,\la}(y) = \sum_{\kappa} s_{\la/\kappa}(y) s_{\rho/\kappa}(y)
= \sum_{\kappa\,:\,\la/\kappa,\,\rho/\kappa \ \text{hor.~strips}} y^{|\la/\kappa| + |\rho/\kappa|}.
\end{equation*}
In one line, the desired identity is therefore
\begin{equation}\label{eq:CI_simplest}
\sp_\la(x^\pm)\cdot\frac{1}{\prod_{i=1}^n(1-x_i y) (1-x_i^{-1}y )}
= \sum_{\rho,\kappa\,:\,\la/\kappa,\,\rho/\kappa \ \text{hor.~strips}} {\sp_\rho(x^\pm)\,y^{|\la/\kappa| + |\rho/\kappa|}},
\end{equation}
for any $\la$ with $\ell(\la)\le n-1$.
To produce a bijective proof of \eqref{eq:CI_simplest}, we regard the left hand side as the generating function of pairs $(P,w)$, where $P$ is a symplectic tableau of shape $\la$ and $w\in \mathsf{A}_n^\uparrow$ is an increasing word coming from the expansion
\[
    \frac{1}{\prod_{i=1}^n(1-x_i y) (1-x_i^{-1}y )} = \sum_{m_1, m_{\overline{1}} \ge 0 } \cdots \sum_{ m_n, m_{\overline{n}} \ge 0} \prod_{i=1}^n (x_i y)^{m_i } (x_i^{-1} y)^{ m_{\overline{i}} } = \sum_{w \in \mathsf{A}_n^\uparrow} \prod_{i=1}^n x_i^{\mathrm{wt}_i(w)}\cdot y^{\ell(w)}.
\]
On the other hand, the right hand side can be regarded as the generating function of triples $(P',\la/\kappa,\rho/\kappa)$, where $P'$ is a symplectic tableau of some shape $\rho$ and $\la/\kappa,\rho/\kappa$ are horizontal strips, for some $\kappa$. (Recall that a skew Young diagram $\la/\kappa$ is a horizontal strip if $\kappa$ and $\lambda$ interlace, i.e. $\lambda_1 \ge \kappa_1 \ge \lambda_2 \ge \kappa_2 \ge \cdots$.)
Using the Berele insertion algorithm we obtain the following result that, together with~\eqref{sp_tableaux}, immediately proves the desired Equation~\eqref{eq:CI_simplest}.

\begin{prop}\label{prop:bijection}
Given any partition $\la$ with $\ell(\la)\le n-1$, there exists a bijection
\begin{equation}\label{eqn:bijection}
(P,w) \leftrightarrow (P', \kappa, \rho)
\end{equation}
between the following sets of tuples. On the left side of~\eqref{eqn:bijection}, we have a symplectic tableau $P\in \mathrm{SpTab}_n(\la)$ and a weakly increasing word $w\in\mathsf{A}_n^{\uparrow}$. On the right side, we have two partitions $\kappa,\rho$ such that $\la/\kappa,\,\rho/\kappa$ are both horizontal strips and $P'\in\mathrm{SpTab}_n(\rho)$ is of shape $\mathrm{sh}(P')=\rho$.
Moreover,
\begin{equation}\label{restrictions_w}
\mathrm{wt}(P) + \mathrm{wt}(w) = \mathrm{wt}(P'),\qquad\ell(w)=|\la/\kappa|+|\rho/\kappa|,
\end{equation}
and $P'=(P\xleftarrow[]{\mathcal{B}}w)$ is the result of applying the Berele insertion to insert $w$ to $P$.
\end{prop}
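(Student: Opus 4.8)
The plan is to construct the bijection~\eqref{eqn:bijection} by iterating the single-letter Berele insertion and carefully tracking how the shape evolves. Given $(P,w)$ with $w = a_1 \cdots a_\ell \in \mathsf{A}_n^{\uparrow}$, set $P^{(0)} := P$ and $P^{(t)} := (P^{(t-1)} \xleftarrow{\mathcal{B}} a_t)$ for $1 \le t \le \ell$, so that $P' := P^{(\ell)} = (P \xleftarrow{\mathcal{B}} w)$. Each insertion step either adds a box (no cancellation) or removes a box (cancellation). Because $w$ is weakly increasing, I will first prove the key structural claim: \emph{along the sequence $P^{(0)}, P^{(1)}, \dots, P^{(\ell)}$, all box-additions occur before all box-cancellations}, and moreover the set of added boxes forms a horizontal strip and the set of cancelled boxes also forms a horizontal strip. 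Granting this claim, if $P$ has shape $\la$ and $j$ additions and $\ell - j$ cancellations occur, the added boxes carve out an intermediate shape $\nu \supseteq \la$ with $\nu/\la$ a horizontal strip of size $j$, and the cancelled boxes take $\nu$ down to $\rho := \mathrm{sh}(P')$ with $\nu/\rho$ a horizontal strip of size $\ell - j$. Setting $\kappa := \rho \cap \la$ (equivalently, the partition whose diagram is $\nu$ minus \emph{both} strips read appropriately), one checks $\nu = \la \cup \rho$ and that $\la/\kappa$, $\rho/\kappa$ are the two horizontal strips; then $|\la/\kappa| + |\rho/\kappa| = j + (\ell-j) = \ell = \ell(w)$, giving the second equality in~\eqref{restrictions_w}. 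The weight identity $\mathrm{wt}(P) + \mathrm{wt}(w) = \mathrm{wt}(P')$ is immediate from the fact that each single Berele insertion of a letter $a$ either adds one cell labelled in a way contributing $+1$ to the $a$-weight coordinate, or else triggers a jeu-de-taquin cancellation that removes a pair of cells whose labels are $i$ and $\overline{i}$ for some $i$ (so their net contribution to $\mathrm{wt}$ is zero), while the inserted letter still contributes its weight — this is exactly the weight-preservation property of Berele's algorithm proved in~\cite{S-1990, B-1986}.

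For the structural claim, the mechanism is the following. Suppose inserting $a_t$ causes a cancellation, producing a $\star$-cell that gets dragged out by jeu de taquin. I claim that then inserting the next letter $a_{t+1} \ge a_t$ into $P^{(t)}$ also causes a cancellation (or, more precisely, that no addition can follow a cancellation). The point is that a cancellation happens precisely when the insertion path reaches a row $r$ with $x_r = r$ and the row contains $\overline r$; after jeu de taquin, $P^{(t)}$ retains enough of this "saturated" structure in the relevant rows that a weakly larger letter re-enters the same regime. This is the heart of Sundaram's analysis~\cite[Lemma~?]{S-1990}, and I would reproduce or cite her argument: the weakly increasing hypothesis on $w$ is exactly what forces the monotone separation of additions and cancellations, because a strictly smaller letter could "reset" the insertion to a lower row and produce an addition after a cancellation. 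The horizontal-strip property of the added (resp.\ cancelled) boxes likewise follows from the weakly increasing order: successive insertions of weakly increasing letters add boxes in weakly-left-to-weakly-higher fashion, landing at most one per column, which is the defining property of a horizontal strip; dually for cancellations via jeu de taquin.

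To finish, I construct the inverse map. Given $(P', \kappa, \rho)$ with $\la/\kappa$, $\rho/\kappa$ horizontal strips and $P' \in \mathrm{SpTab}_n(\rho)$: we must recover the word $w$ (of length $|\la/\kappa| + |\rho/\kappa|$) and the tableau $P$ of shape $\la$. The idea is to run Berele insertion in reverse. First, reverse the $|\rho/\kappa|$ cancellations: each reverse-cancellation is a reverse jeu-de-taquin slide that \emph{inserts} a $\star$-cell at a prescribed corner of $\rho$ (read off from the horizontal strip $\rho/\kappa$ in the appropriate order, largest row index first), slides it up to row $r$, converts it back to an $\overline r$, and reverse-bumps down to spit out a letter — these reconstruct the letters $a_{j+1}, \dots, a_\ell$. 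Then reverse the $j = |\la/\kappa|$ additions: peel off the boxes of the strip (now the strip relating the intermediate shape $\nu$ to $\la$), running reverse row-insertion to extract the remaining letters $a_j, \dots, a_1$. One verifies that the letters come out in weakly decreasing order of extraction, i.e.\ weakly increasing as a word, so that $w \in \mathsf{A}_n^{\uparrow}$, and that the two maps are mutually inverse by the invertibility of each elementary step (row-insertion and jeu de taquin slides are individually reversible). I expect \textbf{the main obstacle} to be the precise bookkeeping showing that the cancellation steps really do come last and that reverse jeu de taquin can always be initiated at the correct cells dictated by the horizontal strip $\rho/\kappa$ — i.e.\ that the combinatorial data $(\kappa,\rho)$ encodes \emph{exactly} the right amount of information to undo the cancellations uniquely. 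This is where the symplectic condition (entries in row $r$ are $\ge r$) and the weakly-increasing hypothesis interact most delicately, and it is the step that genuinely requires Sundaram's careful argument rather than a formal manipulation.
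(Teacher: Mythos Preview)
Your structural claim has the order reversed, and this breaks the argument. The correct statement (which is what Sundaram's lemmas in \cite{S-1990} actually give, and what the paper records as \Cref{lem:properties_Berele_deletion} and \Cref{lem:properties_Berele_creation}) is that when inserting a weakly increasing word, \emph{all cancellations occur before all additions}, not the other way around. Concretely, Lemma~\ref{lem:properties_Berele_creation}(1) says: if inserting $a_t$ does not cause a cancellation, then inserting $a_{t+1}\ge a_t$ does not either. Equivalently, no cancellation can follow an addition. Your claim ``no addition can follow a cancellation'' is false already for $n=1$: take $P=\boxed{\,\bar 1\,}$ and $w=1\,1$; the first insertion cancels (shape $(1)\to\varnothing$) and the second adds (shape $\varnothing\to(1)$). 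So the shape sequence is
\[
\la=\mu^{(0)}\supset\mu^{(1)}\supset\cdots\supset\mu^{(s)}\subset\cdots\subset\mu^{(\ell)}=\rho,
\]
and $\kappa$ is defined to be the \emph{intermediate minimum} $\mu^{(s)}$, not $\rho\cap\la$.

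This matters for the bijection. In the paper's (correct) picture, $\kappa$ is literally one of the shapes visited, so $(\la,\kappa,\rho)$ together with the horizontal-strip structure determines the entire shape sequence: remove the boxes of $\la/\kappa$ east-to-west, then add the boxes of $\rho/\kappa$ west-to-east. In your picture, you pass through a maximum $\nu$ with $\la\subset\nu\supset\rho$, and then try to recover $\kappa:=\rho\cap\la$. Even granting your (false) ordering, this would not be invertible: from $(P',\kappa,\rho)$ you cannot reconstruct $\nu$, since different maxima $\nu$ can yield the same intersection $\rho\cap\la$, and there is no reason $\la/(\rho\cap\la)$ and $\rho/(\rho\cap\la)$ should be horizontal strips just because $\nu/\la$ and $\nu/\rho$ are. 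Your inverse construction inherits the same reversal: you should first undo the additions (reverse row-insertion on the boxes of $\rho/\kappa$, east to west) and only then undo the cancellations (reverse jeu de taquin on the boxes of $\la/\kappa$, west to east). Once the direction is corrected, the rest of your outline matches the paper's proof.
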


\begin{cor}\label{cor:bounded_length}
If $\ell(\la)\le n-1$, $P\in\mathrm{SpTab}_n(\la)$, $w\in\mathsf{A}_n^\uparrow$ and $P'=(P\xleftarrow[]{\mathcal{B}}w)$, then the shape $\mathrm{sh}(P')=\rho$ satisfies $\ell(\rho)\le\ell(\la)+1$.
\end{cor}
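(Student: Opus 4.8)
The plan is to deduce \Cref{cor:bounded_length} directly from the bijection of \Cref{prop:bijection}, reducing the statement to a purely combinatorial fact about horizontal strips. Given $P\in\mathrm{SpTab}_n(\la)$ with $\ell(\la)\le n-1$ and $w\in\mathsf{A}_n^{\uparrow}$, \Cref{prop:bijection} produces partitions $\kappa,\rho$ with $\rho=\mathrm{sh}(P')=\mathrm{sh}(P\xleftarrow[]{\mathcal{B}}w)$ such that both $\la/\kappa$ and $\rho/\kappa$ are horizontal strips. So it suffices to bound $\ell(\rho)$ in terms of $\ell(\la)$ using only the interlacing inequalities that characterise horizontal strips.

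First I would record that $\la/\kappa$ being a horizontal strip means $\la_1\ge\kappa_1\ge\la_2\ge\kappa_2\ge\cdots$, so in particular $\kappa\subseteq\la$ and hence $\ell(\kappa)\le\ell(\la)$. Next, $\rho/\kappa$ being a horizontal strip means $\rho_1\ge\kappa_1\ge\rho_2\ge\kappa_2\ge\cdots$, hence $\rho_{i+1}\le\kappa_i$ for every $i\ge 1$; taking $i=\ell(\kappa)+1$ gives $\rho_{\ell(\kappa)+2}\le\kappa_{\ell(\kappa)+1}=0$, so $\ell(\rho)\le\ell(\kappa)+1$. Chaining the two inequalities yields $\ell(\rho)\le\ell(\kappa)+1\le\ell(\la)+1$, which is the claim. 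The hypothesis $\ell(\la)\le n-1$ is what allows \Cref{prop:bijection} to be applied in the first place and guarantees that $P'\in\mathrm{SpTab}_n(\rho)$, so $\ell(\rho)\le n$ is automatic; the content of the corollary is the sharper bound $\ell(\la)+1$, which is exactly the horizontal-strip count above.

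An alternative route, should one wish to avoid invoking the full bijection, is a direct induction on $\ell(w)$: a single Berele insertion step changes the shape either by adjoining one box or by deleting one corner box, and inserting a weakly increasing word makes the boxes that survive (rather than being later cancelled) occupy pairwise distinct columns — the Berele-insertion analogue of the classical fact that row-inserting a weakly increasing word adds a horizontal strip. This again exhibits $\rho$ as an intermediate partition $\kappa\subseteq\la$ with a horizontal strip added, and the same length count closes the argument. I expect the only delicate point in this direct approach to be the bookkeeping of cancellations, namely verifying that a created box is never re-created in the same column after being cancelled; passing through \Cref{prop:bijection} sidesteps this entirely, so that is the route I would take.
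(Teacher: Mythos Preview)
Your proposal is correct and follows exactly the route the paper intends: the corollary is stated immediately after \Cref{prop:bijection} with no separate proof, precisely because it is the horizontal-strip length count you wrote out, namely $\ell(\rho)\le\ell(\kappa)+1\le\ell(\la)+1$.
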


The proof of the proposition is based on the next technical statements about the Berele insertion.

\begin{lem}\label{lem:properties_Berele_deletion}
    Let $x\le x'\in \mathsf{A}_n$ and let $P$ be a symplectic tableau. Call $P'=(P\xleftarrow[]{\mathcal{B}}x)$ and  assume that the insertion $P' \xleftarrow[]{\mathcal{B}}x'$ causes a cancellation. Then
    \begin{enumerate}[label=(\arabic*)]
        \item the insertion $P\xleftarrow[]{\mathcal{B}}x$ also causes a cancellation;
        \item call $\gamma$ and $\gamma'$ the jeu de taquin paths followed by the $*$-cells during the insertion of $x$ and $x'$ respectively in $P$ and $P'$. Then $\gamma'$ lies weakly southwest of $\gamma$ and the endpoint of $\gamma'$ lies strictly to the west of the endpoint of $\gamma$. 
    \end{enumerate}
\end{lem}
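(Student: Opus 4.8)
\textbf{Proof plan for Lemma~\ref{lem:properties_Berele_deletion}.}
The plan is to analyze the Berele insertion as a two-stage process: a bumping stage (steps (1)--(4) of the algorithm), which produces either an enlarged tableau or a tableau with a $\star$-cell, followed by a jeu-de-taquin extraction stage (steps (5)--(6)), which drags the $\star$-cell out along a path and removes a corner. The key structural fact I would isolate first is a \emph{row-by-row monotonicity} statement: if we insert $x\le x'$ into the \emph{same} row of tableaux $P\subseteq P'$ (meaning $P'$ is obtained from $P$ by the earlier insertion, so their rows differ in a controlled way), then the bumped element out of that row for $x$ is $\le$ the bumped element out of that row for $x'$, and the column where bumping happens for $x'$ is weakly to the right of that for $x$. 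This is the standard RSK-type "insertion paths move right" lemma, and the only novelty in the symplectic setting is to check it survives the passage to the $\star$-symbol: one verifies that if the insertion of $x$ triggers a $\star$ in row $r$ (so $x_r=r$ and the bumped symbol would be $\overline r$), then either the insertion of $x'$ also triggers a $\star$ in row $r$ or in some later row, or else $x'$'s bumping column is still weakly right of $x$'s. I expect the bookkeeping here to be the bulk of the work, but it is routine casework on the relative positions of $r$, $\overline r$, $x$, $x'$ in the relevant rows.

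Given that monotonicity for the bumping stage, part (1) follows quickly: suppose $P' \xleftarrow{\mathcal B} x'$ causes a cancellation but $P \xleftarrow{\mathcal B} x$ does not. Then the bumping route of $x$ in $P$ reaches the end of some row and appends a new cell, never producing a $\star$. By the row-by-row comparison (inducting on the row index, using $x\le x'$ and the fact that $P'$ is "ahead of" $P$ by one insertion), the bumping route of $x'$ in $P'$ is weakly to the right of that of $x$ at every row; in particular $x'$ cannot produce a $\star$-cell either, since producing a $\star$ in row $s$ requires the route to pass through an interior position carrying $\overline s$ while $x$'s route has already exited row $s$ at or before that column — a contradiction with the comparison once one checks the symplectic condition forces the relevant inequality. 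Thus $x'$ does not cause a cancellation, contradicting the hypothesis. (One should phrase this carefully as a statement about the "insertion paths" of the bumping stage and then note a cancellation occurs iff that path produces a $\star$.)

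For part (2), I would run the comparison through the jeu-de-taquin extraction stage. Let $(i_\star,j_\star)$ and $(i_\star',j_\star')$ be the locations where the $\star$-cells are created during the insertions of $x$ into $P$ and of $x'$ into $P'$, respectively. By the bumping-stage monotonicity just established, $(i_\star',j_\star')$ lies weakly southwest of $(i_\star,j_\star)$ in the appropriate sense — more precisely, $i_\star' \ge i_\star$ and the $\star$ for $x'$ is created after the tableau has already absorbed the $x$-insertion, so its column is controlled. Now I invoke the standard comparison lemma for jeu-de-taquin slides (e.g.\ the reverse-slide version of \cite[Def.~3.7.2]{Sagan_book}): sliding a hole out of a larger tableau from a weakly-southwest starting box yields a path $\gamma'$ that stays weakly southwest of the path $\gamma$ from the smaller tableau, and the exit corners are distinct with $\gamma'$'s corner strictly west of $\gamma$'s. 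The strictness of "strictly to the west" at the endpoint is exactly the familiar statement that two successive reverse jdt extractions remove corners in distinct columns, with the later one further left; I would cite or reprove this via the "lattice-path non-crossing" picture. The main obstacle throughout is ensuring the $\star$-symbol and the symplectic (row $r$ has entries $\ge r$) condition do not break the classical monotonicity arguments — once that is checked row by row, both parts are immediate consequences of the type-$A$ facts. I would therefore structure the write-up as: (i) precise definition of insertion path including the $\star$-creation site; (ii) the row-comparison lemma with its casework; (iii) deduction of (1); (iv) the jdt-path comparison lemma; (v) deduction of (2).
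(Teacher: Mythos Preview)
The paper does not actually prove this lemma; its proof consists entirely of the citation ``(1) is proven in \cite[Lem.~3.2]{S-1990}, while (2) is proven in \cite[Lem.~3.3]{S-1990}.'' Your outline is precisely Sundaram's argument in that reference: establish a row-by-row monotonicity lemma for the bumping stage, use it to deduce (1) by contrapositive, then run a jeu-de-taquin path comparison for (2). So your plan matches the cited proof rather than diverging from it.

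One remark on your phrasing for part (1). The sentence ``$x'$'s route is weakly to the right of $x$'s, hence cannot produce a $\star$'' is not quite the right mechanism: column position alone does not preclude bumping $\overline{s}$. What you actually need is monotonicity of the \emph{bumped elements}. If $x'$ creates a $\star$ at row $s$, then $x'_s=s$ and the element it bumps is $\overline{s}$. The comparison $x_s\le x'_s=s$ together with the symplectic bound $x_s\ge s$ (which you should isolate as its own inductive lemma) forces $x_s=s$. Since row $s$ of $P'$ differs from row $s$ of $P$ only by replacing one element (namely the element $x_s$ bumped, which is $>s$) by $s$, the smallest entry $>s$ in row $s$ of $P'$ being $\overline{s}$ forces the smallest entry $>s$ in row $s$ of $P$ to be $\overline{s}$ as well; hence $x$ also stars at row $s$, contradiction. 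This is the casework you correctly flagged as ``the bulk of the work,'' and once stated this way the argument is clean. For part (2), your claim $i_\star'\ge i_\star$ is correct (rows $<i_\star$ are untouched by the jdt extraction, so the part-(1) argument applies there verbatim), and the rest is indeed the standard jdt non-crossing comparison.
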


\begin{proof}
The statement (1) is proven in \cite[Lem.~3.2]{S-1990}, while (2) is proven in \cite[Lem.~3.3]{S-1990}.
\end{proof}

\begin{lem}\label{lem:properties_Berele_creation}
    Let $x\le x'\in \mathsf{A}_n$ and let $P$ be a symplectic tableau. Call $P'= (P\xleftarrow[]{\mathcal{B}}x)$ and assume that the insertion $P \xleftarrow[]{\mathcal{B}}x$ does not cause cancellations. Then
    \begin{enumerate}[label=(\arabic*)]
        \item the insertion $P'\xleftarrow[]{\mathcal{B}}x'$ also does not cause a cancellation;
        \item call $p$ and $p'$ the cells added during the insertion of $x$ and $x'$ respectively in $P$ and $P'$. Then $p'$ lies strictly east of $p$. 
    \end{enumerate}
\end{lem}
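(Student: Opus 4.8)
\textbf{Proof plan for \Cref{lem:properties_Berele_creation}.} The statement is the ``dual'' of \Cref{lem:properties_Berele_deletion}, concerning the bumping-only (cancellation-free) case, and the plan is to mimic the standard RSK row-insertion monotonicity arguments, adapted to the symplectic setting. First I would recall the classical fact for ordinary RSK row insertion: if $x \le x'$ and one inserts $x$ into a semistandard tableau $P$ to obtain $P'$, then inserting $x'$ into $P'$ proceeds ``weakly to the right'' of the bumping route of $x$, and the new box added by $x'$ lies strictly east of the new box added by $x$. The Berele insertion coincides with ordinary (semistandard) row insertion as long as no $\overline{r}$ is encountered that triggers the $\star$-cell mechanism (steps (4)--(6)), so the entire content of the lemma is to check that (a) the hypothesis ``no cancellation for $x$'' forces no cancellation for $x'$, and (b) in the absence of cancellations the classical monotonicity of bumping routes carries over verbatim.

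The key steps, in order, would be: (i) Track the two bumping routes $R(x)$ in $P$ and $R(x')$ in $P'$ simultaneously, row by row, proving by induction on the row index $r$ that the element bumped from row $r$ during the insertion of $x'$ is $\ge$ the element bumped from row $r$ during the insertion of $x$, and that the column at which $x'$ bumps in row $r$ is $\ge$ the column at which $x$ bumped in row $r$ (this is the classical ``routes do not cross, $x'$ stays weakly east'' lemma; the proof is the usual two-line comparison of where a letter settles in a weakly increasing row). (ii) Use this to show that the insertion of $x'$ can never reach step (4): the step-(4) trigger requires $x_r = r$ and the first entry of row $r$ strictly greater than $x_r$ to equal $\overline{r}$; since the insertion of $x$ did \emph{not} trigger this (by hypothesis, $x$ caused no cancellation, so every bump was an ordinary bump, step (3)), and since the letter bumped into row $r$ during the $x'$-insertion is $\ge$ the corresponding letter for $x$ while the row of $P'$ differs from the row of $P$ only by one ordinary bump, one checks the $\overline{r}$-trigger cannot be newly created --- this needs the observation that the bumped letter for $x'$ being $\ge r$ and the relevant row entries having only moved ``rightward/larger'' means that if $\overline r$ were the smallest entry $> x_r'$ in that row of $P'$, then $\overline r$ would already have been the smallest entry $> x_r$ in that row of $P$, contradicting the no-cancellation hypothesis for $x$. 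Hence $x'$ also causes no cancellation, proving (1). (iii) With both insertions now genuinely ordinary semistandard row insertions, invoke (or re-prove in a sentence) the classical statement that the new box $p'$ added by $x'$ lies strictly east of the new box $p$ added by $x$ --- this is immediate from the row-by-row comparison in step (i), since $R(x')$ stays weakly east of $R(x)$ at every row and the two routes terminate by creating a new box at the end of a row, so $p'$ is in a row at least as high (in the English convention, weakly above) with column strictly larger, giving ``strictly east''.

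The main obstacle I anticipate is step (ii): verifying that the absence of the $\star$-cancellation for $x$ genuinely propagates to $x'$. The subtlety is that the row of $P'$ into which $x'$ is inserted has been modified by the $x$-insertion, so one cannot directly compare it to the row of $P$; one must argue that the modification (replacing the leftmost occurrence of some letter $z$ by a smaller letter $x_r$) cannot create a new ``$\overline r$ is the smallest entry exceeding the incoming letter'' configuration, given that the incoming letter for $x'$ dominates that for $x$. This is a short but careful case analysis on whether the incoming $x'$-letter falls before, at, or after the position that was altered. Once this is settled, everything else is the standard bumping-route monotonicity, which I would state cleanly and cite to, e.g., \cite[Ch.~7]{Sagan_book} or prove in two lines, rather than grinding through. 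I would also remark that \Cref{lem:properties_Berele_deletion} and \Cref{lem:properties_Berele_creation} together cover the dichotomy used in the proof of \Cref{prop:bijection}: either both $x$ and $x'$ cancel (handled by the former) or neither does (handled by the latter), and the mixed case ``$x$ does not cancel but $x'$ does'' is exactly what part (1) here rules out.
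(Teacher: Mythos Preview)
Your approach to part (2) is correct and coincides with the paper's: once both insertions are cancellation-free they are ordinary Schensted insertions, and the classical RSK row-bumping monotonicity (e.g.\ \cite[Lem.~7.11.2]{S-1999}) gives the ``strictly east'' conclusion immediately.

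For part (1), however, you are working much harder than necessary. The statement is the \emph{contrapositive} of \Cref{lem:properties_Berele_deletion}(1): that lemma says that if $P'\xleftarrow{\mathcal B} x'$ causes a cancellation then so does $P\xleftarrow{\mathcal B} x$; equivalently, if $P\xleftarrow{\mathcal B} x$ does not cancel then neither does $P'\xleftarrow{\mathcal B} x'$. That is exactly (1). The paper simply says ``(1) follows from \Cref{lem:properties_Berele_deletion}(1)'' and moves on. You even note at the end that the two lemmas together rule out the mixed case, but you do not notice that this logical relationship already \emph{is} the proof of (1). Your step (ii) --- the row-by-row analysis of why the $\overline{r}$-trigger cannot be newly created in $P'$ --- is therefore entirely avoidable, and the ``main obstacle'' you anticipate dissolves. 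Your direct inductive argument could in principle be made to work, but it would have to be phrased carefully to avoid circularity (the weakly-east comparison of bumping routes in $P'$ presupposes that the $x'$-insertion is proceeding as an ordinary bump, which is part of what you are trying to establish), and there is no reason to go through it when the contrapositive of an already-cited result suffices.
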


\begin{proof}
The statement (1) follows from \Cref{lem:properties_Berele_deletion} (1). To prove the statement (2) notice that if a Berele insertion does not cause a cancellation then it is simply a Schensted insertion. For the Schensted insertion, it is a known fact (see e.g.~\cite[Lem.~7.11.2]{S-1999}) that the consecutive insertion of letters in increasing order changes the shape of the tableau by adding cells in rows of decreasing order.
\end{proof}

\begin{proof}[Proof of \Cref{prop:bijection}]
Let $w=a_1\dots a_k$, so that $\ell(w)=k$.
Define $P^{(0)}:=P$ and $P^{(i)}:=(P \xleftarrow[]{\mathcal{B}}a_1\dots a_i)$, for $i=1,\dots,k$.
Denoting by $\mu^{(i)}$ the shape of $P^{(i)}$ and $P':=P^{(k)}$ produces the map
\begin{equation*}
(P,w) \mapsto (P',\{ \mu^{(i)} \}_{i=0,\dots,k}),
\end{equation*}
which is invertible since the Berele insertion of any letter can be inverted if the location of the cell added or removed is known; moreover, the algorithm implies $\mathrm{wt}(P) + \mathrm{wt}(w) = \mathrm{wt}(P')$.
By Lemmas~\ref{lem:properties_Berele_deletion} and~\ref{lem:properties_Berele_creation}, the sequence $\{ \mu^{(i)} \}_{i=0,\dots,k}$ has the property that
\begin{equation*}
\la = \mu^{(0)} \supset \mu^{(1)} \supset \cdots \supset \mu^{(s)} \subset \mu^{(s+1)} \subset \cdots \subset\mu^{(k)}=:\rho,
\end{equation*}
for some $0\le s\le k$.
Moreover, for $i\le s$, $\mu^{(i-1)}/\mu^{(i)}$ consists of a single cell strictly west of $\mu^{(i-2)}/\mu^{(i-1)}$, while for $j\ge s$, $\mu^{(j+1)}/\mu^{(j)}$ consists of a single cell strictly west of $\mu^{(j+2)}/\mu^{(j+1)}$. Calling $\kappa:=\mu^{(s)}$, this implies in particular that $\la/\kappa,\,\rho/\kappa$ are horizontal strips with $|\la/\kappa|+|\rho/\kappa|=k=\ell(w)$, and also that the map $(P',\{ \mu^{(i)} \}_{i=0,\dots,k})\mapsto (P',\kappa,\rho)$ is invertible.
Hence, the composition map~$(P,w)\mapsto (P', \kappa, \rho)$, stated in~\eqref{eqn:bijection}, is injective.
The inverse map $(P', \kappa, \rho)\mapsto(P,w)$ can be constructed by applying the inverse step of the Berele insertion to $P'$ and the boxes of $\rho/\kappa$ from east to west and the boxes of $\la/\kappa$ from west to east.
The fact that the resulting $w$ is weakly increasing, i.e.~$w\in\mathsf{A}_n^\uparrow$, follows from \cite[Exer.~3.12.2]{Sagan_book} and \cite[Lemmas~3.4--3.5]{S-1990} --- these are the converses to Lemmas~\ref{lem:properties_Berele_deletion} and~\ref{lem:properties_Berele_creation}.
\end{proof}

\subsection{The Berele insertion process: sampling algorithm for oscillating symplectic Schur process} 

Let us use the Berele insertion to build up random sequences of partitions. Fix a natural number $n$ as above and for $j=1,\dots, n$, construct random words $w_j \in \mathsf{A}_n^\uparrow$ of the form
\begin{equation} \label{eq:random word}
    w_j = 1^{m_{1,j}}\,\overline{1}^{\, \overline{m}_{1,j}} \cdots\, n^{m_{n,j}}\,\overline{n}^{\,\overline{m}_{n,j}},
\end{equation}
where $m_{i,j}, \overline{m}_{i,j}$ are independent random variables sampled with the geometric laws
\begin{equation*}
\Pr\left(m_{i,j} = k\right) = (1-x_iy_j)(x_iy_j)^k,\qquad
\Pr\left(\overline{m}_{i,j} = k\right) = (1-x_i^{-1}y_j)(x_i^{-1}y_j)^k,\qquad k=0,1,\dots,
\end{equation*}
for some reals $x_1,\dots , x_n,y_1,\dots , y_n$ satisfying $0 < x_i y_j, x_i^{-1} y_j < 1$, for all $i,j$.
Once we sample the random words $w_1,\dots, w_n$, we produce the sequence of random symplectic tableaux
\begin{equation} \label{eq:construction Pj}
    P^{(0)} = \varnothing,
    \qquad \qquad
    P^{(j)} = (P^{(j-1)} \xleftarrow[]{\mathcal{B}} w_j)
\end{equation}
and we denote their shapes by $\mathrm{sh}(P^{(j)})=\la^{(j)}$.
We call this random growth process on partitions the \emph{Berele insertion process}.
As a consequence of \Cref{cor:bounded_length}, $\ell(\la^{(j)})\le j$, for all $j=1,\dots,n$, in particular $\la^{(1)}$ is always a row partition.

The next proposition describes the joint law of this sequence of partitions.

\begin{prop} \label{prop:symplectic Schur berele}
Consider the random sequence of Young diagrams $\{ \la^{(j)} \}_{j= 1,\dots ,n}$ built as described above.
Then its law is given by the oscillating symplectic Schur process from \Cref{def:ssp with laurent variable spec} with $\y^j=(y_j)$, for $1\le j\le n$, and $\x^0=x^\pm:=(x_1,x_1^{-1},\dots,x_n,x_n^{-1})$.
In other words,
\begin{equation}\label{prop:general_berele_process}
\mathbb{P} \left( \{ \la^{(j)} \}_{j= 1,\dots ,n} \right) = \prod_{i,j=1}^n (1-x_i y_j) (1-x_i^{-1} y_j) \cdot s_{\la^{(1)}}(y_1)\cdot\prod_{j=2}^{n} T_{\la^{(j-1)},\la^{(j)}} (y_j)\cdot\sp_{\la^{(n)}}(x^{\pm}).
\end{equation}
\end{prop}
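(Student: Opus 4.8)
The plan is to compute the probability of obtaining a fixed sequence $\{\la^{(j)}\}_{j=1,\dots,n}$ by summing, over all choices of intermediate words, the probabilities of the word-tuples $(w_1,\dots,w_n)$ that produce that sequence under the Berele insertion \eqref{eq:construction Pj}. First I would record the probability of a single random word $w_j$: since $w_j = 1^{m_{1,j}}\overline{1}^{\overline{m}_{1,j}}\cdots n^{m_{n,j}}\overline{n}^{\overline{m}_{n,j}}$ with the $m_{i,j},\overline{m}_{i,j}$ independent geometric, we have
\[
\Pr(w_j) = \prod_{i=1}^n (1-x_iy_j)(1-x_i^{-1}y_j) \cdot \prod_{i=1}^n (x_iy_j)^{m_{i,j}}(x_i^{-1}y_j)^{\overline{m}_{i,j}} = \prod_{i=1}^n (1-x_iy_j)(1-x_i^{-1}y_j)\cdot x^{\mathrm{wt}(w_j)} y_j^{\ell(w_j)},
\]
where $x^{\mathrm{wt}(w_j)} = \prod_i x_i^{m_{i,j}-\overline{m}_{i,j}}$ and $\ell(w_j) = \sum_i (m_{i,j}+\overline{m}_{i,j})$. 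Thus the prefactor $\prod_{i,j}(1-x_iy_j)(1-x_i^{-1}y_j)$ in \eqref{prop:general_berele_process} is exactly the product of the normalization constants of all the geometric laws, and it remains to show that
\[
\sum_{\substack{(w_1,\dots,w_n)\,:\,\\ \mathrm{sh}(P^{(j)})=\la^{(j)}\ \forall j}} \prod_{j=1}^n x^{\mathrm{wt}(w_j)} y_j^{\ell(w_j)} \;=\; s_{\la^{(1)}}(y_1)\cdot\prod_{j=2}^{n} T_{\la^{(j-1)},\la^{(j)}}(y_j)\cdot \sp_{\la^{(n)}}(x^\pm).
\]

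**Key step: applying the bijection.** The heart of the argument is \Cref{prop:bijection}, applied at each insertion step. Fix $j\ge 2$ and the tableau $P^{(j-1)}$ of shape $\la^{(j-1)}$; the bijection (with the single-variable alphabet $y=y_j$, as in the proof of \Cref{gen_cauchy_littlewood_prop}) sends the pair $(P^{(j-1)}, w_j)$ to a triple $(P^{(j)}, \kappa^{(j)}, \la^{(j)})$ where $\la^{(j-1)}/\kappa^{(j)}$ and $\la^{(j)}/\kappa^{(j)}$ are horizontal strips, $\mathrm{wt}(P^{(j-1)})+\mathrm{wt}(w_j) = \mathrm{wt}(P^{(j)})$, and $\ell(w_j) = |\la^{(j-1)}/\kappa^{(j)}| + |\la^{(j)}/\kappa^{(j)}|$. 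Summing the monomial $x^{\mathrm{wt}(w_j)}y_j^{\ell(w_j)}$ telescopes as follows: iterating from $j=n$ down to $j=2$, at each stage the $x$-weights accumulate into $\mathrm{wt}(P^{(n)})$ while the $y_j$-weights remain local, giving a factor $\sum_{\kappa^{(j)}} y_j^{|\la^{(j-1)}/\kappa^{(j)}| + |\la^{(j)}/\kappa^{(j)}|}$. By the down-up formula \eqref{ortho_formula_3} with a single variable (so $s_{\la/\kappa}(y_j) = y_j^{|\la/\kappa|}$ when $\la/\kappa$ is a horizontal strip, and $0$ otherwise), this factor is precisely $T_{\la^{(j-1)},\la^{(j)}}(y_j)$. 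The remaining sum over symplectic tableaux $P^{(n)}\in\mathrm{SpTab}_n(\la^{(n)})$ of $x^{\mathrm{wt}(P^{(n)})}$ equals $\sp_{\la^{(n)}}(x^\pm)$ by \eqref{sp_tableaux}. The base case $j=1$ is handled separately: $\la^{(1)}$ is forced to be a single row with $\la^{(1)}_1 = \ell(w_1)$ (all insertions into a one-row-or-empty tableau are Schensted appends, never cancellations, by \Cref{lem:properties_Berele_creation}), and $\sum_{w_1} x^{\mathrm{wt}(w_1)} y_1^{\ell(w_1)}$ summed over weakly increasing words of the fixed length $\la^{(1)}_1$ equals $s_{(\la^{(1)}_1)}(y_1)\,h_{\la^{(1)}_1}(x^\pm)$... but wait, that is not quite right either, so the base step actually absorbs into the general telescoping by taking $\kappa^{(1)}=\varnothing$ and noting $T_{\varnothing,\la^{(1)}}(y_1) = s_{\la^{(1)}}(y_1)$, with the convention $P^{(0)}=\varnothing$; this is the cleanest way to organize the induction.

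**Main obstacle.** The technical care needed is in verifying that the per-step bijection of \Cref{prop:bijection} composes cleanly across the $n$ insertions — i.e., that the data $(P^{(0)}, w_1,\dots,w_n)$ is recovered bijectively from $(P^{(n)}, \kappa^{(1)},\la^{(1)},\kappa^{(2)},\la^{(2)},\dots,\kappa^{(n)},\la^{(n)})$, and hence from the sequence $(\la^{(1)},\dots,\la^{(n)})$ together with the residual tableau $P^{(n)}$ and the "cutpoint" partitions $\kappa^{(j)}$. This is essentially automatic because the bijection of \Cref{prop:bijection} is constructed by running Berele insertion/deletion, and the inverse at step $j$ takes $(P^{(j)}, \kappa^{(j)}, \la^{(j)})$ back to $(P^{(j-1)}, w_j)$; one simply checks that the shape $\la^{(j-1)}$ output by the inverse at step $j$ matches the shape recorded in the data and fed into the inverse at step $j-1$ --- which it does by construction. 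Once this bookkeeping is in place, the weight identity above follows by summing the geometric normalizations and invoking \eqref{ortho_formula_3} and \eqref{sp_tableaux}, and comparison with \Cref{def:ssp with laurent variable spec} (with $A_1=\dots=A_{n-1}=0$, $A_n = n$, $B_j = 1$) shows the resulting measure is exactly the oscillating symplectic Schur process, noting that the hypothesis $A_n = n \ge n = B_1+\dots+B_n$ of \Cref{thm:SSP is a probability measure} is satisfied with equality.
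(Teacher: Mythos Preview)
Your proposal is correct and follows essentially the same approach as the paper: apply the Sundaram--Berele bijection (\Cref{prop:bijection}) at each insertion step, push forward the geometric word weights to the data $(P^{(n)},\{\kappa^{(j)}\},\{\la^{(j)}\})$, then sum over the $\kappa^{(j)}$'s (yielding $T_{\la^{(j-1)},\la^{(j)}}(y_j)$ via the down-up formula) and over $P^{(n)}\in\mathrm{SpTab}_n(\la^{(n)})$ (yielding $\sp_{\la^{(n)}}(x^\pm)$), with the base case handled by $\kappa^{(1)}=\varnothing$ and $T_{\varnothing,\la^{(1)}}(y_1)=s_{\la^{(1)}}(y_1)$. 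The only cosmetic difference is that the paper writes the argument as a direct pushforward-then-marginalize, whereas you phrase it as a telescoping sum; your mid-stream self-correction on the $j=1$ step is unnecessary (your second, $\kappa^{(1)}=\varnothing$ treatment is exactly what the paper does) and should simply be excised in a clean write-up.
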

\begin{proof}
Using \Cref{prop:bijection} repeatedly in the sequence of insertions~\eqref{eq:construction Pj}, we have the bijections
    \begin{equation*}
        (P^{(j-1)} ,w_j ) \leftrightarrow (P^{(j)},\kappa^{(j)},\la^{(j)})
    \end{equation*}
    so that the full Berele insertion process is encoded by the sequence of data
    \begin{equation}\label{eq:sequence P lambdas and kappa}
        P^{(n)}, \kappa^{(n)} , \la^{(n-1)}, \kappa^{(n-1)} ,\dots , \la^{(2)}, \kappa^{(2)}, \la^{(1)},\kappa^{(1)}.
    \end{equation}
    The joint probability density of the sequence of words $w_1,\dots,w_n$ is, up to normalization constant,
    \begin{equation*}
\prod_{i,j=1}^n (x_iy_j)^{m_{i,j}} (x_i^{-1} y_j)^{\overline{m}_{i,j}} = \left(\prod_{j=1}^n y_j^{\mathrm{\ell}(w_j)}  \right)\cdot\left(\prod_{i,j=1}^n x_i^{\mathrm{wt}(w_j)_i}  \right).
    \end{equation*}
The pushforward leads to the following joint probability density on the sequence~\eqref{eq:sequence P lambdas and kappa} (set $\la^{(n)}=\mathrm{sh}(P^{(n)})$)
\begin{equation}\label{density_1}
        \left( \prod_{j=1}^n y_j^{|\la^{(j-1)}/\kappa^{(j)}|+|\la^{(j)}/\kappa^{(j)}|} \right)\cdot\left(\prod_{i=1}^n x_i^{\mathrm{wt}(P^{(n)})_i}  \right),
\end{equation}
thanks to the properties of the Sundaram bijection of \Cref{prop:bijection}; see Equation~\eqref{restrictions_w}.
Note that $\la^{(0)}=\varnothing$ and $\la^{(0)}/\kappa^{(1)}$ being a horizontal strip imply that $\kappa^{(1)}=\varnothing$, therefore~\eqref{density_1} equals
	\begin{equation}\label{density_2}
		y_1^{|\la^{(1)}|}\cdot\left( \prod_{j=2}^n y_j^{|\la^{(j-1)}/\kappa^{(j)}|+|\la^{(j)}/\kappa^{(j)}|} \right)\cdot\left(\prod_{i=1}^n x_i^{\mathrm{wt}(P^{(n)})_i}  \right).
	\end{equation}
Summing over all partitions $\kappa^{(n)},\dots,\kappa^{(2)}$ and all symplectic tableaux $P^{(n)}$ of shape $\la^{(n)}$ gives the marginal of $\{\la^{(j)}\}_{j=1,\dots,n}$: this gives the desired~\eqref{prop:general_berele_process} by virtue of \Cref{combinatorial_S_star} and identity~\eqref{sp_tableaux}.
Note that $y_1^{|\la^{(1)}|}=s_{\la^{(1)}}(y_1)$, because $\la^{(1)}$ is a row partition.
Finally, the normalization constant in \eqref{prop:general_berele_process} was computed in greater generality in \Cref{partition_function_1}, \Cref{thm:SSP is a probability measure}. The proof is now complete.
\end{proof}

\begin{rem}
    A construction of random partitions based on the Berele insertion was already considered in \cite{Nte-18}. Earlier, in \cite{WW-2009,Def-2012}, the authors considered sampling of random symplectic tableaux based on Pieri rules of symplectic Schur polynomials. In these cases, the dynamics can be recovered from the one treated in the current paper through a continuous time scaling limit (although \cite{Nte-18} also defines a more general dynamics that depends on an additional parameter $q$, and specializes to the classical Berele insertion algorithm when $q=0$). In none of these articles an asymptotic analysis was performed.
\end{rem}

\begin{rem}
    The Berele insertion dynamics is different from those described in \Cref{sec:dynamics ssp}. Such difference is parallel to that between the classical RSK insertion (see \cite{BBBCCV-2014} and references therein) and the Borodin-Ferrari dynamics \cite{B-2011,BF-2014}, which are variants of Propp's domino shuffling \cite{Pro-2003}. The former, like the Berele insertion, uses less randomness, as partitions are generated deterministically from the choice of a random word of the form~\eqref{eq:random word}.
\end{rem}

\section{Asymptotics of the oscillating symplectic Schur process} \label{sec:asymptotics}

In this section, we consider the limits of a particular case of the oscillating symplectic Schur process, namely the probability measure on $k$-tuples of partitions $(\la^{(1)},\dots,\la^{(k)})$ with law
\begin{equation} \label{eq:Symplectic process berele}
    \mathbb{P} ( \{ \la^{(i)} \}_{i=1,\dots,k} ) \propto s_{\la^{(1)}}(1/2)  \cdot\prod_{i=2}^k T_{\la^{(i)},\la^{(i-1)}}(1/2) \cdot \sp_{\la^{(k)}}(1^{2n}),
\end{equation}
where $(1^{2n})=(1,1,\dots,1)$ is a collection of $2n$  variables specialized at $1$. Recall that \eqref{eq:Symplectic process berele} is a probability measure only when $k\le n$, by \Cref{rem:possible negativity}. Moreover, for $k\le n$, the measure \eqref{eq:Symplectic process berele} can be sampled through the Berele insertion, as shown in \Cref{prop:symplectic Schur berele}.
For this sampling, construct random words
\begin{equation} \label{eq:Berele_words_2}
    w_j = 1^{m_{1,j}}\,\overline{1}^{\, \overline{m}_{1,j}} \cdots\, n^{m_{n,j}}\,\overline{n}^{\,\overline{m}_{n,j}},\qquad j=1,2,\dots,k,
\end{equation}
where $m_{i,j}, \overline{m}_{i,j}$ are mutually independent random variables with geometric laws
\begin{equation*}
\Pr\left(m_{i,j} = \ell\right) = \Pr\left(\overline{m}_{i,j} = \ell\right) = 2^{-\ell-1},\qquad \ell=0,1,\dots.
\end{equation*}
Once we sample the random words $w_1,\dots w_k$, we inductively produce the sequence of random partitions starting from the empty symplectic tableau as follows:
\begin{equation*}
    P^{(0)} := \varnothing,
    \qquad \qquad
    P^{(j)} := (P^{(j-1)}\xleftarrow[]{\mathcal{B}} w_j),
	\qquad \qquad
	\la^{(j)} := \mathrm{sh}(P^{(j)}).
\end{equation*}
We will be interested in the discrete point process $\mathcal{L}(\lavec)=\{(j, \la^{(j)}_i-i): 1\le i \le n, \, 1\le j \le k \}$ on $\{1,\dots,k\}\times\Z$.
The results of a simulation for large values of $k\le n$ are shown in \Cref{fig:symplectic schur}.
Note that the Berele insertion process can be run even for steps $k>n$, but the distribution of $\{\la^{(i)}\}_{i=1,\dots,k}$ would no longer be distributed according to a symplectic Schur process.

In principle, the asymptotic analysis we perform below could be performed for more general variants of the process \eqref{eq:Symplectic process berele}, allowing several parameter to enter the game as in \eqref{prop:general_berele_process}, but calculations are already rather cumbersome, so we decided to stick with the simplest non-trivial situation.\footnote{However, we expect that the asymptotic behaviors described here are universal for random partitions sampled via the Berele insertion process with more general parameters.}

\begin{figure}
\floatbox[{\capbeside\thisfloatsetup{capbesideposition={right,top},capbesidewidth=5cm}}]{figure}[\FBwidth]
{\caption{
    (a) A sample of the point process $\mathcal{L}(\lavec)=\{(j, \la^{(j)}_i-i): 1\le i\le n, \, 1\le j\le k \}$ obtained through the Berele insertion of $k$ random words with law \eqref{eq:Berele_words_2} in the alphabet $\mathsf{A}_n$. Here we chose $n=200$, $k=300$. Red and orange curves denote respectively the upper and lower branch of the limit shape, computed in \eqref{eq:limit shape}. The verical green line has abscissa $n$; the process to its left has the law of the oscillating symplectic Schur process \eqref{eq:Symplectic process berele}.
    \\
    (b) A close-up on the top edge of the liquid region.
    \\
    (c) A close-up on the bulk of the liquid region.
    \\
    (d) A close-up of the point process around the point $(4n/5,-4n/5)$. Green squares denote the first few levels of the complemented point configuration described in \Cref{subs:GUE corner processes}, which asymptotically converges to a GUE corners process. 
    \\
    (e) Blue points represent holes of the process $\mathcal{L}(\lavec)$. Portrayed is a close-up of the proximity of the point $(9n/8,-n)$, where the lower branch of the limit shape touches the horizontal line of ordinate $-n$.
    }\label{fig:symplectic schur}}
{\includegraphics[width=11cm]{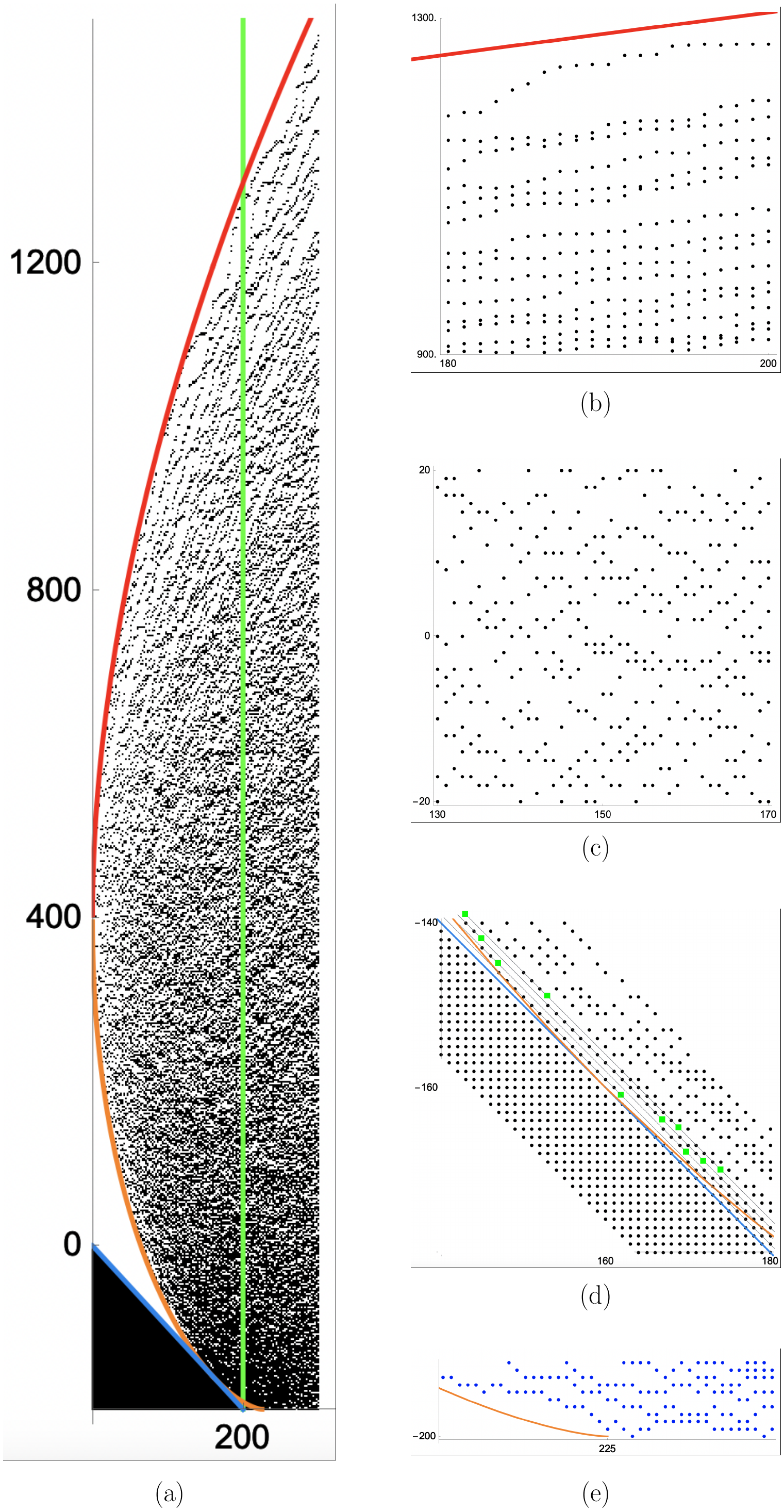}}
\end{figure}

\subsection{Limit shape}\label{subs:limit shapes}

Let us describe the macroscopic limit shape produced by the random point configuration $\mathcal{L}(\lavec)=\{(j, \la^{(j)}_i-i): 1\le i\le n, \, 1\le j\le k \}$, where $\lavec=\{\la^{(i)}\}_{i=1,\dots,k}\subset\{1,\dots,k\}\times\Z$ is distributed according to the oscillating symplectic Schur process \eqref{eq:Symplectic process berele} after rescaling microscopic coordinates $(i,j)$ as $(nx,ny)$.
First, we observe that $\la^{(i)}_j=0$ almost surely for all $n\ge j>i$, as a result of the Berele insertion mechanism. This implies that points $(i,-j)$ for $n \ge j > i$ belong almost surely to the configuration $\mathcal{L}(\lavec)$. Thus we refer to the set of macroscopic points $\{(x,y): x \in (0,1), \, y\in (-1,0), \, x+y\le 0 \}$ as the \emph{frozen region}.

To describe the non-trivial part of the limit shape, we consider the correlation kernel obtained in \Cref{thm:det}.
Up to a simple gauge transformation,\footnote{We use that if $K^{\SSP}(i,u;j,v)$ is a correlation kernel, then $(-2)^{i-j}\cdot K^{\SSP}(i,u;j,v)$ is a correlation kernel for the same determinantal point process.} the correlation kernel~\eqref{final_kernel}, with the choice of specializations prescribed by \eqref{eq:Symplectic process berele}, describing the point configuration $\mathcal{L} (\Vec{\la})$ equals
\begin{equation} \label{eq:kernel berele}
        K^{\SSP}(i,u;j,v) =
        \begin{dcases}
            \frac{1}{(2\pi \mathrm{i})^2} \oint_{\Gamma_{1-\delta}} \frac{\dd z}{ z} \oint_{\Gamma_{\frac{1}{2}+\delta}} \dd w\,  \frac{1-w^2}{(1-zw)(z-w)} \frac{G_{j,v}(z)}{G_{i,u}(w)}, \qquad &\text{if } i\le j,
            \\
            \frac{1}{(2\pi \mathrm{i})^2} \oint_{\Gamma_{\frac{1}{2}+\delta}} \frac{\dd z}{ z} \oint_{\Gamma_{1-\delta}} \dd w\,  \frac{1-w^2}{(1-zw)(z-w)} \frac{G_{j,v}(z)}{G_{i,u}(w)},
            \qquad &\text{if } i > j.
        \end{dcases}
\end{equation}
In this formula, we take any $\delta\in(0,1/4)$ and used the notations
\begin{align*}
    \Gamma_{\rho} &:= \{ z\in \C \, : \, |z|=\rho \},\nonumber\\
    G_{j,v}(z) &:= (1-z)^{-2n} [(1-z/2)(1-2z)]^{j}\, z^{-j-v}.
\end{align*}
To understand limit shapes and limiting processes arising while scaling around various regions of the random field $\mathcal{L}(\Vec{\la})$, we will perform an asymptotic analysis of the kernel $K^{\SSP}$ by employing the saddle point method, as it has been done previously in the literature~\cite{OR-2003,OR-2006,OR-2007}. Under the scaling 
\begin{equation*}
    j = nx,
    \qquad 
    v = ny, 
\end{equation*}
with $x\in (0,1), y\in\R$, and in the liquid region $x+y>0$, we observe that the function $G_{j,v}$ becomes
\begin{equation*}
    G_{j,v}(z) = \exp \left\{ n h_{x,y}(z) \right\},
\end{equation*}
where the function $h_{x,y}$ in the exponent is
\begin{equation} \label{eq:h}
    h_{x,y}(z) = -2 \log (1-z) + x \log \left(1-\frac{z}{2}\right)+x \log \left(1-2 z \right)-(x+y) \log (z).
\end{equation}
The large $n$ behavior of $K^{\SSP}$ is determined by the critical points of $h_{x,y}(z)$, i.e.~by the roots of
\begin{equation} \label{eq:h'}
    h_{x,y}'(z)=-\frac{x+y}{z}+\frac{2 x}{2 z-1}-\frac{x}{2-z}+\frac{2}{1-z},
\end{equation}
or, collecting the terms,
\begin{equation} \label{eq:h'_collected}
    h_{x,y}'(z) = \frac{z^3 (2 x-2 y-4) - z^2 (2x - 7y -10)-z (2 x+7 y+4)+2(x+y)}{(z-2) (z-1) z (2 z-1)}.
\end{equation}
Then $h_{x,y}'(z)=0$ yields a cubic equation with coefficients parameterized by $x,y$. Studying the expression \eqref{eq:h'} we see that it always has a real zero, which we denote by $z_0$, in the half line $(2,+\infty)$; such zero will always be irrelevant for our asymptotic analysis. The two remaining zeros, which we denote by $z_\pm$, can have one of the following topologies, explained by \Cref{fig:plots h'}:
\begin{enumerate}
    \item for $y \ge y_+(x)$, $z_\pm \in (1/2,1)$;
    \item for $y_-(x) < y < y_+(x)$, $z_\pm$ are complex conjugate with non-zero imaginary part;
    \item for $-x \le y \le y_-(x)$, $z_\pm \in (0,1/2)$;
    \item for $y<-x$, $z_- <0$ and $z_+ \in (0,1/2)$.
\end{enumerate}
\begin{figure}[htbp]
\centering

\begin{subfigure}[b]{0.4\linewidth}
\includegraphics[width=\linewidth]{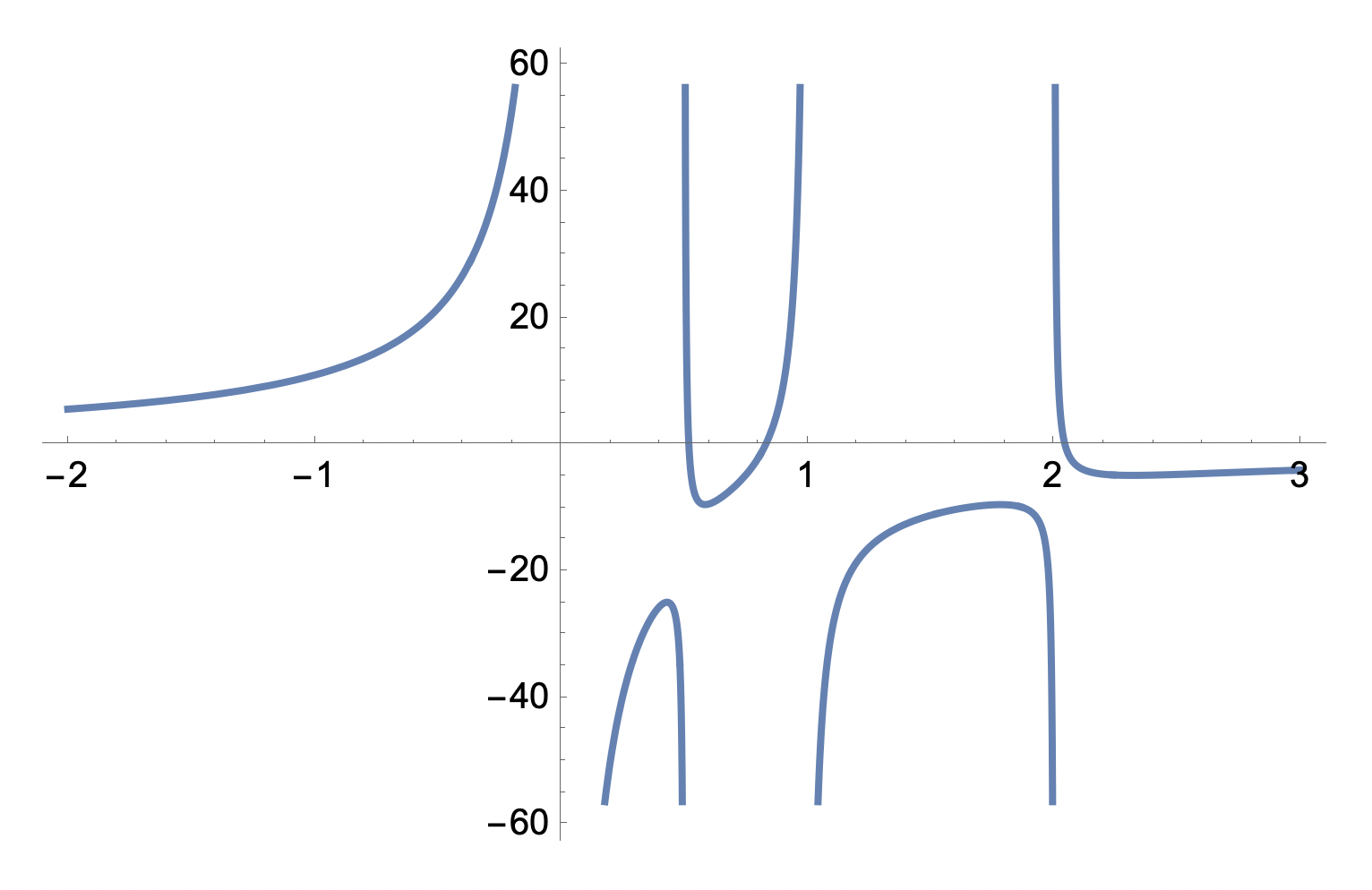}
\end{subfigure}
\hfill
\begin{subfigure}[b]{0.4\linewidth}
\includegraphics[width=\linewidth]{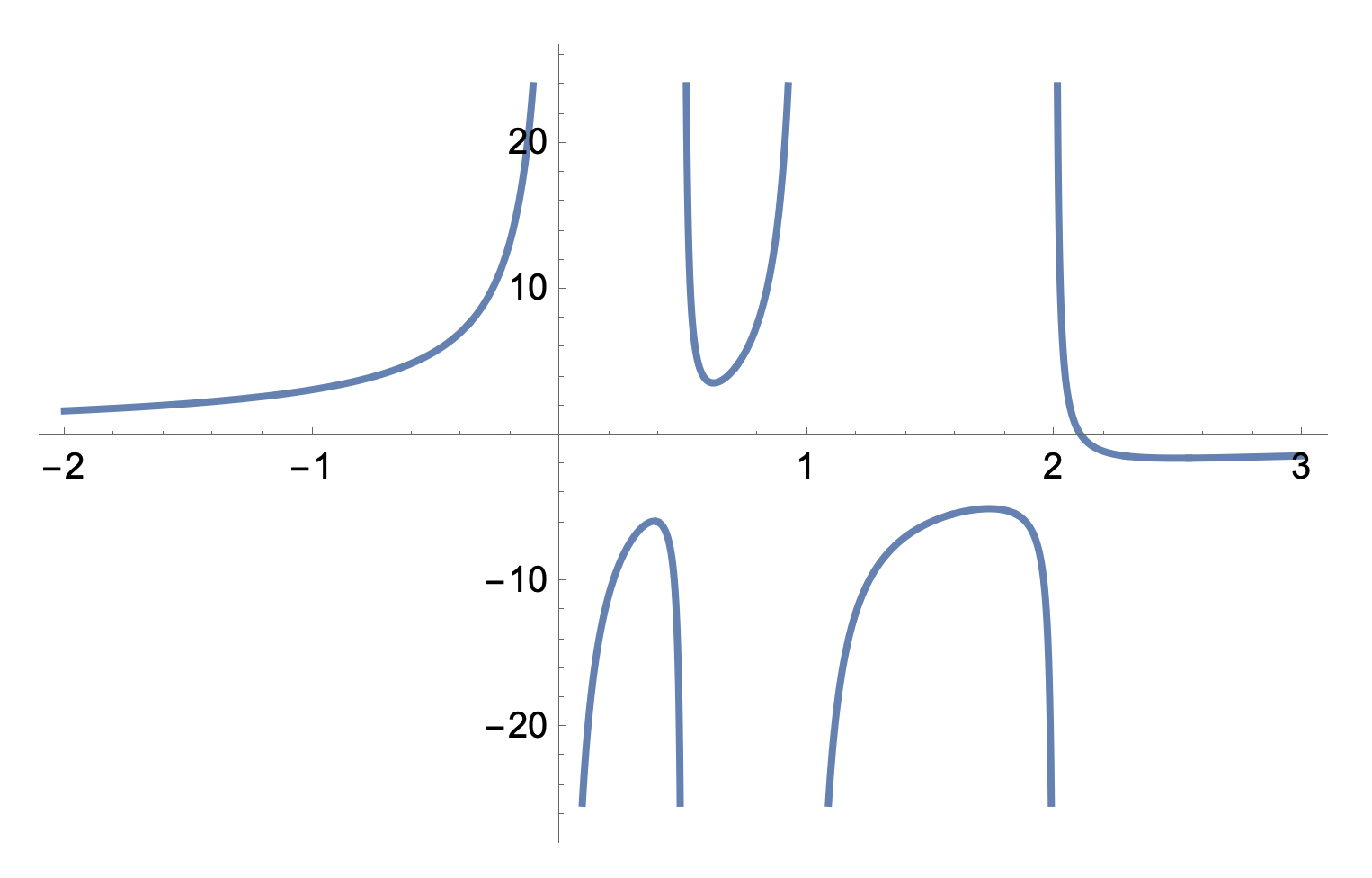}
\end{subfigure}

\begin{subfigure}[b]{0.4\linewidth}
\includegraphics[width=\linewidth]{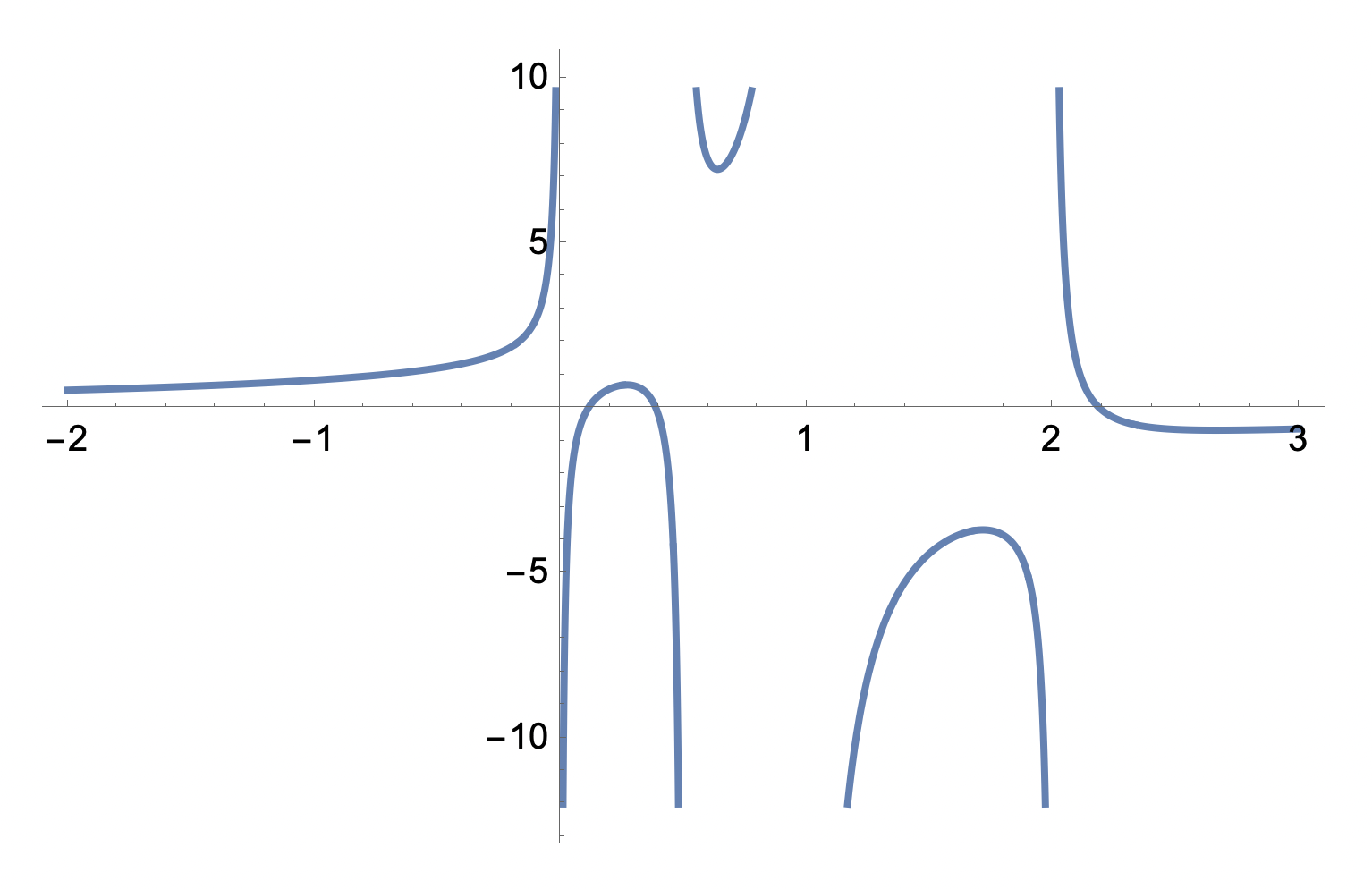}
\end{subfigure}
\hfill
\begin{subfigure}[b]{0.4\linewidth}
\includegraphics[width=\linewidth]{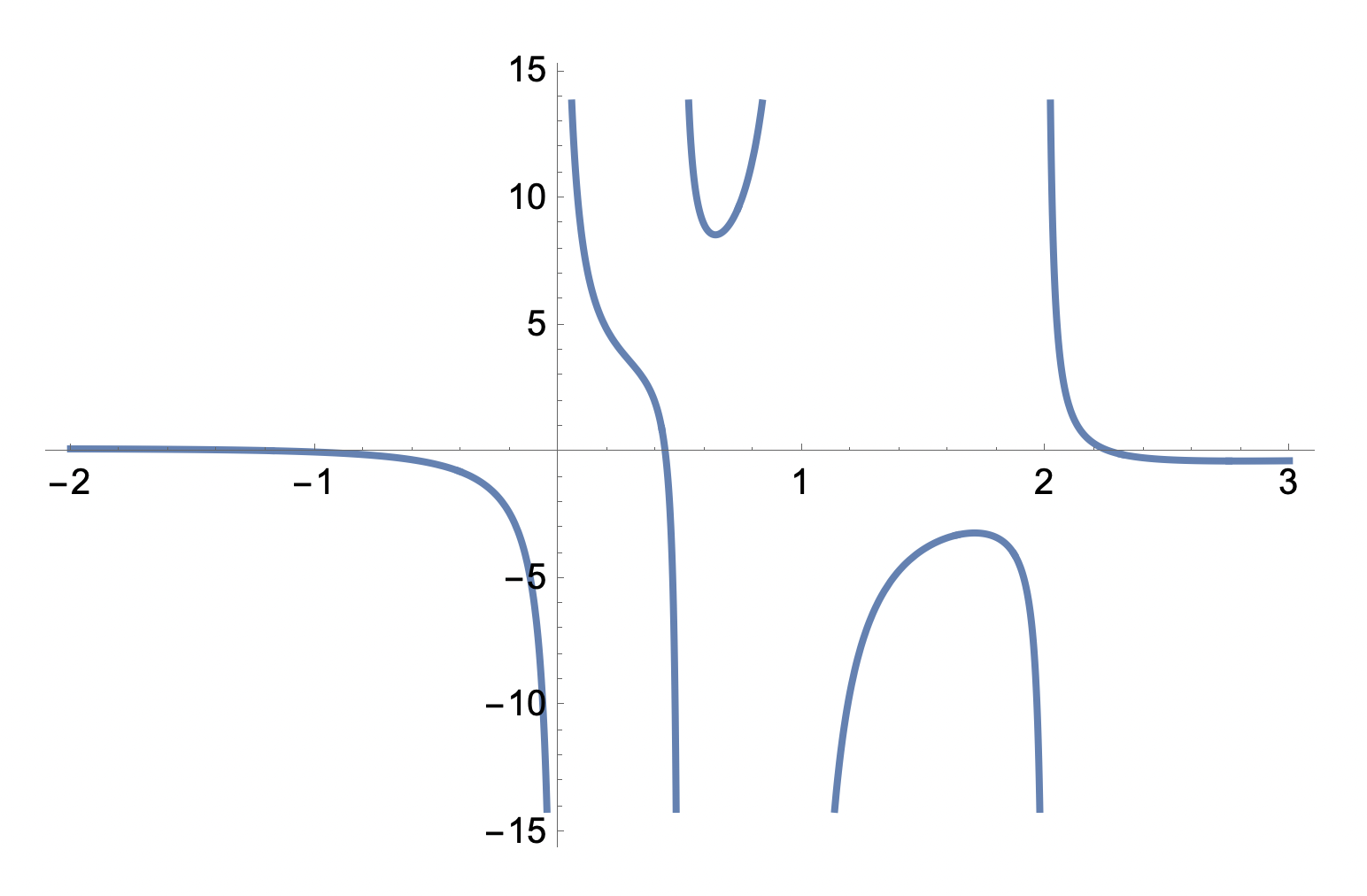}
\end{subfigure}

\caption{Plots of $h'_{x,y}(z)$ for real $z$ and different values of $x,y$. From top-left to bottom-right these topologies correspond to the cases enumerated after Equation~\eqref{eq:h'_collected}.}
\label{fig:plots h'}
\end{figure}
The functions $y_\pm(x)$ above are the unique real numbers such that there exists some $z_0\in\R$ with
\begin{equation}\label{def_yx}
h_{x,y(x)}'(z_0) = h_{x,y(x)}''(z_0) = 0.
\end{equation}
We label $y_+(x)$ the larger and $y_-(x)$ the smaller of the two, so that $y_+(x)\ge y_-(x)$.
Moreover, we label the corresponding $z_0$ in \eqref{def_yx} by $z_{\mathrm{up}}(x)$ and $z_{\mathrm{down}}(x)$, respectively.

To find explicit expressions, let us directly solve the system of equations $h_{x,y}'(z)=0$, $h_{x,y}''(z)=0$, which, after simple algebra and assuming $z\neq 0$, can be recast as the system:
\begin{gather} 
    \frac{1}{(1-z)^2}-  \frac{x }{(z-2)^2}-\frac{x}{(1-2z)^2} = 0, \label{eq:z}
    \\
    y(x,z)=  \frac{2z }{1- z}-\frac{x z}{2- z} +\frac{x}{2z-1}. \label{eq: f}
\end{gather}
The two relevant solutions of \eqref{eq:z} are given by
\begin{equation*}
	z_{\mathrm{up}}(x)=\frac{\sqrt{-18 \sqrt{x^3 (x+9)}-x (18 x+11)+20 \sqrt{x (x+9)}+36}}{10 x-8}+\frac{9 x-\sqrt{x (x+9)}-10}{10 x-8},
\end{equation*}
\begin{equation*}
	z_{\mathrm{down}}(x)=\frac{9 x+\sqrt{x (x+9)}-10}{10 x-8}-\frac{\sqrt{18 \sqrt{x^3 (x+9)}-x (18 x+11)-20 \sqrt{x (x+9)}+36}}{10x-8},
\end{equation*}
so that plugging these expressions into \eqref{eq: f}, we obtain
\begin{equation}\label{eq:limit shape}
    y_+(x)=y(x,z_{\mathrm{up}}(x)), 
    \qquad
    y_-(x)=y(x,z_{\mathrm{down}}(x)),
\end{equation}
and their plots produce the red and orange curves portrayed in \Cref{fig:symplectic schur}.
The lower branch of the limit shape $y_-$ is tangent to the line $y=-x$, which delimits the frozen region of fully packed particles. The tangency point is
\begin{equation*}
    x=4/5,
    \qquad
    y_-(4/5)= -4/5.
\end{equation*}

\subsection{Limit process in the bulk: an extended Sine process}

Fix $x\in(0,1)$ and $y\in (y_-(x),y_+(x))$.
Let $z_\pm=z_\pm(x,y)$ be the complex conjugate roots of $h_{x,y}'(z)=0$. Set  $\gamma_+(z_+)$ as a circle arc connecting $z_-$ to $z_+$ counterclockwise, crossing the segment $(1/2,2)$ and $\gamma_-(z_+)$ as a circle arc connecting $z_-$ to $z_+$ clockwise, crossing $(-\infty,1/2)$. For the following result, define the kernel $\mathcal{K}_{\mathrm{bulk}}^{(z_+)} : \Z^2\times\Z^2 \to\R$ by
\begin{equation} \label{eq:extended_Sine_kernel}
    \mathcal{K}_{\mathrm{bulk}}^{(z_+)}(\tau,\alpha;\tau', \alpha') = 
    \begin{dcases}
        \frac{1}{2\pi\mathrm{i}} \int_{\gamma_+(z_+)} (5/2-w-1/w)^{\tau'-\tau}\, w^{\alpha-\alpha'-1} \dd w,
        \qquad & \text{if } \tau \le \tau',\\
        \frac{1}{2\pi \mathrm{i}} \int_{\gamma_-(z_+)} (5/2-w-1/w)^{\tau'-\tau}\, w^{\alpha-\alpha'-1} \dd w, \qquad & \text{if } \tau > \tau'.
    \end{dcases}
\end{equation}

\begin{thm}\label{prop:limit Sine}
    Consider the scaling
    \begin{equation*}
        i(\tau) = \lfloor nx\rfloor + \tau,
        \qquad
        u(\alpha) = \lfloor ny\rfloor + \alpha.
    \end{equation*}
    Then, for any $k\in\Z_{\ge 1}$ and any pair of $k$-tuples $\tau_1,\dots,\tau_k,\alpha_1,\dots,\alpha_k\in\Z$, denoting $i_\ell=i(\tau_\ell)$, $u_\ell=u(\alpha_\ell)$, we have
    \begin{equation*}
        \lim_{n\to\infty} \,\det_{1\le\ell,\ell'\le k} \left[ K^{\SSP}(i_\ell,u_\ell;i_{\ell'},u_{\ell'}) \right] = \det_{1\le\ell,\ell'\le k} \left[\mathcal{K}_{\mathrm{bulk}}^{(z_+)}(\tau_\ell, \alpha_\ell; \tau_{\ell'}, \alpha_{\ell'} ) \right].
    \end{equation*}
\end{thm}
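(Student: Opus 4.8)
The plan is to apply the standard saddle-point method to the double contour integral representation~\eqref{eq:kernel berele} of the correlation kernel. First I would fix $x\in(0,1)$ and $y\in(y_-(x),y_+(x))$, so that $h_{x,y}'(z)$ has a pair of complex conjugate roots $z_\pm$ with nonzero imaginary part, together with the irrelevant real root $z_0\in(2,+\infty)$; the local parameters $\tau_\ell,\alpha_\ell$ are bounded, so they do not affect the leading exponential behaviour. The kernel entry $K^{\SSP}(i_\ell,u_\ell;i_{\ell'},u_{\ell'})$ has integrand whose exponential part is $\exp\{n(h_{x,y}(z)-h_{x,y}(w))\}$, while the bounded shifts contribute the factor $z^{-(\tau_{\ell'}-\tau_\ell)}[(1-z/2)(1-2z)]^{\tau_{\ell'}-\tau_\ell} z^{-(\alpha_{\ell'}-\alpha_\ell)}$ in the $z$-variable (and the reciprocal in $w$), plus the rational prefactor $\tfrac{1-w^2}{(1-zw)(z-w)}$ and the measure $\tfrac{1}{z}$.

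The key steps, in order, are: (i) deform the $z$-contour $\Gamma_{1-\delta}$ and the $w$-contour $\Gamma_{1/2+\delta}$ (for the case $i\le j$; symmetrically for $i>j$) so that each passes through the saddle points $z_\pm$ along steepest descent/ascent directions, checking that $\Re h_{x,y}$ is maximized along the $z$-contour and minimized along the $w$-contour precisely at $z_\pm$, with no other critical points of comparable height encountered and no poles crossed illegitimately---the poles $z=1/2$, $z=1$, $z=2$, $z=0$ of $h_{x,y}'$ and the pole $z=w$ must be handled carefully, and the contribution of the residue at $z=w$ (picked up when deforming in the case $i\le j$ versus $i>j$) is exactly what produces the two branches $\gamma_+(z_+)$ and $\gamma_-(z_+)$ and the corresponding split in~\eqref{eq:extended_Sine_kernel}; (ii) localize the integral to small neighbourhoods of $z_\pm$, where $h_{x,y}(z)-h_{x,y}(z_\pm)\approx \tfrac12 h_{x,y}''(z_\pm)(z-z_\pm)^2$, and observe that the two saddles contribute complex-conjugate pieces whose sum is real; (iii) perform the Gaussian integration. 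Because the exponent is of order $n$ times a quadratic and the remaining factors are $O(1)$, the leading contribution comes from a residue-type or exact evaluation: the bounded-shift factor evaluated at $z=w=z_+$ gives $(5/2 - z_+ - 1/z_+)^{\tau'-\tau}$ after noting $z_+(1-z_+/2)(1-2z_+)/z_+ \cdot z_+^{-1}=\ldots$ simplifies via $(1-z/2)(1-2z)/z = 5/2 - z - 1/z$ (this algebraic identity is the crux of why $5/2-w-1/w$ appears). The upshot is that each kernel entry converges to $\mathcal{K}_{\mathrm{bulk}}^{(z_+)}(\tau_\ell,\alpha_\ell;\tau_{\ell'},\alpha_{\ell'})$ as in~\eqref{eq:extended_Sine_kernel}, and then the determinantal identity follows by continuity of determinants of fixed finite size.

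The main obstacle I expect is the contour-deformation bookkeeping: verifying rigorously that the steepest-descent contours through $z_\pm$ can be chosen inside the annular regions dictated by the radii constraints in~\eqref{eq:kernel berele} (in particular keeping $|w|$ between $|z|$ and $|z|^{-1}$, or the reverse, depending on $i$ vs.\ $j$), that no unwanted poles of $\tfrac{1-w^2}{(1-zw)(z-w)}$ or of $1/z$ are swept, and precisely which residue at $z=w$ is collected in each of the two cases $i\le j$ and $i>j$ and how this residue integral reduces to an integral over the arc $\gamma_+(z_+)$ or $\gamma_-(z_+)$. A secondary technical point is the uniform control of the tails away from the saddles (the usual estimate $\Re(h(z)-h(w))\le -cn$ off the critical neighbourhoods), which requires checking the sign of $\Re h_{x,y}$ along the full deformed contours; this is routine once the topology of the level curves of $\Re h_{x,y}$ emanating from $z_\pm$ is understood, but it is the part that demands genuine care. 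One should also confirm that the $z_0\in(2,+\infty)$ saddle and the real saddles that merge at $y=y_\pm(x)$ stay bounded away from the contours throughout the bulk regime, so that they contribute only exponentially small corrections.
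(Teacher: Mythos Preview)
Your proposal is correct and follows essentially the same approach as the paper, which simply states that the proof is analogous to~\cite{OR-2003} and omits details; the (commented-out) sketch in the source likewise deforms contours through $z_\pm$, extracts the residue at $z=w$ as the arc integral defining $\mathcal{K}_{\mathrm{bulk}}^{(z_+)}$, and argues the remaining double integral is $O(n^{-1})$ since each single integral localizes to an $O(n^{-1/2})$ neighbourhood of the saddles. One small clarification: in your step~(iii) the phrase ``leading contribution comes from a residue-type or exact evaluation: the bounded-shift factor evaluated at $z=w=z_+$'' slightly conflates two things---the limiting kernel is the full residue integral over the arc $\gamma_\pm(z_+)$ (not a point evaluation), while the Gaussian/saddle analysis is what shows the \emph{remaining} double integral after extracting that residue tends to zero.
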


The proof is analogous to \cite{OR-2003} and therefore we do not provide details here.

\subsection{Density in the bulk}
    The diagonal terms of our extended sine kernel $\mathcal{K}_{\mathrm{bulk}}^{(z_+)}$ determine the density $\rho(x,y)$ of points in the process~\eqref{eq:Symplectic process berele}:
    \begin{equation}\label{eq:limiting density}
        \rho(x,y) = \mathcal{K}_{\mathrm{bulk}}^{(z_+)}(0,0;0,0) = \frac{\arg\left( z_+(x,y) \right)}{\pi}.
    \end{equation}
	Using Cardano's formula, we find $z_+(x,y)$ explicitly as the solution to the cubic equation~\eqref{eq:h'_collected} with $\Im(z_+(x,y))>0$; it is
    \begin{equation*}
        z_+(x,y) = \frac{1}{6 a(x,y)}\left( 2 b(x,y) -c(x,y) - \frac{\Delta_0(x,y)}{c(x,y)} + \mathrm{i} \sqrt{3} \left| c(x,y) - \frac{\Delta_0(x,y)}{c(x,y)} \right| \right),
    \end{equation*}
    where
    \begin{gather*}
        a(x,y) = 4-2x+2y, 
        \qquad
        b(x,y) = 10-2x+7y,
        \\
        c(x,y) = \sqrt[3]{\frac{\Delta_1(x,y) + \sqrt{\Delta_1(x,y)^2-4 \Delta_0(x,y)^3}}{2}},
    \end{gather*}
    and
    \begin{gather*}
        \Delta_0(x,y) = 16 x^2+2 x y-40 x+7 y^2+32 y+52,
        \\
        \Delta_1(x,y) = 128 x^3+24 x^2 y-264 x^2+78 x y^2-132 x y-48 x+20 y^3+276 y^2+816 y+560.
    \end{gather*}
    Within the limit shape, $c(x,y)$ is a real number, assuming $\sqrt[3]{},\sqrt[2]{}$ to be the principal branches of the cube and square root. In fact, the limit shape, in the region $y\ge -1$, is characterized by the implicit equation
    \begin{equation*}
        \Delta_1(x,y)^2-4 \Delta_0(x,y)^3 = 0.
    \end{equation*}

\subsection{Edge asymptotics: Airy process}

We can also comment on the nature of the scaled process around the limit shape. Following \cite[Eqn.~(60)]{OR-2007}, recall that the extended Airy kernel $K_{\mathrm{Airy}}:\R^2\times\R^2\to\R$ is:
\begin{equation*}
    \begin{split}
        K_{\mathrm{Airy}}(\tau,\alpha ; \tau', \alpha') := &
        -\mathbf{1}_{\tau>\tau'}\,\frac{e^{-\frac{(\alpha-\alpha')^2}{2(\tau-\tau')}}}{\sqrt{2 \pi (\tau - \tau')}} 
        \\
        & \qquad + \int_{e^{-\mathrm{i}\pi/3}\infty}^{e^{\mathrm{i}\pi/3}\infty} \frac{\dd\xi}{2\pi\mathrm{i}} \int_{e^{-\mathrm{i}2\pi/3}\infty}^{e^{\mathrm{i}2\pi/3}\infty} \frac{\dd\eta}{2\pi\mathrm{i}} \,\frac{1}{\xi - \eta}\, e^{\frac{\xi^3-\eta^3}{3} - \frac{\xi^2 \tau' - \eta^2 \tau}{2} - \xi \alpha' + \eta \alpha }.
    \end{split}
\end{equation*}
Alternatively, this kernel can be written in terms of Airy functions, hence the name; for example, see~\cite{FNH-1999,M-1994,PS-2002,J-2003}.

\begin{thm}\label{prop:limit Airy}
    Fix $x\in (0,1)$ and let $(y,z_*)$ be either $(y_-(x), z_\mathrm{down}(x))$ or $(y_+(x), z_\mathrm{up}(x))$. Define parameters 
    \begin{equation*}
        d_1 := \left( \frac{h_{x,y}'''(z_*)}{2}\right)^{\!\!\frac{1}{3}}z_*,
        \qquad
        c_1 := \frac{(2 z_*-1)^2 (z_*-2)^2}{2z_*\left(5 z_*^2-8 z_*+5\right)} \, d_1^{2},
        \qquad
        c_2 := \frac{2 (z_*^2-1)}{(z_*-2) (2 z_*-1)} \, c_1,
    \end{equation*}
    and consider the scaling
    \begin{equation*}
        i(\tau)= \left\lfloor xn +c_1 \tau n^{\frac{2}{3}} \right\rfloor,
        \qquad
        u(\tau,\alpha) = \left\lfloor yn + c_2 \tau n^{\frac{2}{3}} + d_1 \alpha n^{\frac{1}{3}} \right\rfloor.
    \end{equation*}
    Then, for any $k\in\Z_{\ge 1}$ and any pair of $k$-tuples $\tau_1,\dots,\tau_k,\alpha_1,\dots,\alpha_k \in\R$, denoting $i_\ell=i(\tau_\ell)$, $u_\ell=u(\tau_\ell,\alpha_\ell)$, we have:
    \begin{itemize}
        \item if $y=y_+(x)$ and $x\in (0,1)$ or $y=y_-(x)$ and $x\in (0,\frac{4}{5})$, then
        \begin{equation}\label{eq: limit kernel airy}
            \lim_{n\to\infty} \,\det_{1\le\ell,\ell'\le k} \left[ n^{\frac{1}{3}}|d_1|\,  K^{\SSP}(i_\ell,u_\ell;i_{\ell'},u_{\ell'}) \right] 
            = \det_{1\le\ell,\ell'\le k} \left[ K_{\mathrm{Airy}}(\tau_\ell,\alpha_\ell ; \tau_{\ell'}, \alpha_{\ell'}) \right];
        \end{equation}
        \item if $y=y_-(x)$ and $x\in (\frac{4}{5},1)$, then
        \begin{equation}\label{eq: limit kernel airy holes}
            \lim_{n\to\infty} \,\det_{1\le\ell,\ell'\le k} \left[ n^{\frac{1}{3}}|d_1|
\!\left( \delta_{i_\ell,i_{\ell'}} \delta_{u_\ell,u_{\ell'}} - K^{\SSP}(i_\ell,u_\ell;i_{\ell'},u_{\ell'}) \right) \right] 
            = \det_{1\le\ell,\ell'\le k} \left[ K_{\mathrm{Airy}}(\tau_\ell,\alpha_\ell ; \tau_{\ell'}, \alpha_{\ell'}) \right].
        \end{equation}
    \end{itemize}
\end{thm}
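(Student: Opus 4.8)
The plan is to run the standard steepest-descent analysis of the double contour integral \eqref{eq:kernel berele}, localized near the degenerate critical point $z_*$ where $h_{x,y}'(z_*) = h_{x,y}''(z_*) = 0$ but $h_{x,y}'''(z_*) \neq 0$. First I would verify the input data: since $(y,z_*)$ is chosen as either $(y_-(x),z_{\mathrm{down}}(x))$ or $(y_+(x),z_{\mathrm{up}}(x))$, the definition \eqref{def_yx} guarantees $z_*$ is a double root of $h_{x,y}'$, and one checks (from the explicit cubic \eqref{eq:h'_collected}) that $h_{x,y}'''(z_*) \neq 0$ and that $z_* \in (0,\tfrac12)$ on the branch $y=y_-(x)$ with $x<\tfrac45$, $z_* \in (\tfrac12,1)$ on the branch $y=y_+(x)$, etc. The constants $d_1, c_1, c_2$ in the theorem are exactly the ones that arise when one Taylor-expands $h_{x,y}$ to third order around $z_*$ and tracks how the parameters $(i,u) = (\lfloor nx + c_1\tau n^{2/3}\rfloor, \lfloor ny + c_2\tau n^{2/3} + d_1\alpha n^{1/3}\rfloor)$ perturb the exponent; the linear-in-$n^{2/3}$ and linear-in-$n^{1/3}$ corrections are precisely designed so that, after the rescaling $z = z_* + \zeta/(d_1 n^{1/3} \cdot \text{const})$ and similarly for $w$, the exponent $n[h_{x,y}(z) - h_{x,y}(w)]$ together with the parameter-dependent prefactors converges to $\frac{\xi^3 - \eta^3}{3} - \frac{\xi^2\tau' - \eta^2\tau}{2} - \xi\alpha' + \eta\alpha$, which is the exponent in the Airy kernel. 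I would therefore: (1) substitute the scaling into $G_{j,v}/G_{i,u}$ and write the ratio as $\exp\{n(h(z)-h(w))\}$ times explicit algebraic prefactors; (2) deform the $z$- and $w$-contours $\Gamma_{1-\delta}, \Gamma_{1/2+\delta}$ to pass through $z_*$ along the steepest-descent/ascent directions (which near a cubic critical point leave at angles $\pm\pi/3$ for descent and $\pm 2\pi/3$ for ascent, matching the Airy contours), checking that $\mathrm{Re}\, h$ is controlled away from $z_*$; (3) perform the local change of variables, extract the Jacobian factor $n^{1/3}|d_1|$ (which is why that factor multiplies the kernel in \eqref{eq: limit kernel airy}), and pass to the limit to recover the double-integral part of $K_{\mathrm{Airy}}$; (4) account for the "diagonal" term $\frac{1}{z-w}$ versus $\frac{1-w^2}{(1-zw)(z-w)}$: as $z,w \to z_*$ the extra factor $\frac{1-w^2}{1-zw} \to \frac{1-z_*^2}{1-z_*^2} = 1$, so it disappears in the limit and only contributes to the prefactor bookkeeping; (5) when $i \leq j$ versus $i > j$, the contour ordering differs, and the residue crossing produces exactly the $-\mathbf{1}_{\tau>\tau'} e^{-(\alpha-\alpha')^2/2(\tau-\tau')}/\sqrt{2\pi(\tau-\tau')}$ Gaussian term in $K_{\mathrm{Airy}}$; this is the standard computation of the residue of $\frac{1}{z-w}e^{n(h(z)-h(w))}$ at $z=w$, scaled appropriately.

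For the second bullet, where $y = y_-(x)$ and $x \in (\tfrac45,1)$: here the relevant critical point $z_* = z_{\mathrm{down}}(x)$ lies in a region where the local picture is the \emph{complement} of the Airy process — the roles of "particles" and "holes" are exchanged because, as one sees from the topology classification after \eqref{eq:h'_collected} and from \Cref{fig:symplectic schur}(e), the limit shape there bounds a region of density approaching $1$ rather than $0$. Concretely, I would write $\delta_{i,i'}\delta_{u,u'} - K^{\SSP}$, note that $\delta_{i,i'}\delta_{u,u'}$ is itself a (trivial) contour integral $\frac{1}{2\pi\mathrm{i}}\oint z^{-(u-u')+ \ldots} \ldots$, and combine it with the $i=j$ part of $K^{\SSP}$ so that the combined integrand has its contours arranged to pick up the \emph{other} critical point contribution; equivalently, one conjugates by the particle-hole involution $u \mapsto -u$ (or uses that the sign of $d_1$ flips, as signalled by the absolute values $|d_1|$ appearing throughout), which turns the steepest-descent geometry at $z_{\mathrm{down}}(x)$ for $x>\tfrac45$ into the same geometry handled in the first bullet. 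The tangency point $x = \tfrac45$, $y_-(\tfrac45) = -\tfrac45$ is exactly where these two regimes meet and where the lower branch touches the frozen line $y = -x$, so it is correctly excluded from both cases (it is governed by the Pearcey-like kernel of \Cref{thm:main_3} and the GUE corners statement instead).

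The main obstacle I anticipate is the global contour deformation and the tail estimates: one must show that the circular contours $\Gamma_{1-\delta}$ and $\Gamma_{1/2+\delta}$ can be deformed to steepest-descent contours through $z_*$ \emph{without crossing the poles} of the integrand at $z=1$, $z=\tfrac12$, $z=2$, $z=0$, and $zw=1$, $z=w$, and that on the deformed contours $\mathrm{Re}(h_{x,y}(z))$ (resp. $-\mathrm{Re}(h_{x,y}(w))$) attains a strict maximum at $z_*$ with the correct sign, uniformly outside a shrinking neighborhood, so that the contribution away from $z_*$ is exponentially negligible. This requires a careful study of the level curves of $\mathrm{Re}\, h_{x,y}$ as $(x,y)$ ranges over the two boundary branches — precisely the kind of analysis carried out in \cite{OR-2006,OR-2007} for the Schur process — and near the endpoint $x \to \tfrac45$ on the lower branch this geometry degenerates, which is the reason that case must be treated separately. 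Once the descent picture is established, the convergence of the local integrals to the Airy double integral and the convergence of finite-dimensional determinants (by dominated convergence, using exponential decay of the integrands to justify interchanging limit and integral, and uniform bounds on the kernel entries to control the determinant expansion) is routine and follows the template of \cite{OR-2003,OR-2007}. I would present the details for the representative case $y = y_+(x)$ and indicate the modifications for the other cases.
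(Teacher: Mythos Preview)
Your proposal is correct and follows essentially the same approach as the paper, which simply states that the proof is analogous to \cite{OR-2003} and provides no further details; your outline of the steepest-descent localization at the double critical point $z_*$, the Taylor expansion producing the constants $d_1,c_1,c_2$, the reduction of the cross factor $\tfrac{1-w^2}{(1-zw)(z-w)}$ to $\tfrac{1}{\xi-\eta}$ near $z_*\neq\pm1$ (cf.~\Cref{rem:symplectic schur and ordinary schur}), and the particle--hole complementation for $x\in(\tfrac45,1)$ all match the intended argument. One small slip: the tangency point $(x,y)=(\tfrac45,-\tfrac45)$ is governed by the GUE corners process (\Cref{subs:GUE corner processes}), not the Pearcey-like kernel of \Cref{thm:main_3}, which instead appears at $(x,y)=(\tfrac98,-1)$.
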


The proof is analogous to \cite{OR-2003}, therefore we do not provide details here.

\begin{rem}
If $x\in(0,1)$ and $(y,z_*)=(y_+(x),z_{\textrm{up}}(x))$, then $z_*\in(\frac{1}{2},1)$, $h_{x,y}'''(z_*)>0$, implying that $d_1,c_1,c_2>0$.
If $x\in (0,\frac{4}{5})$ and $(y,z_*)=(y_-(x),z_{\textrm{down}}(x))$, then $z_*\in(0,\frac{1}{2})$, $h_{x,y}'''(z_*)<0$, implying $c_1>0>d_1,c_2$.
Finally, if $x\in(\frac{4}{5},1)$ and $(y,z_*)=(y_-(x),z_{\textrm{down}}(x))$, then $z_*\in(-\frac{1}{2},0)$, $h_{x,y}'''(z_*)>0$, implying $c_2>0>d_1,c_1$.
\end{rem}

\begin{rem}
    The left hand side of \eqref{eq: limit kernel airy} differs from that of \eqref{eq: limit kernel airy holes} since in the latter, the extended Airy process arises from considering the ``holes" of the point process rather than the points themselves.
\end{rem}

\begin{rem} \label{rem:symplectic schur and ordinary schur}
The kernel $K^{\SSP}$ in \eqref{eq:kernel berele} differs from that of the ordinary Schur process \cite{O-2001} by the presence of the cross factor $(1-w^2)/(1-wz)(z-w)$, which replaces $1/(z-w)$. From the standpoint of saddle point asymptotic analysis, such factor transforms, after scaling integration variables $z=z_*+\varepsilon\xi$ and $w = z_*+\varepsilon\eta$ around a saddle point $z_*$, as
\begin{equation*}
	\frac{(1-w^2)}{(1-wz)(z-w)} =  \frac{1}{\varepsilon}\cdot\frac{1}{\xi-\eta} + O(1),\,\text{ as }\epsilon\to 0^+,
\end{equation*}
whenever $z_*\neq \pm 1$.  Therefore, for saddle points $z_* \neq \pm 1$, the asymptotic behaviour of $K^{\SSP}$, and hence of the symplectic Schur process, is equivalent to the ordinary Schur process.
\end{rem}

\begin{rem} \label{rem:singular points at z=-1}
The argument of \Cref{rem:symplectic schur and ordinary schur} holds whenever the dominant saddle point $z_*$ of the function $h_{x,y}(z)$ is different than $\pm 1$. From \eqref{eq:h'}, we see that $h_{x,y}'(z)$ always has a singularity at $z=1$ (and so $z=1$ is never a saddle point), while
\begin{equation*}
	h_{x,y}'(-1) = 1+y.
\end{equation*}
    Therefore $h_{x,y}$ has a saddle point at $z=-1$ for $y=-1$ and evaluating higher order derivatives we see
    \begin{equation}\label{eqn:higher_derivatives}
        h_{x,-1}''(-1) = -\frac{1}{2} + \frac{4}{9}x,
        \qquad
        h_{x,-1}'''(-1) = -\frac{3}{2} + \frac{4}{3}x.
        \qquad
        h_{x,-1}''''(-1) = -\frac{21}{4} + \frac{128}{27} x,
    \end{equation}
    which implies that second order singularities at $z=-1$ can only take place setting $x=9/8, y=-1$, which falls outside of the probabilistic region ($0<x \le 1$) for the symplectic Schur process. Interestingly, at $(x,y)=(9/8,-1)$, the point $z=-1$ is a singularity of fourth order, as can be seen from~\eqref{eqn:higher_derivatives}.
    The asymptotic limit of the kernel $K^{\SSP}$ is not the extended Airy kernel in this case, but instead a Pearcey-like kernel that will be reported in \Cref{sec:asymptotics at 1.125}.
\end{rem}

\subsection{Tangency points asymptotics: GUE corners process} \label{subs:GUE corner processes}

We analyze the point process in the neighborhood of the tangency points of the limit shape with the frozen regions.
Following~\cite[Eqn.~(14)]{OR-2006}, see also~\cite{J-2005,JN-2006}, define the kernel $\mathcal{K}_{\mathrm{GUE cor}}:\left( \Z_{>0} \times \R \right) \times \left( \Z_{>0} \times \R \right) \to \R$ by
\begin{equation*}
\begin{split}
    \mathcal{K}_{\mathrm{GUE cor}}(i,\alpha; j,\alpha') := &- \mathbf{1}_{i>j} \mathbf{1}_{\alpha>\alpha'} \,\frac{(\alpha-\alpha')^{i-j-1}}{(i-j-1)!}
    \\
    & \qquad
    + \int_{ \mathrm{i} \R +1} \frac{\dd\xi}{2\pi\mathrm{i}} \oint_{ \Gamma_{\frac{1}{2}} } \frac{\dd\eta}{2\pi\mathrm{i}} \,\frac{1}{\xi-\eta} \frac{\xi^j}{\eta^i}\, e^{\frac{\xi^2 - \eta^2}{2} - \alpha'\xi + \alpha\eta}.
\end{split}
\end{equation*}
The point process having correlation kernel $\mathcal{K}_{\mathrm{GUE cor}}$ is the \emph{GUE corners process}, which describes the eigenvalues of corners of an infinite Gaussian Hermitian random matrix, as in~\cite{OV-1996}.
Equivalently, this means that the projection of the kernel $\mathcal{K}_{\mathrm{GUE cor}}$ to $\{1,\dots,N\}\times\R$ describes the eigenvalues of principal minors of an $N\times N$ GUE random matrix.

\begin{thm}[GUE corners process at the tangency point $(0,2)$]\label{prop:limit GUE corner}
    Consider the scaling
    \begin{equation*}
        u(\alpha)=2n + \lfloor 2 \alpha \sqrt{n} \rfloor.
    \end{equation*}
    For any $k\in\Z_{\ge 1}$ and any pair of $k$-tuples $i_1,\dots,i_k\in\Z_{>0}$, and pairwise distinct $\alpha_1,\dots,\alpha_k\in\R$, let us denote $u_\ell=u(\alpha_\ell)$, then:
    \begin{equation*}
        \lim_{n\to\infty} \,\det_{1\le\ell,\ell'\le k} \left[ 2 \sqrt{n} \, K^{\SSP}(i_\ell, u_\ell ; i_{\ell'}, u_{\ell'} ) \right] = \det_{1\le\ell,\ell'\le k} \left[ \mathcal{K}_{\mathrm{GUE cor}}(i_\ell,\alpha_\ell; i_{\ell'},\alpha_{\ell'}) \right].
    \end{equation*}
\end{thm}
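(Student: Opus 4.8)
The strategy is the standard saddle-point analysis of the double contour integral kernel \eqref{eq:kernel berele}, following the template of Okounkov-Reshetikhin \cite{OR-2003} and, for the GUE corners limit specifically, the approach of \cite{OR-2006} (see also \cite{JN-2006}). First I would identify the tangency point $(x,y)=(0,2)$ as the place where the upper branch $y_+(x)$ meets the $y$-axis; at $x=0$ the exponent \eqref{eq:h} degenerates, $h_{0,y}(z) = -2\log(1-z) - y\log z$, whose unique critical point $z_* = \tfrac{y}{y+2}$ hits $z_*=1$ exactly when $y=2$. The tangency of $y_+$ with the vertical axis corresponds to $z_\mathrm{up}(0)=1$, and the scaling $u(\alpha)=2n+\lfloor 2\alpha\sqrt n\rfloor$ is dictated by the fact that the vanishing of $h'_{0,2}$ at $z=1$ is to first order while the small-$x$ direction contributes the integer index $i$ in a discrete (non-diffusive) manner — this is precisely the $\sqrt n$ (Gaussian) window characteristic of corner processes, as opposed to the $n^{1/3}$ Airy window away from tangency.

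The key computational steps, in order, would be: (1) Localize both integration contours of \eqref{eq:kernel berele} near $z_*=1$; since $i_\ell$ stays bounded (not growing with $n$), the factors $[(1-z/2)(1-2z)]^{i}z^{-i-v}$ contribute only the polynomial prefactors $\xi^{i}/\eta^{i}$ (up to constants absorbed by the gauge freedom noted in the footnote of Section~\ref{subs:limit shapes}), while $(1-z)^{-2n}z^{-2n}$ provides the Gaussian exponential. (2) Substitute $z = 1 + \xi/\sqrt n$, $w = 1 + \eta/\sqrt n$ (with appropriate real constants fixed so that the quadratic term in the Taylor expansion of $n\,h$ becomes exactly $\xi^2/2$, and the $u(\alpha)$ scaling produces $-\alpha'\xi+\alpha\eta$); track that the cross factor $(1-w^2)/[(1-zw)(z-w)]$ — which, per Remark~\ref{rem:singular points at z=-1}, is the only structural difference from the ordinary Schur kernel and which degenerates at $z=\pm 1$ — behaves near $z=w=1$ as $(1-w^2)/[(1-zw)(z-w)] \sim \tfrac{1}{z-w}\cdot\tfrac{1-w^2}{1-zw} \to \tfrac{1}{\xi-\eta}\cdot(\text{finite})$; the finite factor here needs to be computed carefully, as the numerator $1-w^2$ also vanishes at $w=1$, so this does NOT reduce to $1/(\xi-\eta)$ naively and requires expanding $1-w^2 = -2\eta/\sqrt n + O(1/n)$ and $1-zw = -(\xi+\eta)/\sqrt n + O(1/n)$. (3) Identify the limiting contours: the $w$-contour (around $|w|=1/2+\delta$, i.e. enclosing $z_*=1$ after rescaling... actually the circle $\Gamma_{1/2+\delta}$ does not enclose $1$, so one first needs to deform contours, possibly picking up a residue at $z=w$ that produces the $\mathbf{1}_{i>j}\mathbf{1}_{\alpha>\alpha'}\tfrac{(\alpha-\alpha')^{i-j-1}}{(i-j-1)!}$ term) becomes the small circle $\Gamma_{1/2}$ around $1/2$ in the $\eta$-variable and the $z$-contour becomes the vertical line $\mathrm{i}\mathbb R + 1$ in the $\xi$-variable. (4) Conclude via the general principle (\cite[Lemma A.1]{OR-2006} or similar) that convergence of the kernel entries, together with a domination/uniformity estimate on the contours, implies convergence of the determinants.

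\textbf{The main obstacle.} I expect the genuinely delicate point to be the correct bookkeeping of the cross factor $(1-w^2)/[(1-zw)(z-w)]$ near the degenerate saddle $z_*=1$, together with the associated contour deformation and residue extraction that produces the explicit ``lower-triangular'' term $-\mathbf{1}_{i>j}\mathbf{1}_{\alpha>\alpha'}\tfrac{(\alpha-\alpha')^{i-j-1}}{(i-j-1)!}$ in $\mathcal{K}_{\mathrm{GUEcor}}$. In the ordinary Schur case this term comes from the residue of $1/(z-w)$ when the $w$-contour crosses the $z$-contour; here the residue is that of $(1-w^2)/[(1-zw)(z-w)]$ at $z=w$, which equals $(1-w^2)/(1-w^2) = 1$, so miraculously the same residue $1$ is produced — but verifying this cancellation and that no spurious contribution from the pole at $zw=1$ enters the relevant contour region (this is exactly the point where $z_*=1$, so $zw=1$ collides with the saddle) is the crux. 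Since, as the paper notes, the proof is ``analogous to \cite{OR-2003}'' and \cite{OR-2006}, I would present it by first reducing to these two computations and then isolating the one new ingredient — the behavior of the cross factor at $z=-1,+1$ — which is precisely the content already flagged in Remarks~\ref{rem:symplectic schur and ordinary schur}--\ref{rem:singular points at z=-1}; the remaining estimates (tail bounds on contours, uniformity for the determinant convergence) are routine and I would only sketch them.
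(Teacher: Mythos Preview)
Your saddle point is wrong. You write that the critical point $z_*=\tfrac{y}{y+2}$ of $h_{0,y}$ ``hits $z_*=1$ exactly when $y=2$'', but $\tfrac{2}{2+2}=\tfrac12$, not $1$; indeed $\tfrac{y}{y+2}=1$ has no finite solution. The correct saddle is $z_*=\tfrac12$, and one checks directly that $h_{0,2}''(\tfrac12)=16\neq 0$, so this is a \emph{nondegenerate} critical point. Consequently the whole ``main obstacle'' you isolate --- the degeneration of the cross factor $(1-w^2)/[(1-zw)(z-w)]$ at $z_*=\pm 1$ --- never arises here: at $z_*=\tfrac12$ one has $1-z_*^2=\tfrac34$ and $1-z_*^2=1-z_*w|_{w=z_*}$, so by Remark~\ref{rem:symplectic schur and ordinary schur} the cross factor contributes the ordinary $1/(\xi-\eta)$ and nothing more.

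The actual mechanism producing the GUE-corners structure is one you do not mention: the factor $(1-2z)^{j}$ inside $G_{j,v}(z)$ \emph{vanishes} at $z_*=\tfrac12$. Under the substitution $z=\tfrac12+\tfrac{\xi}{4\sqrt n}$, $w=\tfrac12+\tfrac{\eta}{4\sqrt n}$, one gets $(1-2z)^{j}/(1-2w)^{i}\sim (-\tfrac{1}{2\sqrt n})^{j-i}\,\xi^{j}/\eta^{i}$, while the Gaussian factor $e^{\xi^2/2-\alpha'\xi-\eta^2/2+\alpha\eta}$ comes from $(1-z)^{-2n}z^{-u}$ via the nondegenerate quadratic expansion. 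The powers of $n^{-1/2}$ produced by $(1-2z)^{j}/(1-2w)^{i}$ are what make the gauge factor of the form $f(i)/f(j)$ and disappear in the determinant. The residue at $z=w$ still equals $1$ (your observation here is correct), and its asymptotics on the $w$-contour around $\tfrac12$ give the term $-\mathbf{1}_{i>j}\mathbf{1}_{\alpha>\alpha'}\tfrac{(\alpha-\alpha')^{i-j-1}}{(i-j-1)!}$; no issue with the pole at $zw=1$ arises since $z_*^2=\tfrac14$ is far from $1$. Once the saddle is corrected, the argument is the standard one from \cite{OR-2006}.
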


The GUE corners process can also be observed at the tangency point $(x,y)=(4/5,-4/5)$ after a certain complementation of the point configuration, which we briefly explain next.
The point process $\mathcal{L}(\lavec)$, around the microscopic location $(4n/5,-4n/5)$, tends to occupy diagonal segments of the form $(i,j), \dots, (i+k,j-k)$ with length $k = O(\sqrt{n})$.
Then, it is a certain ``complemented process'' that in the large $n$ limit converges to the GUE corners process, as depicted in \Cref{fig:symplectic schur}(d); by definition, this new process consists of the points
\[
    \text{$(i+1,j-1)$, if $(i,j)$ is occupied, but $(i+1,j-1)$ is not}
\]
(this happens when $(i,j)$ is the lower right endpoint of a filled diagonal segment) and points 
\[
    \text{$(i,j+1)$, if $(i,j)$ is occupied, but $(i-1,j+1)$ is not}
\]
(this happens when $(i,j)$ is the upper left endpoint of a filled diagonal segment).
This complementation procedure consists, in the language of lozenge tilings (which are in correspondence with interlacing partitions), in considering different types of lozenges in the neighborhood of the tangency point; see e.g.~\cite{Gor-2021}.
Finally, we point out that the point configuration $\mathcal{L}(\lavec)$ originating from the symplectic Schur process does not consist of interlacing partitions, nevertheless asymptotically, around the tangency point, this identification can be made.

\subsection{Bottom edge asymptotics at $x=9/8$.} \label{sec:asymptotics at 1.125}

In the previous subsections, we provided arguments to show that, for any $x\in(0,1)$, the symplectic Schur process converges asymptotically to a multilevel extension of the Sine process in the bulk and of the Airy process at the edge. We also gave explicit formulas for the limit shape.
In this subsection, we do the asymptotic analysis in a neighborhood of the macroscopic point $(x,y)=(9/8,-1)$.
This would correspond \emph{formally} to the region where the bottom branch of the limit shape touches the horizontal line $y=-1$; see \Cref{fig:symplectic schur}~(e).
We say formally because, as stated in \Cref{prop:symplectic Schur berele}, the symplectic Schur process~\eqref{eq:Symplectic process berele} only describes the Berele insertion process when $k \le n$. For $k>n$, corresponding to the scaled region when $x>1$, the symplectic Schur process fails to be a probability measure in general; see \Cref{rem:possible negativity}.


To state our result, we introduce the kernel $\mathcal{K}:(\R\times\R_{>0})\times(\R\times\R_{>0})\to\R$ given by
\begin{equation} \label{eq:symplectic Percey}
\begin{split}
    \mathcal{K}(\tau,\alpha;\tau',\alpha') := 
    &
    \,\mathbf{1}_{\tau<\tau'}\,
    \frac{e^{-\frac{(\alpha'-\alpha)^2}{2(\tau'-\tau)}}}{\sqrt{2\pi(\tau'-\tau)}}
    \\
    & 
    \qquad
    +
    \int_{e^{-3\mathrm{i} \pi/4} \infty}^{e^{3\mathrm{i}\pi/4} \infty} \frac{\dd\xi}{2\pi\mathrm{i}} \int_{-\mathrm{i} \infty}^{+\mathrm{i} \infty} \frac{\dd\eta}{2\pi\mathrm{i}} \,\frac{2\eta}{\eta^2-\xi^2} \,e^{ \frac{\xi^4-\eta^4}{24} + \frac{\tau'\xi^2 - \tau\eta^2}{2} - \alpha'\xi + \alpha\eta},
\end{split}
\end{equation}
where the $\xi$-contour lies in the complex half-plane $\Re(\xi) \le 0$.

\begin{thm}\label{prop:pearcey}
    Consider the scaling
    \begin{equation} \label{eq:scaling Pearcey}
        i(\tau)=\left\lfloor\frac{9n}{8} +\sqrt{\frac{27n}{64}}\,\tau\right\rfloor,
\qquad u(\alpha)= -n + \left\lfloor\left( \frac{n}{12} \right)^{\!\frac{1}{4}} \alpha\right\rfloor.
    \end{equation}
    Then, for any $k\in\Z_{\ge 1}$ and any pair of $k$-tuples $\tau_1,\dots,\tau_k\in\R$, $\alpha_1,\dots,\alpha_k\in\R_{+}$, denoting $i_\ell=i(\tau_\ell)$, $u_\ell=u(\alpha_\ell)$, we have:
    \begin{equation} \label{eq:limit kernel Pearcey}
        \lim_{n\to\infty} \,
        \det_{1\le\ell,\ell'\le k} \left[ \left( \frac{n}{12} \right)^{\!\frac{1}{4}} \left(  \delta_{i_\ell,i_{\ell'}} \delta_{u_\ell,u_{\ell'}} - K^{\SSP}(i_\ell,u_\ell;i_{\ell'},u_{\ell'}) \right) \right] = \det_{1\le\ell,\ell'\le k} \left[ \mathcal{K}(\tau_\ell, \alpha_\ell; \tau_{\ell'},  \alpha_{\ell'}) \right].
    \end{equation}
\end{thm}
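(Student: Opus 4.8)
\textbf{Proof plan for \Cref{prop:pearcey}.}
The strategy is the standard steepest-descent analysis of the double contour integral \eqref{eq:kernel berele}, adapted to a degenerate (fourth-order) critical point. Recall from \Cref{rem:singular points at z=-1} that at $(x,y)=(9/8,-1)$ the exponent $h_{x,y}(z)$ has a singularity at $z=-1$ coming from the factor $(1-z)^{-2n}$; rather, the competing factor is $z^{-i-v}$, and one checks via \eqref{eqn:higher_derivatives} that $z=-1$ is a critical point of $h_{x,y}$ of order exactly four when $x=9/8$, $y=-1$: we have $h_{9/8,-1}'(-1)=h_{9/8,-1}''(-1)=h_{9/8,-1}'''(-1)=0$ but $h_{9/8,-1}''''(-1)=-21/4+128/27\cdot 9/8 = -21/4 + 16/3 = 1/12 \ne 0$. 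The first step is to write the kernel in the form displayed in the proof of \Cref{prop:limit Airy} for the ``holes'' case, namely
\begin{equation*}
\delta_{i,j}\delta_{u,v} - K^{\SSP}(i,u;j,v) = -\mathbf{1}_{i<j}\frac{1}{2\pi\mathrm{i}}\oint_{\Gamma_{1-\delta}} \frac{\dd w}{w}\,\frac{G_{j,v}(w)}{G_{i,u}(w)} - \frac{1}{(2\pi\mathrm{i})^2}\oint_{\Gamma_{\frac12+\delta}}\frac{\dd z}{z}\oint_{\Gamma_{1-\delta}}\dd w\,\frac{1-w^2}{(1-wz)(z-w)}\frac{G_{j,v}(z)}{G_{i,u}(w)},
\end{equation*}
which is the version where the extended kernel's off-diagonal part is carried by the single residue integral.

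Next I would substitute the scaling \eqref{eq:scaling Pearcey} and introduce local coordinates $z=-1+(n/12)^{-1/4}\xi$, $w=-1+(n/12)^{-1/4}\eta$ near the quartic critical point $z_*=-1$. A Taylor expansion of $n\,h_{i/n,v/n}(z)$ around $z=-1$ gives, using $h''''_{9/8,-1}(-1)=1/12$ and the linear dependence of $h$ on $(x,y)$ so that the $\tau$- and $\alpha$-perturbations produce the quadratic and linear terms, the limiting exponent $\xi^4/24 + \tau'\xi^2/2 - \alpha'\xi$ (and similarly with $\eta,\tau,\alpha$ for the $w$-variable); here the crucial point is that the coefficients $\sqrt{27n/64}$ and $(n/12)^{1/4}$ in \eqref{eq:scaling Pearcey} are precisely tuned so that the $\tau$-term has coefficient $1/2$ and the $\alpha$-term has coefficient $1$. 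One must also expand the cross factor: since $z_*=-1$, the argument of \Cref{rem:singular points at z=-1} does \emph{not} apply, and instead
\begin{equation*}
\frac{1-w^2}{(1-wz)(z-w)} = \frac{(1-w)(1+w)}{(1-wz)(z-w)} \longrightarrow \left(\frac{n}{12}\right)^{1/4}\frac{2\eta}{\eta^2-\xi^2} + O(1),
\end{equation*}
which is exactly the kernel factor $2\eta/(\eta^2-\xi^2)$ appearing in \eqref{eq:symplectic Percey} and is the structural novelty producing the symmetric Pearcey shape. The residue term, after the same change of variables, localizes near $w=-1$ and contributes the Gaussian term $\mathbf{1}_{\tau<\tau'}\,e^{-(\alpha'-\alpha)^2/2(\tau'-\tau)}/\sqrt{2\pi(\tau'-\tau)}$, just as the analogous term does in the Airy case, by evaluating $\frac{1}{2\pi\mathrm{i}}\int e^{-\eta^2(\tau'-\tau)/2 + \eta(\alpha-\alpha')}\dd\eta$.

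The main obstacle is the contour deformation: one must justify that the circular contours $\Gamma_{1-\delta}$ (for $w$, passing to the left through $-1$) and $\Gamma_{1/2+\delta}$ (for $z$) can be deformed to pass through $z_*=-1$ along steepest-descent/ascent paths that locally look like the rays $e^{\pm 3\mathrm{i}\pi/4}\infty$ for $\xi$ (with $\Re\xi\le 0$, as demanded in the statement) and the imaginary axis for $\eta$, while keeping the tails' contribution negligible and ensuring the two contours do not cross in a way that generates spurious residues — and in particular that the $z$-contour stays outside the $w$-contour away from the critical point so the $(z-w)^{-1}$ and $(1-wz)^{-1}$ singularities are handled correctly. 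This requires a careful global analysis of the sign of $\Re(h_{9/8,-1}(z) - h_{9/8,-1}(-1))$ on the relevant annular region, analogous to but more delicate than the Airy analysis in \cite{OR-2003,OR-2006}, because the quartic degeneracy means four steepest-descent directions emanate from $-1$ and one must select the correct pair for each contour. Once the contour geometry is established, the remaining estimates — dominated convergence on compact pieces, exponential decay of tails using $\Re(\xi^4)<0$ on the chosen rays — are routine, and taking the determinant of the $k\times k$ matrix of pointwise limits yields \eqref{eq:limit kernel Pearcey}. I would also remark, as the authors do, that convergence of correlation kernels entails convergence of the determinantal correlation functions, although here (since $k>n$) $\mathbb{P}^{\SSP}$ need not be a genuine probability measure, so the limiting object $\mathcal{K}$ is only conjecturally the kernel of an honest point process.
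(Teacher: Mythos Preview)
Your proposal is correct and follows essentially the same route as the paper: pass to the complementary kernel, identify the quartic critical point at $z_*=-1$ via \eqref{eqn:higher_derivatives}, introduce the local variables $z=-1+(n/12)^{-1/4}\xi$, $w=-1+(n/12)^{-1/4}\eta$, expand the cross factor to produce the novel $2\eta/(\eta^2-\xi^2)$ term, deform to steepest-descent/ascent contours, and compute the residual Gaussian. The paper's proof is in fact terser than yours on the contour geometry (it simply points to a figure of the level sets of $\Re\,h_{9/8,-1}$), and it records two bookkeeping details you skipped --- the sign from $\dd z/z\approx -(n/12)^{-1/4}\dd\xi$ at $z_*=-1$ and a final change $\xi\mapsto-\xi$, $\eta\mapsto-\eta$ --- which reconcile the intermediate expansion $\frac{1-w^2}{(1-wz)(z-w)}\approx (n/12)^{1/4}\frac{2\eta}{\xi^2-\eta^2}$ with the sign in \eqref{eq:symplectic Percey}.
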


Since the limit \eqref{eq:limit kernel Pearcey} appears to be new, we report its proof, although the proof just amounts to an application of the saddle point method as in the previous subsections.

\begin{proof}
    Under the scaling \eqref{eq:scaling Pearcey}, we write the kernel \eqref{eq:kernel berele}, after shifting contours and evaluating the $z$-pole at $z=w$, as
    \begin{multline}\label{eq:kernel pre limit Pearcey}
         K^{\SSP}(i(\tau),u(\alpha);\,i(\tau'),u(\alpha')) = \ \delta_{\tau,\tau'} \,\delta_{\alpha,\alpha'}+ \mathbf{1}_{\tau < \tau'}\,\frac{1}{2\pi \mathrm{i}} \oint_{\Gamma_{1-\delta}} \frac{\dd w}{w} \,\frac{G_{i(\tau'),u(\alpha')}(w)}{G_{i(\tau),u(\alpha)}(w)}
         \\
          + \frac{1}{(2\pi \mathrm{i})^2} \oint_{\Gamma_{\frac{1}{2}+\delta}} \frac{\dd z}{ z} \oint_{\Gamma_{1-\delta}} \dd w  \,\frac{1-w^2}{(1-wz)(z-w)} \frac{G_{i(\tau'),u(\alpha')}(z)}{G_{i(\tau),u(\alpha)}(w)}.
    \end{multline}
    Computations from \Cref{rem:singular points at z=-1} show that the function $h_{x,y}$ from \eqref{eq:h} has a critical point of fourth order at $x=9/8,y=-1,z=-1$:
    \begin{equation*}
	   h_{\frac{9}{8},-1}'(-1) =h_{\frac{9}{8},-1}''(-1)=h_{\frac{9}{8},-1}'''(-1)=0, \qquad h_{\frac{9}{8},-1}''''(-1) = \frac{1}{12}.
    \end{equation*}
    Considering the scaling \eqref{eq:scaling Pearcey} and
\begin{equation*}
z=-1 + \left( \frac{n}{12} \right)^{\!-\frac{1}{4}} \xi, \qquad w=-1 + \left( \frac{n}{12} \right)^{\!-\frac{1}{4}} \eta,
\end{equation*}
we have
\begin{equation*}
            G_{i(\tau'), u(\alpha')}(z)
             =\exp\left\{n h_{\frac{1}{n}i(\tau'),\frac{1}{n}u(\alpha')} (z) \right\} 
             \approx c_n(\tau',\alpha') \exp \left\{ \frac{\xi^4}{24} + \frac{\tau'\xi^2}{2} + \alpha' \xi \right\},
\end{equation*}
where the term $c_n(\tau',\alpha')$ above does not depend on $\xi$. A similar estimate holds for $G_{i(\tau), u(\alpha)}(w)$, except that $\xi,\tau',\alpha'$ must be replaced by $\eta,\tau,\alpha$, respectively. It follows that
    \begin{align*}
            \frac{G_{i(\tau'), u(\alpha')}(w)}{G_{i(\tau), u(\alpha)}(w)}
            &\approx \frac{c_n(\tau',\alpha')}{c_n(\tau,\alpha)} \, e^{ \frac{(\tau'-\tau)\eta^2}{2} + (\alpha'-\alpha)\eta },\\
            \frac{G_{i(\tau'), u(\alpha')}(z)}{G_{i(\tau), u(\alpha)}(w)}
            &\approx \frac{c_n(\tau',\alpha')}{c_n(\tau,\alpha)} \,e^{ \frac{\xi^4 - \eta^4}{24} + \frac{\tau'\xi^2 -\tau\eta^2}{2} + \alpha'\xi - \alpha\eta }.
    \end{align*}
Moreover,
    \begin{equation*}
	\frac{\dd z}{z} \approx -\left( \frac{n}{12} \right)^{\!-\frac{1}{4}} \dd\xi,\qquad
 \dd w \approx \left( \frac{n}{12} \right)^{\!-\frac{1}{4}} \dd\eta,\qquad
    \frac{1-w^2}{(1-wz)(z-w)} \approx \left( \frac{n}{12} \right)^{\!\frac{1}{4}}  \frac{2\eta}{\xi^2-\eta^2}.
    \end{equation*}
    To obtain the final answer, first deform the $z$-contour to be of steepest descent and the $w$-contour to be of steepest ascent (as shown in \Cref{fig:contour plots}), then plug all estimates above into the formula~\eqref{eq:kernel pre limit Pearcey} for the kernel $K^{\SSP}$, next remove the gauge factor $c_n(\tau',\alpha')/c_n(\tau,\alpha)$, as it disappears when one considers determinants as in~\eqref{eq:limit kernel Pearcey}, then compute the Gaussian integral resulting from the first line in~\eqref{eq:kernel pre limit Pearcey}, and finally make a change of variables $\xi\mapsto-\xi$, $\eta\mapsto-\eta$ in the integral resulting from the second line in~\eqref{eq:kernel pre limit Pearcey}.
    \begin{figure}
        \centering
        \includegraphics[width=8cm]{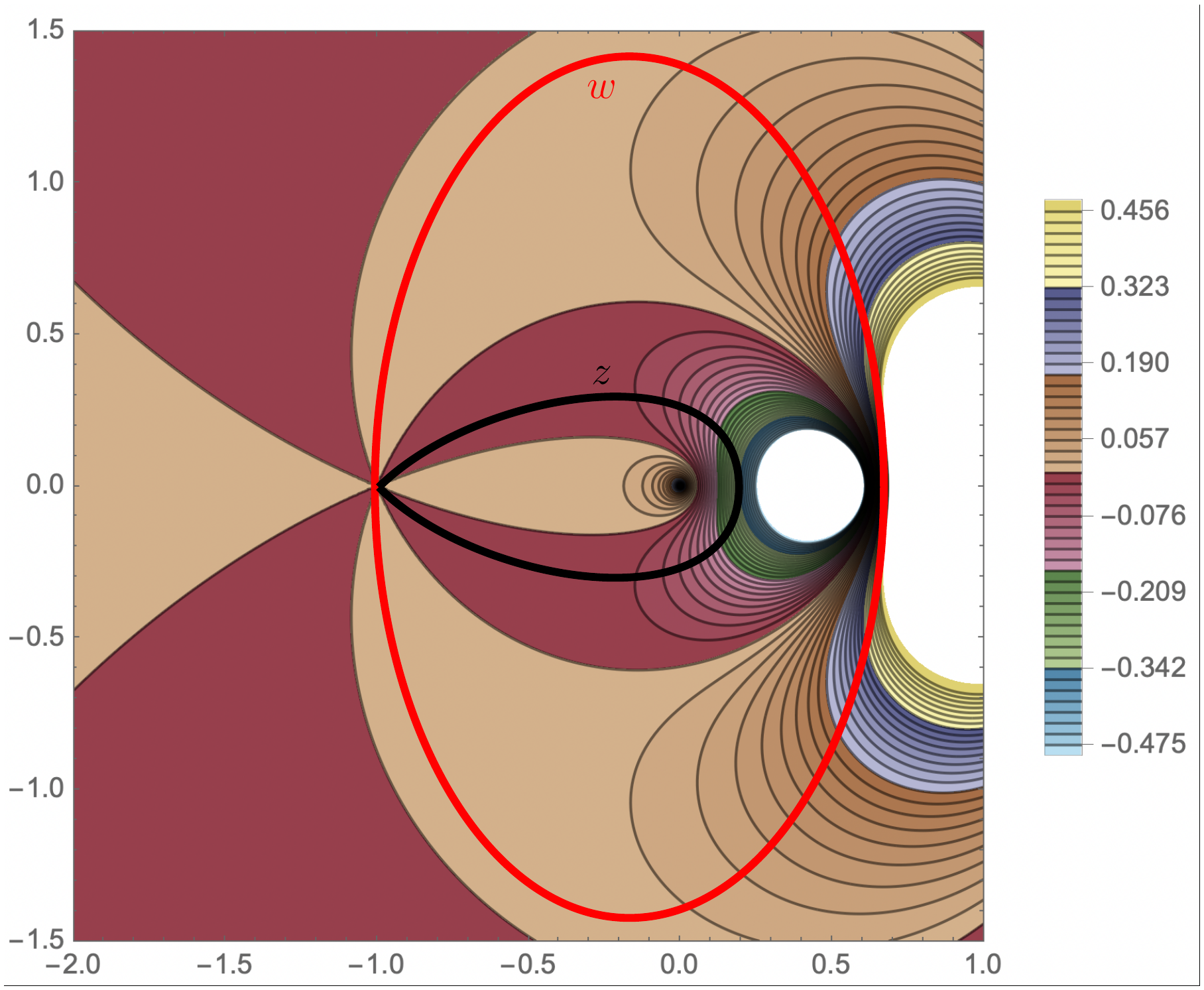}
        \caption{The contourplot of $\zeta\to\Re\left\{ h_{\frac{9}{8},-1}(\zeta) \right\} - h_{\frac{9}{8},-1}(-1)$. Overlayed black and red curves are $z$ and $w$ integration curves deformed to be respectively of steepest descent and steepest ascent for $\Re\left\{ h_{\frac{9}{8},-1}(\zeta) \right\}$. }
        \label{fig:contour plots}
    \end{figure}
\end{proof}

\begin{rem}
    The kernel $\mathcal{K}$ in \eqref{eq:symplectic Percey} resembles the Pearcey kernel from \cite[Eqn.~(72)]{OR-2007}, see also \cite{BH-1998a,BH-1998b,TW-2006}, which has the form
    \begin{equation*}
    \begin{split}
    \mathcal{K}(\tau,\alpha;\tau',\alpha') := \,
    &
    \mathbf{1}_{\tau<\tau'}\,
    \frac{e^{-\frac{(\alpha'-\alpha)^2}{2(\tau'-\tau)}}}{\sqrt{2\pi(\tau'-\tau)}}
    \\
    & 
    \qquad
    +
    \left(\int_{e^{\mathrm{i} \pi/4} \infty}^{e^{-\mathrm{i} \pi/4} \infty} + \int_{e^{-\mathrm{i} 3\pi/4} \infty}^{e^{\mathrm{i} 3\pi/4} \infty} \right) \frac{\dd\xi}{2\pi\mathrm{i}} \int_{-\mathrm{i} \infty}^{+\mathrm{i} \infty} \frac{\dd\eta}{2\pi\mathrm{i}} \,\frac{1}{\xi-\eta} \,e^{ \frac{\xi^4-\eta^4}{24} + \frac{\tau'\xi^2 - \tau\eta^2}{2} - \alpha'\xi + \alpha\eta}.
    \end{split}
\end{equation*}

    The differences are the integration contours for the $\xi$ variable and the cross terms, the one of our kernel being $\frac{2\eta}{\eta^2-\xi^2}$. One can draw a parallel with the case of the Airy kernel, where the presence of a similar cross term transforms the $\mathrm{Airy}_2$ kernel into the $\mathrm{Airy}_{2\to 1}$ kernel.
\end{rem}

\begin{rem}
    In \cite{BK-2010}, a variant of the Pearcey kernel, dubbed the \emph{symmetric Pearcey kernel}, was introduced to describe  point processes produced by non-intersecting random walks conditioned to not cross a wall \cite{KMFW-2011,Kua-2013}. Despite the similarity with the process performed by the last row of the partitions $\la^{(j)}$ from our Berele insertion process, the kernels from \cite{BK-2010} and \eqref{eq:symplectic Percey} appear to be different. We hope to offer clarifications concerning relations between these processes and correlation kernels in future works.
\end{rem}

An interesting observation is that the limit shape in the (not probabilistic) case of $x=9/8$, possesses a cubic singularity near the point $y=-1$ and this can be confirmed by using \eqref{eq:limiting density}. First, write
\begin{equation*}
    h_{\frac{9}{8},-1+\varepsilon}'(z) = \frac{(z+1)^3}{4 (z-2) (z-1) z (2 z-1)} - \frac{\varepsilon }{z}.
\end{equation*}
Then, for $\varepsilon>0$ small, we can express the 
root $z_+$ of $h'_{\frac{9}{8},-1+\varepsilon}(z)$ as
\begin{equation*}
    z_+ = -1+(3^{2/3}+ 3^{7/6} \mathrm{i}) \varepsilon^{1/3} +o(\varepsilon^{1/3}),
\end{equation*}
which implies that
\begin{equation*}
    \frac{\arg(z_+)}{\pi} = 1-\frac{3^{7/6}}{\pi} \varepsilon^{1/3} + o(\varepsilon^{1/3}).
\end{equation*}
Cubic cusps of equilibrium measures, or limit shapes, are known to be related to the Pearcey process \cite{BH-1998a,BH-1998b}.
This observation offers a further invitation to a more rigorous study of the process $\mathcal{L}(\lavec)$ around the macroscopic region $(x,y)=(9/8,-1)$, which we hope to cover in future works.

\bibliographystyle{plain}

\bigskip

Cesar Cuenca:

\smallskip

Department of Mathematics, The Ohio State University, Columbus, OH 43210, USA.

\smallskip

Email address: cesar.a.cuenk@gmail.com

\bigskip

Matteo Mucciconi:

\smallskip

Department of Mathematics, National University of Singapore,
 S17, 10 Lower Kent Ridge Road, 119076, Singapore

\smallskip

Email address: matteomucciconi@gmail.com

\end{document}